\newif\ifsubmission
\newif\ifnotes
\newtheorem{axiom}[theorem]{Axiom}
\newtheorem{physicsaxiom}[theorem]{Physics Axiom}
\newtheorem{importedtheorem}[theorem]{Imported Theorem}
\newtheorem{importedlemma}[theorem]{Imported Lemma}
\newtheorem{informaltheorem}[theorem]{Informal Theorem}
\newtheorem{physicstheorem}[theorem]{Physical Theorem}
\newtheorem{claim}[theorem]{Claim}
\newtheorem{subclaim}[theorem]{SubClaim}
\newtheorem{fact}[theorem]{Fact}
\newtheorem{construction}[theorem]{Construction}
\Crefname{importedtheorem}{Imported Theorem}{Imported Theorems}
\Crefname{importedlemma}{Imported Lemma}{Imported Lemma}
\Crefname{theorem}{Theorem}{Theorems}
\Crefname{proposition}{Proposition}{Propositions}
\Crefname{claim}{Claim}{Claims}
\Crefname{subclaim}{SubClaim}{SubClaims}
\Crefname{subsubclaim}{SubSubClaim}{SubSubClaims}
\Crefname{lemma}{Lemma}{Lemmas}
\Crefname{conjecture}{Conjecture}{Conjectures}
\Crefname{corollary}{Corollary}{Corollaries}
\Crefname{construction}{Construction}{Constructions}
\Crefname{property}{Property}{Properties}
\theoremstyle{definition}
\newtheorem{assumption}[theorem]{Assumption}
\newtheorem{notation}[theorem]{Notation}
\Crefname{definition}{Definition}{Definitions}
\Crefname{assumption}{Assumption}{Assumptions}
\Crefname{notation}{Notation}{Notations}
\theoremstyle{remark}
\newtheorem{comment}[theorem]{Comment}
\Crefname{question}{Question}{Questions}
\Crefname{remark}{Remark}{Remarks}
\Crefname{comment}{Comment}{Comments}
\Crefname{fact}{Fact}{Facts}
\Crefname{step}{Step}{Steps}
\newtheorem{theorem}{Theorem}[section]
\newtheorem{claim}[theorem]{Claim}
\newtheorem{lemma}[theorem]{Lemma}
\newtheorem{corollary}[theorem]{Corollary}
\newtheorem{definition}[theorem]{Definition}
\newtheorem{remark}[theorem]{Remark}
\Crefname{importedtheorem}{Imported Theorem}{Imported Theorems}
\Crefname{importedlemma}{Imported Lemma}{Imported Lemmas}
\Crefname{theorem}{Theorem}{Theorems}
\Crefname{proposition}{Proposition}{Propositions}
\Crefname{claim}{Claim}{Claims}
\Crefname{subclaim}{SubClaim}{SubClaims}
\Crefname{subsubclaim}{SubSubClaim}{SubSubClaims}
\Crefname{lemma}{Lemma}{Lemmas}
\Crefname{conjecture}{Conjecture}{Conjectures}
\Crefname{corollary}{Corollary}{Corollaries}
\Crefname{construction}{Construction}{Constructions}
\Crefname{property}{Property}{Properties}
\theoremstyle{definition}
\Crefname{definition}{Definition}{Definitions}
\Crefname{assumption}{Assumption}{Assumptions}
\Crefname{notation}{Notation}{Notations}
\theoremstyle{remark}
\Crefname{question}{Question}{Questions}
\Crefname{comment}{Comment}{Comments}
\Crefname{fact}{Fact}{Facts}
\Crefname{step}{Step}{Steps}
\newcommand{\secp}{\lambda}
\def\cA{{\cal A}}
\def\cB{{\cal B}}
\def\cC{{\cal C}}
\def\cD{{\cal D}}
\def\cG{{\cal G}}
\def\cH{{\cal H}}
\def\cI{{\cal I}}
\def\cL{{\cal L}}
\def\cM{{\cal M}}
\def\cO{{\cal O}}
\def\cV{{\cal V}}
\def\cW{{\cal W}}
\def\cX{{\cal X}}
\def\cY{{\cal Y}}
\def\cZ{{\cal Z}}
\def\regV{{\cal V}}
\def\regW{{\cal W}}
\def\bbN{{\mathbb N}}
\def\poly{{\rm poly}}
\def\negl{{\rm negl}}
\newcommand{\pk}{\mathsf{pk}}
\newcommand{\sk}{\mathsf{sk}}
\newcommand{\KeyGen}{\mathsf{KeyGen}}
\newcommand{\Com}{\mathsf{Com}}
\newcommand{\Enc}{\mathsf{Enc}}
\newcommand{\Dec}{\mathsf{Dec}}
\newcommand{\ct}{\mathsf{ct}}
\DeclareMathOperator*{\expectation}{\mathbb{E}}
\newcommand{\E}{\expectation}
\newcommand{\Gen}{\mathsf{Gen}}
\newcommand{\vk}{\mathsf{vk}}
\newenvironment{boxfig}[2]{\begin{figure}[#1]\fbox{
    \begin{minipage}{\linewidth}
    \vspace{0.2em}\makebox[0.025\linewidth]{}    \begin{minipage}{0.95\linewidth}{{#2 }}
    \end{minipage}\vspace{0.2em}\end{minipage}}}{\end{figure}}
\newcommand{\pprotocol}[4]{
\begin{boxfig}{h!}{
\begin{center}
\textbf{#1}
\end{center}
    {\small#4}
\vspace{0.2em} } \caption{\label{#3} #2}
\end{boxfig}
}
\newcommand{\protocol}[4]{
\pprotocol{#1}{#2}{#3}{#4} }
\newcommand{\inp}{\mathsf{inp}}
\newcommand{\out}{\mathsf{out}}
\newcommand{\state}{\mathsf{st}}
\newcommand{\Open}{\mathsf{Open}}
\newcommand{\Rec}{\mathsf{Rec}}
\renewcommand{\partial}{\mathsf{partial}}
\newcommand{\Ext}{\mathsf{Ext}}
\newcommand{\TD}{\mathsf{TD}}
\newcommand{\pr}{\mathsf{pr}}
\newcommand{\Adv}{\mathsf{Adv}}
\newcommand{\maj}{\mathsf{maj}}
\newcommand{\msg}{\mathsf{msg}}
\newcommand{\Prove}{\mathsf{Prove}}
\newcommand{\Ver}{\mathsf{Ver}}
\newcommand{\nonnegl}{\mathsf{non}\text{-}\mathsf{negl}}
\newcommand{\NC}{\mathsf{NC}}
\newcommand{\pparam}{\mathsf{pp}}
\newcommand{\sparam}{\mathsf{sp}}
\newcommand{\Invert}{\mathsf{Invert}}
\newcommand{\Claw}{\mathsf{Claw}}
\newcommand{\CHSH}{\mathsf{CHSH}}
\newcommand{\KLVY}{\mathsf{KLVY}}
\newcommand{\Tel}{\mathsf{Tel}}
\newcommand{\Comm}{\mathsf{Commutation}}
\newcommand{\Solve}{\mathsf{Solve}}
\newcommand{\Obligate}{\mathsf{Obligate}}
\newcommand{\OSP}{\mathsf{OSP}}
\newcommand{\PartialInvert}{\mathsf{PartialInvert}}
\newcommand{\PhaseInvert}{\mathsf{PhaseInvert}}
\newcommand{\td}{\mathsf{td}}
\newcommand{\CSG}{\mathsf{CSG}}
\newcommand{\DPCSG}{\mathsf{DBCSG}}
\newcommand{\CNOT}{\mathsf{CNOT}}
\newcommand{\ECNOT}{\mathsf{E}\text{-}\mathsf{CNOT}}
\newcommand{\Apply}{\mathsf{Apply}}
\newcommand{\Sen}{\mathsf{Sen}}
\newcommand{\Bias}{\mathsf{Bias}}
\begin{document}
\setstcolor{red}

\title{On the Power of Oblivious State Preparation}
\author{James Bartusek\thanks{NYU. Email: \texttt{bartusek.james@gmail.com}} \and Dakshita Khurana\thanks{UIUC and NTT Research. Email: \texttt{dakshita@illinois.edu}}}
\date{}
\maketitle

\begin{abstract}
We put forth Oblivious State Preparation (OSP) as a cryptographic primitive that unifies techniques developed in the context of a quantum server interacting with a classical client. 
OSP allows a {\em classical} polynomial-time sender to input a choice of one out of two public observables, and a quantum polynomial-time receiver to recover an eigenstate of the corresponding observable -- while keeping the sender's choice hidden from any malicious receiver.

We obtain the following results:
\begin{itemize}
\item The existence of (plain) trapdoor claw-free functions implies OSP, and the existence of dual-mode trapdoor claw-free functions implies round-optimal (two-round) OSP.
\item OSP implies the existence of proofs of quantumness, test of a qubit, blind classical delegation of quantum computation, and classical verification of quantum computation.
\item Two-round OSP implies quantum money with classical communication, classically-verifiable position verification, and (additionally assuming classical FHE with log-depth decryption) quantum FHE.
\end{itemize}
Thus, the OSP abstraction helps separate the cryptographic layer from the information-theoretic layer when building cryptosystems across classical and quantum participants. Indeed, several of the aforementioned applications were previously only known via tailored LWE-based constructions, whereas our OSP-based constructions yield new results from a wider variety of assumptions, including hard problems on cryptographic group actions. 

Finally, towards understanding the minimal hardness assumptions required to realize OSP, we prove the following:

\begin{itemize}
    \item OSP implies oblivious transfer between one classical and one quantum party.
    \item Two-round OSP implies public-key encryption with classical keys and ciphertexts.
\end{itemize}
In particular, these results help to ''explain'' the use of public-key cryptography in the known approaches to establishing a ''classical leash'' on a quantum server. For example, combined with a result of Austrin et al. (CRYPTO 22), we conclude that \emph{perfectly-correct} OSP cannot exist unconditionally in the (quantum) random oracle model.
\end{abstract}

\newpage

\tableofcontents
\newpage

\section{Introduction}

One of the central concepts driving research in quantum cryptography over the past decade has been that of a ``classical leash'' on quantum systems \cite{10.1145/2422436.2422473}. In other words, how can we enable a classical device, using just classical communication, to exert some element of \emph{control} over a quantum mechanical system? 

In a major conceptual advance from 2018 \cite{Brakerski2018ACT}, the use of (public-key) cryptography was identified as a useful tool for establishing this desired control over a quantum server. There has since been an explosion of results on classical-client quantum-server protocols, ranging from proofs of quantumness under quantum-hard assumptions \cite{Brakerski2018ACT}, certifiable randomness generation \cite{Brakerski2018ACT}, quantum homomorphic encryption with classical ciphertexts \cite{Mahadev2017ClassicalHE}, classical verification of quantum computation \cite{Mahadev2018ClassicalVO}, self-testing a single quantum device \cite{Metger2021selftestingofsingle}, position verification \cite{liu_et_al:LIPIcs.ITCS.2022.100}, secure quantum computation \cite{10.1007/978-3-030-90459-3_1}, and quantum money \cite{10.1145/3318041.3355462,10.1145/3519935.3519952} with classical communication, and proofs of contextuality \cite{ABCC}, among others.

The resounding success of this line of work begs a deeper understanding of the basic principles underlying the paradigm introduced in \cite{Brakerski2018ACT}. For example, \cite{Brakerski2018ACT} based their results on the existence of a fairly ad-hoc and unwieldy cryptographic primitive: a noisy trapdoor claw-free function (TCF) with an adaptive hardcore bit. This primitive has only been shown to exist from the learning with errors (LWE) assumption,\footnote{\cite{10.1007/978-3-031-22318-1_10} has shown how to obtain a weaker variant of the adaptive hardcore bit property from hard problems on cryptographic group actions.} and several followups, including many of the aforementioned results, inherited the use of this primitive. This raises the following (informal) question.

\begin{quote}
\begin{center}
 \emph{Is there a conceptually-simple and easy-to-instantiate primitive that suffices for building powerful classical-client quantum-server applications?}
\end{center}
\end{quote}

We note that some partial progress has been made towards a more ``generic'' approach to constructing some of these end applications. For example, a recent line of work \cite{KCVY,KLVY,NZ23} has yielded classical verification of quantum computation from the assumption of quantum fully-homomorphic encryption (QFHE), and \cite{10.1007/978-3-031-68382-4_8} has shown that QFHE follows from any classical FHE (with decryption in $\text{NC}_1$) plus an appropriate notion of ``dual-mode'' trapdoor claw-free functions (with no need for an adaptive hardcore bit). However, as we show, these approaches can still be generalized much further.

Another aspect of \cite{Brakerski2018ACT}'s approach that demands further investigation is their use of public-key (i.e.\ trapdoor-based) cryptography. This has been justified informally by observing that, because the classical client is computationally weaker than the quantum server, we need to introduce some mechanism for the client to gain the ``upper hand'' on the server. One way to do this is to introduce asymmetric cryptography, allowing the client to send the server a public key while keeping the secret key to themselves. However, as far as we are aware, nothing more than this informal intuition has been proposed in an attempt to address the following fundamental (and again, informal) question. 

\begin{quote}
\begin{center}
 \emph{Is public-key cryptography necessary to establish a classical leash on a quantum server?}
\end{center}
\end{quote}

As an example meant to further illustrate the importance of this question, we note that progress in this direction may shed more light on the recent breakthrough techniques of Yamakawa-Zhandry \cite{10.1145/3658665} for establishing quantum advantage. Indeed, while they show that proofs of quantumness exist in the random oracle model, which can be heuristically instantiated using a cryptographic hash function (i.e.\ \emph{symmetric} cryptography), their techniques have so far resisted attempts at constructing, say, a test of a qubit, verifiable delegation of quantum computation, or other ``classical leash''-style primitives mentioned above. Establishing the necessity of public-key cryptography for these primitives would explain this gap. While we do not completely close this question, our results do show that the principles underlying current approaches to classical-leash primitives (inspired by~\cite{Brakerski2018ACT,KCVY}, etc.) also yield public-key style primitives. We provide further details on these results later in the introduction.

\subsection{Oblivious state preparation}

Aiming to make progress on these two motivating questions, we put forth the idea of \emph{Oblivious State Preparation} (OSP) as a unifying cryptographic primitive in the realm of classical-client quantum-server protocols. 

OSP is simple to describe. It is a protocol that takes place between a classical sender and a quantum receiver. The classical sender has as input a bit $b \in \{0,1\}$ which specifies a choice of one of two public observables. We usually take one to be $Z$ and the other to be $X$, but in principle they could be arbitrary. At the end of the protocol, the receiver outputs a quantum state. We require two properties.

\begin{itemize}
    \item \textbf{Correctness:} If the receiver is honest, their output is an eigenstate of the observable chosen by the sender, and the sender receives a description of this state. In the usual ``standard'' case, this means that when $b=0$, the receiver outputs either $\ket{0}$ or $\ket{1}$, and when $b=1$, the receiver outputs either $\ket{+}$ or $\ket{-}$.
    \item \textbf{Security:} Any quantum polynomial-time (QPT) malicious receiver has $\negl(\secp)$ advantage in guessing the sender's input bit $b$. 
\end{itemize}

OSP highlights an inherent cryptographic property of quantum information arising from the uncertainty principle. That is, it is not necessarily possible to determine the basis of a given state, even though this basis information is well-defined and fixed by the description of the state. Indeed, given the resource of quantum communication, information-theoretically secure OSP is trivial. In this work, however, OSP always refers to the classical-communication case. At a high level, we ask, (1) what cryptography is necessary to obtain OSP, i.e.\ the ability for a classical client to, roughly speaking, set up an instance of the uncertainty principle on a quantum server, and (2) what are the applications of this ability.

In fact, the idea of OSP for the $Z$ and $X$ observables as described above has been previously proposed by \cite{10.1007/978-3-030-34578-5_22} under the name ``malicious 4-states QFactory with basis-blindness.'' They showed that such a protocol can be obtained from a TCF with a particular type of ``homomorphic, hardcore predicate''. However, in order to derive most applications, they required an additional and conjectured \emph{verifiability} property (see also \cite{Cojocaru_2021}). In this work, we give OSP a more general treatment as a primitive, show that it is possible to relax the assumptions under which it can be built, and vastly expand its set of applications.

\subsection{Results}

\subsubsection{Constructions} As mentioned above, researchers beginning with \cite{Brakerski2018ACT} identified the usefulness of (variants of) trapdoor claw-free functions (TCFs) in realizing applications such as a test of a qubit, quantum fully-homomorphic encryption, and classical verification of quantum computation. While several variants of TCFs have appeared over the years, e.g.\ extended, dual-mode, with adaptive hardcore bit, etc., we show that perhaps the most stripped down notion of a TCF suffices to build OSP.

We define a (plain) TCF as a family of functions $f$ that can be sampled along with a trapdoor $\td$. The guarantee, roughly, is that there is a (QPT preparable) distribution $\cD$ over inputs such that with some inverse polynomial probability over $x \gets \cD$, $x$ has exactly one sibling $x'$ (similarly weighted by $\cD$) such that $f(x) = f(x')$, and moreover, both $x$ and $x'$ can be recovered given $f(x)$ and $\td$. Finally, claw-freeness demands that no QPT adversary can recover any such ``claw'' $x$ and $x'$ given only the description of $f$. Building on techniques from \cite{10.1007/978-3-031-38554-4_6}, we show the following.

\begin{theorem}[Informal]
    (Plain) TCFs imply OSP.
\end{theorem}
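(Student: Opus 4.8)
The plan is to follow the ``claw-state'' blueprint underlying QFactory~\cite{10.1007/978-3-030-34578-5_22} and tests of a qubit, but to extract a single qubit whose measurement \emph{basis} is secretly dictated by the sender's bit $b$. In the first two messages the sender samples a plain TCF $(f,\td)$ and sends $f$; the receiver prepares $\sum_x \sqrt{\cD(x)}\,\ket{x}\ket{f(x)}$, measures the image register to obtain $y$, and returns $y$. By the plain-TCF guarantee, with inverse-polynomial probability $y$ has exactly two $\cD$-weighted preimages $x_0,x_1$ (detectable by the sender using $\td$), in which case the residual state on the input register is (close to) $\tfrac{1}{\sqrt 2}(\ket{x_0}+\ket{x_1})$. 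To make the honest receiver succeed except with negligible probability I would run $\poly(\secp)$ parallel copies and have the sender use the trapdoors to select and announce one ``good'' copy; the near-equality of the two amplitudes needed below is arranged by choosing $\cD$ appropriately (a routine point).

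\textbf{Basis-controlled qubit extraction and correctness.} Having recovered $x_0,x_1$, the sender sets $\delta := x_0\oplus x_1$ and sends a uniformly random string $r$ \emph{conditioned on} $\langle r,\delta\rangle = b$. The receiver appends a fresh qubit $q=\ket 0$, coherently maps $\ket{x}\ket 0 \mapsto \ket{x}\ket{\langle r,x\rangle}$, then measures the input register in the Hadamard basis to obtain $d$ (which it returns), and outputs $q$. A short calculation shows that if $b=0$ (so $\langle r,x_0\rangle = \langle r,x_1\rangle$) then $q$ collapses to the computational-basis state $\ket{\langle r,x_0\rangle}$, i.e.\ a $Z$-eigenstate, while if $b=1$ then $q$ collapses (up to global phase and normalization) to $\ket 0 + (-1)^{\langle d,\delta\rangle}\ket 1$, i.e.\ an $X$-eigenstate; in both cases the sender, knowing $x_0,x_1,r,d$, can write down the resulting state. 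This establishes correctness, and returning $d$ is harmless because the receiver cannot tell which distribution $d$ was drawn from without knowing $\delta$.

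\textbf{Receiver security.} Since over a uniform $b$ the string $r$ is simply uniform, a malicious receiver's advantage in guessing $b$ equals its advantage in predicting $\langle r,\delta_y\rangle$ given $f$, its own message $y$, its internal state, and a fresh uniform $r$, where $\delta_y$ is the XOR of the two preimages above $y$. I would feed such a predictor into a (quantum) Goldreich--Levin argument to extract $\delta_y$ itself, and then combine $\delta_y$ with a preimage $x$ of $y$ to output the claw $\{x,\,x\oplus\delta_y\}$, contradicting claw-freeness. One first argues, essentially for free, that the receiver's $y$ may be assumed to admit a genuine two-element claw: it learns nothing about $b$ until the sender sends $r$, which only happens for such $y$.

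\textbf{The main obstacle.} The crux of the security argument is obtaining a preimage of the \emph{adversarially chosen} $y$ needed to complete the claw, since a plain TCF offers no way to invert an arbitrary image without the trapdoor. This is exactly where I would invoke the techniques of~\cite{10.1007/978-3-031-38554-4_6}: the point is that extracting the bit $\langle r,\delta_y\rangle$ for random $r$ cannot be done from $f$ and $y$ ``classically'' without breaking claw-freeness, so the receiver's distinguishing advantage must be carried by its post-second-message state, which therefore behaves like (a superposition over) the preimages of $y$ and can be measured by the reduction to recover one. Making this rigorous against an arbitrary, non-honest receiver requires a careful case analysis on how far its state is from the ideal claw superposition, together with the bookkeeping for the parallel-repetition/abort structure, and this is the step I expect to take the most work.
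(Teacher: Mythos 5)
Your high-level plan — prepare a claw superposition via the TCF, have the sender's basis bit appear as a hidden linear function of the claw, and invoke quantum Goldreich--Levin against a basis-predicting receiver — is exactly the blueprint the paper follows, and the ``main obstacle'' you flag at the end is real. But your specific construction runs headfirst into that obstacle rather than around it, and your proposed fix does not work. Encoding the basis bit as $\langle r,\delta\rangle$ with a single random string $r$ (where $\delta = x_0 \oplus x_1$) means that GL can only extract $\delta$, not a claw. Your suggestion to then ``measure the receiver's post-second-message state to recover a preimage'' is not sound against an arbitrary adversary: nothing forces a malicious receiver with advantage in predicting $\langle r,\delta_y\rangle$ to carry an actual preimage of $y$ in its registers (claw-freeness says nothing about predicting $\delta$ or predicting a preimage given $y$ alone), and you have no trapdoor to invert $y$ yourself. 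This is a genuine gap, not a ``careful case analysis'' away from being closed.

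The paper resolves it by changing the construction so that GL yields the \emph{entire} claw at once. It first abstracts the TCF into a claw-state generator and upgrades it to a \emph{differentiated-bit} CSG, so the receiver holds $\tfrac{1}{\sqrt2}(\ket{0,x_0}+(-1)^z\ket{1,x_1})$ with a control qubit labelling the branch. The sender then sends \emph{two} independent uniform strings $r_0,r_1$ and the receiver coherently computes $\ket{c,x_c}\mapsto\ket{c,x_c}\ket{r_c\cdot x_c}$ (the differentiated bit $c$ is precisely what makes this branch-dependent inner product realizable), then Hadamard-measures everything but the last qubit. The basis of the surviving qubit is now $(x_0,x_1)\cdot(r_0,r_1)$, so a basis predictor over uniform $(r_0,r_1)$ is a GL predictor for the full concatenated string $(x_0,x_1)$; one invocation of quantum GL outputs a claw directly, with no preimage inversion needed. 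This gives a \emph{random-input} OSP (the sender ends up with a random basis bit $b=(x_0,x_1)\cdot(r_0,r_1)$), which is then generically converted to chosen-input OSP by sending a correction bit $c = b\oplus b'$ and having the receiver apply $H^c$. Your direct conditioning of $r$ on $\langle r,\delta\rangle = b$ bakes the random-to-chosen conversion in, which is fine; the rest of the plumbing (parallel vs.\ sequential repetition until a two-preimage image appears, using the trapdoor to detect such images, assuming the TCF's preimage weights are near-balanced) matches the paper. The moral is that the fix is to change what GL is applied to, not to try to extract a preimage from the receiver's state.
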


Our construction of OSP from plain TCFs requires multiple of rounds of interaction. However, a desirable feature for some applications is limited interaction. The best we can hope for is \emph{two-round} OSP, i.e.\ one message from the sender followed by one from the receiver.\footnote{No one-message OSP can be secure. Indeed, the message from sender to receiver would have to fix the description of the desired state $H^b\ket{s}$, and correctness would imply that the receiver could then generate multiple copies of this state, eventually enough to determine the basis with near certainty.} Adapting techniques from \cite{10.1007/978-3-031-68382-4_8}, we show that TCFs with an additional \emph{dual-mode} property imply such two-round OSP. 

\begin{theorem}[Informal]
    Dual-mode TCFs imply two-round OSP.
\end{theorem}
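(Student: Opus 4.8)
The plan is to follow the blueprint of a dual-mode-TCF construction in the spirit of \cite{10.1007/978-3-031-68382-4_8}, where the sender's hidden bit $b$ selects the mode in which the TCF is sampled, so that the two computationally indistinguishable modes translate directly into the two observables, and a random linear functional is used to extract a single qubit (avoiding the need for an adaptive hardcore bit). Concretely: on input $b$, the sender samples $r \gets \{0,1\}^n$ together with a function $f$ — in the injective mode (whose trapdoor $\td$ allows inversion) if $b=0$, and in the $2$-to-$1$/claw mode (whose trapdoor allows recovering both preimages of any image) if $b=1$ — and sends $(f,r)$. The receiver, whose procedure is \emph{identical} in both cases, prepares $\sum_x \sqrt{\cD(x)}\ket{x}\ket{f(x)}$, measures the image register to obtain $y$, coherently computes $\ket{x}\mapsto\ket{x}\ket{\langle x,r\rangle}$ into a fresh output qubit $\sO$, measures the $x$-register in the Hadamard basis to obtain $d'\in\{0,1\}^n$, sends $(y,d')$ back, and outputs $\sO$.

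The first step is to verify correctness in the $b=0$ branch: once the image register is measured, the $x$-register collapses to the unique preimage $\ket{x_y}$, so $\sO$ becomes the computational-basis state $\ket{\langle x_y,r\rangle}$, unentangled from every register the receiver retains (including the now-disentangled Hadamard outcome $d'$); the sender recovers $x_y$ from $\td$ and outputs the description $(Z,\langle x_y,r\rangle)$. For the $b=1$ branch, I would condition on the ``good event'' guaranteed by the TCF (namely that $y$ has exactly two $\cD$-weighted preimages $x_0 \ne x_1$, both recoverable from $\td$) together with the event $\langle x_0,r\rangle \ne \langle x_1,r\rangle$, which holds with probability $1/2$ over $r$; a short calculation then shows that after the Hadamard-basis measurement $\sO$ is $\tfrac{1}{\sqrt2}(\ket{c} + (-1)^{e}\ket{1-c})$ with $c=\langle x_0,r\rangle$ and $e=\langle x_0\oplus x_1,d'\rangle$, i.e.\ an $X$-eigenstate up to global phase, and the sender — knowing $x_0,x_1$ from $\td$, and $r,d'$ — outputs the description $(X,\,c\oplus e)$.

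Security is the easy direction and I would spell it out as a one-line reduction: a malicious receiver's entire view is $(f,r)$, $r$ is sampled independently of $b$, and hence any QPT receiver with non-$\negl(\secp)$ advantage in guessing $b$ immediately distinguishes the injective mode from the $2$-to-$1$ mode of the dual-mode TCF, contradicting the dual-mode property.

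The main obstacle will be reconciling correctness with the OSP definition. The $b=1$ branch succeeds only on the good event, which occurs with inverse-polynomial probability; the sender can \emph{detect} failure using $\td$ (it checks that $y$ indeed has exactly two well-weighted preimages and that $\langle x_0,r\rangle\ne\langle x_1,r\rangle$) and output $\perp$, so the construction as stated achieves only inverse-polynomial correctness. I expect the bulk of the work to be (i) carefully tracking the structured input distribution $\cD$ and the weighting of siblings so the good-event probability bound goes through, (ii) verifying the case analysis on $c\in\{0,1\}$ and the preimage relabeling in the Hadamard-measurement calculation so that $\sO$ is a \emph{genuine} $X$-eigenstate disentangled from the receiver's residual registers, and (iii) arguing that this inverse-polynomial correctness either already suffices downstream or can be amplified without breaking the two-round structure (as is done for the plain-TCF construction), possibly by requiring the dual-mode TCF's good event to hold with overwhelming probability.
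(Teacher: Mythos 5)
Your high-level plan is the right one — encode the sender's bit $b$ as the choice of mode, let mode indistinguishability carry the security, and let the receiver run a mode-oblivious procedure so that correctness in each branch reduces to inspecting the collapsed state. However, your concrete instantiation differs structurally from the paper's, and the difference is exactly where your acknowledged gap in item (iii) lives.

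The paper does not introduce a separate random string $r$ and hash $\langle x,r\rangle$ into the output qubit. Instead, the dual-mode TCF in \cref{def:dTCF} is defined with a distinguished first input bit, $F_\pparam\colon\{0,1\}\times\{0,1\}^n\to\{0,1\}^m$, and the receiver simply prepares $\ket{+}_\cB\ket{\psi_\pparam}_\cX$, applies $F$, measures the image register to get $y$, Hadamard-measures $\cX$ to get $d$, and outputs the residual qubit $\cB$. The first preimage bit \emph{is} the output qubit. This eliminates the $\tfrac12$ chance that $\langle x_0,r\rangle=\langle x_1,r\rangle$ (which, in your scheme, silently collapses $\sO$ to a $Z$-eigenstate even when $b=1$) — in the paper's scheme, lossy-mode collisions are always between a preimage starting with $0$ and one starting with $1$, so the phase on the remaining $\cB$-qubit is always well-defined. (Incidentally, in your Hadamard-measurement calculation the sender's output should be $e=\langle x_0\oplus x_1,d'\rangle$, not $c\oplus e$; the $c=1$ case folds into a global phase, not a bit flip.)

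The genuine gap is the one you flag as ``bulk of the work (iii)'': a plain dual-mode TCF only produces a two-preimage claw in lossy mode with probability $\delta=1/\poly(\secp)$, so the raw construction has only inverse-polynomial correctness, and there is no obvious way to re-run the protocol inside a fixed two-round structure. The paper's resolution is to first invoke an \emph{amplification lemma for dual-mode TCFs} due to \cite{10.1007/978-3-031-68382-4_8}: any dTCF satisfying \cref{def:dTCF} can be upgraded, while preserving two rounds, to a dTCF with \emph{efficient phase computation} (\cref{def:phase-computation}), whose $\PartialInvert$ and $\PhaseInvert$ algorithms succeed with all but negligible probability. Only then is the OSP construction applied, and correctness becomes $1-\negl(\secp)$ immediately. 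Without this lemma (or a replacement), your proposal doesn't close — the inverse-polynomial correctness you end with does not satisfy \cref{def:OSP}, and the sender outputting $\perp$ is not an option under that definition. So the missing idea is not a calculation detail but the structural amplification step that makes the lossy branch succeed with overwhelming probability in a single shot.
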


Briefly, a dual-mode TCF is similar to a plain TCF except that the function may be sampled in an ``injective'' mode, where there are no collisions, and it is computationally difficult to distinguish this from the normal ``lossy'' mode. We note that  dual-mode (and thus plain) TCFs are known from LWE \cite{Brakerski2018ACT} and from the ``extended linear hidden shift'' assumption on cryptographic group actions \cite{10.1007/978-3-031-22318-1_10,10.1007/978-3-031-68382-4_8}.

In what follows, we will highlight the power of OSP by establishing several classical-client quantum-server applications, as well as cryptographic implications. We will build everything from ``standard'' OSP, where the two observables are $Z$ and $X$. However, it is meaningful to consider OSP for \emph{any} pair of non-commuting observables, in particular pairs that may not be maximally anti-commuting. We begin to explore the landscape of OSP as a more general primitive, and in particular establish the following result.

\begin{theorem}[Informal]
    OSP for any pair of two-outcome observables that are at a $1/\poly$ angle implies standard ($Z$ and $X$) OSP.
\end{theorem}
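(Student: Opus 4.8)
The plan is as follows. Begin with two normalizations of the given pair $(M_0, M_1)$, whose Bloch vectors are at angle $\theta$. Negating an observable swaps its two eigenvalue labels but leaves its eigenstates — and hence its role in OSP — unchanged, so I may replace $M_1$ by $-M_1$ and assume $\theta \in [1/\poly(\secp), \pi/2]$. The boundary case $\theta = \pi/2$ is trivial: the pair is then maximally anti-commuting, so a single fixed qubit unitary $W$ satisfies $W M_0 W^\dagger = Z$ and $W M_1 W^\dagger = X$, and the honest receiver just applies $W$ to its output and reports the same label. After a further fixed unitary I may assume $\vec m_0,\vec m_1$ lie in the $XZ$-plane, symmetric about $\hat z$: $M_0 = \cos(\theta/2)\,Z + \sin(\theta/2)\,X$ and $M_1 = \cos(\theta/2)\,Z - \sin(\theta/2)\,X$. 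So the entire content is to boost the angle from a $1/\poly$ value to \emph{exactly} $\pi/2$.

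Note first that no single call to the angle-$\theta$ OSP can work: quantum channels are trace-distance contractive, $M_0$-eigenstates and $M_1$-eigenstates differ by only $O(\theta)$, yet $Z$-eigenstates and $X$-eigenstates differ by a constant, so post-processing a single inner eigenstate cannot output the right state in both branches. Hence the reduction must use $m = \poly(1/\theta) = \poly(\secp)$ instances of the angle-$\theta$ OSP. With $m \gg 1/\theta^2$ instances the inner states in the $b=0$ and $b=1$ branches, conditioned on the inner eigenstate labels, become trace distance $\approx 2$ apart, so the contractivity barrier no longer applies — while the density matrix the honest receiver actually holds, after marginalizing over those labels, is maximally mixed in both branches. (The latter is forced by security of the inner OSP: at a $1/\poly$ angle the inner eigenstate label must be information-theoretically hidden from the receiver, or the receiver would guess the inner bit with $1/\poly$ advantage by distinguishing $M_0$- from $M_1$-eigenstates.) Concretely, the new sender on input $b$ runs the $m$ inner instances, one ``primary'' on bit $b$ and the rest ``auxiliary'' on bit $\bar b$, so that in both branches the receiver ends up holding the \emph{same} multiset of eigenstates, only their roles differing; the honest receiver then applies a single fixed, bit-oblivious channel $\Phi$ to these $m$ qubits, outputting one qubit and a classical string returned to the sender. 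Since the sender knows $b$, all inner labels, and this string, it knows the receiver's output state exactly, and $\Phi$ is designed so that this state is an exact $Z$-eigenstate when $b=0$ and an exact $X$-eigenstate when $b=1$. The intended design of $\Phi$ is a measurement–feedback (``Zeno-drag'') procedure: it uses the auxiliary qubits, after fixed rotations, to apply a sequence of weak measurements to the primary qubit along a slowly sweeping axis, interleaved with fixed unitaries, arranged so that the $O(\theta)$ difference between the primary's starting direction in the two branches sits at an unstable point of the induced dynamics and is amplified, dragging the $b=0$ branch onto the $Z$-axis and the $b=1$ branch onto the $X$-axis; the sweep length is chosen as a function of the public value $\theta$ so that the drag terminates exactly on those axes. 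Each weak-measurement outcome returned to the sender is, marginalizing over the hidden inner labels, uniform and independent of $b$ — the conceptual reason the scheme is secure — and the primary qubit is kept coherent throughout, so that, conditioned on what the sender knows, the output is a pure eigenstate rather than a mixture.

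Security of the composed scheme then follows by a routine hybrid argument over the $\poly(\secp)$ inner instances. Given a malicious receiver $\cR^*$ for the amplified OSP with non-negligible advantage in guessing $b$, define hybrids that switch the inner instances one at a time from their ``$b=0$'' bit to their ``$b=1$'' bit; consecutive hybrids differ in a single inner instance, so a distinguisher between them yields an adversary against that instance's OSP security — the reducer plays the honest inner sender for all other instances with hard-wired bits, forwards the one challenged instance to its challenger, and simply lets $\cR^*$ run the oblivious $\Phi$ itself. Hence $\cR^*$'s advantage is at most $m$ times the inner advantage, i.e.\ negligible. Correctness is immediate from the construction of $\Phi$, and the output is an exact $Z$- or $X$-eigenstate by design.

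The main obstacle — where essentially all the work lies — is constructing and analyzing $\Phi$: one must exhibit an explicit, efficiently implementable, bit-oblivious processing of the $m$ inner eigenstates that simultaneously (i) separates the two branches all the way onto the $Z$- and $X$-axes (available only because, once $m$ is large, the inner states conditioned on labels are nearly orthogonal), (ii) terminates on those axes \emph{exactly}, not merely approximately, (iii) keeps the primary register coherent so the output is a genuine eigenstate the sender can name, and (iv) produces classical outcomes whose distribution coincides across $b=0$ and $b=1$. This is a delicate balance: any step that would let the receiver ``align'' its copies of the inner eigenstates or otherwise tell the two branches apart is immediately fatal to security. Verifying the Zeno-drag analysis — that the unstable-manifold dynamics really carries the $O(\theta)$ branch-asymmetry up to a constant-size separation with negligible error, using only $\poly(1/\theta)$ weak measurements — is the technical heart of the argument.
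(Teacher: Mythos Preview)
Your proposal identifies the right high-level structure --- run $\poly(1/\theta)$ inner instances and combine --- but the crucial step, the construction of the channel $\Phi$, is never actually carried out. You describe ``Zeno-drag'' only at a conceptual level and flag it yourself as ``where essentially all the work lies.'' There are concrete reasons for skepticism. Weak-measurement feedback is stochastic and typically delivers only approximate outputs, yet you demand (condition (ii)) that the drag terminate \emph{exactly} on the $Z$- or $X$-axis; this would require a coincidence between a fixed sweep schedule and a random measurement record, and you give no mechanism for it. (Relaxing to $1-\negl(\secp)$ closeness, which is all OSP requires, would help, but you would still owe a construction and an error analysis.) Further, your auxiliaries are $M_{\bar b}$-eigenstates, so the ``probes'' used for the weak measurements themselves differ between the two branches; that a single bit-oblivious $\Phi$ nevertheless drags both branches to their respective targets is exactly the kind of claim that needs a proof, not a sketch. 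As written, the central lemma of your argument is simply missing.

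The paper's approach is entirely different and avoids all of this. It normalizes to the $XY$-plane, so the two bases are $\{\ket{+},\ket{-}\}$ and $\{\ket{+_{\epsilon\pi/2}},Z\ket{+_{\epsilon\pi/2}}\}$, and runs all inner instances on the \emph{same} bit $b$ (not a mix of $b$ and $\bar b$). The key is Kuperberg's angle-addition trick: given $Z^{s_1}\ket{+_{\phi_1}}$ and $Z^{s_2}\ket{+_{\phi_2}}$, apply a CNOT and measure the target in the standard basis; on outcome $0$ (probability $1/2$) the control collapses to $Z^{s_1\oplus s_2}\ket{+_{\phi_1+\phi_2}}$ \emph{exactly}. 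Iterating in $\log d$ layers sums $d$ angles, so when $\epsilon = 1/d$ the receiver ends, with overwhelming probability after starting from $\poly(d)\cdot\secp$ copies, holding a state at phase exactly $0$ (if $b=0$) or exactly $\pi/2$ (if $b=1$) --- an exact $X$- or $Y$-eigenstate, which a fixed rotation converts to the standard $Z$/$X$ pair. Security is the same routine hybrid you already describe. The algebraic structure of phase addition gives exactness for free and makes any measurement-feedback analysis unnecessary.
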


\subsubsection{Applications}

\paragraph{From OSP.} We show that OSP is sufficient to obtain several classical-client quantum-server protocols of interest.

\begin{theorem}[Informal]
    OSP implies proofs of quantumness, a test of a qubit, blind classical delegation of quantum computation, and classical verification of quantum computation.
\end{theorem}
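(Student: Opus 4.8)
The plan is to use OSP as a drop-in replacement for the ``basis-blind BB84 state preparation'' step that sits at the cryptographic core of the protocols of \cite{Brakerski2018ACT,Mahadev2018ClassicalVO,KCVY}: once OSP is assumed, each of the four applications becomes an essentially information-theoretic protocol layered on top of the OSP interface, together with a reduction of classical soundness (or of blindness) to OSP security. Two normalizations are convenient up front: first, since the known OSP instantiations may have only $1/\poly$ correctness, amplify to $1-\negl(\secp)$ correctness by parallel repetition with majority decoding on the receiver side (the sender still learns the description of each copy); second, if OSP is only available for a pair of non-commuting two-outcome observables at a $1/\poly$ angle, invoke the earlier reduction to obtain standard ($Z$/$X$) OSP. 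The most substantive step is a \emph{test of a qubit}; proofs of quantumness will follow as a corollary, and classical verification and blind delegation will follow by combining the test of a qubit with the OSP interface in the style of Mahadev's measurement protocol.

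For the test of a qubit I would analyze the following protocol: the verifier samples $b\gets\{0,1\}$, runs OSP on input $b$ to obtain a description $(b,s)$ of the honest receiver's output $H^b\ket{s}$, then sends a uniformly random challenge $c\gets\{0,1\}$, asks the prover to report the outcome $m$ of measuring its output qubit in the basis selected by $c$, and accepts iff $c\neq b$ or $m=s$. Completeness with probability $1-\negl(\secp)$ is immediate from OSP correctness. For soundness, I would model a malicious prover by its post-OSP state $\rho_b$ together with binary response observables $Z_{\mathrm{obs}}$ (used when $c=0$) and $X_{\mathrm{obs}}$ (used when $c=1$); OSP security forces $\rho_0$ to be computationally indistinguishable from $\rho_1$, and, combining this indistinguishability with the acceptance condition via a Jordan's-lemma / Pauli-twirl argument in the style of \cite{Metger2021selftestingofsingle,Brakerski2018ACT}, one extracts a qubit inside the prover's space on which $X_{\mathrm{obs}}$ and $Z_{\mathrm{obs}}$ act as approximately anticommuting Pauli observables and whose $X$/$Z$ measurement outcomes match the reported $m$, with $s$ identifying the eigenvalue. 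The proof of quantumness is then the coarse consequence: a classical prover's two responses are fixed bits, and since it cannot predict $s$ in the relevant basis (otherwise an honest receiver holding $H^b\ket{s}$ could measure that state, compare to the prediction, and distinguish $b$, contradicting OSP security), it wins with probability at most $\tfrac12+\negl(\secp)$, whereas an honest quantum prover wins with probability $1-\negl(\secp)$.

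Classical verification and blind delegation then follow by running a multi-qubit ``measurement protocol'' on top of OSP: using teleportation-style gadgets, the basis-blind BB84 states that OSP delivers to the prover let the verifier implement, hidden from the prover, a standard- or Hadamard-basis measurement on each qubit of a prover-prepared state, driven by a vector of independent OSP basis inputs, and the test of a qubit applied on a random subset of the coordinates/rounds certifies that the reported string is genuinely obtained from an honestly prepared multi-qubit state measured in those bases. Plugging a $\mathsf{BQP}$ (or $\QMA$) instance into the local-Hamiltonian / posthoc verification framework of \cite{Mahadev2018ClassicalVO} on top of this measurement protocol --- with the verifier hiding, via its OSP inputs, which Hamiltonian term and Pauli basis it is checking, and accepting according to the energy estimate with the usual soundness amplification --- yields classical verification of quantum computation. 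For blind delegation, observe that the prover's view in this measurement protocol is simulatable from public information alone, since every bit it learns is masked by an OSP-hidden key, so OSP security gives blindness; running the protocol on a universal circuit applied to a quantum-one-time-padded input delivered through OSP, with the prover supplying its own magic states and the verifier tracking the Pauli/phase frame and sending adaptive classical corrections, realizes blind --- and, combined with the verification component, verifiable blind --- classical delegation of quantum computation.

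The step I expect to be the main obstacle is the soundness analysis of the test of a qubit. OSP as defined comes with \emph{no} soundness or verifiability guarantee whatsoever, so the argument must bootstrap the purely computational basis-blindness property into a structural statement about a cheating prover's Hilbert space, namely the existence of approximately anticommuting response observables; this forces one to interleave hybrid / indistinguishability arguments with operator-algebraic manipulations while controlling the compounding approximation errors, and --- because the OSP we start from may only be $1/\poly$-correct --- to run this rigidity analysis hand-in-hand with parallel repetition and soundness amplification. Once the test of a qubit is in place, the remaining steps are largely a matter of transporting the information-theoretic analyses of \cite{Mahadev2018ClassicalVO} (and of BFK-style blind computation) onto the OSP interface.
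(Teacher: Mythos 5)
Your high-level plan of treating OSP as a drop-in replacement for the basis-blind BB84 preparation step is the right instinct, and your proof-of-quantumness argument is in the right spirit, but there are two concrete gaps, one minor and one serious.

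The minor gap is in your proof of quantumness. Your ``same-basis match'' test (accept iff $c\neq b$ or $m=s$) auto-accepts whenever $c\neq b$, which happens with probability $1/2$; conditioned on $c=b$, OSP security only forces the classical prover's prediction to be correct with probability $\leq 1/2+\negl$. So the classical soundness is $3/4+\negl(\secp)$, not $1/2+\negl(\secp)$ as you claim. This still gives a valid proof of quantumness (completeness $\approx 1$ vs.\ soundness $\approx 3/4$), but the paper instead runs the CHSH game on top of OSP (the prover measures in $X\pm Z$ bases and the verifier checks $x\oplus y=a\cdot b$), which both avoids this accounting error and---crucially---lets them invoke the existing rigidity theorem of \cite{10.1007/978-3-031-38554-4_6} for computationally-blinded CHSH games to get the test of a qubit essentially for free. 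Your ``same-basis match'' protocol does not fit that CHSH template, and the ``Jordan's lemma / Pauli twirl'' argument you gesture at would require the adaptive-hardcore-bit property (as in BCMVV/Mahadev) to extract anticommutation, which OSP does not give you. So your test of a qubit, and everything you build on top of it, rests on an unproven rigidity claim.

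The serious gap is your route to classical verification. You propose to build CVQC by combining your (OSP-driven) measurement protocol with Mahadev's posthoc / local-Hamiltonian framework, using the test of a qubit on random coordinates to certify honest behavior. But OSP has \emph{no} verifiability or soundness guarantee whatsoever, and Mahadev's soundness analysis of the measurement protocol is precisely what the adaptive hardcore bit buys. The paper's central observation---which your proposal misses---is that CVQC can be obtained \emph{without any soundness} from the underlying primitive: they first build blind classical delegation of quantum computation from OSP (via encrypted phase gates and Pauli-frame tracking, in the style of Broadbent), and then plug it into a generalized KLVY compiler whose soundness rests on a result of Natarajan and Zhandry for a two-prover BQP-verification game, where the \emph{only} cryptographic property used is that Alice's question is computationally hidden from Bob. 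This is what lets the construction go through from plain OSP. Your route would need to re-derive the measurement-protocol soundness without the adaptive hardcore bit, and as far as is known that is not possible. The blind-delegation portion of your sketch is closer to the paper's approach in spirit (both are Broadbent-style), but its security should be argued directly from the hiding of the encrypted phase inputs, not routed through a self-testing/verification gadget.
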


A few remarks on these results are in order. The proof of quantumness from OSP can be seen as a modular instantiation of the template proposed by \cite{KCVY} and later tweaked by \cite{10.1007/978-3-031-38554-4_6} and \cite{ABCC}. In particular, we show that a \emph{single} instance of OSP suffices to implement a ``computational Bell test'' between the client and server based on the CHSH game.

Blind classical delegation of quantum computation allows a classical client to outsource a quantum computation of its choice to a quantum server without leaking anything about the actual description of the computation. In a major breakthrough, \cite{Mahadev2017ClassicalHE} gave the first construction of this primitive by building \emph{quantum fully-homomorphic encryption} (QFHE) from LWE. This approach of course relies on techniques that at least imply classical fully-homomorphic encryption, and, to the best of our knowledge, fully-homomorphic encryption has remained the only solution to blind classical delegation of quantum computation that has been made explicit in the literature.\footnote{\cite{Cojocaru_2021} show how to realize blind classical delegation of quantum computation but only against an \emph{honest-but-curious} server, using their protocol for ``pseudo-secret random qubit generator.''} While perhaps folklore, we formalize the fact that classical FHE is not required for blind classical delegation of quantum computation, showing that OSP suffices.

In another major breakthrough, \cite{Mahadev2018ClassicalVO} designed a protocol that allows a classical client to verifiably delegate a BQP computation to a potentially cheating quantum server. Since then, classical verification of quantum computation (CVQC) has remained a central primitive of study in quantum cryptography \cite{9996633,10.1007/978-3-031-15979-4_7,10.1007/978-3-031-07082-2_25,NZ23,gunn2024classicalcommitmentsquantumstates,metger2024succinctargumentsqmastandard}. However, until now, all known constructions relied on the hardness of LWE. As a corollary of our result, we show that CVQC follows from any (plain) TCF, and thus from hard problems on cryptographic group actions. In a nutshell, we formalize the fact that the recent approach of \cite{KLVY,NZ23} establishing CVQC from QFHE can in fact be instantiated from any (potentially interactive, non-compact) blind classical delegation of quantum computation protocol, and thus, from any OSP.

\paragraph{From two-round OSP.} Our next batch of results makes use of \emph{two-round} OSP.

\begin{theorem}[Informal]
    Two-round OSP implies (privately-verifiable) quantum money with classical communication, position verification with classical communication, and (assuming classical FHE with decryption in $\text{NC}_1$) QFHE.
\end{theorem}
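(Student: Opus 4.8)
The plan is to treat the three applications in turn, in each case viewing two-round OSP as a black box that lets a classical party remotely prepare a BB84-type conjugate-coding state on a quantum party's device over a classical channel, and then instantiating known constructions with this capability.

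\emph{Privately-verifiable quantum money.} Following the semi-quantum money template of \cite{10.1145/3318041.3355462}: to mint a note the bank runs $n=\poly(\secp)$ parallel copies of the two-round OSP as the classical sender with a uniformly random basis string $\theta\in\{0,1\}^n$; by OSP correctness it recovers the eigenvalue string $x$, stores the serial record $(\theta,x)$, and the note is the state $\bigotimes_i H^{\theta_i}\ket{x_i}$ held by the user. Verification is the Wiesner test run over the classical channel: the bank sends a fresh random $\widetilde\theta$, the user measures qubit $i$ in basis $\widetilde\theta_i$ and returns $\widetilde x$, and the bank accepts iff $\widetilde x_i = x_i$ on every $i$ with $\widetilde\theta_i = \theta_i$. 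Completeness is immediate. For unclonability I would show that (i) by OSP security and a hybrid over the $n$ coordinates, the minting transcript together with the honestly prepared note computationally hides $\theta$ from the user, and (ii) conditioned on $\theta$ being hidden, no adversary holding one note can produce two registers that both pass verification, by the information-theoretic no-cloning bound for conjugate-coding states; care is needed to argue that the bank's verification messages and accept/reject bits do not leak $\theta$ across sessions, which I would handle by making each verification session simulatable given only its outcome.

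\emph{Classically-verifiable position verification.} Generalizing \cite{liu_et_al:LIPIcs.ITCS.2022.100}: the two verifiers $V_0,V_1$ share randomness and jointly act as the OSP sender; $V_0$ sends the OSP first message $m_1$ (for a random basis bit $\theta$) to the prover, who as OSP receiver replies with $m_2$, and simultaneously $V_1$ sends $\theta$; the prover measures its OSP output state in basis $\theta$ and broadcasts the outcome, which the verifiers check against the eigenvalue computed from $(m_1,m_2)$ together with the usual timing constraints. An honest prover at the claimed location succeeds by OSP correctness. Against colluding adversaries $A_0$ (receiving $m_1$) and $A_1$ (receiving $\theta$) exchanging a single round of messages, soundness reduces to a monogamy statement: OSP security makes $A_0$'s view computationally independent of $\theta$, so the bipartite state the adversaries hold when $\theta$ is revealed is essentially $\theta$-independent, and the BB84 monogamy-of-entanglement game then bounds the probability that both correctly predict the $\theta$-basis outcome, which is amplified by parallel repetition.

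\emph{QFHE.} Here I would revisit the construction of QFHE from classical FHE with $\NC_1$ decryption and dual-mode TCFs of \cite{10.1007/978-3-031-68382-4_8}, observing that the dual-mode TCF is used there only to realize an ``encrypted-CNOT''/remote gate-teleportation gadget which, in the present language, is precisely a two-round OSP run with respect to a basis bit held in FHE-encrypted form: the OSP sender's classical message is produced by homomorphically evaluating the sender algorithm on the encrypted bit, and the OSP receiver's quantum operation is applied by the evaluator to its own registers. The two-round structure is exactly what lets this be embedded into a non-interactive FHE scheme, and the $\NC_1$-decryption requirement is what keeps the homomorphic evaluation of the OSP sender and the ensuing processing within reach; replacing the dual-mode-TCF gadget by a generic two-round OSP then yields QFHE from two-round OSP plus classical FHE with $\NC_1$ decryption.

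\emph{Main obstacle.} I expect the crux to be the security analyses of the first two applications: bridging the purely computational, single-bit hiding guarantee of OSP with the information-theoretic no-cloning and monogamy bounds for conjugate-coding states. Because OSP provides no simulator for the sender's message (only indistinguishability of $b=0$ from $b=1$), one cannot simply replace the real transcript by one independent of $\theta$; instead one must run a coordinate-by-coordinate hybrid that carefully tracks how the adversary's quantum register is jointly correlated with the basis string and the transcript, and one must extract from a successful cloner/position-adversary a distinguisher for a single OSP instance despite only controlling part of its view. For QFHE the obstacle is of a different flavor: verifying that \cite{10.1007/978-3-031-68382-4_8}'s use of dual-mode TCFs factors cleanly through the OSP interface, with no residual reliance on the injective mode or other TCF-specific structure.
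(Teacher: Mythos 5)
For the first two applications your route diverges significantly from the paper, and the divergence is precisely where you flag the ``main obstacle.'' The paper does \emph{not} attempt a direct Wiesner-style money scheme or a direct monogamy-of-entanglement argument. Instead it constructs an $\epsilon$-\emph{1-of-2 puzzle} (\cref{def:1-of-2}) with $\epsilon = 0.5$ from two-round OSP, using a CHSH-style test: the bank prepares $\secp$ parallel OSP instances on a single random basis bit $r$, challenge $b=0$ asks for a string agreeing with $s$ on $\geq 0.85$ fraction of coordinates, and challenge $b=1$ asks for a string agreeing with $s \oplus r^{\secp}$ on $\geq 0.85$ fraction. An adversary passing \emph{both} challenges with probability $1/2 + \nonnegl$ yields (by taking the majority of $a_0 \oplus a_1$) a predictor for $r$, which is a clean reduction to OSP security requiring no information-theoretic bound at all. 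The paper then invokes \cite{10.1145/3318041.3355462}'s amplification theorem for weakly verifiable puzzles to upgrade $\epsilon = 0.5$ to $\epsilon = 0$, and finally cites \cite{10.1145/3318041.3355462} and \cite{liu_et_al:LIPIcs.ITCS.2022.100} as black boxes: strong 1-of-2 puzzles imply privately-verifiable quantum money and position verification respectively.

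This matters because the obstacle you name --- bridging OSP's single-bit computational hiding with information-theoretic no-cloning/monogamy bounds --- is a genuine gap in your approach and I don't think it can be closed by a ``coordinate-by-coordinate hybrid'' as you suggest. In a Wiesner-style money scheme with classical verification, each verification interaction leaks a (noisy) measurement transcript that depends on $\theta$, and the state of the note is disturbed by each test, so you can't simply invoke no-cloning on a fixed conjugate-coding state; the standard treatments of semi-quantum money \cite{10.1145/3318041.3355462} deal with this precisely by passing through the 1-of-2 puzzle abstraction and its threshold amplification lemma, which is built to accommodate computational hardness. Likewise for position verification, \cite{liu_et_al:LIPIcs.ITCS.2022.100} already carries out the delicate reduction from the positioning soundness game to 1-of-2 puzzle security, and there is no need to re-derive a computational monogamy statement from scratch. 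So the missing idea in your proof is the 1-of-2 puzzle intermediate, which converts the problem into a form where a single clean reduction to OSP indistinguishability suffices and the remaining work is absorbed by existing amplification theorems.

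Your QFHE argument matches the paper's: two-round OSP gives a two-round encrypted-CNOT protocol, which is exactly the gadget that \cite{10.1007/978-3-031-68382-4_8} extracts from dual-mode TCFs, and plugging it into their generic construction with classical FHE (decryption in $\NC_1$) gives QFHE. The paper makes this precise by defining encrypted CNOT as a special case of blind delegation (\cref{def:encrypted-CNOT}), giving a direct two-round construction from two-round OSP, and then verifying that \cite{10.1007/978-3-031-68382-4_8}'s ``hidden Bell pair'' subroutine factors through this interface with no residual reliance on TCF structure --- which is exactly the check you correctly identify as needed.
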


Briefly, the first two results go via the intermediate primitive of a ``1-of-2 puzzle'' \cite{10.1145/3318041.3355462}, which we build from OSP using similar techniques to our CHSH-based proof of quantumness. Then, we appeal to \cite{10.1145/3318041.3355462}, who showed that 1-of-2 puzzles imply privately-verifiable quantum money with classical communication, and \cite{liu_et_al:LIPIcs.ITCS.2022.100}, who showed that 1-of-2 puzzles imply position verification with classical communication. Prior to our work, the only construction of 1-of-2 puzzles, due to \cite{10.1145/3318041.3355462}, relied specifically on LWE (via TCFs with the adaptive hardcore bit property).\footnote{We note that a recent concurrent and independent work has shown how to construct classically-verifiable position verification from certified randomness protocols \cite{cryptoeprint:2024/1726}.}

The third result on QFHE follows by adapting the recent techniques of \cite{10.1007/978-3-031-68382-4_8}, who showed how to construct QFHE from any classical FHE (with decryption in $\NC_1$) and dual-mode TCFs. We observe that, in fact, \emph{any} two-round OSP suffices in place of the dual-mode TCF.

\paragraph{Discussion.} Before moving on, it is worth pointing out that each of these results individually do not require deep new techniques. Rather, they mostly follow by adapting, modularizing, and generalizing existing approaches in the literature. However, in our view, the  identification of OSP as a simple primitive that yields all of these applications is useful, both as a pedagogical tool and to enhance future research.  OSP abstracts away the cryptographic essence of major protocols in the area, allowing for easy, information-theoretic design of protocols that call an underlying OSP functionality. This separates the ``cryptographic layer'' from the ``information-theoretic layer'' in the design of these protocols. In some cases, this modular approach also allows us to broaden the set of assumptions under which these applications are known to exist.

\subsubsection{Implications}

Finally, given the broad reach of OSP, we seek to understand the cryptography necessary in order to realize it. Our results on this are summarized as follows.

\begin{theorem}[Informal]
    OSP implies commitments and oblivious transfer (OT) with classical communication (where one party is completely classical), while two-round OSP implies public-key encryption.
\end{theorem}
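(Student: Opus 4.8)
The plan is to first extract a commitment in which the \emph{classical} party is the committer, via conjugate coding. To commit to a bit $\mu \in \binset$, the committer runs $\secp$ parallel instances of OSP in the sender role, each on input $b = \mu$; the receiver (playing the OSP receiver) ends the commit phase holding states $\psi_1,\dots,\psi_\secp$ with $\psi_i = H^{\mu}\ket{s_i}$, and by correctness the committer learns $s_1,\dots,s_\secp$. To decommit, the committer sends $(\mu, s_1,\dots,s_\secp)$, and the receiver measures each $\psi_i$ in the basis named by $\mu$ and accepts if it recovers $s_i$ for (almost) all $i$. Computational hiding is immediate: a malicious commitment receiver distinguishing commitments to $0$ from commitments to $1$ is precisely a malicious OSP receiver distinguishing $b=0$ from $b=1$, with a routine instance-by-instance hybrid handling the $\secp$ copies (the reduction runs the honest OSP sender itself on all non-challenge instances). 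Statistical binding follows from the uncertainty principle exactly as for BB84-based commitments: whatever messages it receives, the efficient honest receiver ends the commit phase holding \emph{some} $\secp$-qubit state, and no qubit can yield a prescribed outcome with high probability in both the $Z$ and $X$ bases, so amplifying over the $\secp$ registers with a threshold check prevents even an unbounded committer from opening to both $0$ and $1$ except with negligible probability. (If OSP correctness is only inverse-polynomial, one first amplifies it, or relaxes the acceptance threshold to tolerate the resulting noise.)

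\textbf{Oblivious transfer from OSP.} Next I would build OT by emulating the BBCS quantum OT protocol with OSP replacing the quantum channel: the classical OSP sender plays the party that ``prepares BB84 states'' --- for each of many instances it chooses a basis $\theta_i = b$ and learns the encoded bit $x_i = s_i$ from the transcript --- and the quantum OSP receiver plays the party that holds these states and measures instance $i$ in a privately chosen basis $\hat\theta_i$. OSP security supplies exactly the property that makes this faithful: the measuring party cannot learn the preparation bases before they are announced. Without any further machinery this already gives a \emph{semi-honest} OT --- after the bases are announced one runs the index-set combiner $K_c = m_c \oplus \bigoplus_{i \in I_c} x_i$, and receiver privacy and sender privacy against honest-but-curious parties follow from the standard BBCS analysis --- with the OT sender being exactly the classical OSP sender (and the variant with the classical party as OT receiver following from the classical-communication OT reversal transformation). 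Upgrading to full malicious security is where I expect the real work to lie: one must add the BBCS cut-and-choose consistency check, which forces the measuring (quantum) party to commit to its bases and outcomes \emph{to the classical party}, whereas the conjugate-coding commitment above has the quantum party in the \emph{receiver} role. Resolving this requires a commitment whose receiver is classical, which one obtains by applying Naor's commitment to a one-way function (the committer may be quantum, the receiver is classical); such a one-way function follows from the public-key encryption below in the two-round case, and should be obtainable with more care in general. One also absorbs inverse-polynomial OSP correctness error using extra instances and an error-correcting code inside the combiner.

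\textbf{Public-key encryption from two-round OSP.} Finally, two-round OSP yields PKE with classical keys and ciphertexts directly. $\KeyGen$ samples a uniform $b \in \binset$ and runs the OSP sender's single message $\secp$ times in parallel; $\pk$ is the vector of first messages, and $\sk$ is $b$ together with the sender's private randomness, which lets the owner recover each $s_i$ from the matching receiver message. To encrypt $\mu \in \binset$, the encryptor plays the OSP receiver on each instance, obtaining a reply and a state $\psi_i = H^{b}\ket{s_i}$, samples a uniform $e_i \in \binset$, measures $\psi_i$ in the $Z$ basis if $e_i = 0$ and the $X$ basis otherwise to get $v_i$, and outputs the classical ciphertext consisting of the replies together with all pairs $(e_i,\, v_i \oplus \mu)$. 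To decrypt, the owner identifies the good indices $i$ with $e_i = b$ (roughly half of them), recomputes $v_i = s_i$ there via $\sk$, unmasks $\mu$ from each, and takes a majority. For security, note that on a bad index ($e_i \neq b$) the outcome $v_i$ of measuring $H^b\ket{s_i}$ in the wrong basis is uniform and independent of everything else in the view, so $v_i \oplus \mu$ is a perfect one-time pad; hence any successful distinguisher must in effect be detecting which indices are good, i.e.\ learning $b$. Making this precise is a standard hybrid over the $\secp$ instances, combined with the observation that predicting the encoded bit $s$ from an honest OSP transcript already breaks OSP security --- measure the prepared state in the computational basis, which returns $s$ when $b=0$ and a fresh uniform bit independent of $s$ when $b=1$, then guess $b$ by testing consistency with the prediction. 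The resulting reduction turns a PKE distinguisher with advantage $\varepsilon$ into an OSP adversary with advantage $\Omega(\varepsilon/\poly(\secp))$, and I expect the only remaining care is the routine correctness amplification.
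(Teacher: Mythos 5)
Your commitment scheme is essentially the paper's construction (parallel OSP with committer as sender, reveal $(b,s_1,\ldots,s_\secp)$, check). The paper proves sum-binding cleanly from $\|\bra{s_0}H^{\otimes\secp}\ket{s_1}\|^2 = 2^{-\secp}$; your threshold-check version requires a slightly different argument but is the same idea.

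Your OT route, however, is both different from the paper's and has a genuine gap that you gloss over. You propose BBCS with OSP replacing the quantum channel, so the \emph{classical} party is the OT sender (state ``preparer'') and the \emph{quantum} party is the OT receiver (measurer). The BBCS security analysis against a malicious measuring party hinges on the fact that the measurer actually \emph{holds} BB84 states sent over a physical quantum channel, so the uncertainty principle directly bounds what it can learn about the encoded bits. OSP gives no such guarantee for a malicious receiver: OSP security only says the receiver cannot learn the basis $b$, and says nothing about what state a cheating receiver ends up with, nor bounds its knowledge of the sender's output bit $s$. The cut-and-choose plus commitment you add verifies consistency of \emph{claimed measurement outcomes}, but it cannot verify that the quantum party ran the OSP honestly, which is the step the entire BBCS uncertainty argument depends on. The paper explicitly identifies this lack of \emph{verifiability} in OSP as the central obstacle, and works around it with a structurally different protocol: the classical party is the OT \emph{receiver}, it delegates preparation of claw states $\frac{1}{\sqrt2}(\ket{0,x_0}+(-1)^z\ket{1,x_1})$ to the quantum sender via a claw-state generator with \emph{indistinguishability} security, and the cut-and-choose is performed by the \emph{quantum} party, which can physically project a checked instance onto the claw state the classical party claims — a check that only the quantum side can run. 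Your ``classical-communication OT reversal'' remark does not resolve this; the direction matters precisely because only the quantum party can do the projection-based verification.

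Your PKE is a different construction from the paper's. The paper builds PKE from a two-round CSG with indistinguishability security: $\pk=\msg_S$, encrypt by preparing the claw state, measuring it in the standard basis to get $(b,x_b)$, and outputting $(\msg_R,b,m\oplus x_b)$; the security reduction is a clean single-shot reduction to CSG indistinguishability (if the attacker predicts $x_b$ and your own standard-basis measurement gives $(b',x_{b'})$ with $b'\ne b$, output $x_b\oplus x_{b'}$). Your direct-from-OSP conjugate-coding construction is plausible, but the reduction is considerably more delicate than you acknowledge: all $\secp$ OSP instances share the same $b$, so the natural instance-by-instance hybrid does not directly apply (intermediate hybrids would have inconsistent $b$'s), and you instead need a hybrid over ciphertext positions together with a distinguish-to-predict step and a separate argument that the remaining (independent) ciphertext coordinates do not help predict $s_i$. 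This is doable, but the ``routine hybrid'' framing understates the work; the paper's CSG abstraction is precisely what makes its reduction short.
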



So what can we conclude about OSP from these results? As mentioned earlier, one of the original motivations for our work was to ``justify'' the use of public-key cryptography in the recent line of work aimed at establishing a classical leash on quantum systems. Progress towards this goal can be appreciated by noting that the techniques introduced in \cite{Brakerski2018ACT,Mahadev2017ClassicalHE,Mahadev2018ClassicalVO,KCVY} all at the very least provide some way to perform an OSP between the classical client and quantum server, and thus, by our result, also provide a way to build OT with classical communication.

So far, the community does not have any approach for building OT with classical communication from minicrypt assumptions, or even from arbitrary trapdoor functions. Thus, our result helps explain why the known constructions of OSP require TCFs, which are more structured than even injective trapdoor functions. In fact, in the classical setting, we actually have an oracle separation between OT and injective trapdoor functions \cite{10.5555/795666.796557}.

However, it remains an open question to give such strong oracle separations in the quantum setting. Progress came when \cite{10.1007/978-3-031-15979-4_6} showed that \emph{perfectly correct} key agreement between one classical and one quantum party does not exist in the quantum random oracle model. This notion of key agreement is implied by the perfectly correct variant of our notion of OT between one classical and one quantum party, and thus, we obtain the following corollary.

\begin{corollary}[Informal]
    Perfectly correct OSP does not exist in the quantum random oracle model.
\end{corollary}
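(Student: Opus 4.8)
The plan is to argue by contradiction: assuming a perfectly correct OSP protocol exists relative to a quantum-accessible random oracle, I would push it through the chain ``OSP $\Rightarrow$ oblivious transfer $\Rightarrow$ key agreement'' and contradict the impossibility theorem of Austrin et al.~\cite{10.1007/978-3-031-15979-4_6}. So suppose $\Pi_{\mathsf{OSP}}$ is an OSP protocol between a classical PPT sender and a QPT receiver, with classical communication, that is perfectly correct and secure against all QPT malicious receivers, with both parties given (quantum) oracle access to a random oracle.

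The first step is to invoke the earlier theorem that OSP implies oblivious transfer with classical communication in which one party is completely classical, instantiating it with $\Pi_{\mathsf{OSP}}$. Three features of that transformation need to be checked. (i) It is a \emph{black-box, relativizing} construction, so that feeding it an oracle-aided $\Pi_{\mathsf{OSP}}$ yields an OT protocol in the \emph{same} quantum random oracle model (QROM). (ii) It inherits \emph{perfect} correctness from the perfect correctness of $\Pi_{\mathsf{OSP}}$: one must verify that no additional correctness error is introduced, and, in particular, that if the construction uses any abort-and-restart or rejection-sampling step, the resulting OT is perfectly correct outright rather than merely conditioned on non-abort. (iii) The classical party of the OSP becomes the classical party of the OT while the quantum receiver becomes the quantum party, so that the output is an OT between one classical and one quantum party. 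Provided these hold, we obtain a perfectly correct OT between one classical and one quantum party, secure against QPT adversaries, in the QROM.

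The second step applies the reduction from oblivious transfer to key agreement alluded to just before the corollary: the classical party samples random OT-sender inputs, the quantum party a random choice bit, they run the OT, and a shared key is derived by lightweight classical post-processing, together with the standard argument that the (sender-)security of the OT upgrades to eavesdropper-security of the key agreement. Since this adds only classical sampling and post-processing on top of the OT, it again relativizes, preserves perfect correctness, and leaves one party classical and the other quantum. We therefore end up with a perfectly correct key agreement protocol between one classical and one quantum party, secure against all QPT adversaries (and hence all polynomial-query adversaries), in the QROM. This contradicts \cite{10.1007/978-3-031-15979-4_6}, which rules out exactly such a protocol. Hence no perfectly correct OSP exists in the quantum random oracle model.

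The main obstacle is not a single clever idea but the discipline required to make the composition airtight: every step of the chain must \emph{simultaneously} be relativizing (so the final object lives relative to precisely the oracle the impossibility speaks about), \emph{exactly} perfectly correct rather than statistically correct, and structured so that the ``classical versus quantum'' role assignment matches the hypothesis of \cite{10.1007/978-3-031-15979-4_6}. The delicate part is perfect correctness, since it is easy for a reduction to silently degrade ``perfect'' into ``overwhelming''; what makes the corollary clean is a statement of the OSP-to-OT transformation (and of the OT-to-key-agreement reduction) that explicitly tracks the correctness parameter and disposes of any aborts without sacrificing perfect correctness. A secondary point worth spelling out is that the eavesdropper-security of the final key agreement genuinely descends from the receiver-indistinguishability of the OSP, through the sender-security of the OT, so that the adversary classes on the two sides of the contradiction coincide.
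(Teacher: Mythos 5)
Your proposal is correct and follows essentially the same route as the paper: instantiate the OSP-to-OT construction with the perfectly correct OSP (preserving perfect correctness and the one-classical/one-quantum role split), turn the resulting OT into perfectly correct key agreement between one classical and one quantum party, and contradict Theorem 3.1 of Austrin et al.\ \cite{10.1007/978-3-031-15979-4_6}. The only step you leave as a ``standard argument'' is made concrete in the paper: eavesdropper security of the key agreement is derived by combining security against the QPT sender (flipping the final transcript bit $b$ so the eavesdropper also outputs $r_{1-b}$) with the OT's search security against the receiver, rather than following from sender-security alone.
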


This leaves a sliver of possibility that \emph{non-perfectly-correct} OSP can yet be constructed without public-key assumptions. However, we have established that if one can build OSP from minicrypt primitives (say, by adapting the techniques of \cite{10.1145/3658665}), or even from arbitrary trapdoor functions, then this would also represent a major breakthrough in cryptography more generally - a construction of classical-communication OT from new assumptions.

\section{Technical Overview}

\subsection{Realizing oblivious state preparation}\label{susbec:overview-construction}

We show that OSP follows from any trapdoor claw-free function (TCF), i.e.\ we don't require an additional adaptive hardcore bit or dual-mode property. Our presentation abstracts out the key role of the TCF, which is to generate a claw-state correlation. That is, we define a ``claw-state generator'' (CSG) as any protocol between a classical sender and quantum receiver that outputs $\frac{1}{\sqrt{2}}(\ket{0,x_0} + \ket{1,x_1})$ to the receiver and $x_0,x_1 \in \{0,1\}^n$ to the sender.\footnote{Technically, we call this a \emph{differentiated-bit} CSG, since the first qubit holds a bit that differentiates the two members of the claw state. It is easy to show that one can generically add the differentiated-bit property to any CSG, while maintaining search security (see \cref{lemma:DPCSG}).} We say that the protocol has \emph{search security} if no QPT receiver can output both $(x_0,x_1)$ except with negligible probability.

Now, we show that CSG with search security implies OSP by relying on a sub-protocol from \cite{10.1007/978-3-031-38554-4_6}.\footnote{Later, we will actually show that OSP implies CSG (with an even stronger security property called \emph{indistinguishability} security), meaning that these notions are in fact equivalent.} After the CSG is performed, the sender chooses two random strings $r_0,r_1 \gets \{0,1\}^n$ and sends them to the receiver. The receiver then maps 

\[\frac{1}{\sqrt{2}}\left(\ket{x_0} + \ket{x_1}\right) \to \frac{1}{\sqrt{2}}\left(\ket{x_0}\ket{x_0 \cdot r_0} + \ket{x_1}\ket{x_1 \cdot r_1}\right),\] and measures all but the last qubit in the Hadamard basis to obtain a string $d$, which it returns to the sender.

It is easy to check that if $x_0 \cdot r_0 = x_1 \cdot r_1$, then the receiver obtain a standard basis eigenstate, while if $x_0 \cdot r_0 \neq x_1 \cdot r_1$, then the receiver obtains a Hadamard basis eigenstate. Moreover, which eigenstate obtained can be computed by the sender, who knows $x_0,x_1$, and $d$. To see why this is secure, note that the bit that determines the basis is equal to $x_0 \cdot r_0 \oplus x_1 \cdot r_1 = (x_0,x_1) 
\cdot (r_0,r_1)$. Thus, by Goldreich-Levin, any adversarial receiver that can predict the basis of their received state can be used to extract an entire claw $(x_0,x_1)$, which breaks the search security of the CSG.

Now, while the protocol above implements {\em random-input} OSP where the sender ends up with a random choice of basis, it is generically possible to reorient this into a chosen-input OSP, which we show in \cref{lemma:random-input}. Finally, we note that in the body, we also construct OSP with the optimal round-complexity of two (i.e.\ one message from the sender followed by one from the receiver), by assuming a slightly stronger variant of TCFs, namely \emph{dual-mode} TCFs. This construction adapts the recent techniques of \cite{10.1007/978-3-031-68382-4_8}, and we refer the reader to \cref{subsec:OSP-dTCF} for details.

\subsection{Applications}

\subsubsection{Proofs of quantumness}

Our first use case for OSP is to instantiate a ``computational Bell test'', first introduced by \cite{KCVY}. The resulting protocol is essentially a generalized presentation of \cite{ABCC}'s recent proof of quantumness protocol, in which they instantiated the OSP using an ``encrypted CNOT'' operation based on a structured type of TCF. 

Consider the server's state at the end of an OSP protocol: If the client's input was $a=0$, they obtain $\ket{x}$ for some bit $x$, and if the client's input was $a=1$, they obtain $H\ket{x}$ for some bit $x$. This is \emph{exactly} the same as Bob's state in the CHSH game once (honest) Alice provides an answer $x$ on input question $a$. Indeed, in the CHSH game, Alice and Bob initially share an EPR pair, and Alice measures her half in the standard basis if $a = 0$ and in the Hadamard basis in $a=1$. This suggests the following proof of quantumness protocol.

\begin{itemize}
    \item The verifier samples $a \gets \{0,1\}$ and performs an OSP with the prover in order to deliver $H^a\ket{x}$.
    \item The parties ``complete'' the CHSH game as follows. The verifier samples $b \gets \{0,1\}$ and sends it to the prover. If $b = 0$, the prover measures their state in the $X+Z$ basis to obtain $y$, and if $b=1$, the prover measures their state in the $X-Z$ basis to obtain $y$, and sends $y$ back to the verifier.\footnote{To be concrete, these bases are defined as $X+Z = \left\{\cos(\pi/8)\ket{0}+\sin(\pi/8)\ket{1}, -\sin(\pi/8)\ket{0} + \cos(\pi/8)\ket{1}\right\}$, and $X-Z = \left\{\cos(-\pi/8)\ket{0}+\sin(-\pi/8)\ket{1}, -\sin(-\pi/8)\ket{0} + \cos(-\pi/8)\ket{1}\right\}.$}
    \item The verifier runs the CHSH verification predicate, accepting if $x \oplus y = a \cdot b$.
\end{itemize}

By a standard analysis of the CHSH game, an honest QPT prover following the strategy outlined above makes the verifier accept with probability $\cos^2(\pi/8) > 0.85$. Now, consider any classical polynomial-time prover. Note that an equivalent way to write the verification predicate is to accept if $y = x \oplus a \cdot b$. Given any classical prover that wins with probability $3/4 + 1/\poly$, we can rewind them to extract answers on both $b=0$ and $b=1$ that simultaneously accept with probability at least $1/2 + 1/\poly$. That is, we can obtain $y_0 = x$ and $y_1 = x \oplus a$ with probability $1/2 + 1/\poly$. However, this contradicts the security of the OSP, since $y_0 \oplus y_1 = a$, and OSP demands that no polynomial-time adversary has noticeable advantage in guessing the basis choice $a$.

\subsubsection{1-of-2 puzzles}

Next, we extend the above ideas to realize more applications, via the intermediate primitive of 1-of-2 puzzles. Introduced by \cite{10.1145/3318041.3355462}, a 1-of-2 puzzle consists of four algorithms defined as follows.

\begin{itemize}
        \item $\KeyGen(1^\secp) \to (\pk,\vk)$. The PPT key generation algorithm takes as input the security parameter $1^\secp$ and outputs a public key $\pk$ and a secret verification key $\vk$.
        \item $\Obligate(\pk) \to (\ket{\psi},y)$: The QPT obligate algorithm takes as input the public key, and outputs a classical obligation string $y$ and a quantum state $\ket{\psi}$.
        \item $\Solve(\ket{\psi},b) \to a$: The QPT solve algorithm takes as input a state $\ket{\psi}$ and a bit $b \in \{0,1\}$ and outputs a string $a$.
        \item $\Ver(\vk,y,b,a) \to \{\top,\bot\}$: The PPT verify algorithm takes as input the verification key $\vk$, a string $y$, a bit $b \in \{0,1\}$, and a string $a$, and either accepts or rejects.
\end{itemize}

Correctness requires that on either challenge $b \in \{0,1\}$, the $\Solve$ algorithm produces an accepting answer $a$, while security stipulates that no QPT adversary can \emph{simultaneously} produce an accepting answer $a_0$ on challenge $b=0$ and an accepting answer $a_1$ on challenge $b=1$. This is clearly an inherently quantum primitive that has been shown by prior work to imply both (privately-verifiable) quantum money with classical communication \cite{10.1145/3318041.3355462}, and position verification with classical communication \cite{liu_et_al:LIPIcs.ITCS.2022.100}.

Our idea to obtain a 1-of-2 puzzle is as follows. The $\KeyGen$ and $\Obligate$ algorithms will run $\secp$ parallel instances of two-round OSP on a uniformly random sender's bit $r \gets \{0,1\}$.\footnote{We note that it is possible to define a more interactive version of 1-of-2 puzzles and write down a candidate from OSP rather than two-round OSP. However, for security we rely on an amplification lemma from \cite{10.1145/3318041.3355462} that crucially uses the two-round setup, so we leave an exploration of this generalization to future work.} This results in a state $(H^r)^{\otimes \secp}\ket{s}$, where $s$ is a $\secp$-bit string. Now, challenge $b=0$ asks for a string that matches $s$ on at least $0.85$ fraction of indices, while challenge $b=1$ asks for a string that matches $s \oplus (r,\dots,r)$ on at least $0.85$ fraction of indices. Roughly, this is a $\secp$-parallel repetition of the above proof of quantumness, all using the same verifier bit $r$. Thus, correctness follows from measuring all states in the $X+Z$ basis when $b=0$, and in the $X-Z$ basis when $b=1$ (and a tail bound). We can also establish a weak form of security. Suppose a (quantum) adversary has a $1/2+1/\poly$ probability of passing both challenges simultaneously. Then, by XORing its answers and taking the majority bit, we have that with $1/2+1/\poly$, this adversary can be used to predict the bit $r$, a contradiction to the security of the two-round OSP. Finally, to obtain a full-fledged 1-of-2 puzzle (i.e.\ with negligible security), we appeal to an amplification lemma of \cite{10.1145/3318041.3355462} that is itself based on the parallel repetition for weakly verifiable puzzles of \cite{TCC:CanHalSte05}.

\subsubsection{Blind delegation}

Next, we show that OSP is sufficient to obtain blind classical delegation of any quantum computation. While there are likely several routes to showing this, our approach essentially instantiates the protocol of \cite{doi:10.1139/cjp-2015-0030} using OSP in place of quantum communication from the client to the server.

In full generality, our definition of blind delegation allows the classical client to delegate the computation of some publicly-known quantum operation $Q$ that takes as input a private classical string $x$ from the client and a quantum state on register $\cV$ from the server. At the end of the protocol, the prover recovers the output $Q(x,\cV)$ up to a quantum one-time pad $X^r Z^s$ with keys $(r,s)$ known to the client. Note that this implies the ability to deliver to the client a classical output, by having the prover measure the output register and deliver the result to the client, which will be correct up to a \emph{classical} one-time pad defined by $r$. 

We show that OSP implies this notion as follows.\footnote{It is also easy to see that this definition implies OSP as well, meaning OSP is both necessary and sufficient.} First, we write the circuit $Q$ as a sequence of alternating Clifford operations and $T^\dagger$ gates $Q = C_{\ell+1}T^\dagger C_\ell \dots C_2 T^\dagger C_1$, where $T^\dagger$ is the $\pi/4$ rotation clockwise around the XY plane. To begin the protocol, the client sends $x \oplus r_\inp$, where $x$ is their input and $r_\inp$ is a classical one-time pad. As is typically the case, Clifford operations are straightforward: the server can apply them directly to the current state of the system, and the client can perform a corresponding update to their one-time pad keys. On the other hand, each time we come to a $T^\dagger$ gate, we will use one instance of OSP (and thus some interaction).

The idea is that $T^\dagger X^r Z^s \ket{\psi} = (P^\dagger)^r X^r Z^s T^\dagger\ket{\psi}$, and so applying the $T^\dagger$ reduces to what we call an \emph{encrypted phase gate}. That is, the client holds a private bit $r$, the server holds a single-qubit quantum state $\ket{\psi}$,\footnote{In general, this could be entangled with the rest of their system, but we suppress this in the overview to avoid clutter.} and we want the server to obtain $P^r\ket{\psi}$ after interaction, potentially up to some Pauli error (concretely, our protocol will result in $Z^mP^r\ket{\psi}$, where $m$ is known to the client). 

To enable this, the client first uses OSP to transmit a fresh state $Z^sP^r\ket{+}$ to the server, where $s$ is known to the client. That is, the client and server engage in an OSP for the $X$ and $Y$ observables, which follows by using ``standard'' OSP for the $Z$ and $X$ observables and then having the server apply the appropriate (public) rotation. Now, a straightforward calculation confirms that if the server applies CNOT from their state $\ket{\psi}$ onto $Z^sP^r\ket{+}$ and then measures the second register in the standard basis, the remaining state on the first register will be exactly $P^r\ket{\psi}$ up to some Pauli $Z$ error.

Security of the delegation protocol is immediate from security of the OSP. One by one, we can switch the client's input to each of the OSP instances to 0. Once this is done, the only relevant information the server obtains from running the protocol is $x \oplus r_\inp$, which perfectly hides the client's private input $x$.

\subsubsection{Verifiable delegation}

A protocol for classical verification of arbitrary BQP computation (CVQC) was first shown by \cite{Mahadev2018ClassicalVO}, albeit from a specific type of TCF known only from LWE. Recent work \cite{KLVY,NZ23,metger2024succinctargumentsqmastandard} has explored a different approach that has given us CVQC from the more generic assumption of QFHE. However, QFHE is itself a strong primitive, and is only known from lattices. In this work, we further generalize the approach, showing that it can be instantiated with any blind classical delegation of quantum computation protocol in place of the QFHE, and thus from any OSP. 


The starting point for this approach is the ``KLVY compiler'', which uses QFHE to compile any classical-verifier game sound against two non-communicating (but potentially entangled) provers into a single prover game, with the hope that the semantic security of the QFHE will give soundness against any QPT prover. Given any two-prover game, the compiled protocol is defined as follows.
\begin{itemize}
    \item The verifier sends a QFHE encryption $\Enc(x)$ of Alice's question to the prover.
    \item The prover runs Alice's strategy under the QFHE to obtain an encryption of her answer $\Enc(a)$, along with some auxiliary quantum state. The prover sends $\Enc(a)$ to the verifier.
    \item The verifier sends Bob's question $y$ in the clear.
    \item The prover uses its auxiliary state and $y$ to obtain Bob's answer $b$, which it sends to the verifier.
    \item The verifier decrypts $\Enc(a)$ to obtain $a$ and applies its verification predicate to $(a,b)$.
\end{itemize}

Intuitively, QFHE is used to enforce the non-communicating assumption \emph{computationally}. That is, the semantic security of QFHE implies that the first prover operation (Alice) cannot transmit any information about her question $a$ to the second prover operation (Bob) that can be efficiently recovered. While we don't have a general theorem establishing optimal soundness of the KLVY compiler\footnote{Here, we mean soundness against \emph{quantum} polynomial-time provers. \cite{KLVY} showed a general theorem establishing soundness against \emph{classical} polynomial-time provers.} for \emph{any} game (see \cite{10.1007/978-3-031-38554-4_6, cui2024computationaltsirelsonstheoremvalue, kulpe2024boundquantumvaluecompiled} for progress in this direction), \cite{NZ23} presented a two-player game for arbitrary BQP computation and showed its soundness under the KLVY compiler, giving CVQC from QFHE as a corollary.

In this work, we begin by defining what we call the \emph{generalized} KLVY compiler, which replaces the first round above with any (potentially interactive, non-compact) blind classical delegation of quantum computation protocol. Then, in order to simplify the task of proving soundness of the generalized KLVY compiler, we define a clean class of strategies for two-prover games, which we call \emph{computationally non-local strategies} (\cref{def:comp-non-local}). While the standard notion of a non-local strategy requires that Alice's operation $\{A^x\}_x$ and Bob's operation $\{B^y\}_y$ (where both are written as sets of strategies parameterized by their question) must be applied to disjoint Hilbert spaces, say $\cH_\cA$ and $\cH_\cB$, our notion relaxes this requirement as follows. It includes any strategy $\{A^x\}_x$ on $\cH_\cA \otimes \cH_\cB$ followed by $\{B^y\}_y$ on $\cH_\cB$ such that no QPT distinguisher given the state on register $\cB$ output by $A^x$ can guess $x$ with noticeable advantage. 

We next prove a theorem (\cref{thm:KLVY}) showing that any upper bound on the value of a two-prover game against computationally non-local strategies is also an upper bound on the soundness of the generalized KLVY-compiled game. While straightforward to show, this theorem is quite useful. It allows one to forget all of the underlying details of the cryptographic component when attempting to prove the soundness of a (generalized) KLVY-compiled protocol. Indeed, prior work (e.g.\ \cite{KLVY,NZ23,metger2024succinctargumentsqmastandard}) carried around clunky notation specific to QFHE including public / secret key pairs, ciphertexts, etc., when upper-bounding the soundness of their compiled non-local games. 

Here, we re-visit \cite{NZ23}'s proof strategy, showing that it in fact establishes an upper bound on \emph{any} computationally non-local strategy (i.e.\ not only ones that arise from the use of QFHE). In particular, there are only a handful of places where QFHE is used in their proof, and each time it is only used to show that Bob cannot distinguish between two different Alice questions with better than negligble advantage. Thus, we conclude that CVQC follows generically from any blind classical delegation of quantum computation protocol, and thus from any OSP.

\subsubsection{Encrypted CNOT and applications}

Next, we re-visit a notion that was informally introduced in the influential work of \cite{Mahadev2017ClassicalHE}, called ``encrypted CNOT''. In \cite{Mahadev2017ClassicalHE}, encrypted CNOT was built assuming TCFs with a particular structural requirement on the claws, and it was incorporated into their construction of quantum FHE (in a non-black-box way). Here, we define encrypted CNOT formally as a special case of blind classical delegation of quantum computation, which in particular implies that it follows from any OSP. However, we go a step further, and provide a simple and direct construction of encrypted CNOT, which in particular shows that if we start with a \emph{two-round} OSP, then we obtain a two-round encrypted CNOT.

To be precise, we define encrypted CNOT as a protocol that takes a private bit $b$ from the client and a two-qubit state from the server, and, if $b=0$, does nothing to the server's state, while if $b=1$, applies a CNOT to the server's state. The output state will only be correct up to a quantum one-time pad that must be known to the client. To implement this from OSP, we operate in two steps. Suppose for simplicity that the server initially holds a two qubit state of the form \[(\alpha_0\ket{0} + \alpha_1\ket{1}) \otimes (\beta_0\ket{0} + \beta_1\ket{1}).\]

\begin{enumerate}
    \item First, depending on the bit $b$, either entangle the first qubit with a fresh register, or not. This can be accomplished using OSP as follows. Execute two instances of OSP, where if $b=0$, the client inputs are $(0,1)$ while if $b=1$, the client inputs are $(1,0)$. Thus, up to a one-time pad, the server's state can now be written as
    \begin{align*}
        &(\alpha_0\ket{0} + \alpha_1\ket{1}) \otimes \ket{0} \otimes \ket{+} \otimes (\beta_0\ket{0} + \beta_1\ket{1})  \ \ \ \ \text{if } b = 0 \\
        &(\alpha_0\ket{0} + \alpha_1\ket{1}) \otimes \ket{+} \otimes \ket{0} \otimes (\beta_0\ket{0} + \beta_1\ket{1})  \ \ \ \ \text{if } b = 1.
    \end{align*}
    The server then applies a CNOT from the 1st to the 3rd qubit and a CNOT from the 2nd to the 3rd qubit. In the first case, where the 3rd qubit is $\ket{+}$, this has no effect, while in the second case, this entangles the 1st and 2nd qubit. Then, after measuring the 3rd qubit in the standard basis, the server's state becomes (up to a one-time pad)
    \begin{align*}
        &(\alpha_0\ket{0} + \alpha_1\ket{1}) \otimes \ket{0} \otimes (\beta_0\ket{0} + \beta_1\ket{1}) \ \ \ \ \text{if } b = 0 \\
        &(\alpha_0\ket{00} + \alpha_1\ket{11}) \otimes (\beta_0\ket{0} + \beta_1\ket{1}) \ \ \ \ \text{if } b = 1.
    \end{align*}
    \item Next, apply CNOT from the 2nd qubit to the 3rd qubit, then ``delete'' the 2nd qubit by measuring it in the Hadamard basis. Clearly, in the $b=0$ case this again has no effect, while in the $b=1$ case, this accomplishes a CNOT between the two input qubits. In particular, it can be confirmed that after this step, the server's state becomes (up to a one-time pad)
     \begin{align*}
        &(\alpha_0\ket{0} + \alpha_1\ket{1}) \otimes (\beta_0\ket{0} + \beta_1\ket{1}) \ \ \ \ \text{if } b = 0 \\
        &(\alpha_0\beta_0\ket{00} + \alpha_0\beta_1\ket{1} + \alpha_1\beta_0\ket{11}+ \alpha_1\beta_1\ket{10}) \ \ \ \ \text{if } b = 1.
    \end{align*}
\end{enumerate}

For details (in particular, how the client recovers the one-time pad keys from the OSP information and the server's measurement results), refer to \cref{subsec:encrypted-CNOT}. Here, we mention the applications we obtain from our encrypted CNOT protocol.

\paragraph{Quantum fully-homomorphic encryption.} The first quantum fully-homomorphic encryption (QFHE) scheme was constructed by \cite{Mahadev2017ClassicalHE}. As alluded to above, the construction combines a particular encrypted CNOT protocol with a particular classical FHE protocol in a non-black-box manner in order to achieve QFHE. A recent work of \cite{10.1007/978-3-031-68382-4_8} pioneered a \emph{generic} approach to QFHE from \emph{any} classical FHE (with log-depth decryption) and \emph{any} dual-mode TCF. In \cref{subsec:encrypted-CNOT}, we observe that their work can in fact be seen as constructing QFHE from any (two-round) \emph{encrypted CNOT} protocol plus classical FHE (with log-depth decryption), and thus we establish that QFHE follows from two-round OSP plus classical FHE (with log-depth decryption).

\paragraph{Claw-state generators with indistinguishability security.} Recall the notion of a claw-state generator (CSG) introduced above, which is a protocol that delivers a state $\frac{1}{\sqrt{2}}(\ket{0,x_0} + (-1)^z\ket{1,x_1})$ to a quantum receiver and strings $(x_0,x_1,z)$ to a classical sender.\footnote{In full generality, we allow the claw-state to have a phase specified by the bit $z$, as long as this bit is known to the sender.} We say that such a protocol has \emph{indistinguishability} security if for all $i \in [n]$, no QPT server can predict the bit $x_{0,i} \oplus x_{1,i}$ with better than $\negl(\secp)$ advantage. 

In \cref{subsec:encrypted-CNOT}, we show that (two-round) encrypted CNOT can be used to obtain a (two-round) CSG with indistinguishability security. The construction is straightforward: the sender begins by sampling a string $\Delta \gets \{0,1\}^n$, and the receiver initializes the state $\ket{+}_\cB \otimes \ket{0}_{\cC_1}\otimes \dots \otimes \ket{0}_{\cC_n}$. Then, the parties engage in $n$ encrypted CNOTs, where the $i$'th protocol takes input $\Delta_i$ from the sender, and applies a CNOT from register $\cB$ to register $\cC_i$. It can be confirmed that the receiver ends up with a state of the form \[\frac{1}{\sqrt{2}}\left(\ket{0,x} + (-1)^z\ket{0,x + \Delta}\right),\] where $x \in \{0,1\}^n$ and $z \in \{0,1\}$ can be recovered by the sender. Crucially, note that XOR of the two members of the claw is equal to $\Delta$, and thus, breaking indistinguishability security of the CSG yields an attack on the encrypted CNOT protocol. 

Combined with our construction of OSP from \cref{susbec:overview-construction}, this shows that OSP and CSG (with either search or indistinguishability security) are \emph{equivalent}. Moreover, the notion of a CSG with indistinguishability security will be central to our implications in the next section establishing cryptographic lower bounds for constructing OSP.

\subsection{Implications}\label{subsec:overview-implications}
Our next goal is to understand what cryptographic hardness is necessary for OSP. Towards addressing this, we show that OSP implies commitments and oblivious transfer (where one participant only requires classical capabilities), and that \emph{two-round} OSP implies public-key encryption.

\subsubsection{Commitments}
Our commitment scheme proceeds as follows: the classical committer acts as the sender in $\secp$ executions of an OSP with chosen input basis $b$ in all executions. The committer thus obtains bits $s_1, \ldots, s_\secp$, while an (honest) receiver ends up with $H^b\ket{s_1}, \ldots, H^b\ket{s_\secp}$.

In the decommit phase, the committer reveals $s_1, \ldots, s_\secp$ along with its comitted bit $b$, and the receiver accepts iff for every $i \in [\secp]$, the projection of its $i^{th}$ qubit onto $H^b\ketbra{s_i}H^b$ accepts. 

The hiding of this commitment against a malicious receiver follows from the fact that OSP hides the sender input $b$ from an arbitrary (malicious) receiver, together with a straightforward hybrid argument.

To see why this satisfies statistical (sum) binding, consider the state $\ket{\psi}$ that an honest receiver ends up with after interacting with an arbitrary malicious committer. 
For any fixing of this state $\ket{\psi}$, the probability that a decommitment to $(0,s_0)$ is accepted is $\pr_{0,s_0} = \| \bra{s_0}\ket{\psi} \|^2$, and a decommitment to $(1,s_1)$ is accepted is $\pr_{1,s_1} = \| \bra{s_1}H^{\otimes \secp}\ket{\psi} \|^2$.
Let $\pr_0$ denote $\max_{s_0} (\pr_{0,s_0})$ and $\pr_1$ denote $\max_{s_1} (\pr_{1,s_1})$.
Since for every $s_0, s_1$, \[\|\bra{s_0}H^{\otimes \secp}\ket{s_1}\|^2 = \frac{1}{2^\secp},\] we conclude that $\pr_0 + \pr_1 \leq 1 + \negl(n)$, as desired.

\subsubsection{Oblivious transfer}

We obtain oblivious transfer (OT) by building on the notion of a claw-state generator (CSG) with indistinguishability security, introduced above.\footnote{Our basic construction achieves a somewhat non-standard definition that we call search security (against a malicious receiver). We also show that, by additionally assuming one-way functions, we can achieve a more standard indistinguishability-based definition. We refer the reader to \cref{subsec:OT-from-OSP} for details.} The basic idea is as follows. The OT receiver will delegate the preparation of a state \[\frac{1}{\sqrt{2}}\left(\ket{0,x_0} + (-1)^z\ket{1,x_1}\right)\] to the OT sender, where $x_0$ and $x_1$ are single bits. We will take $b= x_0 \oplus x_1$ to be the receiver's choice bit, which, by the indistinguishability security of the CSG, is computationally unpredictable to any QPT sender. Then, the sender measures their state in the standard basis to obtain two bits $(c,y)$, and defines their OT bits to be $r_0 = y, r_1 = y \oplus c$. 

Note that if $b = 0$, then $y = x_0 = x_1$ is known to the receiver, while the bit $c$ is uniformly random, meaning $r_1$ is unpredictable. On the other hand, if $b=1$, then $x_0 = 1 \oplus x_1$, so the bit $r_1 = c \oplus y$ is known to the receiver, while the bit $y$ is uniformly random, meaning $r_0$ is unpredictable. However, this analysis relies on the fact that the state measured by the sender is indeed the desired state $\frac{1}{\sqrt{2}}\left(\ket{0,x_0} + (-1)^z\ket{1,x_1}\right)$. Unfortunately, the notion of CSG (and also the underlying notion of OSP) does not guarantee any \emph{verifiability} property, meaning that we have no guarantees on what the OT sender's state might look like if the OT receiver is acting maliciously in the CSG protocol. To remedy this, we repeat the CSG protocol several times and use a cut-and-choose protocol to allow the sender to check that the receiver is behaving (close to) honestly. In the end, after combining the several protocols, we arrive at a (game-based) notion of oblivious transfer between a classical (unbounded) receiver and a quantum (polynomial-time) sender. For full details, please refer to \cref{subsec:OT-from-OSP}.

\subsubsection{Public-key encryption}

Finally, we show that two-round OSP implies (CPA-secure) public-key encryption. Our protocol uses the same building block as the OT protocol from above: a (two-round) CSG for generating the state \[\frac{1}{\sqrt{2}}\left(\ket{0,x_0} + (-1)^z\ket{1,x_1}\right)\] with indistinguishability security. 

The public key in our scheme is the first-round (classical) message $\msg_1$ of the CSG protocol, while the secret key is the secret state of the classical sender who generated this message. To encrypt a bit $m$, use $\msg_1$ to generate the above state along with a second-round message $\msg_2$, measure the state in the standard basis to obtain $(b,x_b)$, and output $(\msg_2,b,m \oplus x_b)$ as the ciphertext. 

Given the sender's state, the key generator can recover the values of $x_0,x_1$ from $\msg_2$ and thus decrypt the message. However, breaking the CPA security of this scheme reduces to being able to predict $x_b$ given $\msg_2$ for a random bit $b$. In turn, this yields an adversary attacking the indistinguishability security of the CSG: Given $\msg_1$, it prepares a state $\frac{1}{\sqrt{2}}\left(\ket{0,x_0} + (-1)^z\ket{1,x_1}\right)$ honestly along with $\msg_2$, samples a random $b$, and runs the PKE adversary to obtain a guess for $x_b$. Then, it measures its state to obtain $(b',x_{b'})$. If $b \neq b'$ (which occurs with probability 1/2), this adversary then knows $x_0 \oplus x_1$, breaking the indistinguishability security of the CSG. Again, we refer to the body, in particular \cref{subsec:PKE-from-OSP}, for the full details.

\section{Preliminaries}

Let $\secp$ denote the security parameter. We write $\negl(\cdot)$ to denote any negligible function, which is a function $f$ such that for every constant $c \in \mathbb{N}$ there exists $N \in \mathbb{N}$ such that for all $n > N$, $f(n) < n^{-c}$. We write $\nonnegl(\cdot)$ to denote any function $f$ that is not negligible, that is, there exists a constant $c$ such that for infinitely many $n$, $f(n) \geq n^{-c}$. Finally, we write $\poly(\cdot)$ to denote any polynomial function $f$, that is, there exist constants $c$ and $N$ such that for all $n > N$, $f(n) < n^c$. 

A probabilistic polynomial-time (PPT) family of circuits $\{C_\secp\}_{\secp \in \bbN}$ is a family of randomized classical circuits with $|C_\secp| \leq \poly(\secp)$, and a quantum polynomial-time (QPT) family of circuits $\{Q_\secp\}_{\secp \in \bbN}$ is a family of quantum circuits with $|Q_\secp| \leq \poly(\secp)$.

Let $\Tr$ denote the trace operator. The \emph{trace distance} between two quantum (mixed) states $\rho_0,\rho_1$, denoted $\TD(\rho_0,\rho_1)$ is defined as \[\frac{1}{2}\| \rho_0 - \rho_1\|_1,\]

where $\| \cdot \|_1$ is the \emph{trace norm}, defined by 

\[\| \rho \|_1 \coloneqq \Tr\sqrt{\rho^\dagger \rho}.\]

The trace distance between two states $\rho_0$ and $\rho_1$ is an upper bound on the probability that any (unbounded) algorithm can distinguish $\rho_0$ and $\rho_1$. 

Given two quantum operations $Q_0,Q_1$ that take as input a state on register $\cA$, their diamond distance is defined as


\[D_\diamond(Q_0,Q_1) \coloneqq \sup_{\cB}\max_{\rho_{\cA,\cB}} \| (Q_0 \otimes \cI_\cB)\rho_{\cA,\cB} - (Q_1 \otimes \cI_\cB)\rho_{\cA,\cB} \|_1,\] where $\cI_\cB$ is the identity matrix on register $\cB$. In words, the diamond distance upper bounds the trace distance between the outputs of $Q_0$ and $Q_1$ on any input (which could be entangled with an arbitrary auxiliary register $\cB$).


We will use the usual convention that $Z$ refers to the basis $\{\ket{0},\ket{1}\}$, $X$ refers to the basis $\{\ket{+},\ket{-}\}$, and $Y$ refers to the basis $\{P\ket{+},P\ket{-}\}$, where $P$ is the phase gate. We will often refer to the $X+Z$ and $X-Z$ bases, defined as 

\[X+Z = \left\{\cos(\pi/8)\ket{0}+\sin(\pi/8)\ket{1}, -\sin(\pi/8)\ket{0} + \cos(\pi/8)\ket{1}\right\},\]

\[X-Z = \left\{\cos(-\pi/8)\ket{0}+\sin(-\pi/8)\ket{1}, -\sin(-\pi/8)\ket{0} + \cos(-\pi/8)\ket{1}\right\}.\]


\begin{lemma}[Gentle measurement \cite{DBLP:journals/tit/Winter99}]\label{lemma:gentle-measurement}
Let $\rho$ be a quantum state and let $(\Pi,\cI-\Pi)$ be a projective measurement such that $\Tr(\Pi\rho) \geq 1-\delta$. Let \[\rho' = \frac{\Pi\rho\Pi}{\Tr(\Pi\rho)}\] be the state after applying $(\Pi,\cI-\Pi)$ to $\rho$ and post-selecting on obtaining the first outcome. Then, $\TD(\rho,\rho') \leq 2\sqrt{\delta}$.
\end{lemma}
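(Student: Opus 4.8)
The plan is to reduce to the case of a \emph{pure} state $\rho$ and then bootstrap to general (mixed) $\rho$ by purification, together with the data-processing inequality (trace distance does not increase under a quantum channel).

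\emph{Step 1 (pure states).} Suppose $\rho = \ketbra{\psi}{\psi}$ and set $p \coloneqq \Tr(\Pi\rho) = \|\Pi\ket{\psi}\|^2 \geq 1-\delta$. Then $\rho' = \ketbra{\psi'}{\psi'}$ with $\ket{\psi'} = \Pi\ket{\psi}/\sqrt{p}$, and since $\Pi$ is a projector, $|\braket{\psi}{\psi'}| = |\bra{\psi}\Pi\ket{\psi}|/\sqrt{p} = p/\sqrt{p} = \sqrt{p} \geq \sqrt{1-\delta}$. Using the standard fact that the trace distance between two pure states $\ket{u},\ket{v}$ equals $\sqrt{1-|\braket{u}{v}|^2}$, we obtain $\TD(\rho,\rho') = \sqrt{1-p} \leq \sqrt{\delta}$.

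\emph{Step 2 (general case).} Let $\ket{\phi}_{\cA\cB}$ be a purification of $\rho$ on register $\cA$, so that $\Tr_{\cB}\ketbra{\phi}{\phi} = \rho$. Apply the projective measurement $(\Pi\otimes\cI_{\cB},\ (\cI-\Pi)\otimes\cI_{\cB})$ to $\ket{\phi}$: its first outcome occurs with probability $\bra{\phi}(\Pi\otimes\cI_{\cB})\ket{\phi} = \Tr(\Pi\rho) \geq 1-\delta$, and conditioning on it yields the pure state $\ket{\phi'}_{\cA\cB} = (\Pi\otimes\cI_{\cB})\ket{\phi}/\sqrt{p}$. A one-line computation using the identity $\Tr_{\cB}\big((\Pi\otimes\cI_{\cB})\, \ketbra{\phi}{\phi}\, (\Pi\otimes\cI_{\cB})\big) = \Pi\,\big(\Tr_{\cB}\ketbra{\phi}{\phi}\big)\,\Pi$ then shows $\Tr_{\cB}\ketbra{\phi'}{\phi'} = \Pi\rho\Pi/p = \rho'$, i.e.\ $\ket{\phi'}$ purifies $\rho'$. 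Applying Step 1 to $\ket{\phi}$ gives $\TD(\ketbra{\phi}{\phi},\ketbra{\phi'}{\phi'}) \leq \sqrt{\delta}$, and since $\Tr_{\cB}$ is a quantum channel, $\TD(\rho,\rho') \leq \TD(\ketbra{\phi}{\phi},\ketbra{\phi'}{\phi'}) \leq \sqrt{\delta} \leq 2\sqrt{\delta}$.

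The one step requiring care is the claim in Step 2 that the post-selected purification $\ket{\phi'}$ remains a purification of the post-selected state $\rho'$; this is exactly the partial-trace identity highlighted above and is the crux of the reduction, while everything else is routine. Incidentally, this argument proves the slightly stronger bound $\sqrt{\delta}$, and the factor of $2$ in the statement is harmless slack (it is genuinely needed in the version of the lemma for arbitrary, non-projective measurement operators, where one replaces $\Pi$ by $\sqrt{M}$ with $0 \leq M \leq \cI$). If one prefers to avoid purification, an alternative route is to write $\rho - \Pi\rho\Pi$ (with $\Pi\rho\Pi$ left subnormalized) as $\Pi\rho(\cI-\Pi) + (\cI-\Pi)\rho\Pi + (\cI-\Pi)\rho(\cI-\Pi)$, bound each of the first two (off-diagonal) terms in trace norm by $\|\Pi\sqrt{\rho}\|_2 \cdot \|\sqrt{\rho}(\cI-\Pi)\|_2 \leq 1\cdot\sqrt{\delta}$ via Cauchy--Schwarz for the trace norm (using $\|\sqrt{\rho}(\cI-\Pi)\|_2^2 = \Tr((\cI-\Pi)\rho) \leq \delta$ and $\|\Pi\sqrt{\rho}\|_2^2 = \Tr(\Pi\rho) \leq 1$), bound the last (positive semidefinite) term by its trace $\Tr((\cI-\Pi)\rho) \leq \delta$, and finally pass from the subnormalized $\Pi\rho\Pi$ to the normalized $\rho'$ at a further cost of at most $\delta$; this also yields $\TD(\rho,\rho') \leq 2\sqrt{\delta}$ but is more computational.
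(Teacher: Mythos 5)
The paper states this lemma with a citation to Winter and gives no proof, so there is no in-paper argument to compare against; you have supplied a self-contained proof, and it is correct. Your purification reduction is the clean, standard route: the pure-state case is exactly $\TD = \sqrt{1-|\langle\psi|\psi'\rangle|^2} = \sqrt{1-p} \leq \sqrt{\delta}$, the partial-trace identity $\Tr_{\cB}\big((\Pi\otimes\cI)\ketbra{\phi}{\phi}(\Pi\otimes\cI)\big) = \Pi\rho\Pi$ correctly shows the post-selected purification still purifies $\rho'$, and monotonicity of trace distance under $\Tr_\cB$ closes the argument. As you observe, this gives the sharper constant $\sqrt{\delta}$ valid for projective $\Pi$; the $2\sqrt{\delta}$ in the statement is slack that is needed only in the version for general POVM elements $0\le M\le \cI$ (where $\sqrt{M}$ replaces $\Pi$ and the overlap need not be real-positive). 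Your alternative Cauchy--Schwarz computation is also sound: $\|\rho-\Pi\rho\Pi\|_1 \le 2\sqrt{\delta}+\delta$ plus a renormalization cost $1-p\le\delta$ yields $\TD\le\sqrt{\delta}+\delta\le 2\sqrt{\delta}$. Either route suffices.
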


\begin{lemma}[Quantum Goldreich-Levin \cite{10.5555/646516.696173}]\label{thm:QGL}
    Suppose there exists a string $s \in \{0,1\}^n$, a state $\ket{\psi}$ on $m$ qubits, a unitary $U$ on $n+m+t$ qubits that is classically controlled on its first $n$ qubits, and an $\epsilon \in (0,1)$ such that for uniformly random $r \gets \{0,1\}^n$, measuring the last qubit of $U\ket{x}\ket{\psi}\ket{0^t}$ yields $r \cdot s$ with probability at least $1/2 + \epsilon$. Then given $\ket{\psi}$, there exists a quantum algorithm that outputs $s$ with probability at least $4\epsilon^2$ using a single invocation of $U$ and $U^\dagger$.

\end{lemma}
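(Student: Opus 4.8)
The plan is to run the standard Adcock--Cleve phase-kickback extractor. I read the hypothesis as follows: designate the last of the $n+m+t$ qubits as the ``answer'' qubit $A$, let register $R$ be the first $n$ qubits, $M$ the next $m$, $T$ the last $t$, and for each $r$ write $\ket{v_r} := U\ket{r}_R\ket{\psi}_M\ket{0^t}_T$ and $p(r) := \Pr[\text{measuring } A \text{ of } \ket{v_r} \text{ yields } r\cdot s]$, so $\bar p := \E_r[p(r)] \ge 1/2 + \epsilon$. The extractor: prepare $R$ in $\ket{+}^{\otimes n}$, $M$ in $\ket{\psi}$, $T$ in $\ket{0^t}$, and one extra qubit $B$ in $\ket{-}$; apply $U$ to $(R,M,T)$; apply $\CNOT$ from $A$ onto $B$; apply $U^\dagger$ to $(R,M,T)$; discard $B$; apply $H^{\otimes n}$ to $R$, measure it, and output the outcome. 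This invokes $U$ and $U^\dagger$ once each, as required.

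The first step of the analysis is to note that a $\CNOT$ into a target in state $\ket{-}$ implements phase kickback, $\CNOT_{A\to B}(\ket{a}_A\ket{-}_B) = (-1)^a\ket{a}_A\ket{-}_B$, so the three middle operations act on $(R,M,T)$ exactly as the unitary $O := U^\dagger Z_A U$, leaving $B$ in $\ket{-}$ and unentangled. Next I would compute the relevant diagonal matrix elements of $O$: since $U$ is classically controlled on $R$, so is $O$, say $O = \sum_r \ket{r}\!\bra{r}_R \otimes O_r$, and $\bra{\psi}\bra{0^t} O_r \ket{\psi}\ket{0^t} = \bra{v_r} Z_A \ket{v_r} = (-1)^{r\cdot s}(2p(r)-1)$, which is simply the expectation of $Z_A$ in a state whose $A$-qubit equals $r\cdot s$ with probability $p(r)$. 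Writing $q(r) := 2p(r)-1$, unitarity then yields the decomposition $O_r\ket{\psi}\ket{0^t} = (-1)^{r\cdot s}q(r)\,\ket{\psi}\ket{0^t} + \ket{g_r}$ with $\ket{g_r} \perp \ket{\psi}\ket{0^t}$.

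Finally I would track the amplitude of the desired outcome. Applying $O$ to $\frac{1}{\sqrt{2^n}}\sum_r \ket{r}\ket{\psi}\ket{0^t}$ and then projecting $R$ onto the Hadamard-basis string $s$, using $\langle\widehat{s}|r\rangle = (-1)^{r\cdot s}/\sqrt{2^n}$ for $\ket{\widehat s} = H^{\otimes n}\ket{s}$, the two factors of $(-1)^{r\cdot s}$ cancel on each good term, so these sum coherently to $\E_r[q(r)]\,\ket{\psi}\ket{0^t}$, while the $\ket{g_r}$ terms produce a residual vector $\ket{f}$ orthogonal to $\ket{\psi}\ket{0^t}$. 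Hence $\Pr[\text{output } s] = |\E_r[q(r)]|^2 + \|\ket{f}\|^2 \ge |2\bar p - 1|^2 \ge (2\epsilon)^2 = 4\epsilon^2$, using that $\bar p \ge 1/2 + \epsilon$ forces $2\bar p - 1 \ge 2\epsilon > 0$.

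I expect the one delicate point to be exactly this last accounting step: a crude triangle-inequality bound on the ``error'' part of the state degrades like $2\sqrt{1/2-\epsilon}$ and is useless for small $\epsilon$, so one must instead exploit that the error component lies in the subspace orthogonal to the target vector $\ket{\psi}\ket{0^t}$ — a consequence of $\ket{g_r}\perp\ket{\psi}\ket{0^t}$, equivalently $\ket{\perp_r}\perp\ket{r}\ket{\psi}\ket{0^t}$, which holds purely because $O$ is unitary — so that it contributes only nonnegatively to the success probability. Everything else (the kickback identity, the single-qubit expectation value, and the inequality $2\bar p - 1 \ge 2\epsilon$) is routine.
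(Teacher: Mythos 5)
Your proof is correct: the paper does not prove this lemma itself but imports it from Adcock--Cleve \cite{10.5555/646516.696173}, and your argument is precisely the standard proof from that source --- the phase-kickback extractor $U^\dagger Z_A U$ on a uniform superposition over $r$, the diagonal matrix elements $(-1)^{r\cdot s}(2p(r)-1)$ made possible by the classical control on the first $n$ qubits, and the exact accounting $\Pr[s] = (\E_r[2p(r)-1])^2 + \|\ket{f}\|^2 \geq 4\epsilon^2$ using that the residual lies orthogonal to $\ket{\psi}\ket{0^t}$. No gaps; the delicate point you flag (avoiding a lossy triangle-inequality bound by exploiting orthogonality of the error component) is handled correctly.
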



\section{Oblivious State Preparation}

In this section, we first define a ``standard'' notion of oblivious state preparation (OSP), and then investigate variants of the definition. The standard notion we propose enables a quantum server communicating with a classical client to prepare a single-qubit state in either the standard or Hadamard basis, without actually learning the basis. This corresponds exactly to the functionality of ``Malicious 4-states QFactory with basis-blindness'' proposed by \cite{10.1007/978-3-030-34578-5_22}.

However, the concept of OSP is not fundamentally tied to the standard and Hadamard bases. Conceptually, it captures the ability for a client to enable the preparation of a state in one of two arbitrary bases on the server's system. Thus, later in the section we define a generalized notion of OSP, which enables the angle between the bases to be arbitrary, and we initiate the study  of this generalized notion.

\subsection{Basic definitions}

\begin{definition}[Oblivious State Preparation]\label{def:OSP}
   Oblivious state preparation (OSP) is a protocol that takes place between a PPT sender $S$ with input $b \in \{0,1\}$ and a QPT receiver $R$:
    \[(s,\ket*{\psi}) \gets \langle S(1^\secp,b),R(1^\secp)\rangle,\] where $s \in \{0,1\}$ is the sender's output and $\ket*{\psi}$ is the receiver's output. It should satisfy the following properties.
    \begin{itemize}
        \item \textbf{Correctness.} For any $b \in \{0,1\}$, let \[\Pi_{\OSP,b} \coloneqq \sum_{s \in \{0,1\}}\ketbra*{s} \otimes H^b\ketbra*{s}H^b.\] Then for any $b \in \{0,1\}$, \[\E\left[\|\Pi_{\OSP,b}\ket*{s}\ket*{\psi}\| : (s,\ket*{\psi}) \gets \langle S(1^\secp,b),R(1^\secp)\rangle\right] = 1-\negl(\secp).\]
        We say that the protocol satisfies \emph{perfect} correctness if the expectation above is equal to 1.
        \item \textbf{Security.} For any QPT adversary $\{\Adv_\secp\}_{\secp \in \bbN}$,
        \begin{align*}\Big|&\Pr\left[b_\Adv = 0 : (s,b_\Adv) \gets \langle S(1^\secp,0),\Adv_\secp\rangle\right]\\ &- \Pr\left[b_\Adv = 0 : (s,b_\Adv) \gets \langle S(1^\secp,1),\Adv_\secp\rangle\right]\Big|  = \negl(\secp).\end{align*}
    \end{itemize}

    We say that the protocol is a \emph{two-round OSP} if it consists of just two messages: one from the sender followed by one from the receiver. In this case, we use the following notation to describe the algorithms of the protocol.
    \begin{itemize}
        \item $\OSP.\Sen(1^\secp,b) \to (\msg_S,\state_S)$. The PPT sender takes as input the security parameter $1^\secp$ and a bit $b$, and outputs a message $\msg_S$ and state $\state_S$.
        \item $\OSP.\Rec(\msg_S) \to (\ket*{\psi},\msg_R)$. The QPT receiver takes as input the sender's message $\msg_S$ and outputs its final state $\ket*{\psi}$ and a message $\msg_R$.
        \item $\OSP.\Dec(\state_S,\msg_R) \to s$. The PPT sender takes as input its state $\state_S$ and the receiver's message $\msg_R$, and produces its output bit $s$.
    \end{itemize}
\end{definition}

Sometimes, we will refer to the above definition as a \emph{chosen-input} OSP, in order to distinguish it from a random-input variant defined below, where the sender does not fix a choice of $b$ at the beginning of the protocol.

\begin{definition}[Random-Input Oblivious State Preparation]
    Random-input OSP is a protocol that takes place between a PPT sender $S$ and a QPT receiver $R$:
    \[((s,b),\ket*{\psi}) \gets \langle S(1^\secp),R(1^\secp)\rangle,\] where $(s,b)$ is the sender's output and $\ket*{\psi}$ is the receiver's output. It should satisfy the following properties.
    \begin{itemize}
        \item \textbf{Correctness.} Let \[\Pi_\OSP \coloneqq \sum_{s,b \in \{0,1\}}\ketbra*{s,b} \otimes H^b\ketbra*{s}H^b.\] Then \[\E\left[\|\Pi_\OSP\ket*{s,b}\ket*{\psi}\| : ((s,b),\ket*{\psi}) \gets \langle S(1^\secp),R(1^\secp)\rangle\right] = 1-\negl(\secp).\]

        \item \textbf{Security.} For any QPT adversary $\{\Adv_\secp\}_{\secp \in \bbN}$,
        \[\Big|\Pr\left[b_\Adv = b : ((s,b),b_\Adv) \gets \langle S(1^\secp),\Adv_\secp\rangle\right] - \frac{1}{2}\Big| = \negl(\secp).\]
    \end{itemize}
\end{definition}

\begin{lemma}\label{lemma:random-input}
    Random-input OSP implies (chosen-input) OSP.
\end{lemma}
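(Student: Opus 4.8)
The plan is to run the random-input protocol and then have the sender ``steer'' the honest receiver into the desired basis by transmitting a single correction bit. Concretely, on sender input $b \in \{0,1\}$: the parties run the given random-input OSP, at the end of which the sender holds $(s,b')$ and the honest receiver holds (a state negligibly close to) $H^{b'}\ket{s}$; the sender then sends the bit $c \coloneqq b \oplus b'$ to the receiver; the receiver applies $H^c$ to its output register and outputs the result; and the sender outputs $s$. Note the sender remains PPT and the receiver remains QPT (it just applies one classically-controlled Hadamard at the end).

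For \textbf{correctness}, since $H^2 = \cI$ and $c \oplus b' = b$, a one-line projector computation shows that conjugating the receiver register of the random-input projector by $H^c$ turns $\Pi_\OSP$ into the chosen-input projector $\Pi_{\OSP,b}$ (i.e.\ $(\cI \otimes H^c)\,\Pi_{\OSP,b}\,(\cI \otimes H^c) = \Pi_{\OSP,b'}$ on the relevant registers), so by unitary invariance of the norm the chosen-input correctness quantity equals the random-input correctness quantity for the basis $b'$ actually sampled; hence the $\negl(\secp)$ error of the random-input protocol carries over verbatim (including preservation of perfect correctness).

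For \textbf{security}, I would give a reduction. Suppose a QPT adversary $\Adv'$ has non-negligible distinguishing advantage $\epsilon$ against the chosen-input scheme above. By the standard conversion, this yields the ability to guess a \emph{uniformly random} chosen-input sender bit with probability $1/2 + \epsilon/2$. Now build $\Adv$ against the random-input scheme: $\Adv$ relays the random-input OSP messages between the honest random-input sender and an internal copy of $\Adv'$; after the interaction it samples $\beta \gets \{0,1\}$, feeds $c \coloneqq \beta$ to $\Adv'$ as the correction message, obtains $\Adv'$'s guess $g$, and outputs $g \oplus \beta$. The key observation is that, conditioned on the transcript and the sender's bit $b'$, the pair (correction message $c$, effective chosen-input sender bit $b^* \coloneqq \beta \oplus b'$) produced in this simulation has exactly the distribution ``$c$ uniform, $b^* = c \oplus b'$'', which is identical to a genuine run of the chosen-input scheme on a uniformly random sender bit. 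Hence $\Adv'$ guesses $b^*$ with probability $1/2 + \epsilon/2$, and since $g \oplus \beta = b'$ iff $g = \beta \oplus b' = b^*$, the adversary $\Adv$ guesses the random-input sender's bit $b'$ with the same probability, contradicting random-input security unless $\epsilon = \negl(\secp)$.

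There is no deep obstacle here: the argument is essentially a one-time pad applied ``under'' the random-input security guarantee, with no rewinding. The only points requiring care are (i) the bookkeeping in the distinguishing-advantage-to-guessing-advantage conversion, and (ii) verifying that the reduction, which never learns $b'$, can nonetheless simulate the correction message perfectly by sending a fresh uniform bit and then accounting for the induced effective sender bit. I would also remark that this transformation costs one extra round of communication (the correction bit is sent only after the receiver's random-input message), which is why the paper obtains a genuinely two-round \emph{chosen}-input OSP only via the direct dual-mode-TCF construction rather than through this lemma.
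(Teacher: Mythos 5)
Your construction and argument coincide with the paper's: run the random-input protocol, send the correction bit $c = b \oplus b'$, and have the receiver apply $H^c$. The paper dismisses security in one line ("$c$ masks $b$"), whereas you spell out the standard one-time-pad reduction — relay messages, simulate $c$ by a fresh uniform $\beta$, and note that guessing the effective chosen bit $b^* = \beta \oplus b'$ is equivalent to guessing $b'$ — which is a correct and welcome elaboration of the same idea; your closing remark about the extra round is also accurate.
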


\begin{proof}
    To obtain chosen-input OSP with sender's choice bit $b$, the parties begin by running a random-input OSP, which (up to negligible trace distance) delivers output $H^{b'}\ket*{s}$ to the receiver and $(s,b')$ to the sender. The sender then sends the bit $c = b \oplus b'$ to the receiver, and the receiver applies $H^c$ to its state to obtain $H^b\ket*{s}$. Security follows from the security of the random-input OSP, which guarantees that the bit $b'$ is unpredictable and thus that the bit $c = b \oplus b'$ masks the sender's choice of $b$.
\end{proof}

\subsection{Claw-state generators}\label{subsec:CSG}

Next, we define (variants of) a claw-state generation (CSG) protocol, and show that CSG implies OSP. Later, in section \cref{subsec:encrypted-CNOT}, we will show that in fact OSP implies CSG, meaning that these notions are equivalent.


\begin{definition}[Claw-State Generator]\label{def:CSG}
    A claw-state generator (CSG) is a protocol that takes places between a PPT sender $S$ and a QPT receiver $R$:
    \[((x_0,x_1,z),\ket{\psi}) \gets \langle S(1^\secp,n),R(1^\secp,n)\rangle,\] where the sender's output consists of $x_0,x_1 \in \{0,1\}^n$ and $z \in \{0,1\}$, and $\ket{\psi}$ is the receiver's output. It should satisfy the following notion of correctness, and, depending on the setting, it should also satisfy either search security or indistinguishability security.

    \begin{itemize}
        \item \textbf{Correctness.} Let \[\Pi_\CSG \coloneqq \sum_{x_0 \neq x_1 \in \{0,1\}^n,z \in \{0,1\}} \ketbra*{x_0,x_1,z} \otimes \frac{1}{2}\left(\ket{x_0}+(-1)^z\ket{x_1})(\bra{x_0}+(-1)^z\bra{x_1}\right).\] Then \[\E\left[\| \Pi_\CSG \ket{x_0,x_1,z}\ket{\psi}
    \| : ((x_0,x_1,z),\ket{\psi}) \gets \langle S(1^\secp,n),R(1^\secp,n)\rangle\right] = 1-\negl(\secp).\]
    We say that the protocol has \emph{perfect} correctness if the above probability is equal to 1.
        \item \textbf{Search security.} For any QPT adversary $\{\Adv_\secp\}_{\secp \in \bbN}$, \[\Pr\left[x_\Adv = (x_0,x_1) : ((x_0,x_1,z),x_\Adv) \gets \langle S(1^\secp,n),\Adv_\secp \rangle\right] = \negl(\secp).\]
        \item \textbf{Indistinguishability security.} For any QPT adversary $\{\Adv_\secp\}_{\secp \in \bbN}$ and any $i \in [n]$, \[\Big|\Pr\left[b_\Adv = x_{0,i} \oplus x_{1,i} : ((x_0,x_1,z),b_\Adv) \gets \langle S(1^\secp,n),\Adv_\secp \rangle\right] - \frac{1}{2}\Big| = \negl(\secp).\]
    \end{itemize}

     We say that the protocol is a \emph{two-round CSG} if it consists of just two messages: one from the sender followed by one from the receiver. In this case, we use the following notation to describe the algorithms of the protocol.
    \begin{itemize}
        \item $\CSG.\Sen(1^\secp,n) \to (\msg_S,\state_S)$. The PPT sender takes as input the security parameter $1^\secp$ and outputs a message $\msg_S$ and state $\state_S$.
        \item $\CSG.\Rec(\msg_S) \to (\ket*{\psi},\msg_R)$. The QPT receiver takes as input the sender's message $\msg_S$ and outputs its final state $\ket*{\psi}$ and a message $\msg_R$.
        \item $\CSG.\Dec(\state_S,\msg_R) \to (x_0,x_1,z)$. The PPT sender takes as input its state $\state_S$ and the receiver's message $\msg_R$, and produces its output $(x_0,x_1,z)$.
    \end{itemize}
    
\end{definition}

We also define a version of a claw-state generator where the honest receiver obtains $\frac{1}{\sqrt{2}}(\ket{0,x_0}+(-1)^z\ket{1,x_1})$, that is, where the two members of the claw are differentiated by the first bit. 

\begin{definition}[Differentiated-Bit Claw-State Generator]\label{def:DPCSG}
    A differentiated-bit claw-state generator is defined exactly like a claw-state generator except that the correctness property is stated as follows. Let \[\Pi_\DPCSG \coloneqq \sum_{x_0 \neq x_1 \in \{0,1\}^n,z \in \{0,1\}} \ketbra*{x_0,x_1,z} \otimes \frac{1}{2}\left(\ket{0,x_0}+(-1)^z\ket{1,x_1})(\bra{0,x_0}+(-1)^z\bra{1,x_1}\right).\] Then \[\E\left[\| \Pi_\DPCSG \ket{x_0,x_1,z}\ket{\psi}
    \| : ((x_0,x_1,z),\ket{\psi}) \gets \langle S(1^\secp,n),R(1^\secp,n)\rangle\right] = 1-\negl(\secp).\]
\end{definition}

It is straightforward to obtain a differentiated-bit CSG with search security from a (plain) CSG with search security.

\begin{lemma}\label{lemma:DPCSG}
    CSG with search security (\cref{def:CSG}) implies differentiated-bit CSG with search security (\cref{def:DPCSG}).
\end{lemma}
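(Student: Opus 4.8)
The plan is to run the given plain CSG and then have the sender --- the only party who learns the claw $(x_0,x_1)$ --- transmit one extra piece of information, an index $j$ on which $x_0$ and $x_1$ differ, so that the receiver can ``unpack'' the differentiating bit with a single $\CNOT$. Concretely, I would define the differentiated-bit CSG as follows: the parties run $\langle S(1^\secp,n), R(1^\secp,n)\rangle$ of the plain CSG, so that (up to correctness error) the receiver holds $\tfrac{1}{\sqrt{2}}(\ket{x_0}+(-1)^z\ket{x_1})$ on an $n$-qubit register $\cX$ and the sender holds $(x_0,x_1,z)$; the sender computes the first coordinate $j\in[n]$ on which $x_0$ and $x_1$ differ, sets $(\tilde x_0,\tilde x_1) \coloneqq (x_0,x_1)$ if $x_{0,j}=0$ and $(\tilde x_0,\tilde x_1)\coloneqq(x_1,x_0)$ otherwise, sends $j$, and outputs $(\tilde x_0,\tilde x_1,z)$; the receiver appends a fresh qubit $\cD$ in state $\ket{0}$, applies $\CNOT$ controlled on the $j$-th qubit of $\cX$ and targeting $\cD$, and outputs the state on $(\cD,\cX)$. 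The receiver remains QPT, and since the sender learns $j$ only after running $\CSG.\Dec$, this construction uses one additional round (so the lemma will not preserve two-roundness; I would flag this explicitly).

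For correctness, I would argue that the final $\CNOT$ is a unitary $V$ mapping the support of $\Pi_\CSG$ with claw $(x_0,x_1,z)$ onto the support of $\Pi_\DPCSG$ with claw $(\tilde x_0,\tilde x_1,z)$: since $\tilde x_{0,j}=0$ and $\tilde x_{1,j}=1$, $V\bigl(\ket{0}_\cD\otimes\tfrac{1}{\sqrt{2}}(\ket{x_0}+(-1)^z\ket{x_1})\bigr)$ equals $\tfrac{1}{\sqrt{2}}(\ket{0,\tilde x_0}+(-1)^z\ket{1,\tilde x_1})$ up to a global phase. A short computation then yields $\|\Pi_\DPCSG\,\ket{\tilde x_0,\tilde x_1,z}\otimes V(\ket{0}\ket{\psi})\| = \|\Pi_\CSG\,\ket{x_0,x_1,z}\ket{\psi}\|$ for every $\ket{\psi}$, so the two protocols have exactly the same correctness error; in particular the $1-\negl(\secp)$ guarantee carries over, as does perfect correctness.

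For search security I would reduce to that of the plain CSG. Given a QPT adversary $\Adv'$ breaking search security of the differentiated-bit CSG with probability $\eta$, I build $\Adv$ against the plain CSG that relays the honest sender $S$'s messages to $\Adv'$, samples $j'\gets[n]$ uniformly and independently, forwards $j'$ to $\Adv'$ at the point where $j$ would be sent, receives $\Adv'$'s output $(a_0,a_1)$, flips a coin, and outputs $(a_0,a_1)$ or $(a_1,a_0)$. Letting $j^\star$ denote the (random) first differing index of the sender's claw, independence of $j'$ from the plain-CSG transcript gives $\Pr[j'=j^\star]=1/n$, and, conditioned on $j'=j^\star$, the view of $\Adv'$ is distributed exactly as in an honest differentiated-bit execution; hence $\Adv$ recovers $\{x_0,x_1\}$ in the correct order with probability at least $\eta/(2n)$, which is non-negligible whenever $\eta$ is --- contradicting search security of the plain CSG.

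I expect the only real subtlety to be this last reduction: the reduction cannot compute $j$ without the claw, so it must guess, and one needs the (easy but worth stating) observation that sampling $j'$ independently means conditioning on $j'=j^\star$ does not reweight the transcript distribution, so the simulated view is genuinely identical to the real one. Everything else --- the construction, the correctness bookkeeping, and the $O(n)$ loss in the reduction --- is routine.
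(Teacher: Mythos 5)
Your proposal is correct, and it takes a genuinely different (though equally elementary) route from the paper. The paper has the sender, after the plain CSG completes, sample a random string $y$ conditioned on $y \cdot x_0 = 0$ and $y \cdot x_1 = 1$ and send it; the receiver then prepends the bit $y \cdot x$ in superposition. You instead send an explicit coordinate $j$ on which $x_0, x_1$ differ and have the receiver apply one $\CNOT$, after relabeling the claw so that $\tilde{x}_{0,j} = 0$. Both constructions add a post-CSG message (the paper's does too, though it does not flag this), and both reductions work by guessing the extra message: the paper guesses $y$ uniformly (constant-factor loss), while you guess $j$ uniformly (factor-$n$ loss); since $n = \poly(\secp)$, either is fine for contradicting search security. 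Two small points in your favor: (1) the paper leaves the ordering of $(\tilde{x}_0, \tilde{x}_1)$ implicit, whereas your relabeling step and the corresponding coin flip in the reduction make this explicit and correct; (2) the paper's sampling of $y$ has no solution in the corner case $x_1 = 0^n$, whereas a differing coordinate $j$ always exists since $x_0 \neq x_1$, so your construction (and the ``perfect correctness carries over'' observation) holds without caveat. Your identification of the key subtlety --- that $j'$ is sampled independently of the transcript, so conditioning on $j' = j^\star$ does not reweight the view --- is exactly the right thing to say and is stated more carefully than in the paper.
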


\begin{proof}
    The protocol for differentiated-bit CSG goes as follows. First, run a plain CSG. Then, the sender samples a uniformly random $y$ conditioned on $y \cdot x_0 = 0$ and $y \cdot x_1 = 1$, and sends $y$ to the receiver. Finally, the receiver applies the map
    \[\frac{1}{\sqrt{2}} (\ket{x_0}+(-1)^z\ket{x_1}) \to \frac{1}{\sqrt{2}} (\ket{y \cdot x_0,x_0}+(-1)^z\ket{y \cdot x_1,x_1} = \frac{1}{\sqrt{2}} (\ket{0,x_0}+(-1)^z\ket{1,x_1}).\]
    Correctness is immediate, and security follows by reduction. In particular, the reduction to the security of the CSG will run the adversary for differentiated-bit CSG, sample a truly uniform $y$ to feed to the adversary in the last round, and return the adversary's guess $x_\Adv$. The $y$ will be properly distributed with probability 1/2, and thus the reduction succeeds with probability at least half that of the differentiated-bit CSG adversary.
\end{proof}

Next, we prove that OSP follows from any CSG with search security.

\begin{theorem}\label{thm:OSP-from-CSP}
    CSG with search security implies OSP.
\end{theorem}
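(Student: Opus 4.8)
The plan is to follow the template of \cref{susbec:overview-construction}. Two preliminary reductions simplify matters: by \cref{lemma:DPCSG} it suffices to start from a \emph{differentiated-bit} CSG with search security (so the honest receiver holds $\frac{1}{\sqrt 2}(\ket{0,x_0}+(-1)^z\ket{1,x_1})$ and the sender holds $(x_0,x_1,z)$, with $x_0 \neq x_1$ except with $\negl(\secp)$ probability), and by \cref{lemma:random-input} it suffices to construct \emph{random-input} OSP. So I would build random-input OSP from a differentiated-bit CSG (instantiated with, say, $n = \secp$) as follows: run the CSG; the sender then samples $r_0,r_1 \gets \{0,1\}^n$ and sends them; the receiver applies the classically-controlled isometry $\ket{c}\ket{x}\mapsto\ket{c}\ket{x}\ket{x\cdot r_c}$ to its state, measures the first $n+1$ qubits in the Hadamard basis to obtain $(e,d)\in\{0,1\}\times\{0,1\}^n$, sends $(e,d)$ back, and keeps the remaining qubit as its output $\ket{\psi}$; the sender outputs the basis bit $b\coloneqq(x_0\cdot r_0)\oplus(x_1\cdot r_1)$ together with $s\coloneqq x_0\cdot r_0$ when $b=0$ and $s\coloneqq z\oplus e\oplus(x_0\oplus x_1)\cdot d$ when $b=1$.

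For correctness I would expand the post-measurement state: conditioned on outcome $(e,d)$, the receiver's surviving qubit is proportional to $(-1)^{x_0\cdot d}\ket{x_0\cdot r_0}+(-1)^{z\oplus e\oplus x_1\cdot d}\ket{x_1\cdot r_1}$. When $b=0$ the two labels coincide and, for exactly the outcomes $(e,d)$ occurring with nonzero probability, this collapses up to global phase to $\ket{x_0\cdot r_0}=H^0\ket{s}$; when $b=1$ the labels are $\{0,1\}$ and one checks (for both orderings) that the state equals, up to global phase, $\frac{1}{\sqrt 2}(\ket{0}+(-1)^{z\oplus e\oplus(x_0\oplus x_1)\cdot d}\ket{1})=H^1\ket{s}$. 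In both cases the sender, knowing $(x_0,x_1,z,r_0,r_1)$ and receiving $(e,d)$, reconstructs $(s,b)$ exactly. Since this holds whenever $x_0\neq x_1$, and the CSG delivers the ideal claw state up to $\negl(\secp)$ (so that, by \cref{lemma:gentle-measurement}, the real receiver's state is $\negl(\secp)$-close to it and the identical subsequent operations are applied in the real and ideal executions), the random-input OSP correctness quantity is $1-\negl(\secp)$.

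For security, the point is that $b=(x_0 \| x_1)\cdot(r_0 \| r_1)$ is a Goldreich--Levin hardcore bit of the concatenated claw $(x_0 \| x_1)\in\{0,1\}^{2n}$, and the random masking string $(r_0 \| r_1)$ is revealed only after the CSG sub-protocol has ended. Given a QPT adversary $\Adv$ that guesses $b$ with advantage $\epsilon=\nonnegl(\secp)$ in the random-input OSP game, I would build a CSG search-security adversary that runs $\Adv$'s CSG phase against the real CSG sender, leaving $\Adv$ in a state $\ket{\psi}$ correlated with $(x_0,x_1,z)$; samples $(r_0,r_1)$ itself (faithfully simulating the OSP sender's second message); and applies \cref{thm:QGL} to the remainder of $\Adv$ --- which, after deferring its final measurement, is a unitary classically controlled on $(r_0 \| r_1)$ whose last qubit equals $(x_0 \| x_1)\cdot(r_0 \| r_1)$ with probability $\tfrac12+p(\ket{\psi})$ --- to extract, and output, a candidate $(x_0 \| x_1)$. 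For any fixed $\ket{\psi}$ this succeeds with probability at least $4(\max(p(\ket{\psi}),0))^2$, and since $\E_{\ket{\psi}}[p(\ket{\psi})]=\epsilon$, Jensen's inequality applied to $t\mapsto t^2$ over $[0,\infty)$ yields overall extraction probability at least $4\epsilon^2=\nonnegl(\secp)$, contradicting search security of the CSG.

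I expect the main obstacle to be mostly bookkeeping rather than conceptual: carrying the phase $(-1)^z$ together with the Hadamard-measurement phases through both the $b=0$ and $b=1$ cases (and, in the latter, both orderings of the labels $\{x_0\cdot r_0,x_1\cdot r_1\}$) to verify that the sender's closed form for $s$ is exactly right and not merely right up to an unknown bit. On the security side the one subtlety worth flagging is the passage from the per-state guarantee of \cref{thm:QGL} to an average-case statement over the distribution of claws and post-CSG states $\ket{\psi}$, which the convexity argument above handles, together with the routine observation that $\Adv$'s final measurement can be deferred so that its prediction subroutine has the unitary form \cref{thm:QGL} requires.
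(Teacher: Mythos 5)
Your proposal is correct and takes essentially the same approach as the paper: the same reduction chain through \cref{lemma:DPCSG} and \cref{lemma:random-input}, the same Goldreich--Levin-based protocol (your $(e,d)$ is the paper's $d\in\{0,1\}^{n+1}$ split into its first coordinate and the remaining $n$, and $z\oplus e\oplus (x_0\oplus x_1)\cdot d$ matches $z\oplus d\cdot(1,x_0\oplus x_1)$), and the same reduction to CSG search security via \cref{thm:QGL}. Your Jensen-inequality bookkeeping for the average-case extraction probability is a slightly more explicit version of the paper's Markov-style averaging, but otherwise the arguments coincide.
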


\begin{proof}
    We show how to use the differentiated-preimage variant of CSG to build a random-input OSP, and then appeal to \cref{lemma:DPCSG} to obtain differentiated-preimage CSG from CSG, and \cref{lemma:random-input} to obtain (chosen-input) OSP from random-input OSP. The main idea is to use Goldreich-Levin, similar to how it is used in \cite{10.1007/978-3-031-38554-4_6}, in order to use a distinguisher for the OSP basis to obtain a predictor for the claw-state. The protocol is given in \cref{fig:OSP-from-CSG}.

    \protocol{OSP from CSG}{Random-input OSP from any differentiated-bit claw-state generator.}{fig:OSP-from-CSG}{
    \begin{itemize}
        \item The sender $S$ and receiver $R$ begin by running a differentiated-bit CSG, which delivers $(x_0,x_1,z)$ to $S$ and (up to negligible trace distance) $\frac{1}{\sqrt{2}}(\ket{0,x_0} + (-1)^z\ket{1,x_1})$ to $R$, where $x_0,x_1 \in \{0,1\}^n$ and $z \in \{0,1\}$.
        \item Next, $S$ samples $r_0,r_1 \gets \{0,1\}^n$ and sends them to $R$.
        \item Using $r_0,r_1$, the receiver $R$ applies the operation that maps \[\frac{1}{\sqrt{2}}(\ket{0,x_1} + \ket{1,x_1}) \to \frac{1}{\sqrt{2}}(\ket{0,x_0}\ket{r_0 \cdot x_0} + (-1)^z\ket{1,x_1}\ket{r_1 \cdot x_1}),\] and then measures all but the last qubit in the Hadamard basis to obtain a string $d \in \{0,1\}^{n+1}$, which it returns to $S$. $R$ outputs its remaining qubit. 
        \item $S$ sets $b \coloneqq (x_0,x_1) \cdot (r_0,r_1)$. If $b = 0$, then $S$ sets $s \coloneqq x_0 \cdot r_0 = x_1 \cdot r_1$. If $b=1$, then $S$ sets $s \coloneqq z \oplus d \cdot (1, x_0 \oplus x_1)$. $S$ outputs $(s,b)$. 
    \end{itemize}
    }

    First, we argue correctness. If the sender's output is $b = (x_0,x_1) \cdot (r_0,r_1) = 0$ and $s = x_0 \cdot r_0 = x_1 \cdot r_1$, then the receiver's state before their final measurement is (negligibly close to) \[\frac{1}{\sqrt{2}}\left(\ket*{0,x_0} + (-1)^z\ket*{1,x_1}\right) \otimes \ket*{s}.\] So their Hadamard basis measurement has no effect on the last qubit, and their output state will be $\ket*{s} = H^b\ket*{s}$.

    Next, if the sender computes $b = (x_0,x_1) \cdot (r_0,r_1) = 1$, then the receiver's state before their final measurement is either (negligibly close to) \[\frac{1}{\sqrt{2}}\left(\ket*{0,x_0}\ket*{0} + (-1)^z\ket*{1,x_1}\ket*{1}\right) \text{    or    } \frac{1}{\sqrt{2}}\left(\ket*{0,x_0}\ket*{1} + (-1)^z\ket*{1,x_1}\ket*{0}\right).\]

    Either way, a standard calculation shows that if they measure all but their last qubit in the Hadamard basis to obtain $d$, the last qubit becomes $Z^{z \oplus d \cdot (1,x_0 \oplus x_1)}\ket*{+} = H\ket*{s}$, for $s = z \oplus d \cdot (1,x_0 \oplus x_1)$.

    Now, we establish security. Suppose there exists $\Adv = \{\Adv_\secp\}_{\secp \in \bbN}$ that has $\nonnegl(\secp)$ advantage in the OSP security game. Then there must be some $\nonnegl(\secp)$ probability that, after $\Adv$ and $S$ interact in the CSG protocol, $\Adv$ still has $\nonnegl(\secp)$ advantage \emph{conditioned on the interaction so far}. 

    That is, let $\ket*{\psi}$ be the state of $\Adv_\secp$ right after the conclusion of the CSG protocol, and define $B$ to be the routine that takes $r_0,r_1 \in \{0,1\}^n$ as input, runs the remainder of $\Adv$'s strategy using state $\ket*{\psi}$ and strings $(r_0,r_1)$, and outputs $\Adv_\secp$'s guess for $b$. Then we have that

    \[\Pr_{\ket*{\psi}}\left[\E_{r_0,r_1 \gets \{0,1\}^n}[B(\ket*{\psi},(r_0,r_1))= (x_0,x_1) \cdot (r_0,r_1)] = \frac{1}{2} + \nonnegl(\secp)\right] = \nonnegl(\secp).\]

    Now, we appeal to \cref{thm:QGL}, which implies that there exists a $B'$ such that when $B$ has $\nonnegl(\secp)$ advantage given advice state $\ket*{\psi}$, $B'(\ket*{\psi})$ has $\nonnegl(\secp)$ probability of outputting $(x_0,x_1)$. But this yields an adversary that breaks the security of the CSG with $\nonnegl(\secp)$ probability, completing the proof.

\end{proof}

\subsection{OSP with generalized angle}

Next, we consider a generalized notion of OSP, where the protocol is defined by \emph{any} choice of two (not necessarily mutually unbiased) single-qubit bases. By post-processing with an appropriate rotation, we can without loss of generality consider one basis to be $\{\ket*{+},\ket*{-}\}$ and the other to be a rotated basis on the XY plane of the Bloch sphere. For example, by having the receiver apply a Hadamard gate followed by a $\sqrt{X}$ rotation to their state received as output from the protocol described in \cref{def:OSP}, we have that if $b = 0$, the receiver obtains either $\ket*{+}$ or $\ket*{-}$ and if $b=1$, the receiver obtains either $\frac{1}{\sqrt{2}}(\ket*{0} + i\ket*{1})$ or $\frac{1}{\sqrt{2}}(\ket*{0} - i\ket*{1})$. 

While this is an example of OSP with mutually unbiased bases (two bases at a maximum angle), one can consider OSP with arbitrary angle between the chosen bases. For any angle $\theta \in [2\pi]$, we define $\ket*{+_\theta} \coloneqq \frac{1}{\sqrt{2}}(\ket*{0} + e^{i\theta}\ket*{1})$. Then, OSP with mutually unbiased bases corresponds to OSP with basis $\{\ket*{+},Z\ket*{+}\}$ or basis $\{\ket*{+_{\pi/2}},Z\ket*{+_{\pi/2}}\}$. For any $\epsilon \in (0,1]$, we define $\epsilon$-OSP to be an OSP with bases $\{\ket*{+},Z\ket*{+}\}$ and $\{\ket*{+_{\epsilon\pi/2}},Z\ket*{+_{\epsilon\pi/2}}\}$, defined formally as follows.

\begin{definition}[$\epsilon$-OSP]
    An OSP with generalized angle, or $\epsilon$-OSP, is a protocol that takes place between a PPT sender $S$ with input $b \in \{0,1\}$ and a QPT receiver $R$:
    \[(s,\ket*{\psi}) \gets \langle S(1^\secp,b),R(1^\secp)\rangle,\] where $s \in \{0,1\}$ is the sender's output and $\ket*{\psi}$ is the receiver's output. It should satisfy the following properties.
    \begin{itemize}
        \item \textbf{Correctness.} For any $b \in \{0,1\}$, let \[\Pi_{\epsilon\text{-}\OSP,b} \coloneqq \sum_{s \in \{0,1\}}\ketbra*{s} \otimes Z^s\ketbra*{+_{b\epsilon\pi/2}}Z^s.\] Then for any $b \in \{0,1\}$, \[\E\left[\|\Pi_{\epsilon\text{-}\OSP,b}\ket*{s}\ket*{\psi}\| : (s,\ket*{\psi}) \gets \langle S(1^\secp,b),R(1^\secp)\rangle\right] = 1-\negl(\secp).\]
        \item \textbf{Security.} For any QPT adversary $\{\Adv_\secp\}_{\secp \in \bbN}$,
        \begin{align*}\Big|&\Pr\left[b_\Adv = 0 : (s,b_\Adv) \gets \langle S(1^\secp,0),\Adv_\secp\rangle\right]\\ &- \Pr\left[b_\Adv = 0 : (s,b_\Adv) \gets \langle S(1^\secp,1),\Adv_\secp\rangle\right]\Big|  = \negl(\secp).\end{align*}
    \end{itemize}
\end{definition}

In this work, we give constructions of ``standard'' OSP (with $\epsilon=1$), and focus on deriving implications of this notion. However, it would be useful to know if this flavor of OSP is in some sense the ``minimal'' OSP assumption, and more generally, how $\epsilon$-OSP relates for various choices of $\epsilon$. While we do not fully resolve these questions in this work, we do show that $\epsilon$-OSP implies OSP for many choices of $\epsilon$. In particular, we show the following claim.\footnote{Ideally, we would like to show the implication for any rational $\epsilon$, but the current approach does not appear to work for $\epsilon = \text{even} / \text{odd}$.} 

\begin{claim}
    For any rational constant $\epsilon \in (0,1]$ that can be written as $\epsilon = c/d$ where $c$ is an odd integer, it holds that $\epsilon$-OSP implies OSP. In particular, for any constant $n$, $1/n$-OSP implies OSP.\footnote{In fact, our proof shows that $\epsilon$-OSP implies OSP even for any inverse-polynomial $\epsilon = 1/\poly(\secp)$. Thus, OSP with any small enough but still non-trivial angle implies standard OSP. We note that \cite{dunjko2016blindquantumcomputingidentical} has established similar results in the information-theoretic setting.}
\end{claim}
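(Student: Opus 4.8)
The plan is to use a single output of $\epsilon$-OSP as a ``rotation resource'' and have the receiver perform a random walk on the circle, in the spirit of the remote-rotation technique of \cite{doi:10.1139/cjp-2015-0030,dunjko2016blindquantumcomputingidentical}, stopping the walk at a point that ``lands correctly'' for both values of the sender's bit. Write $\epsilon = c/d$ in lowest terms, so that $c$ is odd by hypothesis; if $\epsilon = 1$ there is nothing to do beyond the fixed receiver-side rotation discussed above that translates between the bases of $1$-OSP (i.e.\ $\epsilon$-OSP with $\epsilon = 1$) and those of standard OSP (\cref{def:OSP}), so assume $d \geq 2$. The basic gadget is this: writing $\ket{+_\psi} \coloneqq \tfrac{1}{\sqrt2}(\ket{0} + e^{i\psi}\ket{1})$, suppose the receiver holds an ``accumulator'' qubit in state $\ket{+_\psi}$ and runs one instance of $\epsilon$-OSP with the sender (who uses input $b$), obtaining a fresh qubit in state $Z^s\ket{+_{b\epsilon\pi/2}} = \ket{+_{b\epsilon\pi/2 + s\pi}}$ for a bit $s$ delivered to the sender. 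The receiver applies a $\CNOT$ controlled on the accumulator and targeting the fresh qubit, measures the fresh qubit in the standard basis to obtain a bit $m$, and forwards $m$ to the sender. A short calculation shows $\Pr[m = 0] = 1/2$ regardless of $\psi,b,s$, and that the accumulator is left (exactly, with no Pauli byproduct) in state $\ket{+_{\psi + (-1)^m(b\epsilon\pi/2 + s\pi)}}$ --- so each step rotates the accumulator by a uniformly random, $b$-independent sign times $b\epsilon\pi/2$ (modulo a $\pi$-multiple known to the sender).

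Next I would define the protocol: run $T = \poly(\secp)$ sequential rounds of this gadget, with the sender using its input $b$ in every $\epsilon$-OSP call. The receiver initializes the accumulator to $\ket{+}$ and a counter $P_0 \coloneqq 0$, and in round $t$ runs the gadget and sets $P_t = P_{t-1} + (-1)^{m_t}$; the first time $P_t \equiv d \pmod{2d}$ it ``freezes'' the accumulator, participating honestly in all later rounds but discarding their outputs. By induction on $t$, the frozen accumulator is $\ket{+_{\psi_t}}$ with $\psi_t \equiv b\cdot\tfrac{\epsilon\pi}{2}\cdot P_t \pmod\pi$, where the $\pi$-multiple is a known function of $b$, the $s_t$'s and the $m_t$'s, hence computable by the sender. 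Since $c$ is odd we have $cd \equiv d \pmod{2d}$, so at freezing time $\tfrac{\epsilon\pi}{2}P_t = \tfrac{\pi}{2d}(cP_t) \equiv \tfrac{\pi}{2} \pmod\pi$; therefore the frozen state is $\ket{+}$ or $\ket{-}$ when $b = 0$, and $\ket{+_{\pi/2}}$ or $\ket{+_{3\pi/2}}$ when $b = 1$ --- a valid output of $1$-OSP on bit $b$ --- and the sender, tracking all of the above, outputs the bit identifying which. Composing with the fixed receiver-side rotation then yields standard OSP as in \cref{def:OSP}.

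For correctness I would observe that $(P_t)_t$ is a simple random walk on the cycle $\mathbb{Z}_{2d}$ (its steps $(-1)^{m_t}$ are uniform and independent), whose expected hitting time of the antipodal vertex $d$ starting from $0$ is $d^2$ (gambler's ruin); hence $T = \Theta(d^2\secp) = \poly(\secp)$ rounds suffice for the receiver to freeze except with probability $2^{-\Omega(\secp)}$, and adding the $\negl(\secp)$ error of each of the $T$ underlying $\epsilon$-OSP instances (which accumulate in trace distance) keeps the overall correctness error negligible. The same bound goes through for any $d = \poly(\secp)$ with $c$ odd, which yields the inverse-polynomial-angle variant of the footnote. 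For security I would note that the freezing rule is a function of $m_1, m_2, \dots$ alone, and this sequence is uniform and independent of $b$; consequently the transcript has a $b$-independent shape --- exactly $T$ sequential $\epsilon$-OSP executions, all with sender input $b$ --- and a standard hybrid argument that switches the sender's input in these $T$ executions from $b$ to $0$ one at a time, each step justified by the security of $\epsilon$-OSP, shows that no QPT receiver distinguishes $b=0$ from $b=1$ with better than $\negl(\secp)$ advantage.

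I expect the main obstacle to be the design of the stopping rule rather than any calculation. One cannot instruct the receiver to stop ``when the accumulator reaches angle $\pi/2$'': the receiver does not know $b$, hence does not know whether each gadget step rotates by $\pm\epsilon\pi/2$ (when $b=1$) or by $0$ modulo $\pi$ (when $b=0$), so it cannot track the angle --- and the sender's messages must remain $b$-independent too. The resolution --- stop after the $\pm 1$ counter first hits the denominator $d$ --- works precisely because $d$ is a fixed constant, the counter's increments are $b$-independent, and the identity $cd \equiv d \pmod{2d}$ for odd $c$ forces this counter value to correspond to angle $\pi/2 \pmod\pi$ exactly when $b=1$ (and to angle $0 \pmod\pi$ when $b=0$). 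Confirming this number-theoretic alignment, and that the $\CNOT$-and-measure gadget introduces no stray Pauli, are the only computations I would actually need to carry out.
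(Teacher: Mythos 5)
Your proof is correct, but it takes a genuinely different route from the paper. Both arguments rely on the same core gadget --- $\CNOT$ from one qubit onto another and measure the target, which with probability $1/2$ adds the two angles and with probability $1/2$ subtracts them --- and both exploit the hypothesis that $c$ is odd to conclude that $d \cdot \epsilon\pi/2 = c\pi/2 \equiv \pi/2 \pmod\pi$. The paper, however, runs $\Theta(d^3 \secp)$ instances of $\epsilon$-OSP, pairs them up, applies the gadget to each pair, \emph{discards} the subtraction outcomes, and recurses for $\log d$ layers; a Chernoff bound guarantees that at least one state at angle $d\cdot b\epsilon\pi/2$ survives, and the presentation is only carried out for $d$ a power of two (with a remark that the general case is similar). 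Your proof instead threads a \emph{single} accumulator through $T=\Theta(d^2\secp)$ instances, treating the add/subtract outcomes symmetrically as a $\pm 1$ random walk and stopping the first time the counter $P_t$ hits $\equiv d \pmod{2d}$; the gambler's-ruin bound gives the hitting time, and the observation that $c^{-1}$ is odd modulo $2d$ (equivalently $cd\equiv d \pmod{2d}$) shows the stopping condition is exactly the one that lands the angle on $\pi/2 \pmod\pi$ when $b=1$. This buys you a cleaner treatment of arbitrary $d$ (no power-of-two restriction), a smaller number of underlying $\epsilon$-OSP calls ($d^2\secp$ versus $d^3\secp$), and it makes the $b$-independence of the stopping rule --- hence of the transcript's ``shape'' --- explicit, which is exactly what the hybrid security argument needs. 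Two small points worth noting: (1) the $\epsilon$-OSP calls in your protocol can all be run in parallel (the accumulator gadget is entirely receiver-local and the $m_t$'s can be reported in a single final message), so the round complexity is no worse than the paper's despite the apparently sequential description; and (2) you are using the $m=1$ branch of the gadget, whereas the paper discards it --- your formula $\ket{+_{\psi+(-1)^m(b\epsilon\pi/2 + s\pi)}}$ is correct modulo $2\pi$, and in fact the paper's displayed expression for that branch has a minor Pauli-tracking slip ($Z^{s_2}$ should read $Z^{s_1\oplus s_2}$ up to global phase) which is harmless there but would matter if one tried to adapt the paper's gadget statement verbatim to your accumulator scheme.
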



\begin{proof}
    The main idea is to use the fact that given two states $Z^{s_1}\ket*{+_{\phi_1}}$ and $Z^{s_2}\ket*{+_{\phi_2}}$, it is possible to produce the state $Z^{s_1 \oplus s_2}\ket*{+_{\phi_1 + \phi_2}}$ with probability 1/2. That is, we can sum the angles of the states using a procedure that succeeds with probability 1/2 (this procedure is used, for example, in Kuperberg's algorithm \cite{doi:10.1137/S0097539703436345}). This follows by simply applying a CNOT from the first to the second state, and then measuring the second state in the standard basis. If the result is 0, the first state is now $Z^{s_1 \oplus s_2}\ket*{+_{\phi_1 + \phi_2}}$. To confirm this, we have
    \begin{align*}
        \text{CNOT}&\left(Z^{s_1}\ket*{+_{\phi_1}} \otimes Z^{s_1}\ket*{+_{\phi_2}}\right) \\ &= \frac{1}{2}\text{CNOT}\left(Z^{s_1}\otimes Z^{s_2}\right)\left(\ket*{0} + e^{i\phi_1}\ket*{1}\right)\left(\ket*{0} + e^{i\phi_2}\ket*{1}\right) \\
        &= \frac{1}{2}\left(Z^{s_1 \oplus s_2}\otimes Z^{s_2}\right)\left(\ket*{00} + e^{i\phi_2}\ket*{01} + e^{i\phi_1}\ket*{11} + e^{i(\phi_1+\phi_2)}\ket*{10}\right) \\
        &= \frac{1}{2}\left(Z^{s_1 \oplus s_2}\otimes Z^{s_2}\right)\left(\left(\ket*{0}+e^{i(\phi_1 +\phi_2)}\ket*{1}\right)\ket*{0} + e^{i \phi_2}\left(\ket*{0}+e^{i(\phi_1-\phi_2)}\ket*{1}\right)\ket*{1}\right) \\
        &= \frac{1}{\sqrt{2}}Z^{s_1 \oplus s_2}\ket*{+_{\phi_1 +\phi_2}}\ket*{0} + \frac{1}{\sqrt{2}}Z^{s_2}e^{i\phi_1}\ket*{+_{\phi_1-\phi_2}}\ket*{1}.
    \end{align*}

    Now, for simplicity suppose that $d$ is a power of 2 (a similar procedure works for arbitrary $d$), and let $\secp$ be the security parameter. Given input $b \in \{0,1\}$, the parties will run $8^{\log d}\secp = d^3 \secp = \poly(\secp)$ many $\epsilon$-OSP protocols with sender input $b$. The receiver is now in possession of $8^{\log d}\secp$ many states (negligibly close to) $Z^s\ket*{+_{b \epsilon \pi/2}}$, each with potentially different $s$. The receiver now runs a procedure to ``sum'' $d$ of them together. If $b=0$, this results in a state \[Z^{s'}\ket*{+} \in \left\{\ket*{+},\ket*{-}\right\},\] whereas if $b=1$, this results in a state \[Z^{s'}\ket*{+_{d\epsilon \pi/2}} = Z^{s'}\ket*{+_{c\pi/2}} \in \left\{\ket*{+_{\pi/2}},Z\ket*{+_{\pi/2}}\right\},\] where the final inclusion uses the fact that $c$ is an odd integer, and $s'$ is the XOR of all the $s$ corresponding to states involved in the sum.

    To obtain the desired sum, the receiver operates in layers. Given $8k\secp$ pairs of states at angle $\phi$, the receiver splits them into $4k\secp$ pairs of states, and applies the above CNOT-and-measure procedure to each pair. The expected number of ``successes'' is $2k\secp$, and by Chernoff, there will be at least $k\secp$ successes with all but $\negl(\secp)$ probability. Thus, with all but $\negl(\secp)$ probability, the receiver obtains $k\secp$ states at angle $2\phi$. 

    Now, starting with $8^{\log d}\secp$ many states at angle $\phi$ and operating for $\log d$ layers, the receiver will end up with $\secp \geq 1$ state at angle $2^{\log d}\phi=d\phi$. This establishes correctness of the OSP protocol.

    Finally, security of the OSP protocol follows by a standard hybrid argument from the security of the $\epsilon$-OSP protocol, which we have simply repeated $d^3 \secp$ times.
    
\end{proof}

We conclude this section by proving a natural structural lemma about OSP, and more generally $\epsilon$-OSP. It shows that in the honest case, the sender's output $s$ must be (negligibly) close to uniformly random for either choice of $b \in \{0,1\}$. That is, an honest run of $\epsilon$-OSP produces a uniformly random eigenstate of the observable specified by $b$ (though we caution that this is no longer necessarily true when the receiver is adversarial).

\begin{lemma}\label{lemma:random-s}
    For any $\epsilon \in (0,1]$ such that $\epsilon \geq 1/\poly(\secp)$, any secure $\epsilon$-OSP protocol, and any $b \in \{0,1\}$, it holds that 
    \[\Big| \Pr\left[s = 0 : (s,\ket{\psi}) \gets \langle S(1^\secp,b),R(1^\secp)\rangle\right] - \frac{1}{2}\Big| = \negl(\secp).\]
\end{lemma}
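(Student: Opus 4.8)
The plan is to derive the claim entirely from the correctness and security of $\epsilon$-OSP, by testing explicit adversaries that run the honest receiver and then measure its output qubit. First, I would use correctness together with \cref{lemma:gentle-measurement}: since $\E[\|\Pi_{\epsilon\text{-}\OSP,b}\ket*{s}\ket*{\psi}\|] = 1-\negl(\secp)$ and the projected norm lies in $[0,1]$, we get $\Tr(\Pi_{\epsilon\text{-}\OSP,b}\,\rho_{\cS\cR}) = 1-\negl(\secp)$, where $\rho_{\cS\cR}$ is the classical-quantum state recording the sender's output $s$ on register $\cS$ and the honest receiver's output on $\cR$. Gentle measurement then places $\rho_{\cS\cR}$ within negligible trace distance of $\sum_{s_0\in\{0,1\}} p^{(b)}_{s_0}\ketbra*{s_0}_\cS \otimes Z^{s_0}\ketbra*{+_{b\epsilon\pi/2}}Z^{s_0}$, where $p^{(b)}_{s_0} \coloneqq \Pr[s = s_0]$ in an honest run with sender bit $b$ (the coefficients stay close to the true probabilities since $\cS$ is classical). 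So, for the purpose of bounding measurement statistics, we may assume the honest receiver literally holds $Z^s\ket*{+_{b\epsilon\pi/2}}$.

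Next I would consider two QPT adversaries: $\Adv_X$ runs the honest receiver against the sender, measures the output qubit in the $X$ basis, and outputs the resulting bit as its guess for the sender's input; $\Adv_Y$ is identical but measures in the $Y$ basis. Using the ideal form above and the identities $|\langle+_\alpha|+_\phi\rangle|^2 = \cos^2\!\big(\tfrac{\phi-\alpha}{2}\big)$ and $Z\ket*{+_\phi} = \ket*{+_{\phi+\pi}}$, one computes that against $S(1^\secp,b)$, $\Adv_X$ outputs $0$ with probability $\tfrac12 + (p^{(b)}_0 - \tfrac12)\cos(b\epsilon\pi/2) \pm \negl(\secp)$ and $\Adv_Y$ outputs $0$ with probability $\tfrac12 + (p^{(b)}_0-\tfrac12)\sin(b\epsilon\pi/2) \pm \negl(\secp)$. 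Applying $\epsilon$-OSP security (the $b=0$ and $b=1$ acceptance probabilities of each adversary differ by $\negl(\secp)$) gives $\big|(p^{(0)}_0 - \tfrac12) - (p^{(1)}_0-\tfrac12)\cos(\epsilon\pi/2)\big| = \negl(\secp)$ from $\Adv_X$, and — using $\sin 0 = 0$ so the $b=0$ term vanishes — $\big|(p^{(1)}_0 - \tfrac12)\sin(\epsilon\pi/2)\big| = \negl(\secp)$ from $\Adv_Y$.

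To finish, invoke $\epsilon \geq 1/\poly(\secp)$: since $\epsilon\pi/2 \in (0,\pi/2]$, concavity of $\sin$ gives $\sin(\epsilon\pi/2) \geq \tfrac{2}{\pi}\cdot\tfrac{\epsilon\pi}{2} = \epsilon \geq 1/\poly(\secp)$, so the $\Adv_Y$ relation forces $|p^{(1)}_0 - \tfrac12| = \negl(\secp)$, and then the $\Adv_X$ relation (with $|\cos(\epsilon\pi/2)| \leq 1$) forces $|p^{(0)}_0 - \tfrac12| = \negl(\secp)$. This is exactly the claimed statement, for both $b = 0$ and $b = 1$.

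I expect the only non-routine decision to be the choice of measurements: a single basis (say $X$) yields merely one linear relation between $p^{(0)}_0$ and $p^{(1)}_0$ and pins down neither, so a second non-commuting measurement is needed; and the hypothesis $\epsilon \geq 1/\poly(\secp)$ enters precisely to ensure the coefficient $\sin(\epsilon\pi/2)$ multiplying $p^{(1)}_0 - \tfrac12$ in the $Y$-basis relation is non-negligible and hence invertible. The gentle-measurement reduction to the ideal state and the trigonometric identities are routine bookkeeping.
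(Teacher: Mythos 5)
Your proof is correct and takes essentially the same approach as the paper's. Both hinge on measuring the honest receiver's output qubit in the $Y$ basis, which projects $\ket{+},\ket{-}$ uniformly but detects bias in $s$ when $b=1$ with advantage proportional to $\sin(\epsilon\pi/2) \geq \epsilon$; the only cosmetic difference is that you dispatch the $b=0$ case explicitly via an $X$-basis adversary and a two-equation linear system, whereas the paper appeals to symmetry (an analogous measurement in the basis at angle $\epsilon\pi/2 + \pi/2$).
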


\begin{proof}
    Suppose otherwise, and without loss of generality suppose that when $b=1$, the probability that $s = 0$ is equal to $1/2 + \delta$ for some $\delta = \nonnegl(\secp)$ (the other cases are symmetric). We show that this would contradict the security of the $\epsilon$-OSP. Consider an adversary $\Adv$ that participates as an honest receiver, measures their final state in the $Y$ basis $\{\ket*{+_{\pi/2}},Z\ket*{+_{\pi/2}}\}$, and guesses $b=1$ if they obtain outcome $\ket*{+_{\pi/2}}$. First note that, no matter what the distribution on $s$ is when $b = 0$, we have that \[\Big|\Pr[\Adv = 1 : b = 0] - \frac{1}{2} \Big| = \negl(\secp),\] since measuring either the $\ket{+}$ or $\ket{-}$ state in the $Y$ basis yields a uniformly random outcome. Next, let \[\epsilon' \coloneqq \cos^2\left((1-\epsilon)\frac{\pi}{4}\right)\] be the probability of obtaining outcome $\ket*{+_{\pi/2}}$ when measuring the state $\ket*{+_{\epsilon \pi/2}}$ in the $Y$ basis, and note that $\epsilon' \geq 1/2 + 1/\poly(\secp)$ whenever $\epsilon \geq 1/\poly(\secp)$. Then to complete the proof, we have that
    \begin{align*}
        \Pr\left[\Adv = 1 : b=1\right] &\geq \left(\frac{1}{2}+\delta\right)\epsilon' + \left(\frac{1}{2}-\delta\right)(1-\epsilon') - \negl(\secp)  \\&= \frac{1}{2} - \delta + 2\delta\epsilon' - \negl(\secp) \\ &\geq \frac{1}{2} + \frac{2\delta}{\poly(\secp)} - \negl(\secp) \\ & =\frac{1}{2} + \nonnegl(\secp).
    \end{align*}

\end{proof}
\section{Constructions}\label{sec:constructions}

In this section, we provide constructions of OSP from trapdoor claw-free functions (TCFs). 

In \cref{subsec:OSP-TCF}, we show how to construct OSP from any (plain) TCF, meaning we assume no extra properties such as dual-mode or adaptive hardcore bit. Moreover, all that we require from the TCF is that there is some inverse-polynomial probability that an image has exactly two preimages (which can be obtained efficiently using the trapdoor), and otherwise it can have 1 or 3 or more (as long as the trapdoor correctly identifies these as ``bad'' images).


In \cref{subsec:OSP-dTCF}, we show how to obtain a \emph{two-round} OSP by assuming an extra \emph{dual-mode} property of the TCF. A dual-mode TCF (dTCF) can be sampled in either a disjoint mode or lossy mode. Again, we only assume that there is some inverse polynomial probability that an image has two preimages in lossy mode, and we use the amplification lemma for dTCFs recently established by \cite{10.1007/978-3-031-68382-4_8} in order to show that such dTCFs still imply two-round OSP.

Before coming to the constructions, we provide definitions of TCFs.

\begin{definition}[Trapdoor claw-free function]\label{def:TCF}
    A trapdoor claw-free function (TCF) consists of a PPT parameter generation algorithm $\Gen(1^\secp) \to \pparam,\sparam$ and a keyed family of PPT computable functions \[\left\{\left\{F_{\pparam} : \{0,1\}^{n(\secp)} \to \{0,1\}^{m(\secp)}\right\}_{(\pparam,\cdot) \in \Gen(1^\secp)}\right\}_{\secp \in \bbN}.\] There exists a family of distributions \[\left\{\left\{\cD_\pparam\right\}_{(\pparam,\cdot) \in \Gen(1^\secp)}\right\}_{\secp \in \bbN}\] over $\{0,1\}^{n(\secp)}$ and a PPT algorithm $\Invert(\sparam,y)$ such that the following properties are satisfied. 
    \begin{itemize}
        \item \textbf{Efficient state preparation.} There is a QPT algorithm that, given any $(\pparam,\cdot) \in \Gen(1^\secp)$, outputs a state within negligible trace distance of the state 
        \[\ket*{\psi_\pparam} \coloneqq \sum_{x \in \{0,1\}^{n(\secp)}}\sqrt{\cD_\pparam(x)}\ket*{x}.\]
        \item \textbf{Efficient inversion.} For any $(\pparam,\cdot) \in \Gen(1^\secp)$, let $\Claw_\secp \subseteq \{0,1\}^{m(\secp)}$ be the set of $y \in \{0,1\}^{m(\secp)}$ such that there exists exactly two $x_0,x_1 \in \{0,1\}^{n(\secp)}$ such that $F_\pparam(x_0) = F_\pparam(x_1) = y$, and \[\Big| \frac{\cD_\pparam(x_0)}{\cD_\pparam(x_0) + \cD_\pparam(x_1)} - \frac{\cD_\pparam(x_1)}{\cD_\pparam(x_0) + \cD_\pparam(x_1)}\Big| = \negl(\secp).\] Then there exists $\delta(\secp) = 1/\poly(\secp)$ such that
        \[\Pr\left[F_\pparam(x_0) = F_\pparam(x_1) = y : \begin{array}{r}(\pparam,\sparam) \gets \Gen(1^\secp) \\ x \gets \cD_\pparam \\ y \coloneqq F_\pparam(x) \\ \{x_0,x_1\} \gets \Invert(\sparam,y)\end{array}\right] \geq \delta(\secp),\] 
        and for all $y \notin \Claw_\secp$, $\Invert(\sparam,y) = \bot$.

        \item \textbf{Claw-free.} For any QPT adversary $\{\Adv_\secp\}_{\secp \in \bbN}$,

        \[\Pr\left[F_\pparam(x_0) = F_\pparam(x_1) : \begin{array}{r} (\pparam,\sparam) \gets \Gen(1^\secp) \\ \{x_0,x_1\} \gets \Adv_\secp(\pparam)\end{array}\right] = \negl(\secp).\]
    
    \end{itemize}
\end{definition}

Next, we define a \emph{dual-mode} variant of TCFs.

\begin{definition}[Dual-mode trapdoor claw-free function]\label{def:dTCF}
    A dual-mode trapdoor claw-free function (dTCF) consists of a PPT parameter generation algorithm $\Gen(1^\secp,\mu) \to \pparam,\sparam$ that takes as input a ``mode'' bit $\mu \in \{0,1\}$, and a keyed family of PPT computable functions\footnote{Notice that, as compared to plain TCFs, dual-mode TCFs take an extra bit of input. We will require that each claw has one preimage that starts with 0 and one that starts with 1, which is important for the amplification lemma of \cite{10.1007/978-3-031-68382-4_8}.} \[\left\{\left\{F_{\pparam} : \{0,1\} \times \{0,1\}^{n(\secp)} \to \{0,1\}^{m(\secp)}\right\}_{(\pparam,\cdot) \in \Gen(1^\secp,\mu)}\right\}_{\mu \in \{0,1\}, \secp \in \bbN}.\] There exists a family of distributions \[\left\{\left\{\cD_\pparam\right\}_{(\pparam,\cdot) \in \Gen(1^\secp,\mu)}\right\}_{\mu \in \{0,1\}, \secp \in \bbN}\] over $\{0,1\}^{n(\secp)}$  and a PPT algorithm $\Invert(\sparam,b,y)$ such that the following properties are satisfied. 

    \begin{itemize}
        \item \textbf{Efficient state preparation.} There is a QPT algorithm that, given any $(\pparam,\cdot) \in \Gen(1^\secp,\mu)$, outputs a state within negligible trace distance of the state 
        \[\ket*{\psi_\pparam} \coloneqq \sum_{x \in \{0,1\}^{n(\secp)}}\sqrt{\cD_\pparam(x)}\ket*{x}.\]
        \item \textbf{Efficient inversion.} For any $\mu \in \{0,1\}$ and $b \in \{0,1\}$,\footnote{Note that this property implies that with overwhelming probability over $(\pparam,\sparam) \gets \Gen(1^\secp,\mu)$ and $x \gets \cD_\pparam$, $x$ has no siblings $x'$ such that $F_\pparam(b,x) = F_\pparam(b,x')$. That is, $F_\pparam(0,\cdot)$ and $F_\pparam(1,\cdot)$ are effectively injective.}
        \[\Pr\left[\Invert(\sparam,b,y) = x : \begin{array}{r}(\pparam,\sparam) \gets \Gen(1^\secp,\mu) \\ x \gets \cD_\pparam \\ y \coloneqq F_\pparam(b,x)\end{array} \right] = 1-\negl(\secp).\]
        \item \textbf{Dual-mode.} 
        \begin{itemize}
            \item Disjoint mode $(\mu=0)$: For any $b \in \{0,1\}$,
            \[\Pr\left[\exists x' \ \text{s.t.} \ F_\pparam(1-b,x') = y : \begin{array}{r}(\pparam,\sparam) \gets \Gen(1^\secp,0) \\ x \gets \cD_\pparam \\ y \coloneqq F_\pparam(b,x)\end{array}\right] = \negl(\secp).\]
            \item Lossy mode $(\mu=1)$: There exists $\delta(\secp) = 1/\poly(\secp)$ such that for any $b \in \{0,1\}$,
            \[\Pr\left[\begin{array}{l}\exists x' \ \text{s.t.} \ F_\pparam(1-b,x') = y  \end{array}: \begin{array}{r}(\pparam,\sparam) \gets \Gen(1^\secp,1) \\ x \gets \cD_\pparam \\ y \coloneqq F_\pparam(b,x)\end{array}\right] \geq \delta(\secp),\]
            and for any $\nu(\secp) = \nonnegl(\secp)$,
            \[\Pr\left[\begin{array}{l}\exists x' \ \text{s.t.} \ F_\pparam(1-b,x') = y  \\
            \wedge \ \Big|\frac{\cD_\pparam(x)}{\cD_\pparam(x) + \cD_\pparam(x')} - \frac{\cD_\pparam(x')}{\cD_\pparam(x) + \cD_\pparam(x')}\Big| \geq \nu(\secp)\end{array}: \begin{array}{r}(\pparam,\sparam) \gets \Gen(1^\secp,1) \\ x \gets \cD_\pparam \\ y \coloneqq F_\pparam(b,x)\end{array}\right] = \negl(\secp),\] where this last requirement enforces that there are (effectively) no ``unbalanced'' claws.
        \end{itemize}
        \item \textbf{Mode indistinguishability.} For any QPT adversary $\{\Adv_\secp\}_{\secp \in \bbN}$,
        \[\Big|\Pr\left[\Adv_\secp(\pparam) = 1 : (\pparam,\sparam) \gets \Gen(1^\secp,0)\right] - \Pr\left[\Adv_\secp(\pparam) = 1 : (\pparam,\sparam) \gets \Gen(1^\secp,1)\right]\Big| = \negl(\secp).\]
    \end{itemize}

\end{definition}

\begin{remark}
    Dual-mode (and thus plain) TCFs are known from LWE \cite{Brakerski2018ACT} (even with polynomial modulus-to-noise ratio, since we can take $\delta = 1-1/\poly$) and from the ``extended linear hidden shift'' assumption on cryptographic group actions \cite{10.1007/978-3-031-22318-1_10,10.1007/978-3-031-68382-4_8}.
\end{remark}

\subsection{OSP from plain TCFs}\label{subsec:OSP-TCF}

\begin{theorem}
    Any TCF satisfying \cref{def:TCF} implies OSP (\cref{def:OSP}).
\end{theorem}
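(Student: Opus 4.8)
The plan is to route through claw-state generators (CSG): I will construct a CSG with search security (in the sense of \cref{def:CSG}) from any TCF satisfying \cref{def:TCF}, and then simply invoke \cref{thm:OSP-from-CSP} to conclude that OSP exists. This is attractive because all of the quantum and Goldreich--Levin content needed to go from a claw-state correlation to an actual basis-blind state is already packaged inside \cref{thm:OSP-from-CSP}; the only remaining work is the (largely classical) task of extracting a search-secure claw-state correlation from a plain TCF, whose one real wrinkle is that \cref{def:TCF} promises only an inverse-polynomial probability $\delta(\secp)$ that a freshly sampled image has exactly two preimages with near-balanced $\cD_\pparam$-weights.

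Concretely, I would let $T(\secp) \coloneqq \secp/\delta(\secp) = \poly(\secp)$ and have the CSG run up to $T$ attempts (these could equivalently be run in parallel for a three-message protocol, but this is not needed, since \cref{def:OSP} places no bound on round complexity). In each attempt the sender runs $(\pparam,\sparam) \gets \Gen(1^\secp)$ and sends $\pparam$; the receiver prepares $\ket*{\psi_\pparam} = \sum_x \sqrt{\cD_\pparam(x)}\ket*{x}$ via the efficient-state-preparation algorithm, coherently evaluates $F_\pparam$ into a fresh register, measures that register to obtain a string $y$, and returns $y$; the sender computes $\Invert(\sparam,y)$ and ``accepts'' the attempt iff the output is a pair $\{x_0,x_1\}$ with $x_0 \neq x_1$ and $F_\pparam(x_0) = F_\pparam(x_1) = y$. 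On an accepted attempt the sender outputs $(x_0,x_1,z \coloneqq 0)$, the receiver keeps its current residual register as its output, and the protocol halts; otherwise the receiver discards the register and the parties continue (if all $T$ attempts fail, the sender outputs arbitrary values).

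For correctness I would observe that an accepted attempt in particular has $\Invert(\sparam,y) \neq \bot$, so the clause ``$\Invert(\sparam,y)=\bot$ for every $y \notin \Claw_\secp$'' forces $y \in \Claw_\secp$: thus $y$ has exactly two preimages, which must be the accepted distinct pair $x_0,x_1$, and their $\cD_\pparam$-weights are negligibly balanced. Hence the receiver's post-measurement register is within negligible trace distance of $\tfrac{1}{\sqrt2}(\ket*{x_0}+\ket*{x_1})$, which is precisely the state that $\Pi_\CSG$ projects onto for the sender's output $(x_0,x_1,0)$; moreover, since attempts act on disjoint, freshly initialized registers, conditioning on earlier attempts having been rejected does not disturb the accepted attempt's register. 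By the efficient-inversion property each attempt is accepted with probability at least $\delta(\secp)$ over $(\pparam,\sparam) \gets \Gen(1^\secp)$ and $x \gets \cD_\pparam$, so all $T$ attempts are rejected with probability at most $(1-\delta(\secp))^{T(\secp)} \le e^{-\secp} = \negl(\secp)$, giving correctness in the sense of \cref{def:CSG}.

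For search security I would reduce to claw-freeness of the TCF. Given a QPT adversary that outputs the accepted claw $(x_0,x_1)$ with noticeable probability $p$, the reduction takes a challenge key $\pparam^*$, samples $j^\dagger \gets [T]$, and runs the adversary while simulating the sender: it uses freshly self-generated keys and answers ``reject'' in attempts $1,\dots,j^\dagger-1$, plants $\pparam^*$ in attempt $j^\dagger$ and answers ``accept'' there, and finally outputs whatever pair the adversary returns. Conditioned on the event $j^\dagger = j^*$, where $j^*$ is the index of the first accepted attempt of an honest run --- which has probability at least $1/T - \negl(\secp)$ since $j^\dagger$ is uniform and independent and some attempt is accepted except with negligible probability --- the reduction's simulation is identically distributed to the honest protocol, so with probability $p$ the adversary returns the claw of attempt $j^\dagger$, which is a claw for $\pparam^*$; hence the reduction breaks claw-freeness with probability at least $p/T - \negl(\secp) = \nonnegl(\secp)$, a contradiction. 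Combining the construction with \cref{thm:OSP-from-CSP} then yields OSP. The step I expect to be the main obstacle is exactly this reduction: because it does not hold the trapdoor for the challenge key, the reduction cannot run $\Invert$ on it and hence cannot detect which attempt is the first accepted one, forcing it to guess the index (paying the $1/T$ factor) and to argue carefully that, on the event the guess is correct, the simulation faithfully matches the honest protocol up through the moment the adversary commits its output. The remaining ingredients --- the coherent evaluation of $F_\pparam$, the elementary tail bound over the $T$ attempts, the trivial phase $z=0$, and the appeal to \cref{thm:OSP-from-CSP} --- are routine.
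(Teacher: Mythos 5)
Your proposal is correct and takes essentially the same two-step route as the paper: build a search-secure CSG from the plain TCF and then invoke \cref{thm:OSP-from-CSP}. The one structural difference is that you resample a fresh $(\pparam_j,\sparam_j)$ at each of the $T$ attempts, whereas the paper samples a single $(\pparam,\sparam)$ up front and has the receiver re-prepare and re-measure $\ket{\psi_\pparam}$ with the \emph{same} key in each round; this changes the reduction from ``run the entire loop under the challenge key and instruct termination at a random round'' to ``plant the challenge key in a random slot,'' but both pay the same $1/T$ loss via the same coupling argument (guess the first-accept index $j^*$ and observe that, conditioned on the guess being right, the simulated transcript coincides with the honest one through the point at which the adversary commits its output), so neither version has an advantage here. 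There is, however, one small but substantive gap to close: you say the sender outputs ``arbitrary values'' when all $T$ attempts are rejected. For search security in the sense of \cref{def:CSG} this fallback must be sampled uniformly at random over $\{0,1\}^n \times \{0,1\}^n$, exactly as the paper does, because a malicious receiver can deterministically force every attempt to fail (e.g.\ by sending strings $y$ that never invert), and if the fallback $(x_0,x_1)$ were fixed or otherwise predictable, that receiver would recover them with probability $1$; with a uniformly random fallback the contribution of this branch is $2^{-2n}=\negl(\secp)$ and your reduction then correctly handles the complementary event in which some attempt accepts. Everything else --- the tail bound over $T=\secp/\delta$ attempts, the inference that $\Invert(\sparam,y)\neq\bot$ forces $y\in\Claw_\secp$ and hence a two-element, nearly balanced preimage set, the resulting closeness of the residual register to $\tfrac{1}{\sqrt 2}(\ket{x_0}+\ket{x_1})$, the trivial phase $z=0$, and the appeal to \cref{thm:OSP-from-CSP} --- is sound.
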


\begin{proof}
    This follows fairly immediately by using the TCF to construct a claw-state generator (\cref{def:CSG}), and then appealing to \cref{thm:OSP-from-CSP}, which shows how to construct OSP from any CSG. Let $\delta = \delta(\secp)$ be the parameter from the efficient inversion property of the TCF. Then the CSG is constructed as follows.

    \begin{itemize}
        \item $S$ samples $(\sparam,\pparam) \gets \Gen(1^\secp)$ and sends $\pparam$ to $R$. 
        \item Run the following at most $\secp/\delta$ times. If the protocol has not terminated at that point, $R$ outputs $\bot$, and $S$ samples $x_0,x_1 \gets \{0,1\}^n$ and outputs $(x_0,x_1,0)$.
        \begin{itemize}
            \item $R$ prepares a state within negligible trace distance of $\ket*{\psi_\pparam}$, applies $F_\pparam$ in superposition to a fresh register, and measures that register to obtain $y$.
            \item $R$ sends $y$ to $S$. If $\Invert(\sparam,y) = \{x_0,x_1\}$, $S$ outputs $(x_0,x_1,0)$, instructs $R$ to terminate, and $R$ outputs its state. Otherwise, if $\Invert(\sparam,y) = \bot$, $S$ instructs $R$ to repeat.
        \end{itemize}
    \end{itemize}

    Correctness of the CSG follows first by noting that the probability that the parties never continue from the loop is at most $(1-\delta)^{\secp/\delta} \leq e^{-\secp} = \negl(\secp)$. Then, when $\Invert(\sparam,y) = \{x_0,x_1\}$, we know that $R$'s state is negligibly close to $\frac{1}{\sqrt{2}}(\ket{x_0} + \ket{x_1})$, by the efficient state preparation and efficient inversion properties of the TCF. 
    
    To show security of the CSG, first note that any adversary with noticeable probability of guessing $S$'s output $(x_0,x_1)$ must cause $S$ to terminate at some point during the loop, since otherwise  $(x_0,x_1)$ is uniformly random and independent of their view. Then, consider the following reduction to the claw-freeness of the TCF. The reduction receives $\pparam$ from its challenger, samples a random round $i \gets [\secp/\delta]$, runs the adversary and instructs them to terminate on the $i$'th invocation of the loop, and returns the adversary's guess for $(x_0,x_1)$. With probability at least $1/(\secp/\delta) = 1/\poly(\secp)$, this matches the adversary's view in the real protocol, which means that the reduction has a noticeable probability of outputting a claw $(x_0,x_1)$.

\end{proof}

\subsection{Two-round OSP from dual-mode TCFs}\label{subsec:OSP-dTCF}


In \cite{10.1007/978-3-031-68382-4_8}, it is shown that any dTCF satisfying \cref{def:dTCF} (i.e.\ with any $\delta = 1/\poly(\secp)$) implies the following variant of dTCF with a \emph{phase computation} property that succeeds with all but negligible probability. Note that we also relax the inversion property to a \emph{partial inversion} property, which only requires that the first bit $b$ of the preimage be recovered. This allows for the possibility that the functions $F_\pparam(0,\cdot)$ and $F_\pparam(1,\cdot)$ are non-injective.

\begin{definition}[dTCF with efficient phase computation]\label{def:phase-computation}
    A dTCF with \emph{efficient phase computation} consists of a PPT parameter generation algorithm $\Gen(1^\secp,\mu) \to \pparam,\sparam$ that takes as input a ``mode'' bit $\mu \in \{0,1\}$, and a keyed family of PPT computable functions \[\left\{\left\{F_{\pparam} : \{0,1\} \times \{0,1\}^{n(\secp)} \to \{0,1\}^{m(\secp)}\right\}_{(\pparam,\cdot) \in \Gen(1^\secp,\mu)}\right\}_{\mu \in \{0,1\}, \secp \in \bbN}.\] There exists a family of distributions \[\left\{\left\{\cD_\pparam\right\}_{(\pparam,\cdot) \in \Gen(1^\secp,\mu)}\right\}_{\mu \in \{0,1\}, \secp \in \bbN}\] over $\{0,1\}^{n(\secp)}$ and PPT algorithms $\PartialInvert(\sparam,y)$ and $\PhaseInvert(\sparam,y,d)$ such that the following properties are satisfied. 

    \begin{itemize}
        \item \textbf{Efficient state preparation.} There is a QPT algorithm that, given any $(\pparam,\cdot) \in \Gen(1^\secp,\mu)$, outputs a state within negligible trace distance of the state 
        \[\ket*{\psi_\pparam} \coloneqq \sum_{x \in \{0,1\}^{n(\secp)}}\sqrt{\cD_\pparam(x)}\ket*{x}.\]

        \item \textbf{Efficient partial inversion.} For any $\mu \in \{0,1\}$,
        \[\Pr\left[B = \left\{b : \exists x \ \text{s.t.} \ F_\pparam(b,x) = y\right\} : \begin{array}{r}(\pparam,\sparam) \gets \Gen(1^\secp,\mu) \\ x \gets \cD_\pparam \\ y \coloneqq F_\pparam(b,x) \\ B \gets \PartialInvert(\sparam,y)\end{array}\right] = 1-\negl(\secp).\]

        \item \textbf{Dual-mode.}
        \begin{itemize}
            \item Disjoint mode $(\mu=0)$: For any $b \in \{0,1\}$,
            \[\Pr\left[\exists x' \ \text{s.t.} \ F_\pparam(1-b,x') = y : \begin{array}{r}(\pparam,\sparam) \gets \Gen(1^\secp,0) \\ x \gets \cD_\pparam \\ y \coloneqq F_\pparam(b,x)\end{array}\right] = \negl(\secp).\]
            \item Lossy mode $(\mu=1)$: For any $(\pparam,\cdot) \in \Gen(1^\secp,1)$, $y \in \{0,1\}^{m(\secp)}$, and $d \in \{0,1\}^{n(\secp)}$, define
            \[w_{\pparam,y,d,0} \coloneqq \sum_{x: F_\pparam(0,x) = y}(-1)^{d \cdot x}\sqrt{\cD_\pparam(x)}, \ \ \ w_{\pparam,y,d,1} \coloneqq \sum_{x: F_\pparam(1,x) = y}(-1)^{d \cdot x}\sqrt{\cD_\pparam(x)},\] and re-normalize
            \[\widetilde{w}_{\pparam,y,d,0} \coloneqq \frac{w_{\pparam,y,d,0}}{\sqrt{w_{\pparam,y,d,0}^2 + w_{\pparam,y,d,1}^2}}, \ \ \ \widetilde{w}_{\pparam,y,d,1} \coloneqq \frac{w_{\pparam,y,d,1}}{\sqrt{w_{\pparam,y,d,0}^2 + w_{\pparam,y,d,1}^2}}.\]
        
            Then for any $d \in \{0,1\}^{n(\secp)}$, there exists $\nu(\secp) = \negl(\secp)$ such that  
            
            \[\Pr\left[\Big|\widetilde{w}_{\pparam,y,d,0} - (-1)^s \widetilde{w}_{\pparam,y,d,1}\Big| \leq \nu(\secp) : \begin{array}{r}(\pparam,\sparam) \gets \Gen(1^\secp,1) \\ b \gets \{0,1\} \\ x \gets \cD_\pparam \\ y \coloneqq F_\pparam(b,x) \\ s \gets \PhaseInvert(\sparam,y,d)\end{array}\right] = 1-\negl(\secp).\]


        \end{itemize}

        \item \textbf{Mode indistinguishability.} For any QPT adversary $\{\Adv_\secp\}_{\secp \in \bbN}$,
        \[\Big|\Pr\left[\Adv_\secp(\pparam) = 1 : (\pparam,\sparam) \gets \Gen(1^\secp,0)\right] - \Pr\left[\Adv_\secp(\pparam) = 1 : (\pparam,\sparam) \gets \Gen(1^\secp,1)\right]\Big| = \negl(\secp).\]

    \end{itemize}

\end{definition}

\begin{lemma}[\cite{10.1007/978-3-031-68382-4_8}, Amplication for dTCFs]
    Any dTCF satisfying \cref{def:dTCF} implies a dTCF with efficient phase computation (\cref{def:phase-computation}).\footnote{Technically, \cite{10.1007/978-3-031-68382-4_8} assume that the dTCF input to their amplification lemma has claws that are perfectly balanced, as opposed to almost perfectly balanced, but it can be confirmed that their amplification lemma holds even when the claws are $1-\negl(\secp)$ balanced.}
\end{lemma}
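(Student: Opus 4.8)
The plan is to amplify by $k$-fold parallel repetition, combining the per-copy branch bits by XOR so that the amplified function possesses a claw precisely when at least one of the $k$ copies does. Let $\delta = 1/\poly(\secp)$ be the parameter from the lossy-mode property of the given dTCF, and set $k := \lceil \secp/\delta\rceil$. Define $\Gen'(1^\secp,\mu)$ to sample $(\pparam_i,\sparam_i) \gets \Gen(1^\secp,\mu)$ for $i\in[k]$ and output $\pparam' := (\pparam_1,\dots,\pparam_k)$, $\sparam' := (\sparam_1,\dots,\sparam_k)$. The amplified function $F'_{\pparam'}\colon \{0,1\}\times\bigl(\{0,1\}^{k-1}\times(\{0,1\}^n)^k\bigr) \to (\{0,1\}^m)^k$ parses its input as $(b,\beta_1,\dots,\beta_{k-1},x_1,\dots,x_k)$, sets $\beta_k := b\oplus\beta_1\oplus\dots\oplus\beta_{k-1}$, and outputs $(F_{\pparam_1}(\beta_1,x_1),\dots,F_{\pparam_k}(\beta_k,x_k))$; the distribution $\cD'_{\pparam'}$ draws $(\beta_1,\dots,\beta_{k-1})$ uniformly and each $x_i \gets \cD_{\pparam_i}$ independently. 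The effective per-copy branch bits satisfy $\bigoplus_i\beta_i = b$, and an image $y=(y_1,\dots,y_k)$ has a preimage under each of the two top-bit branches exactly when at least one $y_i$ is a claw of $F_{\pparam_i}$.

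The easy properties I would dispatch first. Efficient state preparation is immediate, since $\ket{\psi_{\pparam'}}$ is the uniform superposition over $(\beta_1,\dots,\beta_{k-1})$ tensored with $\bigotimes_i\ket{\psi_{\pparam_i}}$. Mode indistinguishability follows by a standard $k$-step hybrid switching one copy at a time from $\mu=0$ to $\mu=1$. Efficient partial inversion: $\PartialInvert'(\sparam',(y_1,\dots,y_k))$ runs $\Invert(\sparam_i,0,y_i)$ and $\Invert(\sparam_i,1,y_i)$ for each $i$ to learn the set $\mathrm{Br}_i\subseteq\{0,1\}$ of branches whose range contains $y_i$, and outputs $B := \{\bigoplus_i\beta_i : \beta_i\in\mathrm{Br}_i\}$ (a coset of $\{0,1\}$, equal to all of $\{0,1\}$ iff some $\mathrm{Br}_i=\{0,1\}$); correctness holds with probability $1-\negl$ by a union bound over the $2k$ inversions. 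For disjoint mode $(\mu=0)$, a random image has an $F'$-sibling iff some $y_i$ is a claw of $F_{\pparam_i}$, which by a union bound over $k$ and the disjoint-mode property happens with probability $\le k\cdot\negl=\negl$.

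The crux is lossy-mode phase computation. In lossy mode each $y_i$ drawn as $F_{\pparam_i}(\beta_i,x_i)$ is a claw independently with probability $\ge\delta$, so the claw-set $S\subseteq[k]$ is nonempty except with probability $(1-\delta)^k\le e^{-\secp}=\negl$. Conditioned on $S\ne\emptyset$, I would show that the residual qubit the honest protocol leaves on the $b$-register, after measuring the image $y$ and Hadamard-measuring the $x'$-register to obtain $d=(e_1,\dots,e_{k-1},d_1,\dots,d_k)$, is (up to $\negl$) of the form $\ket0\pm\ket1$, with the sign computable from $\sparam'$. From the component trapdoors one recovers, for each $i$ and $\beta\in\mathrm{Br}_i$, the unique preimage $x_i^{(\beta)}$ and its weight, and the branch-$c$ amplitude works out to
\[ w'_{\pparam',y,d,c} \;=\; 2^{-(k-1)/2}\!\!\sum_{\beta\in\{0,1\}^k:\,\bigoplus_i\beta_i=c}\!\! (-1)^{\sum_{j<k}e_j\beta_j}\prod_i u_{i,\beta_i}, \qquad u_{i,\beta} := (-1)^{d_i\cdot x_i^{(\beta)}}\sqrt{\cD_{\pparam_i}(x_i^{(\beta)})} \]
(with $u_{i,\beta}=0$ for $\beta\notin\mathrm{Br}_i$). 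Picking any $i^*\in S$, the involution that flips $b$ together with $\beta_{i^*}$ and $x_{i^*}$ (and, when $i^*\le k-1$, leaves the recomputed $\beta_k$ unchanged) is a weight-preserving bijection between the branch-$0$ and branch-$1$ preimages of $y$ that multiplies $(-1)^{d\cdot x'}$ by a fixed sign $(-1)^\theta$ with $\theta$ computable from $\sparam',y,d$; hence $w'_{\pparam',y,d,1}=(-1)^\theta w'_{\pparam',y,d,0}$, and we set $\PhaseInvert'(\sparam',y,d):=\theta$. When claws are only $1-\negl$-balanced the bijection changes each weight by a multiplicative $1\pm O(\nu)$ factor with $\nu=\negl$, giving $w'_{\pparam',y,d,1}=(-1)^\theta w'_{\pparam',y,d,0}+E$ with $|E|\le O(\nu)\sum_{x'}\sqrt{\cD'_{\pparam'}(x')}$; since the branch-$0$ preimages all carry near-equal weight, $\sum_{x'}\sqrt{\cD'_{\pparam'}(x')}$ is within $1\pm O(\nu)$ of $|w'_{\pparam',y,d,0}|$ whenever the latter is nonzero, so the error is multiplicative and $|\tilde w'_{\pparam',y,d,0}-(-1)^\theta\tilde w'_{\pparam',y,d,1}|\le O(\nu)$. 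When both amplitudes vanish — possible only for pathological $d$ — the claimed inequality is vacuous, and such $(y,d)$ occur with probability $0$ in the honest execution.

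The main obstacle is exactly this last step: certifying that the surviving two-dimensional state is negligibly close to a genuine $\ket\pm$ eigenstate and that $\PhaseInvert'$ outputs the correct sign, even though each copy may contribute no claw or only an approximately balanced one. The perfectly-balanced case is clean (the involution is exactly weight-preserving, so $w'_{\pparam',y,d,1}=\pm w'_{\pparam',y,d,0}$ on the nose); extending to $1-\negl$-balanced claws hinges on the observation that the bijection error is multiplicative in the surviving amplitude rather than additive, together with the (standard but necessary) step of extracting from the ``for all $\nonnegl$ $\nu$'' form of the no-unbalanced-claws hypothesis a single $\negl$ bound holding with overwhelming probability. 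A secondary point to get right is the bookkeeping for the XOR-determined copy $k$, whose branch bit does not appear explicitly in the input string.
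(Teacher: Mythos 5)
The paper imports this lemma from \cite{10.1007/978-3-031-68382-4_8} without proof, with only a footnote asserting that the amplification extends from perfectly balanced to $1-\negl$-balanced claws; so there is no in-paper argument to compare against. Your proposed XOR-amplification is the natural construction, but the argument for the phase-computation property has a genuine gap, and it is exactly the step you flagged as the ``main obstacle.''

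The problematic step is the claim that the bijection error is multiplicative because ``$\sum_{x'}\sqrt{\cD'_{\pparam'}(x')}$ is within $1\pm O(\nu)$ of $|w'_{\pparam',y,d,0}|$ whenever the latter is nonzero.'' This is false: $w'_0$ is a \emph{signed} sum $\sum_{x'}(-1)^{d\cdot x'}\sqrt{\cD'(x')}$, and when several copies contribute claws the phase factors can nearly cancel, leaving $|w'_0|$ and $|w'_1|$ far smaller than the unsigned sum. Splitting by $\beta_{i^*}$ as $w'_0=A+B$, the error from your involution is $\propto\nu(A-B)$, and $A\approx -B$ is precisely the cancellation regime, so the relative error $|\nu(A-B)|/|A+B|$ need not be small.

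Concretely, take $k=2$ and fix $d=(e_1,d_1,d_2)=(1,0^n,0^n)$; condition on both $y_1,y_2$ being claws (probability $\ge\delta^2=1/\poly(\secp)$ by independence of the two copies). Since $d_1=d_2=0$, all $u_{i,\beta}>0$; writing $a_i:=u_{i,0}$ and $u_{i,1}=a_i(1+\epsilon_i)$ with $|\epsilon_i|\le\nu^*=\negl$, one computes
\begin{align*}
w'_0 &= a_1a_2\bigl[1-(1+\epsilon_1)(1+\epsilon_2)\bigr]=-a_1a_2(\epsilon_1+\epsilon_2+\epsilon_1\epsilon_2),\\
w'_1 &= a_1a_2\bigl[(1+\epsilon_2)-(1+\epsilon_1)\bigr]=a_1a_2(\epsilon_2-\epsilon_1),
\end{align*}
so $(\widetilde w'_0,\widetilde w'_1)\propto\bigl(-(\epsilon_1+\epsilon_2),\ \epsilon_2-\epsilon_1\bigr)$ and
\[
\min_{s\in\{0,1\}}\bigl|\widetilde w'_0-(-1)^s\widetilde w'_1\bigr| \;=\; \frac{2\min\{\epsilon_1,\epsilon_2\}}{\sqrt{2(\epsilon_1^2+\epsilon_2^2)}} \;\ge\; \frac{\min\{\epsilon_1,\epsilon_2\}}{\max\{\epsilon_1,\epsilon_2\}}.
\]
This is $\Omega(1)$ whenever the two imbalances are of comparable magnitude (e.g., $\epsilon_1=\epsilon_2\ne 0$ gives $\widetilde w'_1=0$, $|\widetilde w'_0|=1$, error exactly $1$ for every $s$), even though each $\epsilon_i$ is individually negligible. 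Since the imbalances come from two independent dTCF samples, there is no reason $\min\{\epsilon_1,\epsilon_2\}$ should be super-polynomially smaller than $\max\{\epsilon_1,\epsilon_2\}$, so for this fixed $d$ the failure probability is $\ge\delta^2\cdot\Omega(1)=\nonnegl(\secp)$, and moreover no choice of $\PhaseInvert'$ can help, since the failure is information-theoretic.

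The perfectly-balanced case escapes this because when the phases cancel they cancel \emph{exactly}, so $w'_0=w'_1=0$ and such $(y,d)$ have amplitude zero in the honest run, making the claim vacuous. With almost-balanced claws the residual amplitudes are tiny but nonzero, the normalized pair is well-defined but arbitrary, and --- crucially --- \cref{def:phase-computation} quantifies over \emph{all} $d$, not over $d$ drawn from the honest Hadamard measurement, so you cannot dismiss these $(y,d)$ as low-probability. Some further idea is required to make XOR-amplification survive the near-balanced regime (or a different amplification altogether); your observation that the error is ``multiplicative in the surviving amplitude'' is the precise claim that fails.
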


Now, we show how to construct OSP from any dTCF satisfying \cref{def:phase-computation}.

\begin{theorem}
    Any dTCF with efficient phase computation (\cref{def:phase-computation}) implies two-round OSP.
\end{theorem}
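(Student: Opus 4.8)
The plan is to build the two-round OSP essentially directly, letting the dTCF's mode bit play the role of the sender's input basis bit $b$. Concretely: $\OSP.\Sen(1^\secp,b)$ samples $(\pparam,\sparam) \gets \Gen(1^\secp,b)$ and outputs $\msg_S \coloneqq \pparam$ together with $\state_S \coloneqq (\sparam,b)$. The receiver $\OSP.\Rec(\pparam)$ prepares (up to negligible trace distance, using efficient state preparation) the state $\frac{1}{\sqrt{2}}\sum_{\beta \in \{0,1\}}\ket*{\beta}_{\sB} \otimes \ket*{\psi_\pparam}_{\sX}$, coherently evaluates $F_\pparam$ into a fresh register and measures it to obtain $y$, and then measures $\sX$ in the Hadamard basis to obtain $d \in \{0,1\}^{n(\secp)}$; it outputs the single qubit on register $\sB$ as $\ket*{\psi}$ and sends $\msg_R \coloneqq (y,d)$. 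Finally $\OSP.\Dec(\state_S,\msg_R)$ computes $B \gets \PartialInvert(\sparam,y)$; if $B = \{\beta\}$ it outputs $s \coloneqq \beta$, and if $B = \{0,1\}$ it outputs $s \coloneqq \PhaseInvert(\sparam,y,d)$ (any other outcome, which occurs with negligible probability in an honest run, is treated as failure). This is manifestly a two-message protocol.

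Security is immediate and is the easy part. The receiver's entire view of the protocol is $\pparam$, whose distribution is $\Gen(1^\secp,b)$, so any QPT $\Adv$ with noticeable advantage in guessing $b$ is, verbatim, a QPT adversary against mode indistinguishability of the dTCF. Hence the OSP security bound is $\negl(\secp)$.

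For correctness I would split on $b$. When $b=0$ (disjoint mode), the disjoint-mode property guarantees that with all but negligible probability over the receiver's (honest) sampling of $y$, the value $y$ lies in the image of only one branch; hence measuring $y$ collapses register $\sB$ to a definite computational basis state $\ket*{\beta}$, which is untouched by the subsequent Hadamard measurement on $\sX$. Efficient partial inversion then recovers $B = \{\beta\}$ with overwhelming probability, so the sender outputs $s = \beta$ and the receiver indeed holds $H^0\ket*{s} = \ket*{\beta}$. When $b=1$ (lossy mode), after measuring $y$ and Hadamard-measuring $\sX$ to obtain $d$, a direct calculation shows the (unnormalized) state on $\sB$ is $w_{\pparam,y,d,0}\ket*{0} + w_{\pparam,y,d,1}\ket*{1}$ with the weights exactly as in \cref{def:phase-computation}; the lossy-mode guarantee of that definition says that with overwhelming probability the renormalized weights satisfy $\widetilde{w}_{\pparam,y,d,0} \approx (-1)^{s}\widetilde{w}_{\pparam,y,d,1}$ for $s \coloneqq \PhaseInvert(\sparam,y,d)$, which means (up to a global phase) the receiver's qubit is $\nu$-close to $\frac{1}{\sqrt{2}}(\ket*{0} + (-1)^{s}\ket*{1}) = H\ket*{s}$, the same $s$ the sender outputs. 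In particular the amplified dTCF ensures that $y$ is with overwhelming probability a balanced claw image, so $\PartialInvert$ returns $B = \{0,1\}$ as needed. Throughout, the negligible trace-distance error from efficient state preparation (and any negligible inversion failures) is absorbed by a routine gentle-measurement argument (\cref{lemma:gentle-measurement}), yielding $\E[\|\Pi_{\OSP,b}\ket*{s}\ket*{\psi}\|] = 1-\negl(\secp)$ for both values of $b$.

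The one place demanding care — and the main (if mild) obstacle — is the lossy-mode correctness step: \cref{def:phase-computation} states its phase guarantee ``for any $d$'' over a freshly sampled $y$, whereas in the protocol $(y,d)$ is produced jointly by the receiver's two measurements and is correlated. So I would verify that (i) the receiver's marginal distribution on $y$ coincides with the one in the definition (a Parseval computation over the Hadamard measurement), and (ii) conditioned on $(\pparam,y)$, the weight that the receiver's Hadamard measurement places on ``bad'' strings $d$ violating the phase relation is negligible. This is exactly the translation carried out in \cite{10.1007/978-3-031-68382-4_8}, so I would invoke their analysis (using the bound on the number of preimages per branch supplied by their amplification lemma) rather than redo it from scratch. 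Everything else — security, disjoint-mode correctness, and the round count — is bookkeeping.
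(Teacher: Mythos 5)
Your construction and analysis are essentially identical to the paper's own proof: the sender publishes $\pparam \gets \Gen(1^\secp,b)$; the receiver prepares $\ket{+}_\cB\ket{\psi_\pparam}_\cX$, measures $y$, Hadamard-measures $\cX$ to obtain $d$, and returns $(y,d)$ together with register $\cB$; the sender runs $\PartialInvert$ or $\PhaseInvert$ according to $b$ — with only the cosmetic difference that your $\OSP.\Dec$ branches on $|\PartialInvert(\sparam,y)|$ rather than on the stored bit $b$. The caveat you flag at the end, that the joint distribution of $(y,d)$ produced by the receiver's two measurements does not literally match the ``for any $d$, over a fresh $y$'' quantifier of \cref{def:phase-computation}, is a genuine subtlety that the paper's proof asserts past without comment, so your slightly more careful handling (deferring to the amplification analysis of the cited work) is a fair and somewhat more rigorous rendering of the same argument.
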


\begin{proof}
    The construction is given in \cref{fig:two-round-OSP}. Security follows immediately from the mode indistinguishability property of the dTCF, so it remains to argue correctness.

    In the case that $b=0$, the efficient partial inversion and disjoint mode properties of the dTCF imply that with all but negligible probability, (1) all of $y$'s preimages begin with the same bit $s$, and thus the receiver's state on $\cB$ collapses to a standard basis state $\ket*{s}$, and (2) the $\PartialInvert$ algorithm will return this $s$ given $(\sparam,y)$, so the sender will obtain the correct description of the receiver's state.

    In the case that $b=1$, the receiver's state after measuring $y$ is
    \[\propto \sum_{x:F_\pparam(0,x)= y}\sqrt{\cD_\pparam(x)}\ket*{0}_\cB\ket*{x}_\cX + \sum_{x:F_\pparam(1,x)= y}\sqrt{\cD_\pparam(x)}\ket*{1}_\cB\ket*{x}_\cX.\] Thus, after measuring $\cX$ in the Hadamard basis to obtain $d$, the state collapses to a state
    \[\propto \sum_{x:F_\pparam(0,x) = y}(-1)^{d \cdot x}\sqrt{\cD_\pparam(x)}\ket*{0} + \sum_{x:F_\pparam(1,x) = y}(-1)^{d \cdot x}\sqrt{\cD_\pparam(x)}\ket*{1} = w_{\pparam,y,d,0}\ket*{0} + w_{\pparam,y,d,1}\ket*{1},\] which after normalizing, is equal to \[\widetilde{w}_{\pparam,y,d,0}\ket*{0} + \widetilde{w}_{\pparam,y,d,1}\ket*{1}.\] The efficient phase computation property implies that with all but negligible probability, the sender outputs $s$ such that $|\widetilde{w}_{\pparam,y,d,0} = (-1)^s \widetilde{w}_{\pparam,y,d,1}| = \negl(\secp)$, in which case the receiver's state is negligibly close to $Z^s\ket*{+}$. This completes the proof.

    \protocol{Two-round OSP from dTCF}{A consruction of two-round OSP from any dTCF with efficient phase computation (\cref{def:phase-computation}), which is known from any dTCF (\cref{def:dTCF}).}{fig:two-round-OSP}{

    \begin{itemize}
        \item $\OSP.\Sen(1^\secp,b)$: Sample $(\pparam,\sparam) \gets \Gen(1^\secp,b)$, and define $\msg_S \coloneqq \pparam$, and $\state_S \coloneqq \sparam$.
        \item $\OSP.\Rec(\msg_S)$: Prepare a state within negligible trace distance of $\ket*{+}_\cB \ket*{\psi_\pparam}_\cX$, apply $F_\pparam$ in superposition to a fresh register, and measure that register to obtain $y$. Then, measure register $\cX$ in the Hadamard basis to obtain $d \in \{0,1\}^n$. Finally, output the remaining qubit on register $\cB$ and set $\msg_R \coloneqq (y,d)$.
        \item $\OSP.\Dec(\state_S,\msg_R)$:
        \begin{itemize}
            \item If $b = 0$, compute $s \gets \PartialInvert(\sparam,y)$, and output $s$.
            \item If $b=1$, compute $s \gets \PhaseInvert(\sparam,y,d)$, and output $s$.
        \end{itemize}
    \end{itemize}
    
    }
\end{proof}

\section{Applications}

In this section, we move to the applications of OSP, establishing the following results.
\begin{itemize}
    \item \cref{subsec:PoQ}: OSP implies proofs of quantumness (as well as a ``test of a qubit'' and certifiable randomness).
    \item \cref{subsec:1-of-2}: Two-round OSP implies 1-of-2 puzzles, which previous work has shown implies (privately-verifiable) quantum money with classical communication, and position verification with classical communication.
    \item \cref{subsec:blind-delegation}: OSP implies blind classical delegation of quantum computation.
    \item \cref{subsec:verifiable-delegation}: Blind classical delegation generically implies verifiable classical delegation, so we can conclude that OSP implies verifiable classical delegation.
    \item \cref{subsec:encrypted-CNOT}: (Two-round) OSP implies (two-round) encrypted CNOT, which yields claw-state generators with indistinguishability security and (additionally assuming classical FHE with log-depth decryption) quantum FHE.
    
\end{itemize}

\subsection{Proofs of quantumness}\label{subsec:PoQ}

\begin{definition}[Proof of quantumness]\label{def:PoQ}
    A proof of quantumness protocol is an interaction between a QPT prover and a PPT verifier \[\{\top,\bot\} \gets \langle P(1^\secp),V(1^\secp)\rangle,\] where $\{\top,\bot\}$ is the output of the verifier. There exists $\epsilon(\secp), \delta(\secp)$ with $\epsilon(\secp)-\delta(\secp) = 1/\poly(\secp)$ such that the following properties hold.
    \begin{itemize}
        \item \textbf{Completeness.} \[\Pr\left[\top \gets \langle P(1^\secp),V(1^\secp)\rangle\right] \geq \epsilon(\secp).\]
        \item \textbf{Soundness.} For any PPT adversary $\{\Adv_\secp\}_{\secp \in \bbN}$, 
        \[\Pr\left[\top \gets \langle \Adv_\secp,V(1^\secp)\rangle\right] \leq \delta(\secp) + \negl(\secp).\]
    \end{itemize}
\end{definition}

\begin{theorem}\label{thm:PoQ}
    OSP (\cref{def:OSP}) implies a proof of quantumness (\cref{def:PoQ}).
\end{theorem}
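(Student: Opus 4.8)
I would instantiate the ``computational Bell test'' of \cite{KCVY}, using a single invocation of OSP to play the role of the EPR pair shared by Alice and Bob in the CHSH game. Concretely, the protocol runs as follows. The verifier samples $a \gets \binset$ and runs the OSP protocol in the role of the sender on input $a$; by OSP correctness, the honest prover's final state is within negligible trace distance of $H^a\ket{s}$, where $s \in \binset$ is the verifier's OSP output. The verifier then samples $b \gets \binset$, sends it to the prover, who measures its qubit in the $X+Z$ basis if $b=0$ and in the $X-Z$ basis if $b=1$ and returns the outcome $y$; the verifier accepts iff $s \oplus y = a \cdot b$. I would take $\epsilon(\secp) \coloneqq \cos^2(\pi/8) - \negl(\secp)$ and $\delta(\secp) \coloneqq 3/4$ (or any constants with $3/4 < \delta < \epsilon < \cos^2(\pi/8)$), so that $\epsilon - \delta = 1/\poly(\secp)$ as required by \cref{def:PoQ}.

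\textbf{Completeness} is the textbook CHSH analysis. Up to negligible trace distance, the prover's state is exactly the state held by CHSH-Bob after CHSH-Alice measures her half of an EPR pair in basis $a$ and obtains outcome $s$; measuring it in the $X\pm Z$ basis according to $b$ satisfies the predicate $s \oplus y = a \cdot b$ with probability $\cos^2(\pi/8) > 0.85$, and the OSP correctness error costs only an additive $\negl(\secp)$.

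\textbf{Soundness} reduces to OSP security. Fix any PPT prover $\Adv$ and condition on the transcript $\tau$ of the OSP phase together with $\Adv$'s internal randomness; given $\tau$, the prover's reply $y_b(\tau)$ is a deterministic function of the challenge bit $b$. Rewriting the predicate as $y = s \oplus a \cdot b$, let $q_0(\tau), q_1(\tau) \in \binset$ indicate whether the replies on $b=0$ and $b=1$ respectively would be accepted. Since $q_0 + q_1 \le 1 + q_0 q_1$ for $\binset$-valued quantities, $\Pr[\Adv\text{ accepted}] = \E_\tau[\tfrac12(q_0(\tau)+q_1(\tau))] \le \tfrac12 + \tfrac12\Pr_\tau[q_0(\tau)=q_1(\tau)=1]$. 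But whenever $q_0(\tau)=q_1(\tau)=1$ we have $y_0(\tau)=s$ and $y_1(\tau)=s\oplus a$, hence $y_0(\tau)\oplus y_1(\tau)=a$. This yields an OSP adversary $\cB$: it runs $\Adv$ against the honest OSP sender $S(1^\secp,a)$, and once that interaction terminates it locally (with no further communication with $S$) runs $\Adv$'s residual strategy on challenge $b=0$ to obtain $y_0$ and, rewinding, on challenge $b=1$ to obtain $y_1$, and outputs $y_0 \oplus y_1$ as its guess for $a$. Then $\Pr[\cB=a] \ge \Pr_\tau[q_0(\tau)=q_1(\tau)=1]$, and since $a$ is uniform this translates (losing a factor of $2$) into a distinguishing advantage against \cref{def:OSP}, which must therefore be $\negl(\secp)$. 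Plugging back in gives $\Pr[\Adv\text{ accepted}] \le 3/4 + \negl(\secp) = \delta(\secp)+\negl(\secp)$.

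\textbf{Main obstacle.} There is no deep difficulty here; the care points are (i) ensuring the reduction never needs to rewind the OSP sender $S$ — all rewinding is confined to $\Adv$'s post-OSP behavior, since the challenge $b$ is sampled and consumed entirely inside $\cB$, and for a classical PPT $\Adv$ rewinding is free — and (ii) cleanly passing between OSP correctness (a statement about an averaged projection onto $H^a\ketbra{s}H^a$) and the concrete CHSH success probability, so that only a $\negl(\secp)$ term is lost. The whole argument is a modular repackaging of the proof-of-quantumness analyses in \cite{KCVY,10.1007/978-3-031-38554-4_6,ABCC}.
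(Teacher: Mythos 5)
Your proposal is correct and follows essentially the same route as the paper: the identical single-OSP CHSH protocol, the textbook $\cos^2(\pi/8)$ completeness (handling the OSP correctness error via a gentle-measurement/trace-distance argument), and soundness via rewinding the classical prover on both challenges and outputting the XOR of its answers as a guess for the sender's basis bit. The only cosmetic difference is that you fix the prover's randomness so the two answers are deterministic indicators, whereas the paper conditions on the residual state and uses $xy \geq x+y-1$ on the two conditional success probabilities; the bounds and the reduction are the same.
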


\begin{proof}
    We describe the protocol in \cref{fig:PoQ}. The construction and proof follow the presentation in \cite{ABCC}, which is based on ideas originated in \cite{KCVY}.

    \protocol{Proof of quantumness from OSP}{Proof of quantumness from OSP.}{fig:PoQ}{
        \begin{itemize}
            \item The verifier samples $r \gets \{0,1\}$ and acts as the sender in an OSP with the prover. The verifier receives a bit $s$ and the prover receives a state (negligibly close to) $H^r\ket{s}$.
            \item The verifier samples $a \gets \{0,1\}$ and sends $a$ to the prover.
            \item If $a = 0$, the prover measures their state in the $X+Z$ basis, and if $a=1$, the prover measures their state in the $X-Z$ basis, to obtain a bit $b$. The prover sends $b$ to the verifier.
            \item The verifier accepts if $b=s \oplus r \cdot a$.
        \end{itemize}
    }

    First we show that the protocol has completeness $\epsilon > 0.85$. By applying gentle measurement (\cref{lemma:gentle-measurement}), we can take the prover's state at the conclusion of the OSP protocol to be exactly $H^r\ket{s}$, and only lose a $\negl(\secp)$ factor in the final bound. By a standard calculation, measuring $H^r\ket{s}$ in the X+Z basis yields $s$ with probability $\cos^2(\pi/8)$, and  measuring $H^r\ket{s}$ in the X-Z basis yields $s \oplus r$ with probability $\cos^2(\pi/8)$. Thus, the verifier accepts with probabilty at least $\cos^2(\pi/8) - \negl(\secp) > 0.85$.


    Now, we show that the protocol has soundness $\delta = 0.75$. To see this, suppose a PPT prover $\Adv = \{\Adv_\secp\}_{\secp \in \bbN}$ has advantage $0.75 + \nonnegl(\secp)$ in the protocol. Now consider the following procedure $\Adv'$ attacking the security of the OSP protocol.

    \begin{itemize}
        \item Interact as $\Adv$ in the OSP protocol and let $\state_\Adv$ be the state of $\Adv$ at the conclusion of the OSP protocol.
        \item Run $\Adv(\state_\Adv,0) \to b_0$ and $\Adv(\state_\Adv,1) \to b_1$ to obtain a final-round answer on each possible question $a \in \{0,1\}$ from the verifier.
        \item Output $b_0 \oplus b_1$ as the guess for $r$.
    \end{itemize}

    Let $D$ be the distribution over $(\state_\Adv,r,s)$ that results from running the OSP protocol with $\Adv$ on a random input $r$, and defining $s$ to be the sender's output. For any $(\state_\Adv,r,s)$ in the support of $D$ and any $a \in \{0,1\}$, define $p_{\state_\Adv,r,s}[a] \coloneqq \Pr[\Adv(\state_\Adv,a) = s \oplus r \cdot a]$.

    Then we have that 
    \begin{align*}
        \Pr\left[\Adv' = r\right] &\geq \E_{(\state_\Adv,r,s) \gets D}\left[p_{\state_\Adv,r,s}[0] \cdot p_{\state_\Adv,r,s}[1]\right] \\
        &\geq \E_{(\state_\Adv,r,s) \gets D}\left[p_{\state_\Adv,r,s}[0] + p_{\state_\Adv,r,s}[1] - 1\right] \\
        &= 2\E_{(\state_\Adv,r,s) \gets D}\left[\frac{1}{2}\left(p_{\state_\Adv,r,s}[0] + p_{\state_\Adv,r,s}[1]\right)\right] - 1 \\
        &= 2(0.75 + \nonnegl(\secp)) - 1 \\
        &= 0.5 + \nonnegl(\secp),
    \end{align*}

    where the second inequality follows from the fact that $x \cdot y \geq x + y -1$ for any $x,y \in [0,1]$, and the second equality follows from the fact that \[\E_{(\state_\Adv,r,s) \gets D}\left[\frac{1}{2}\left(p_{\state_\Adv,r,s}[0] + p_{\state_\Adv,r,s}[1]\right)\right]\] is exactly $\Adv$'s advantage in the proof of quantumness. This yields a contradiction to the security of OSP, completing the proof.

\end{proof}

To conclude this section, we note that \cref{fig:PoQ} fits the protocol template from \cite[Figure 1]{10.1007/978-3-031-38554-4_6} (unsurprisingly, since it is essentially a more modular presentation of the protocol from \cite[Section 5.3]{10.1007/978-3-031-38554-4_6}), and thus inherits the results established by \cite{10.1007/978-3-031-38554-4_6} about this class of proof of quantumness protocols. In particular it implies (1) a ``test of a qubit'' (\cite[Theorem 4.7]{10.1007/978-3-031-38554-4_6}), meaning that any \emph{quantum} prover with advantage close to $\cos^2(\pi/8)$ must be using (close to) anti-commuting operators in the final round, and (2) the ability to generate certifiable randomness from a quantum prover. We refer the reader to \cite{Brakerski2018ACT,10.1007/978-3-031-38554-4_6,Vid20-course,merkulov2023entropyaccumulationpostquantumcryptographic} for formal definitions and more discussion on the notions of test of a qubit and certifiable randomness. We stress that, in general, proofs of quantumness (e.g.\ Shor's algorithm \cite{10.1137/S0097539795293172} and Yamakawa-Zhandry \cite{10.1145/3658665}) do not always imply a test of a qubit, but our OSP-based proof of quantumness does since it fits the template described in \cite{10.1007/978-3-031-38554-4_6}.

\subsection{1-of-2 puzzles}\label{subsec:1-of-2}

Next, we show that two-round OSP implies a \emph{1-of-2 puzzle}, which was originally defined by \cite{10.1145/3318041.3355462}. In words, a 1-of-2 puzzle is a task with two challenges such that a prover can answer either but not both simultaneously. It is a useful abstraction, as it has been shown to imply both privately-verifiable quantum money \cite{10.1145/3318041.3355462} and position verification with classical communication \cite{liu_et_al:LIPIcs.ITCS.2022.100}. We begin by providing the definition.

\begin{definition}[1-of-2 puzzle \cite{10.1145/3318041.3355462}]\label{def:1-of-2}
    A 1-of-2 puzzle consists of four algorithms with the following syntax.
    \begin{itemize}
        \item $\KeyGen(1^\secp) \to (\pk,\vk)$. The PPT key generation algorithm takes as input the security parameter $1^\secp$ and outputs a public key $\pk$ and a secret verification key $\vk$.
        \item $\Obligate(\pk) \to (\ket{\psi},y)$: The QPT obligate algorithm takes as input the public key, and outputs a classical obligation string $y$ and a quantum state $\ket{\psi}$.
        \item $\Solve(\ket{\psi},b) \to a$: The QPT solve algorithm takes as input a state $\ket{\psi}$ and a bit $b \in \{0,1\}$ and outputs a string $a$.
        \item $\Ver(\vk,y,b,a) \to \{\top,\bot\}$: The PPT verify algorithm takes as input the verification key $\vk$, a string $y$, a bit $b \in \{0,1\}$, and a string $a$, and either accepts or rejects.
    \end{itemize}
    We say the puzzle is an $\epsilon(\secp)$-1-of-2 puzzle if it satisfies the following properties.
    \begin{itemize}
        \item \textbf{Completeness.}  \[\Pr\left[\top \gets \Ver(\vk,y,b,a) : \begin{array}{r} (\pk,\vk) \gets \KeyGen(1^\secp) \\ (\ket{\psi},y) \gets \Obligate(\pk) \\ b \gets \{0,1\} \\ a \gets \Solve(\ket{\psi},b) \end{array}\right] = 1-\negl(\secp).\]
        \item \textbf{Soundness.} For any QPT adversary $\{\Adv_\secp\}_{\secp \in \bbN}$, 
        \[\Pr\left[\begin{array}{l}\top \gets \Ver(\vk,y,0,a_0) \wedge \top \gets \Ver(\vk,y,1,a_1) \end{array} : \begin{array}{r}(\pk,\vk) \gets \KeyGen(1^\secp) \\ (y,a_0,a_1) \gets \Adv_\secp(\pk) \\ \end{array}\right] \leq \epsilon(\secp) + \negl(\secp).\]
    \end{itemize}
    
\end{definition}

We call a 1-of-2 puzzle \emph{strong} if $\epsilon = 0$, and note the following amplification theorem due to \cite{10.1145/3318041.3355462}. We will then construct an $\epsilon$-1-of-2 puzzle for $\epsilon = 0.5$ from OSP, which gives a strong 1-of-2 puzzle as a corollary.

\begin{theorem}[\cite{10.1145/3318041.3355462}]
    For any $\epsilon = 1-1/\poly(\secp)$, an $\epsilon$-1-of-2 puzzle implies a strong 1-of-2 puzzle.
\end{theorem}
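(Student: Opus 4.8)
The plan is to amplify soundness by $n$-fold parallel repetition, reducing to the hardness amplification theorem for weakly verifiable puzzles of \cite{TCC:CanHalSte05} (this is precisely the amplification lemma of \cite{10.1145/3318041.3355462}). Write $\epsilon = 1 - 1/q(\secp)$ for a polynomial $q$, and fix a polynomial $n = n(\secp)$ to be chosen later. Define the repeated puzzle as follows: $\KeyGen'(1^\secp)$ runs $\KeyGen$ independently $n$ times to obtain $(\pk_i,\vk_i)_{i\in[n]}$ and outputs $\pk' = (\pk_i)_i$, $\vk' = (\vk_i)_i$; $\Obligate'(\pk')$ runs $\Obligate(\pk_i)$ for each $i$ and outputs $\ket{\psi'} = \bigotimes_i \ket{\psi_i}$ together with $y' = (y_i)_i$; $\Solve'(\ket{\psi'},b)$ applies $\Solve(\cdot,b)$ to each register and outputs $a' = (a_i)_i$; and $\Ver'(\vk',y',b,a') = \bigwedge_i \Ver(\vk_i,y_i,b,a_i)$. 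Completeness is immediate: on a random $b$ each copy produces an accepting answer with probability $1-\negl(\secp)$, so by a union bound over the $n = \poly(\secp)$ copies all of them accept with probability $1-\negl(\secp)$.

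For soundness, the key move is to cast a single $\epsilon$-1-of-2 puzzle as a weakly verifiable puzzle: the puzzle instance is $\pk$, a candidate solution is a triple $(y,a_0,a_1)$, and the secret check predicate, parameterized by $\vk$, is $C_{\vk}(y,a_0,a_1) \coloneqq [\Ver(\vk,y,0,a_0)=\top] \wedge [\Ver(\vk,y,1,a_1)=\top]$. The soundness property of the original $\epsilon$-1-of-2 puzzle states exactly that no QPT adversary, given $\pk$, outputs a solution satisfying $C_{\vk}$ with probability exceeding $\epsilon + \negl(\secp) = 1 - 1/q(\secp) + \negl(\secp)$. Now observe that the soundness experiment for the repeated puzzle $(\KeyGen',\Obligate',\Solve',\Ver')$ is literally the $n$-fold parallel repetition of this weakly verifiable puzzle: an adversary against $\Ver'$ produces $(y_i,a_{0,i},a_{1,i})_{i\in[n]}$ and wins iff $C_{\vk_i}$ holds for every $i$. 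Applying the hardness amplification theorem of \cite{TCC:CanHalSte05} (with hardness deficiency $\gamma = 1/q(\secp)$), there is a polynomial $n(\secp)$ for which the repeated puzzle has soundness error $\negl(\secp)$, i.e.\ it is a strong 1-of-2 puzzle.

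The main obstacle is the parallel-repetition step itself: naive parallel composition does not in general amplify hardness for interactive primitives (cf.\ Bellare--Impagliazzo--Naor), so this is not a triviality. What makes it go through is precisely that a 1-of-2 puzzle, once both challenges are folded into the single AND-predicate $C_{\vk}$ on the adversary's combined answer, falls into the class of weakly verifiable puzzles of \cite{TCC:CanHalSte05} --- the verification is a deterministic function of a puzzle-time-fixed secret $\vk$ and the answer, with no further rounds --- and for that class parallel repetition does amplify, at the cost of a polynomial blowup in $n$. The one point that merits care is that the amplification/extraction argument be sound against QPT rather than PPT solvers; I would either invoke the reduction directly (it is essentially black-box in the solver, running it once on a combined instance and evaluating the predicate, so it lifts to the quantum setting without rewinding issues) or cite a quantum-secure variant of weakly-verifiable-puzzle amplification. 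Pinning down $n(\secp)$ and the precise resulting soundness bound is then a routine instantiation of the amplification theorem.
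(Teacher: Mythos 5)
This theorem is stated in the paper as an imported result from \cite{10.1145/3318041.3355462} with no proof given, and the cited amplification is exactly your route: view the simultaneous-solving game as a weakly verifiable puzzle with secret check $C_{\vk}(y,a_0,a_1)$ and amplify by parallel repetition via \cite{TCC:CanHalSte05}, so your proposal matches the intended argument. One caution on your final paragraph: the Canetti--Halevi--Steiner reduction does not run the solver only once (it reruns it on many freshly sampled co-puzzles to estimate conditional success probabilities), but because the solver's input $\pk'$ is purely classical and $\KeyGen$/$\Obligate$ are non-interactive, this rerunning lifts to QPT solvers, which is precisely the quantum-secure amplification established in the cited work and the reason the paper restricts to the two-round setup.
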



\begin{theorem}
    Two-round OSP (\cref{def:OSP}) implies an $\epsilon$-1-of-2 puzzle (\cref{def:1-of-2}) for $\epsilon = 0.5$. 
\end{theorem}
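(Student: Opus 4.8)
The plan is to instantiate the construction sketched in the technical overview: run $\secp$ parallel copies of the two-round OSP with a single shared, uniformly random sender bit $r$, and then ask the solver to recover, up to agreement on a $0.85$ fraction of coordinates, either the vector of standard-basis labels (on challenge $b=0$) or its coordinate-wise XOR with $r$ (on challenge $b=1$). Concretely: $\KeyGen(1^\secp)$ samples $r \gets \{0,1\}$, runs $(\msg_{S,i},\state_{S,i}) \gets \OSP.\Sen(1^\secp,r)$ for $i \in [\secp]$, and outputs $\pk \coloneqq (\msg_{S,1},\dots,\msg_{S,\secp})$ and $\vk \coloneqq (r,\state_{S,1},\dots,\state_{S,\secp})$; $\Obligate(\pk)$ runs $(\ket{\psi_i},\msg_{R,i}) \gets \OSP.\Rec(\msg_{S,i})$ for each $i$ and outputs $y \coloneqq (\msg_{R,1},\dots,\msg_{R,\secp})$ and $\ket{\psi} \coloneqq \ket{\psi_1}\otimes\cdots\otimes\ket{\psi_\secp}$; $\Solve(\ket{\psi},b)$ measures each qubit of $\ket{\psi}$ in the $X+Z$ basis if $b=0$ and in the $X-Z$ basis if $b=1$, obtaining $a \coloneqq (a_1,\dots,a_\secp)$; and $\Ver(\vk,y,b,a)$ recomputes $s_i \gets \OSP.\Dec(\state_{S,i},\msg_{R,i})$, sets $s \coloneqq (s_1,\dots,s_\secp)$, and accepts iff $a$ agrees with $s$ (when $b=0$) or with $s \oplus (r,\dots,r)$ (when $b=1$) on at least a $0.85$ fraction of coordinates.

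For completeness, I would use correctness of the two-round OSP together with subadditivity of trace distance over the $\secp$ instances (and, if convenient, gentle measurement, \cref{lemma:gentle-measurement}) to argue that, conditioned on the sender outputs $s_1,\dots,s_\secp$, the state $\ket{\psi}$ is within $\negl(\secp)$ trace distance of $\bigotimes_i H^r\ket{s_i}$; this lets us analyze the honest solver on the idealized product state at an additive cost of $\negl(\secp)$. Then, exactly as in the single-qubit CHSH calculation in the proof of \cref{thm:PoQ}, measuring $H^r\ket{s_i}$ in the $X+Z$ basis returns $s_i$ with probability $\cos^2(\pi/8) > 0.85$ and in the $X-Z$ basis returns $s_i \oplus r$ with probability $\cos^2(\pi/8) > 0.85$; since the $\secp$ outcomes are independent, a Chernoff bound shows that the fraction of correct coordinates exceeds $0.85$ except with probability $e^{-\Omega(\secp)} = \negl(\secp)$, for either challenge, hence for a uniformly random challenge.

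For soundness I would prove $\epsilon = 1/2$ directly. First, a combinatorial observation: if $(y,a_0,a_1)$ passes both challenges, then there are at most $0.15\secp$ coordinates where $a_0$ disagrees with $s$ and at most $0.15\secp$ where $a_1$ disagrees with $s \oplus (r,\dots,r)$, so on at least $0.7\secp > \secp/2$ coordinates we have $a_{0,i} \oplus a_{1,i} = r$; hence the majority bit of $a_0 \oplus a_1$ equals $r$. Therefore, letting $B$ be the QPT algorithm that runs a candidate adversary $\Adv$ on $\pk$, parses its output $(y,a_0,a_1)$, and outputs the majority bit of $a_0 \oplus a_1$, we get $\Pr[B \text{ outputs } r] \geq \Pr[\Adv \text{ passes both challenges}]$. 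It remains to bound $\Pr[B \text{ outputs } r] \leq 1/2 + \negl(\secp)$, and this is the only place real care is needed. The key point is that $\Adv$ (hence $B$) sees only $\pk$ — a list of $\secp$ two-round-OSP sender messages all produced with bit $r$ — and not $\vk$, so not $r$ or $s$; a standard hybrid argument that flips the sender bit of one instance at a time, each step reducing to the single-instance security of two-round OSP (\cref{def:OSP}) via a reduction that generates the other $\secp - 1$ instances itself using the public $\OSP.\Sen$ algorithm and embeds the challenge message in the remaining slot, shows that the distribution of $\pk$ with $r=0$ is computationally indistinguishable (advantage $\leq \secp\cdot\negl(\secp) = \negl(\secp)$) from that with $r=1$. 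Writing $\Pr[B \text{ outputs } r] = \tfrac12 + \tfrac12\big(\Pr[B=0 \mid r=0] - \Pr[B=0 \mid r=1]\big)$ then gives $\Pr[B \text{ outputs } r] = 1/2 + \negl(\secp)$, so $\Pr[\Adv \text{ passes both}] \leq 1/2 + \negl(\secp)$, the desired soundness with $\epsilon = 1/2$. The main obstacle is therefore not conceptual but a matter of bookkeeping — correctly accumulating the $\negl(\secp)$ errors from $\secp$ parallel OSP instances in the completeness analysis, and arranging the hybrid so that the reduction never needs to do anything beyond running the public OSP sender. Finally, since $0.5 = 1 - 1/\poly(\secp)$, feeding the resulting $0.5$-$1$-of-$2$ puzzle into the amplification theorem of \cite{10.1145/3318041.3355462} stated above yields a strong $1$-of-$2$ puzzle, and hence privately-verifiable quantum money and classical-communication position verification via \cite{10.1145/3318041.3355462,liu_et_al:LIPIcs.ITCS.2022.100}.
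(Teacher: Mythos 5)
Your proposal is correct and takes essentially the same approach as the paper's proof: $\secp$ parallel two-round OSP instances on a shared random bit $r$, $X\pm Z$ basis measurements for the solver, an $0.85$-fraction agreement check, the observation that passing both challenges forces the majority of $a_0 \oplus a_1$ to equal $r$, and a hybrid reduction to OSP security. You simply spell out details the paper leaves implicit (the $0.7\secp$ overlap bound, the trace-distance bookkeeping in the completeness analysis, and the explicit form of the hybrid).
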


\begin{proof}
    Let $(\OSP.\Sen,\OSP.\Rec,\OSP.\Dec)$ be any two-round OSP protocol (see \cref{def:OSP}). We define the 1-of-2 puzzle as follows.
    \begin{itemize}
        \item $\KeyGen(1^\secp)$: Sample $r \gets \{0,1\}$ and for $i \in [\secp]$, sample $(\msg_{S,i},\state_{S,i}) \gets \OSP.\Sen(1^\secp,r)$. Define $\pk \coloneqq (\msg_{S,1},\dots,\msg_{S,\secp})$ and $\vk \coloneqq (\state_{S,1},\dots,\state_{S,\secp})$.
        \item $\Obligate(\pk)$: For each $i \in [\secp]$, run $(\ket{\psi_i},\msg_{R,i}) \gets \OSP.\Rec(\msg_{S,i})$. Define $\ket{\psi} \coloneqq (\ket{\psi_1},\dots,\ket{\psi_\secp})$ and $y \coloneqq (\msg_{R,1},\dots,\msg_{R,\secp})$.
        \item $\Solve(\ket{\psi},b)$: If $b = 0$, measure each $\ket{\psi_{R,i}}$ in the $X+Z$ basis and if $b = 1$, measure each $\ket{\psi_{R,i}}$ in the $X-Z$ basis. This results in a string $a \in \{0,1\}^\secp$.
        \item $\Ver(\vk,y,b,a)$: For each $i \in [\secp]$, set $s_i \coloneqq \OSP.\Dec(\state_{S_i},\msg_{R,i})$, define $s \coloneqq (s_1,\dots,s_\secp)$, and define $s \oplus r \coloneqq (s_1 \oplus r, \dots, s_\secp \oplus r)$. If $b = 0$, accept iff $\Delta(a,s) \geq 0.85$ and if $b=1$, accept iff $\Delta(a,s \oplus r) \geq 0.85$.
    \end{itemize}

    A standard calculation shows that for each $i \in [\secp]$, $\Pr[a_i = s_i] = \cos^2(\pi/8) > 0.85$ if $b = 0$ and $\Pr[a_i = s_i \oplus r] = \cos^2(\pi/8) > 0.85$ if $b = 1$. Moreover, for each $b \in \{0,1\}$, these $\secp$ events are independent. Thus, by a standard tail bound, for each $b \in \{0,1\}$ we have that $\Pr[\Ver(\vk,y,b,a) = \top] = 1-\negl(\secp)$.

    Now, suppose towards contradiction that there exists a QPT $\Adv = \{\Adv_\secp\}_{\secp \in \bbN}$ such that 
    \[\Pr\left[\begin{array}{l}\top \gets \Ver(\vk,y,0,a_0) \\ \wedge \ \top \gets \Ver(\vk,y,1,a_1)\end{array} : \begin{array}{r}(\pk,\vk) \gets \KeyGen(1^\secp) \\ (y,a_0,a_1) \gets \Adv_\secp(\pk) \\ \end{array}\right] = \frac{1}{2} + \nonnegl(\secp).\]

    This implies that with probability $1/2+\nonnegl(\secp)$, the majority bit in $a_0 \oplus a_1$ is equal to $r$. However, by a standard hybrid argument, the security of the two-round OSP implies that $r$ cannot be predicted with better than $\negl(\secp)$ advantage, a contradiction.
    
\end{proof}

\begin{corollary}
    Two-round OSP implies a strong 1-of-2 puzzle.
\end{corollary}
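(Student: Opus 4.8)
The plan is to simply chain the two results established immediately above. First I would invoke the preceding theorem, which shows that any two-round OSP (\cref{def:OSP}) yields an $\epsilon$-1-of-2 puzzle (\cref{def:1-of-2}) with $\epsilon = 0.5$. Then I would apply the amplification theorem of \cite{10.1145/3318041.3355462}, which states that any $\epsilon$-1-of-2 puzzle with $\epsilon = 1 - 1/\poly(\secp)$ implies a strong (i.e.\ $\epsilon = 0$) 1-of-2 puzzle. The only thing to check is that the constant $0.5$ is of the form $1 - 1/\poly(\secp)$, which holds by taking the polynomial to be the constant $2$; hence the hypothesis of the amplification theorem is satisfied, and composing the two implications gives a strong 1-of-2 puzzle from two-round OSP.

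There is essentially no obstacle here, since both ingredients are already in hand. I would only remark that the amplification theorem is based on the parallel repetition for weakly verifiable puzzles of \cite{TCC:CanHalSte05}, so applying it directly to our $0.5$-1-of-2 puzzle drives the simultaneous-solving probability down to $\negl(\secp)$, as required by the definition of a strong puzzle, and that no extra assumptions (such as one-way functions) are needed beyond the two-round OSP that we started with.
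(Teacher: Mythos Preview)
Your proposal is correct and matches the paper's approach exactly: the corollary is stated without a separate proof, since it follows immediately by composing the preceding theorem (two-round OSP $\Rightarrow$ $0.5$-1-of-2 puzzle) with the amplification theorem of \cite{10.1145/3318041.3355462}.
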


In turn, we obtain the following results as corollaries from prior work.
\begin{itemize}
    \item \cite{10.1145/3318041.3355462}: Two-round OSP implies privately-verifiable quantum money with classical communication.
    \item \cite{liu_et_al:LIPIcs.ITCS.2022.100}: Two-round OSP implies position verification with classical communication. 
\end{itemize}

\subsection{Blind delegation}\label{subsec:blind-delegation}

We first give a very general definition for blind delegation of quantum computation with a classical client. It allows the client to delegate the computation of an arbitrary publicly-known quantum operation that takes a quantum input (provided by the server, and potentially entangled with an auxiliary system held by the server) and a private classical input (chosen by the client). After interaction, the server is able to obtain the (potentially quantum) output up to a one-time pad with keys known to the client.


\begin{definition}[Blind Classical Delegation of Quantum Computation]\label{def:blind-computation}
Let $\cH_\cV,\cH_\cW$ be Hilbert spaces of arbitrary dimension, and let $Q: \{0,1\}^* \times \cH_\regV \to \cH_\regW$ be a polynomial-size quantum circuit that takes as input a classical string $x$ and a state on register $\regV$, and outputs a state on register $\regW$. A protocol for blind classical delegation of quantum computation consists of an interaction 

\[(\regW,(r,s)) \gets \langle S(1^\secp,Q,\regV),C(1^\secp,Q,x)\rangle\]

between 

\begin{itemize}
    \item a QPT server $S(1^\secp,Q,\regV)$ with input the security parameter $1^\secp$, circuit $Q$, and state on register $\regV$, and
    \item a PPT client $C(1^\secp,Q,x)$ with input the security parameter $1^\secp$, circuit $Q$, and classical string $x$.
\end{itemize}

At the end of the interaction, the server outputs a state on register $\regW$ and the client outputs classical strings $(r,s)$. The protocol must satisfy the following properties.
\begin{itemize}
    \item \textbf{Correctness.} For any circuit $Q$ and input $x$, let $\mathsf{IDEAL}[Q,x]$ be the map from $\regV \to \regW$ defined by $Q(x,\cdot)$, and let $\mathsf{REAL}[Q,x]_\secp$ be the map from $\regV \to \regW$ induced by running the protocol \[(\regW,(r,s)) \gets \langle S(1^\secp,Q,\regV),C(1^\secp,Q,x)\rangle\] and then applying $X^r Z^s$ to $\regW$. Then for any $Q$ and $x$,
    \[D_\diamond\left(\mathsf{REAL}[Q,x]_\secp, \mathsf{IDEAL}[Q,x]\right) = \negl(\secp).\]



    \item \textbf{Security.} For any circuit $Q$, QPT adversary $\{\Adv_\secp\}_{\secp \in \bbN}$, and two strings $x_0,x_1$, it holds that 
    \begin{align*}&\bigg| \Pr\left[b_\Adv = 0 :  (b_\Adv,(r,s)) \gets \langle \Adv_\secp,C(1^\secp,Q,x_0)\rangle\right] \\ &-\Pr\left[b_\Adv = 0 :  (b_\Adv,(r,s)) \gets \langle \Adv_\secp,C(1^\secp,Q,x_1)\rangle\right] \bigg| = \negl(\secp), 
    \end{align*}
    where $b_\Adv$ denotes a single bit output by $\Adv_\secp$ after the interaction (which could result from an arbitrary QPT operation applied to its state after the interaction).
\end{itemize}
\end{definition}

\begin{remark}\label{remark:classical-output}
    Note that the above definition implies that if $Q$ has a \emph{classical} output, then the client can obtain this output with one extra message from the server. That is, suppose $Q : \{0,1\}^* \times \cH_\cV \to \{0,1\}^*$. Then at the conclusion of the protocol defined above, the server has a classical output $y \oplus r$, and the client has the one-time pad key $r$ (note that $s$ is irrelevant once the output has been measured in the standard basis). Then, the server returns $y \oplus r$ to the client, who recovers the output $y$. In this case, we denote the protocol 
    \[y \gets \langle S(1^\secp,Q,\cV),C(1^\secp,Q,x)\rangle,\] where $y$ is the client's output.
\end{remark}

Towards showing that OSP implies blind classical delegation of arbitrary quantum computation, we first show that it implies what we call an ``encypted phase'' protocol. 

\begin{definition}[Encrypted phase]\label{def:encrypted-phase}
    A protocol for encrypted phase is the special case of blind classical delegation of quantum computation where $\cH_\regV = \cH_\regW$ is a single-qubit register, the client's private input is a bit $b \in \{0,1\}$, and $Q(b,\regV)$ is the identity if $b=0$ and applies a phase gate $P$ to $\regV$ if $b = 1$.
\end{definition}

\begin{lemma}
    OSP (\cref{def:OSP}) implies encrypted phase.
\end{lemma}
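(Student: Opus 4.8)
The plan is to implement the ``encrypted phase gate'' via the identity $T^\dagger$-style trick from the technical overview: starting from the server's single-qubit state $\ket{\psi}$, the client uses OSP to deliver a freshly-prepared state $Z^s P^b \ket{+}$ to the server, where $b$ is the client's private bit and $s$ is a bit known to the client; then the server applies a CNOT from $\ket{\psi}$ onto this new register, measures that register in the standard basis to get a bit $m$, and is left with $P^b\ket{\psi}$ up to a known Pauli correction. Concretely, I would write the protocol as follows: (1) the client and server run an OSP instance whose underlying two observables are $X$ and $Y$ (obtained from the ``standard'' $Z$/$X$ OSP by having the server post-compose with the fixed public rotation $H$ followed by $\sqrt{X}$, as described in \cref{def:OSP} and the discussion before \cref{def:phase-computation}) with the client inputting its private bit $b$; this delivers to the server a state negligibly close to $Z^s P^b\ket{+}$ and to the client the bit $s$; (2) the server applies CNOT from its input qubit $\cV$ (control) to the received qubit (target), measures the target in the computational basis to obtain $m \in \{0,1\}$, and sends $m$ to the client; (3) the server outputs the (transformed) qubit $\cV$ as register $\cW$, and the client outputs one-time-pad keys $(r,s')$ computed as a fixed linear function of $b$, $s$, and $m$.

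The main body of the proof is the correctness calculation. I would verify the single-qubit identity: for $\ket{\psi} = \alpha_0\ket{0} + \alpha_1\ket{1}$, applying CNOT with $\cV$ as control onto $Z^sP^b\ket{+} = \tfrac{1}{\sqrt 2}(\ket{0} + (-1)^s i^b \ket{1})$ and then measuring the target gives, on outcome $m$, a residual state on $\cV$ proportional to $\alpha_0\ket{0} + (-1)^{s} i^{b} (-1)^{m} \alpha_1 \ket{1}$ (I will double-check the exact phase bookkeeping here), which equals $Z^{s\oplus m} P^b \ket{\psi}$ — so the server holds $P^b\ket\psi$ up to the Pauli $Z^{s \oplus m}$, whose exponent is known to the client. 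Since the OSP only guarantees the delivered state is within negligible trace distance of $Z^sP^b\ket{+}$, I invoke the fact that trace distance cannot increase under the (CNOT $+$ measurement) channel to conclude the resulting server state is within negligible trace distance of the ideal $Z^{s\oplus m}P^b\ket\psi$; composed with the final $X^rZ^{s'}$ correction applied in the definition of $\mathsf{REAL}$, this gives $D_\diamond(\mathsf{REAL}, \mathsf{IDEAL}) = \negl(\secp)$. I should also handle the general case where $\cV$ is entangled with an auxiliary server register, but this is immediate: the whole argument is ``controlled'' and the identity holds on each branch, or equivalently one applies the diamond-distance bound directly to the channel acting on $\cV$.

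Security is the easy direction and follows immediately by reduction to OSP security: the only client-generated message that depends on the private input is the OSP instance's sender message (with input $b$), so a distinguisher between the $x_0 = 0$ and $x_1 = 1$ executions (note for encrypted phase the ``input'' $x$ is the bit $b$) is exactly a distinguisher for the OSP sender's bit, contradicting the security property in \cref{def:OSP}. I would phrase this as: given any QPT $\Adv_\secp$ attacking the encrypted-phase security game, construct an OSP adversary that simulates the rest of the encrypted-phase interaction (which is just forwarding messages and does not depend on $b$) and outputs $\Adv_\secp$'s guess bit; its advantage equals $\Adv_\secp$'s, which must therefore be negligible.

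The step I expect to require the most care is the phase bookkeeping in the CNOT-and-measure identity — getting the Pauli corrections exactly right (which bit is $m$, which is $s$, how the client combines them into the output one-time pad $(r,s')$), since $P$ does not commute through $X$ cleanly and the target-measurement outcome feeds back a $Z$ on the control. Everything else — the reduction to OSP for security, the trace-distance-contraction argument for correctness, and the reduction from $X$/$Y$-OSP to standard OSP via a public rotation — is routine. I would close by remarking that, as claimed in the overview, an analogous construction starting from \emph{two-round} OSP yields a two-round encrypted phase protocol, and that composing $\ell$ encrypted-phase calls interleaved with Clifford gates (applied directly by the server, with the client updating one-time-pad keys) upgrades this to blind classical delegation of arbitrary $Q$, but I defer that to the subsequent lemma/theorem in \cref{subsec:blind-delegation}.
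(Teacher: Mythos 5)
Your construction and proof strategy are essentially identical to the paper's: run an OSP (post-rotated by $H$ then $\sqrt{X}$ so the two observables are $X$ and $Y$) to place a state close to $Z^s P^b\ket{+}$ on a fresh register $\cM$, then apply CNOT from $\cV$ to $\cM$, measure $\cM$ to get $m$, send $m$ to the client, output $\cV$, and have the client recompute the Pauli key. Security by direct reduction to OSP hiding is also the paper's argument.

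The one issue is in the step you flagged: your unified formula $Z^{s\oplus m}P^b\ket{\psi}$ for the residual state is wrong when $b=0$. In that case $\cM$ is in a $Z$-eigenstate $Z^s\ket{+}\in\{\ket{+},\ket{-}\}$, so the CNOT acts purely by phase kickback onto $\cV$ (yielding $Z^s\ket{\psi}$) and the subsequent measurement of $\cM$ returns a uniformly random $m$ that does \emph{not} influence $\cV$. Concretely, for $b=0$, $m=1$ the post-measurement state on $\cV$ is $(-1)^s\alpha_0\ket{0}+\alpha_1\ket{1}\propto Z^s\ket{\psi}$, not $Z^{s\oplus 1}\ket{\psi}$; your formula $\alpha_0\ket{0}+(-1)^s i^b(-1)^m\alpha_1\ket{1}$ gives a different state for those parameters. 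Thus the client must output $(0,s)$ when $b=0$ and $(0,s\oplus m)$ only when $b=1$, which is exactly why the paper's $\Dec$ step branches on $b$. As written, your unified correction would $Z$-flip the output about half the time when $b=0$, breaking correctness. Everything else in the proposal (security reduction, diamond-distance/trace-distance contraction argument, handling entanglement of $\cV$ with auxiliary registers) is sound.
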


\begin{proof}
    We first describe the protocol.

    \begin{itemize}
        \item The server and client begin by running an OSP protocol, where the client plays the role of the sender with bit $b$ equal to the client's input bit $b$, and the server plays the role of the receiver. Then, the server applies a Hadamard gate followed by a $\sqrt{X}$ gate to their output state. Up to a negligible trace distance, this results in the state
        \begin{itemize}
            \item $Z^s\ket{+}$ if $b=0$,
            \item $Z^sP\ket{+}$ if $b=1$,
        \end{itemize}
        where $s$ is the sender's output bit. Let $\cM$ be the name of the output state's register.
        \item Next, the server applies a CNOT gate from register $\cV$ to register $\cM$, and then measures $\cM$ in the standard basis to obtain a bit $m$. The server sends $m$ to the client, and outputs $\cV$.
        \item If $b=0$, the client outputs $(0,s)$ and if $b=1$, the client outputs $(0,s \oplus m)$.
    \end{itemize}

    Security follows immediately from the security of OSP, so it remains to check correctness. Let $\alpha_0\ket{0}_\cV+\alpha_1\ket{1}_\cV$ be the input state on register $\cV$ (note that it could be entangled with some auxiliary register, i.e.\ $\alpha_0\ket{0}_\cV\ket{\psi_0}+\alpha_1\ket{1}_\cV\ket{\psi_1}$, but we suppress writing the auxiliary register to avoid clutter).
    
    In the case $b=0$, we have that 
    \begin{align*}              \CNOT&\left(\alpha_0\ket{0}_\cV+\alpha_1\ket{1}_\cV\right) \otimes Z^s\ket{+}_\cM \\ &= (Z^s_\cV \otimes Z^s_\cM)\CNOT\left(\alpha_0\ket{0}_\cV+\alpha_1\ket{1}_\cV\right) \otimes \ket{+}_\cM\\ &= (Z^s_\cV \otimes Z^s_\cM)\left(\alpha_0\ket{0}_\cV+\alpha_1\ket{1}_\cV\right) \otimes \ket{+}_\cM,
    \end{align*}

    so measuring $\cM$ does not affect the state on $\cV$. Thus, the server is left with their original state up to a $Z^s$ error, which is the desired outcome.

    In the case $b=1$, we have that 
    \begin{align*}
    \CNOT&\left(\alpha_0\ket{0}_\cV+\alpha_1\ket{1}_\cV\right) \otimes PZ^s\ket{+}_\cM  \\
    &=\frac{1}{\sqrt{2}}(Z^s_\cV \otimes Z^s_\cM)\CNOT\left(\alpha_0\ket{0}_\cV + \alpha_1\ket{1}_\cV\right) \otimes \left(\ket{0}_\cM + i\ket{1}_\cM\right) \\
    &=\frac{1}{\sqrt{2}}(Z^s_\cV \otimes Z^s_\cM)\left(\alpha_0\ket{0}_\cV\ket{0}_\cM + i \alpha_0\ket{0}_\cV\ket{1}_\cM + \alpha_1\ket{1}_\cV\ket{1}_\cM + i\alpha_1\ket{1}_\cV\ket{0}_\cM\right) \\
    &=\frac{1}{\sqrt{2}}(Z^s_\cV \otimes Z^s_\cM)\left(\left(\alpha_0\ket{0}_\cV+i\alpha_1\ket{1}_\cV\right)\ket{0}_\cM + \left(i\alpha_0\ket{0}_\cV+\alpha_1\ket{1}_\cV\right)\ket{1}_\cM\right).
    \end{align*}

    So if the measured bit $m=0$, the remaining state is
    \[Z^s_\cV\left(\alpha_0\ket{0}_\cV + i \alpha_1\ket{1}_\cV\right),\] which is the desired outcome, and if the measured bit $m=1$, the remaining state is 
    \[Z^s_\cV\left(i\alpha_0\ket{0}_\cV +  \alpha_1\ket{1}_\cV\right) = Z^s_\cV\left(\alpha_0\ket{0}_\cV -  i\alpha_1\ket{1}_\cV\right) = Z^{s \oplus 1}_\cV\left(\alpha_0\ket{0}_\cV +  i\alpha_1\ket{1}_\cV\right),\] which is again the desired outcome.

\end{proof}

Next, we show that the ability to perform an encrypted phase implies blind classical delegation of quantum computation for arbitrary quantum operations.  

\begin{theorem}
    Any protocol for encrypted phase (\cref{def:encrypted-phase}) implies blind classical delegation of quantum computation (\cref{def:blind-computation}).
\end{theorem}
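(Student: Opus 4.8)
The plan is to process the circuit $Q$ gate by gate on data that is quantum one-time padded under keys known to the client, invoking one instance of encrypted phase (\cref{def:encrypted-phase}) to handle each non-Clifford gate. Without loss of generality we may assume $Q$ is presented as (or, via Solovay--Kitaev, compiled to $\negl(\secp)$ precision into) a unitary circuit $G_L \cdots G_1$ over Clifford gates and $T$ gates acting on $\cV$ together with an input register $\cX$ and an ancilla register $\cE$ (deferring any intermediate measurements to the end). The protocol: the client samples a uniform pad $r_\inp$ on the input wires and sends $x \oplus r_\inp$; the server prepares $\ket{x \oplus r_\inp}_\cX \ket{0\cdots 0}_\cE$ and, together with its quantum input on $\cV$, processes $G_1, \dots, G_L$ in order. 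Throughout, the client maintains classical one-time-pad keys $(a^{(k)}, b^{(k)})$ (one bit per wire) with the invariant that after $G_k$ the server holds $G_k \cdots G_1 \ket{\Phi}$ up to the Pauli $X^{a^{(k)}} Z^{b^{(k)}}$, where $\ket{\Phi} = \ket{x}_\cX\ket{0}_\cE \otimes (\text{server's quantum input on } \cV)$, and $a^{(0)}$ carries $r_\inp$ on the input wires and $0$ elsewhere while $b^{(0)} = 0$.

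For a Clifford gate $G_{k+1}$ the server applies it directly and the client updates the keys via the (public) symplectic action of $G_{k+1}$ on the Pauli group, with no communication. For a $T$ gate on a wire $w$, the server applies $T_w$; using the single-qubit identity $T X^a Z^b = (\text{global phase}) \cdot X^a Z^{a \oplus b} P^a T$, the server now holds the correctly transformed state but with an \emph{uncorrected} factor $P^{a}$ on wire $w$, where $a := a^{(k)}_w$ is known to the client. The parties then run one encrypted-phase instance on wire $w$ with client input $a$; since that sub-protocol applies $P^{a}$ to wire $w$ up to a Pauli pad known to the client, and $P^a X^a P^a = i^a X^a$, a short Pauli computation shows the server is left with the correctly transformed state under a fresh Pauli pad whose keys the client computes from $(a^{(k)}, b^{(k)})$, the bit $a$, and the sub-protocol's output. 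Iterating, after $G_L$ the server holds $Q(x, \cdot)$ on $\cW$ (the non-traced-out wires) up to a Pauli $X^r Z^s$ with $(r,s)$ known to the client, so undoing it recovers $Q(x,\cdot)$; since each of the $\le L = \poly(\secp)$ encrypted-phase instances is $\negl(\secp)$-close in diamond distance to an ideal phase gate (plus the compilation error, if any), the composed protocol is $\negl(\secp)$-close to $\mathsf{IDEAL}[Q,x]$.

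For security (blindness) we run a hybrid argument that switches the client's input in each encrypted-phase instance to $0$ one at a time, but the instances must be switched in \emph{reverse} order. Let $L'$ be the number of $T$ gates, and for $j \in [L'+1]$ let $H_j$ be the protocol in which encrypted-phase instances $j, j+1, \dots, L'$ all use client input $0$ while instances $1, \dots, j-1$ use their real inputs; thus $H_{L'+1}$ is the real protocol and $H_1$ uses input $0$ everywhere. To distinguish $H_{j+1}$ from $H_j$ one changes only instance $j$'s client input (from its real value $a$ to $0$), and a reduction against the security of encrypted phase (which, being a special case of \cref{def:blind-computation}, hides the client's input bit) can (i) sample $r_\inp$ and play the client in instances $1, \dots, j-1$ with their real inputs, thereby learning their outputs and computing the real input $a$ to instance $j$, (ii) forward instance $j$ to its challenger with candidate inputs $(a, 0)$, and (iii) play the client in instances $j+1, \dots, L'$ with input $0$ --- which needs \emph{no} information about instance $j$'s output. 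Finally, in $H_1$ the adversary's view is just $x \oplus r_\inp$ (uniform and independent of all the fresh sub-protocol randomness) together with transcripts of encrypted-phase instances all run on input $0$, hence independent of $x$; combining this with the $L' \cdot \negl(\secp) = \negl(\secp)$ hybrid loss gives the required security.

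The main obstacle is exactly the point that forces the reverse hybrid ordering: because the keys $(a^{(k)}, b^{(k)})$ evolve as the circuit is applied --- and Clifford gates such as the Hadamard mix $Z$-keys into $X$-keys --- the client input to a later encrypted-phase instance can depend on the \emph{outputs} of earlier instances, so a reduction that challenged an early instance would be unable to compute the inputs it must feed to later ones. Switching instances from last to first sidesteps this, since at the moment instance $j$ is challenged all later instances are already fixed to input $0$ and thus require no knowledge of instance $j$'s output. The remaining work --- verifying the one-time-pad propagation identities for $T$ gates, for Clifford gates, and for the encrypted-phase correction step --- is routine Pauli bookkeeping.
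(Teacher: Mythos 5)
Your proof is correct and takes essentially the same route as the paper's: process $Q$ gate-by-gate on quantum one-time-padded data, invoking one encrypted-phase instance per non-Clifford gate (you use $T$ gates; the paper uses $T^\dagger$ and the identity $T^\dagger X^r Z^s = (P^\dagger)^r X^r Z^s T^\dagger$, an immaterial choice), and establish blindness via a hybrid that switches the encrypted-phase inputs to $0$ starting from the last non-Clifford gate and working backwards. The paper's security argument also begins ``from the final $T^\dagger$ gate,'' so your reverse ordering matches, but you make explicit the reason this ordering is forced — a forward-order reduction could not compute the client's inputs to later encrypted-phase instances, since the Clifford key-update rules make those inputs depend on the (hidden) output keys of the challenged instance — which is a useful clarification of a point the paper leaves implicit.
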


\begin{proof}
    Given any circuit $Q$, write it using Clifford and $T^\dagger$ gates, where $T^\dagger = \sqrt{P^\dagger}$. That is, $Q$ can be written as $C_{\ell+1}T^\dagger C_\ell T^\dagger \dots C_2 T^\dagger C_1$, where each $C_i$ is Clifford. We will use the fact that for any polynomial-size Clifford $C$ and one-time padded state $X^r Z^s \ket{\psi}$, it holds that $CX^r Z^s \ket{\psi} = X^{r'}Z^{s'}C\ket{\psi}$, where $r'$ and $s'$ are efficiently computable from $C, r,$ and $s$. Moreover, for a single qubit state $\ket{\psi}$, it holds that $T^\dagger X^r Z^s \ket{\psi} = (P^\dagger)^r X^r Z^s T^\dagger\ket{\psi}$.

    Given a protocol for encrypted phase (\cref{def:encrypted-phase}), we describe a protocol for blind classical delegation of quantum computation:

    \begin{itemize}
    \item Given inputs $(1^\secp,Q,x)$, the client writes $Q = C_{\ell+1}T^\dagger C_\ell T^\dagger \dots C_2 T^\dagger C_1$, where each $C_i$ is Clifford and the $i$'th $T^\dagger$ gate is applied to qubit $t_i$ of the server's register $\regV$.
    \item The client samples a classical one-time pad $r_{\mathsf{inp}}$ and sends $x \oplus r_{\mathsf{inp}}$ to the server. Throughout the protocol, the client will hold the classical keys for a quantum one-time pad applied to the server's register $\regV$ (which we consider to now include the encrypted input $x \oplus r_{\mathsf{inp}}$). The client initializes these keys to $(r,s) \coloneqq ((r_{\mathsf{inp}},0,\dots,0), (0,\dots,0))$.
    \item For $i \in [\ell]$, perform the following steps.
    \begin{itemize}
        \item The server applies $C_i$ to register $\cV$, and the client updates the quantum one-time pad keys $(r,s)$ according to $C_i$.
        \item The server applies a $T^\dagger$ gate to qubit $t_i$.
        \item Let $b_i$ be the $X$-bit of the one-time pad key on qubit $t_i$. The server and client apply an encrypted phase (\cref{def:encrypted-phase}) to qubit $t_i$ with client input equal to $b_i$.
        \item The client uses their output $(r_i,s_i)$ from the encrypted phase to update the one-time pad key on qubit $t_i$.
    \end{itemize}
    \item The server applies the final Clifford $C_{\ell+1}$ to register $\cV$ and outputs $\cV$, and the client does the final one-time pad update according to $C_{\ell+1}$ and outputs their final one-time pad keys $(r_\out,s_\out)$.

    \end{itemize}

Correctness is straightforward using the properties listed prior to the description of the protocol. Security follows from a standard hybrid argument: Starting from the final $T^\dagger$ gate, we switch the client's input to the encrypted phase to 0. Once this is completed, the protocol no longer depends on the client's initial classical one-time pad $r_{\mathsf{inp}}$, and thus we can switch between client inputs $x_0$ and $x_1$ without affecting the view of the server.
\end{proof}

Thus, we obtain the following corollary.

\begin{corollary}\label{cor:blind}
    OSP implies blind classical delegation of quantum computation (\cref{def:blind-computation}).
\end{corollary}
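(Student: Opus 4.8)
The plan is to simply chain together the two results immediately preceding this corollary: by the lemma above, OSP implies a protocol for encrypted phase (\cref{def:encrypted-phase}), and by the theorem above, any protocol for encrypted phase implies blind classical delegation of quantum computation (\cref{def:blind-computation}). Composing these two implications yields the corollary directly, so the proof is one line modulo recalling what each step gives.

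In slightly more detail, I would recall the structure of the reduction to make the composition explicit. Given the target circuit $Q$ and input $x$, the client writes $Q = C_{\ell+1} T^\dagger C_\ell \cdots C_2 T^\dagger C_1$ as an alternation of Cliffords and $T^\dagger$ gates, sends $x$ masked by a classical one-time pad, and thereafter maintains the keys of a quantum one-time pad on the server's register. Cliffords are handled ``in the head'' via the Pauli-twirl identity $C X^r Z^s = X^{r'} Z^{s'} C$, with the client updating its keys. Each $T^\dagger$ is handled via $T^\dagger X^r Z^s = (P^\dagger)^r X^r Z^s T^\dagger$, reducing the correction to a single encrypted-phase invocation with client input the relevant $X$-pad bit; the encrypted phase protocol is in turn built from one OSP (for the $X$ and $Y$ observables, obtained from standard OSP by a public single-qubit rotation) followed by a CNOT and a standard-basis measurement.

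The only remaining things to check are quantitative and routine. For correctness, each of the $\ell = \poly(\secp)$ encrypted-phase subprotocols contributes $\negl(\secp)$ to the diamond distance (inheriting the negligible error of the underlying OSP correctness), and since diamond distance satisfies the triangle inequality and is non-increasing under composition with a fixed channel, the total error is a sum of polynomially many negligible terms, hence negligible. For security, the hybrid argument walks through the $\ell$ encrypted-phase invocations from last to first, switching the client's input in each to $0$; each hop is indistinguishable by the security of the corresponding OSP, and after all hops the server's view depends only on $x \oplus r_{\mathsf{inp}}$, which perfectly hides $x$. Thus there is no genuine obstacle for this corollary itself: the substantive work was already done in proving the lemma (OSP $\Rightarrow$ encrypted phase) and the theorem (encrypted phase $\Rightarrow$ blind delegation), and the remaining step is just their concatenation.
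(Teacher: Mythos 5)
Your proposal is correct and matches the paper exactly: the corollary is an immediate consequence of the preceding lemma (OSP implies encrypted phase) and theorem (encrypted phase implies blind delegation), and the paper indeed derives it by composing those two results. Your additional remarks on error accumulation via the triangle inequality for diamond distance and the hybrid argument are consistent with the proofs of those earlier results and don't change the approach.
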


In fact, we also observe the following, which shows that OSP is \emph{both} necessary and sufficient for blind classical delegation of quantum computation.

\begin{claim}
    Blind classical delegation of quantum computation according to \cref{def:blind-computation} implies OSP.
\end{claim}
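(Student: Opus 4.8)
The plan is to recognize OSP as an essentially trivial special case of blind classical delegation. I would take the publicly-known quantum operation to be $Q(b,\cdot) := H^b$ acting on a single-qubit register $\cV$, i.e.\ the identity when the client's private bit is $b=0$ and a Hadamard gate when $b=1$. An OSP with sender's bit $b$ is then realized by invoking the blind-delegation protocol $\langle S(1^\secp,Q,\cV), C(1^\secp,Q,b)\rangle$ of \cref{def:blind-computation}, with the honest OSP receiver playing the server $S$ and feeding in the fixed input state $\ket{0}$ on $\cV$. The blind-delegation client obtains one-time-pad keys $(r,s)$, and the OSP sender outputs the bit $s_\OSP := r$ if $b=0$ and $s_\OSP := s$ if $b=1$; the OSP receiver outputs the state on register $\cW$.

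I would prove security first, since it is immediate: the OSP sender runs verbatim the code of the blind-delegation client $C(1^\secp,Q,b)$, so a malicious OSP receiver that predicts $b$ with advantage $\nonnegl(\secp)$ is exactly a malicious server distinguishing $C(1^\secp,Q,0)$ from $C(1^\secp,Q,1)$ with advantage $\nonnegl(\secp)$, contradicting the blindness of the delegation protocol instantiated with $x_0=0,\ x_1=1$. The OSP and blind-delegation security games both permit the adversary an arbitrary post-interaction QPT operation before outputting its guess bit, so the reduction is syntactically exact.

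For correctness, I would specialize the diamond-distance guarantee of \cref{def:blind-computation} to the concrete product input $\ketbra{0}_\cV$. This says that, writing $\sigma_{\cW\mid(r,s)}$ for the server's conditional output state given the client's randomness $(r,s)$, the average $\E_{(r,s)}\big[X^r Z^s\,\sigma_{\cW\mid(r,s)}\,Z^s X^r\big]$ is within trace distance $\negl(\secp)$ of the pure state $H^b\ketbra{0}H^b = Q(b,\ketbra{0})$. I would then do the one-qubit Pauli bookkeeping: up to a global phase, $Z^s X^r\ket{0}=\ket{r}$ and $Z^s X^r\ket{+}=Z^s\ket{+}\in\{\ket{+},\ket{-}\}$, so $(X^r Z^s)^\dagger H^b\ket{0}=H^b\ket{s_\OSP}$. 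Since unitaries preserve trace distance, this converts the previous statement into: on average over $(r,s)$, $\sigma_{\cW\mid(r,s)}$ is $\negl(\secp)$-close to $H^b\ketbra{s_\OSP}H^b$. Finally, since $\Pi_{\OSP,b}=\sum_{s'}\ketbra{s'}\otimes H^b\ketbra{s'}H^b$ and the sender's classical output is $s_\OSP(r,s)$, for each fixing of $(r,s)$ one has $\Tr\big(\Pi_{\OSP,b}(\ketbra{s_\OSP}\otimes\sigma_{\cW\mid(r,s)})\big)=\bra{s_\OSP}H^b\sigma_{\cW\mid(r,s)}H^b\ket{s_\OSP}\ge 1-2\,\TD(\sigma_{\cW\mid(r,s)},H^b\ketbra{s_\OSP}H^b)$, and averaging over $(r,s)$ together with concavity of $\sqrt{\cdot}$ yields the required bound $1-\negl(\secp)$.

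The only step requiring a little care — and the one I would flag as the main (if minor) obstacle — is extracting per-$(r,s)$ closeness from the statement that the $(r,s)$-average of states is close to a pure state: if $\E_i[\rho_i]$ is $\epsilon$-close in trace distance to $\ketbra{\phi}$, then $\E_i[\bra{\phi}\rho_i\ket{\phi}]\ge 1-\epsilon$, hence by the Fuchs--van de Graaf inequality and Jensen $\E_i[\TD(\rho_i,\ketbra{\phi})]\le\E_i\big[\sqrt{1-\bra{\phi}\rho_i\ket{\phi}}\big]\le\sqrt{\epsilon}$. Applying this with $\rho_i = X^r Z^s\sigma_{\cW\mid(r,s)}Z^s X^r$ and $\ket{\phi}=H^b\ket{0}$ gives exactly what the correctness argument needs; alternatively, this subtlety could be sidestepped entirely by adopting a slightly stronger — but still implied — conditional formulation of blind-delegation correctness.
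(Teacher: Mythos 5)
Your proposal is correct and is essentially the paper's own proof: delegate $Q(b,\cdot)=H^b$ on the fixed input $\ket{0}$, have the sender output $r$ when $b=0$ and $s$ when $b=1$, and reduce OSP security directly to blindness with $x_0=0$, $x_1=1$. The extra Fuchs--van de Graaf/Jensen step you include to pass from the averaged diamond-distance guarantee to per-$(r,s)$ closeness is a valid (and slightly more careful) treatment of a detail the paper glosses over, not a different approach.
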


\begin{proof}
    Let $Q$ be the circuit that takes as input a bit $b$ (and no quantum input) and outputs the state $H^b\ket{0}$. To perform OSP, the sender inputs their bit $b$ to a classical delegation protocol for $Q$, with the receiver acting as the server. The receiver ends up with a state (negligibly close to) $X^rZ^sH^b\ket{0} = H^b\ket{x}$, where $x = r$ if $b=0$ and $x=s$ if $b=1$, and the sender ends up with $(r,s)$. Thus, the sender outputs $x = r$ if $b=0$ and $x=s$ if $b=1$ to complete the description of the OSP protocol.
\end{proof}

\subsection{Verifiable delegation}\label{subsec:verifiable-delegation}

In this section, we show that \emph{blind} classical delegation of quantum computation (\cref{def:blind-computation}) generically implies \emph{verifiable} classical delegation of quantum computation, which then gives verifiable delegation from OSP as a corollary. Previously, it was shown by \cite{NZ23} that QFHE, which is a special case of blind delegation, implies verifiable delegation. Here, we observe that their approach, which builds on the ``KLVY compiler'' of \cite{KLVY} can be generalized to establish the result from any blind delegation. We begin by defining verifiable delegation, which we call classical verification of quantum computation (CVQC).

\begin{definition}[Classical Verification of Quantum Computation]\label{def:CVQC}
    Classical verification of quantum computation (CVQC) is an interaction between a QPT prover and PPT verifier on input an instance $x$ \[\{\top,\bot\} \gets \langle P(1^\secp,x),V(1^\secp,x)\rangle,\] where $\{\top,\bot\}$ is the verifier's output. For any language $\cL$ in BQP, there exists some $\epsilon(\secp), \delta(\secp)$ with $\epsilon(\secp)-\delta(\secp) = 1/\poly(\secp)$ such that the following properties hold.
    \begin{itemize}
        \item \textbf{Completeness.} For any $x \in \cL$, \[\Pr[\top \gets \langle P(1^\secp,x),V(1^\secp,x)\rangle] \geq \epsilon(\secp).\]
        \item \textbf{Soundness.} For any $x \notin \cL$ and QPT adversary $\{\Adv_\secp\}_{\secp \in \bbN}$, \[\Pr\left[\top \gets \langle \Adv_\secp,V(1^\secp,x)\right] \leq \delta(\secp) + \negl(\secp).\]
    \end{itemize}
\end{definition}

This section is dedicated to proving the following theorem.

\begin{theorem}\label{thm:CVQC}
    Blind classical delegation of quantum computation (\cref{def:blind-computation}) implies classical verification of quantum computation (\cref{def:CVQC}).
\end{theorem}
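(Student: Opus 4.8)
The plan is to follow the "generalized KLVY compiler" route. First I would recall the KLVY compiler \cite{KLVY}, which turns a two-prover non-local game $G$ with a classical verifier into a single-prover protocol by hiding Alice's question from the prover cryptographically while revealing Bob's question in the clear. The generalization is to replace the QFHE encryption of Alice's question with a run of a blind classical delegation protocol for the circuit $Q$ that takes Alice's question $x_A$ as the client's private input, runs Alice's (efficient) honest strategy on a freshly prepared state, and outputs Alice's answer $a$ classically. Concretely, the generalized-KLVY-compiled protocol for $G$ does: (i) the verifier samples $x_A$ and engages in the blind delegation protocol for $Q$ with the prover as server, so that (by \cref{remark:classical-output}) the prover ends up holding $a \oplus r$ for a one-time-pad $r$ known to the verifier, together with whatever residual state it retained; (ii) the prover sends $a \oplus r$; (iii) the verifier sends Bob's question $x_B$ in the clear; (iv) the prover applies Bob's strategy to its residual register and returns $b$; (v) the verifier recovers $a = (a \oplus r)\oplus r$ and applies the game predicate to $(a,b)$. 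This is well-defined for any (possibly interactive, non-compact) blind delegation protocol, and $Q$ is polynomial-size because the honest Alice strategy of the game we use below is efficient.

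Second, I would isolate the abstraction of \emph{computationally non-local strategies}: a first operation $\{A^x\}_x$ on $\cH_\cA \otimes \cH_\cB$ followed by a second operation $\{B^y\}_y$ on $\cH_\cB$ alone, subject to the constraint that no QPT distinguisher, given only the register $\cB$ output by $A^x$, can guess $x$ with non-negligible advantage. Then I would prove the reduction lemma: if $G$ has value at most $v$ against all computationally non-local strategies, then the generalized-KLVY compilation of $G$ has soundness at most $v + \negl(\secp)$. The proof is a direct simulation. Given any QPT prover $P^*$ for the compiled protocol that wins with probability $p$, set $A^{x_A}$ to be ``simulate the blind-delegation client with private input $x_A$ against the server-side prover $P^*$, then place $a \oplus r$ on $\cA$ and $P^*$'s residual state on $\cB$'' --- here $r$ is part of the simulated client's private randomness, so $P^*$'s view depends on $x_A$ only through the delegation transcript --- and set $B^{x_B}$ to be $P^*$'s final operation. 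The security property of \cref{def:blind-computation} (indistinguishability of client inputs) is \emph{exactly} the computational-non-locality constraint on $\{A^x\}$, and the game value of $(\{A^x\},\{B^y\})$ in $G$ equals $p$ up to one-time-pad bookkeeping, since by correctness of the delegation the verifier recovers $a$ faithfully. Hence $p \le v + \negl(\secp)$.

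Third, I would instantiate $G$ with the two-player game for an arbitrary $\cL \in \mathsf{BQP}$ constructed by \cite{NZ23}, which has an inverse-polynomial completeness--soundness gap against non-communicating (entangled) provers and an efficient honest Alice. The substantive work is to re-examine the \cite{NZ23} soundness proof and verify that it in fact upper-bounds the value of $G$ against \emph{any} computationally non-local strategy, not only those arising from QFHE. This is plausible because QFHE enters their argument only through a handful of hybrid steps, each of which is of the form ``Bob cannot distinguish two different Alice questions with non-negligible advantage'' --- precisely the guarantee packaged into the definition of a computationally non-local strategy. So I would walk through their proof and, at each such step, replace the invocation of QFHE semantic security by an invocation of the computational-non-locality constraint. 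Combining this with the reduction lemma and a completeness analysis (the honest prover faithfully runs the honest two-prover strategy under the delegation, by correctness of \cref{def:blind-computation}) yields a CVQC protocol for $\cL$, proving \cref{thm:CVQC}, and via \cref{cor:blind} also CVQC from OSP.

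The main obstacle I anticipate is this third step: carefully auditing the \cite{NZ23} soundness proof to confirm that every use of the cryptographic primitive is captured by the computational-non-locality constraint and that no step secretly exploits QFHE-specific structure (e.g.\ homomorphic evaluation on the prover side rather than mere input-hiding and the ability to decrypt on the verifier side). A secondary subtlety is handling the interactive, non-compact nature of a general blind delegation protocol inside the compiler: one must check that the ``first operation'' $A^{x_A}$, which now absorbs the entire multi-round delegation interaction, is a well-defined efficient quantum operation whose $\cB$-register output is governed by the delegation security game, and that interleaving extra rounds before Bob's question does not disturb the two-prover structure that the \cite{NZ23} analysis relies on.
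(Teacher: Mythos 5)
Your proposal follows the same decomposition as the paper: define a generalized KLVY compiler using the blind-delegation protocol in place of QFHE, abstract out the notion of a computationally non-local strategy, prove that the compiled game's soundness is bounded by the computational non-local value (this is the paper's \cref{thm:KLVY}), and then re-audit the \cite{NZ23} proof to check that QFHE is only ever used as a hiding guarantee. The two subtleties you flag at the end — verifying that every cryptographic step in \cite{NZ23} reduces to the hiding property rather than QFHE-specific structure, and purifying the multi-round delegation into a single Alice operation on $\cH_\cA \otimes \cH_\cB$ — are exactly where the paper invests its effort (\cref{thm:main-NZ23} and \cref{sec:NZ23}, and the proof of \cref{thm:KLVY}, respectively), so the roadmap is sound.
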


\begin{corollary}
    OSP implies classical verification of quantum computation.
\end{corollary}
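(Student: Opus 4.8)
The proof is an immediate composition of two results already in hand. By \cref{cor:blind}, any OSP protocol (\cref{def:OSP}) yields a protocol for blind classical delegation of quantum computation in the sense of \cref{def:blind-computation}; concretely, the chain runs OSP $\Rightarrow$ encrypted phase (\cref{def:encrypted-phase}) $\Rightarrow$ blind delegation for arbitrary quantum circuits, as established in \cref{subsec:blind-delegation}. Then \cref{thm:CVQC} states that any such blind delegation protocol implies classical verification of quantum computation (\cref{def:CVQC}). Composing the two gives CVQC from OSP, which is the claim. So the proof I would write is essentially one line: invoke \cref{cor:blind} to get blind classical delegation from OSP, then apply \cref{thm:CVQC}.

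Since there is nothing to prove beyond chaining the two statements, the only thing worth verifying is definitional compatibility at the seam. First, I would confirm that the blind delegation notion produced by the OSP-based construction is exactly the one required as the hypothesis of \cref{thm:CVQC}: the construction in \cref{subsec:blind-delegation} yields a (possibly many-round, non-compact, interactive) protocol satisfying the diamond-distance correctness and the indistinguishability-style blindness of \cref{def:blind-computation}, and \cref{thm:CVQC} is stated for precisely this notion — it uses blindness only to instantiate the ``computationally non-local'' relaxation in the generalized KLVY compiler, and uses correctness of the delegation to argue completeness of the compiled protocol. Second, I would note that the gap $\epsilon(\secp) - \delta(\secp) = 1/\poly(\secp)$ demanded by \cref{def:CVQC} comes entirely from the analysis of the BQP two-prover game (following \cite{NZ23}) under the compiler, and is independent of which blind delegation protocol is plugged in; hence it carries over unchanged.

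There is no genuine obstacle specific to this corollary — all of the substance lives in \cref{thm:CVQC} (soundness of the generalized KLVY compiler against computationally non-local strategies, together with \cite{NZ23}'s game analysis in this abstracted setting) and in \cref{cor:blind} (the encrypted-phase / Clifford-plus-$T^\dagger$ gadget chain), both of which we are entitled to assume. The only value added here is the observation that these two pieces interface cleanly, so the end-to-end implication ``OSP $\Rightarrow$ CVQC'' holds; combined with the constructions of \cref{sec:constructions}, this further yields CVQC from any (plain) TCF, and hence from hard problems on cryptographic group actions, as claimed in the introduction.
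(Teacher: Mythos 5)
Your proposal is correct and matches the paper's reasoning exactly: the corollary is obtained by composing \cref{cor:blind} (OSP implies blind classical delegation) with \cref{thm:CVQC} (blind classical delegation implies CVQC), with no additional argument needed. Your extra remarks on definitional compatibility at the interface are accurate but not required.
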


\subsubsection{Nonlocal games and the KLVY compiler}

Towards proving this theorem, we recall the KLVY compiler, which uses QFHE to compile any nonlocal game into a single-prover protocol. In this subsection, we observe that the KLVY compiler can be instantiated with any (potentially interactive, non-compact) blind delegation protocol, which yields what we call the \emph{generalized KLVY compiler}. We also define a set of nonlocal game strategies that we call \emph{computationally nonlocal strategies}. This definition provides a clean way to establish soundness of the generalized KLVY compiler - soundness of any compiled game can be upper-bounded by the value of the best computationally nonlocal strategy for that game.

First, we present standard definitions of (families of) nonlocal games, as well as nonlocal strategies for these games.

\begin{definition}[Nonlocal game]
    A \emph{nonlocal game} $G = (Q,V)$ is specified by a distribution $Q$ over pairs $(x,y) \in \{0,1\}^{n_1} \times \{0,1\}^{n_2}$ and a verification predicate $V(x,y,a,b) \in \{0,1\}$, where $a \in \{0,1\}^{m_1}$ and $b \in \{0,1\}^{m_2}$. A \emph{family} of nonlocal games $\cG = \{\cG_\secp\}_{\secp \in \bbN}$ is a set of games parameterized by $\secp$, where each $\cG_\secp$ is itself a set of games $G \in \cG_\secp$. Each game $G \in \cG$ is defined by a distribution $Q_G$ over pairs $(x,y) \in \{0,1\}^{n_{1,G}} \times \{0,1\}^{n_{2,G}}$ and a verification predicate $V_G(x,y,a,b) \in \{0,1\}$, where $a \in \{0,1\}^{m_{1,G}}$ and $b \in \{0,1\}^{m_{2,G}}$. For any game $G \in \cG$, we define $\secp_G$ to be the choice of $\secp$ such that $G \in \cG_\secp$. We say that the family of games is \emph{efficient} if there exists a polynomial $p(\cdot)$ and a procedure that, for any $G \in \cG$, samples from $Q_G$ and computes $V_G$ in time $p(\secp_G)$.
\end{definition}

\begin{definition}[Nonlocal strategy]
    A \emph{nonlocal strategy} $\mathscr{S}$ for game $G$ consists of the following.
    \begin{itemize}
        \item A state $\ket{\psi} \in \cH_\cA \otimes \cH_\cB$.
        \item For every $x \in \{0,1\}^{n_1}$, a projective measurement $\{A_a^x\}_{a}$ acting on $\cH_\cA$ with outcomes $a \in \{0,1\}^{m_1}$.
        \item For every $y \in \{0,1\}^{n_2}$, a projective measurement $\{B_b^y\}_{b}$ acting on $\cH_\cB$ with outcomes $b \in \{0,1\}^{m_2}$.
    \end{itemize}
    The \emph{value} of this strategy is given by 
    \[\omega(G,\mathscr{S}) \coloneqq \E_{(x,y) \gets Q}\sum_{a,b} V(x,y,a,b) \bra{\psi}A_a^x \otimes B_b^y\ket{\psi}.\]


    A strategy $\mathscr{S}$ for a \emph{family} of games $\cG$ consists of a strategy

    \[\mathscr{S}_G = \left(\ket{\psi_G},\{A^x_{a,G}\}_a,\{B^y_{b,G}\}_b\right)\]

    for each $G \in \cG$. We say that $\mathscr{S}$ is \emph{efficient} if $\ket{\psi}$ is QPT-preparable and $\{A_{a,G}^x\}_a$ and $\{B_{b,G}^y\}_b$ are QPT-implementable.
\end{definition}

Next, we present our new definition of a \emph{computationally nonlocal strategy}.    

\begin{definition}[Computationally nonlocal strategy]\label{def:comp-non-local} A \emph{computationally nonlocal strategy} $\mathscr{C}$ for a family of games $\cG = \{\cG_\secp\}_\secp$ consists of the following for each $G \in \cG$.

    \begin{itemize}
        \item A QPT-preparable state $\ket{\psi_G} \in \cH_{\cA,G} \otimes \cH_{\cB,G}$.
        \item For every $x \in \{0,1\}^{n_{1,G}}$, a QPT-implementable unitary $U^x_G$ acting on $\cH_{\cA,G} \otimes \cH_{\cB,G}$. For each $a \in \{0,1\}^{m_{1,G}}$, define \[A^x_{a,G} \coloneqq \ketbra{a}{a}U^x_G,\] where the projection $\ketbra{a}{a}$ is applied to some specified sub-register of $\cH_{\cA,G}$.
        \item For every $y \in \{0,1\}^{n_{2,G}}$, a QPT-implementable projective measurement $\{B^y_{a,G}\}_{b}$ acting on $\cH_{\cB,G}$ with outcomes $b \in \{0,1\}^{m_{2,G}}$.
    \end{itemize}
    For this to be a valid computationally nonlocal strategy, the ``Alice'' operations must satisfy the following property. There exists a negligible function $\mu(\secp)$ such that for any sequence of games $\{G_\secp\}_{\secp \in \bbN}$ where each $G_\secp \in \cG_\secp$, QPT distinguisher $\{M_\secp,I-M_\secp\}_\secp$ \textbf{acting only on} $\cH_{\cB,G_\secp}$, and sequence of pairs of questions $\{x_{0,\secp},x_{1,\secp} \in \{0,1\}^{n_{1,G_\secp}}\}_{\secp}$,

    \[\bigg| \sum_a \bra{\psi_{G_\secp}}A_{a,G_\secp}^{x_{0,\secp},\dagger}M_\secp A_{a,G_\secp}^{x_{0,\secp}}\ket{\psi_{G_\secp}}- \sum_a \bra{\psi_{G_\secp}}A_{a,G_\secp}^{x_{1,\secp},\dagger}M_\secp A_{a,G_\secp}^{x_{1,\secp}}\ket{\psi_{G_\secp}}\bigg| \leq \mu(\secp).\]

    That is, Alice's questions must be computationally hidden by the state passed from the Alice operation to the Bob operation. The \emph{value} of this strategy is a function $\omega(G,\mathscr{C})$ of the game $G \in \cG$, defined by \[\omega(G,\mathscr{C}) \coloneqq \E_{(x,y) \gets Q_G}\sum_{a,b}V_G(x,y,a,b)\bra{\psi_G}A^{x,\dagger}_{a,G}B^y_{b,G}A^x_{a,G}\ket{\psi_G}.\]
    
    We say that the computationally nonlocal value of $\cG$ is upper-bounded by a function $\omega_C(G)$ if for all computationally nonlocal strategies $\mathscr{C}$, there exists a negligible function $\nu(\secp)$ such that for any sequence of games $\{G_\secp\}_{\secp \in \bbN}$ where each $G_\secp \in \cG_\secp$, it holds that 
    \[\omega(G_\secp,\mathscr{C}) \leq \omega_C(G_\secp) + \nu(\secp).\]
\end{definition}

\begin{remark}
    We will sometimes refer to a computationally nonlocal strategy $\mathscr{C}$ for some \emph{fixed} game $G$ (such as the CHSH game). In this case, we view $G$ as a family of games $\cG = \{\cG_\secp\}_\secp = \{G\}_\secp$. That is, each $\cG_\secp$ just consists of $G$ itself, and $\mathscr{C}$ consists of one strategy for each $\secp \in \bbN$.
\end{remark}

The following remark formalizes a simple claim about computational indistinguishability.

\begin{remark}
    By following the arguments in \cite[Lemma 7, Lemma 8]{NZ23}, it is straightforward to see that the Alice operations for any computationally nonlocal strategy $\mathscr{C}$ must also satisfy the following. There exists a negligible function $\mu(\secp)$ such that for any sequence of games $\{G_\secp\}_{\secp \in \bbN}$ where each $G_\secp \in \cG_\secp$, QPT-implementable POVM $\{\{M_{\gamma,\secp}\}_\gamma\}_\secp$ with outcomes in $\gamma \in [0,1]$ acting only on $\cH_{\cB,G_\secp}$, and sequence of pairs of questions $\{x_{0,\secp},x_{1,\secp} \in \{0,1\}^{n_{1,G_\secp}}\}_{\secp}$,
    \[\bigg| \sum_a \sum_\gamma \gamma \bra{\psi_{G_\secp}} A^{x_{0,\secp},\dagger}_{a,G_\secp} M_{\gamma,\secp} A^{x_{0,\secp}}_{a,G_\secp} \ket{\psi_{G_\secp}} - \sum_a \sum_\gamma \gamma \bra{\psi_{G_\secp}} A^{x_{1,\secp},\dagger}_{a,G_\secp} M_{\gamma,\secp} A^{x_{1,\secp}}_{a,G_\secp} \ket{\psi_{G_\secp}}\bigg| \leq \mu(\secp).\]
\end{remark}

Next, we state the ``generalized'' version of the KLVY compiler, where in place of a QFHE protocol, we use any blind classical delegation of quantum computation protocol (\cref{def:blind-computation}).

\begin{definition}[Generalized KLVY Compiler]
    Let $\cG$ be a family of nonlocal games, let $\Pi = \langle S,C \rangle$ be a blind classical delegation of quantum computation protocol, and let $\mathscr{S}$ be an efficient nonlocal strategy for $\cG$. For each $G \in \cG$, let $A_G : \{0,1\}^{n_{1,G}} \times \cH_{\cA} \to \{0,1\}^{m_{1,G}}$ be the QPT circuit that performs the Alice measurement of $\mathscr{S}_G$ and $B_G : \{0,1\}^{n_{2,G}} \times \cH_{\cB} \to \{0,1\}^{m_{2,G}}$ be the QPT circuit that performs the Bob measurement of $\mathscr{S}_G$. The \emph{KLVY-compiled} protocol $\KLVY[\cG,\Pi,\mathscr{S}]$ is an interaction between a QPT prover $\Prove$ and a PPT verifier $\Ver$ that is parameterized by a game $G \in \cG_\secp \subset \cG$, and operates as follows.
    \begin{enumerate}
        \item The verifier samples $(x,y) \gets Q_G$.
        \item Let $\ket{\psi_G} \in \cH_{\cA} \otimes \cH_{\cB}$ be the initial state used in $\mathscr{S}_G$. The prover and verifier engage in a blind classical delegation of quantum computation protocol (with classical output, see \cref{remark:classical-output}) \[a \gets \langle S(1^\secp,A_G,\ket{\psi_G}), C(1^\secp,A_G,x)\rangle,\] with the prover playing the role of the server $S$ and the verifier playing the role of the client $C$.
        \item The verifier sends $y$ to the prover.
        \item The prover runs $b \gets B_G(y,\cB)$ and sends $b$ to the verifier.
        \item The verifier outputs $V_G(x,y,a,b)$.
    \end{enumerate}
    This interaction is denoted \[b_\Ver \gets \langle \Prove(1^\secp,G),\Ver(1^\secp,G)\rangle,\] where $b_\Ver$ denotes the bit output by $\Ver$.

    The \emph{completeness} of $\KLVY[\cG,\Pi,\mathscr{S}]$ is defined by a function \[c(G) \coloneqq \Pr\left[b_\Ver = 1 : b_\Ver \gets \langle \Prove(1^\secp,G),\Ver(1^\secp,G)\rangle\right],\] and $\KLVY[\cG,\Pi,\mathscr{S}]$ has soundness $s(G)$ if for any QPT adversary $\{\Adv_\secp\}_{\secp \in \bbN}$
    \[\Pr\left[b_\Ver = 1 : b_\Ver \gets \langle \Adv_{\secp_G}(1^\secp,G),\Ver(1^\secp,G)\rangle\right] \leq s(G).\]
    
\end{definition}

Finally, we prove the main theorem of this subsection. Essentially, we show that the computational nonlocal value of any game $G$ upper bounds the soundness of the (generalized) KLVY compiled version of $G$.

\begin{theorem}\label{thm:KLVY}
    Let $\cG$ be a family of nonlocal games, let $\Pi$ be a blind classical delegation of quantum computation protocol, and let $\mathscr{S}$ be an efficient nonlocal strategy for $\cG$. Then the completeness of $\KLVY[\cG,\Pi,\mathscr{S}]$ satisfies \[c(G) \geq \omega(G,\mathscr{S})-\negl(\secp_G),\] and, for any upper bound $\omega_C(G)$ on the computationally nonlocal value of $\cG$, $\KLVY[\cG,\Pi,\mathscr{S}]$ has soundness \[s(G) \leq \omega_C(G) + \negl(\secp_G),\] where $\secp_G$ is defined to be the $\secp$ such that $G \in \cG_\secp$.
\end{theorem}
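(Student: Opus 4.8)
The plan is to treat completeness and soundness separately, with soundness the substantive half. For completeness, have the honest prover play the server $S$ in the delegation subprotocol with input state $\ket{\psi_G}$ on $\cH_\cA \otimes \cH_\cB$, carrying $\cH_\cB$ along as an auxiliary register. By the classical-output correctness of blind delegation (\cref{remark:classical-output}), which holds up to diamond distance $\negl(\secp_G)$ and therefore accounts for the auxiliary register $\cH_\cB$, after the delegation the verifier recovers a value $a$ distributed within $\negl$ of the honest Alice measurement outcome on question $x$, and the residual state on $\cH_\cB$ is within $\negl$ trace distance of the corresponding honest Alice post-measurement state. Running $B_G(y,\cB)$ then yields $b$ so that the joint distribution of $(x,y,a,b)$ is $\negl$-close to the one induced by $\mathscr{S}$, giving $c(G) \geq \omega(G,\mathscr{S}) - \negl(\secp_G)$.

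For soundness, fix a QPT adversary $\{\Adv_\secp\}_{\secp \in \bbN}$ for $\KLVY[\cG,\Pi,\mathscr{S}]$ and extract a computationally nonlocal strategy $\mathscr{C}$. For each $G \in \cG$, let $\cH_\cA$ hold a purification of the verifier/client's private randomness for the delegation subprotocol together with its entire workspace, let $\cH_\cB$ hold the adversary's workspace, and set $\ket{\psi_G} = \ket{0}_\cA\ket{0}_\cB$. Define $U^x_G$ to be the coherent, deferred-measurement simulation of the \emph{entire} (interactive, non-compact) delegation subprotocol with client input $x$: prepare the client's random tape in uniform superposition on $\cH_\cA$, then run the honest client's and the adversary's strategies coherently, passing messages between $\cH_\cA$ and $\cH_\cB$ by copy operations and purifying the adversary's intermediate measurements into $\cH_\cB$, until the client recovers $a$ into a designated sub-register of $\cH_\cA$. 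Set $A^x_{a,G} \coloneqq \ketbra{a}{a}U^x_G$ with the projection on that sub-register, and let $\{B^y_{b,G}\}_b$ be the (purified) second-round response operation of $\Adv$ on $\cH_\cB$. Both $U^x_G$ and $\{B^y_{b,G}\}_b$ are QPT by efficiency of the client and the adversary, and $\ket{\psi_G}$ is trivially QPT-preparable.

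Two checks remain. First, \emph{validity}: for any QPT $\{M_\secp\}_\secp$ acting only on $\cH_{\cB,G_\secp}$ and any question sequences $\{x_{0,\secp}\}_\secp,\{x_{1,\secp}\}_\secp$, since $\ketbra{a}{a}$ acts on a sub-register of $\cH_\cA$ and $M_\secp$ on $\cH_\cB$ they commute, so $\sum_a A^{x,\dagger}_{a,G_\secp}M_\secp A^x_{a,G_\secp} = U^{x,\dagger}_{G_\secp}M_\secp U^x_{G_\secp}$; hence the distinguishing advantage in the displayed condition equals $|\Tr(M_\secp \rho^{x_{0,\secp}}_{\cB}) - \Tr(M_\secp \rho^{x_{1,\secp}}_{\cB})|$, where $\rho^x_\cB \coloneqq \Tr_\cA(U^x_{G_\secp}\ketbra{\psi_{G_\secp}}{\psi_{G_\secp}}U^{x,\dagger}_{G_\secp})$ is exactly the adversary's residual state after interacting with $C(1^\secp,A_{G_\secp},x)$. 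This is negligible by the security property of blind classical delegation, applied to the QPT adversary that runs the server-role of $\Adv$ against $C(1^\secp,A_{G_\secp},\cdot)$ and then outputs the bit of $M_\secp$. Second, \emph{value}: by the deferred-measurement principle, the joint distribution of $(a,b)$ obtained by running $U^x_G$, measuring the $a$-register, and then measuring $\{B^y_{b,G}\}_b$ is identical to the distribution of the verifier's decrypted value and the prover's response in a real execution of the compiled protocol on questions $(x,y)$; hence $\omega(G,\mathscr{C}) = \Pr[b_\Ver = 1 : b_\Ver \gets \langle \Adv_{\secp_G}(1^\secp,G),\Ver(1^\secp,G)\rangle]$ \emph{exactly}. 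Since $\omega_C$ upper-bounds the computationally nonlocal value of $\cG$, there is a negligible $\nu$ with $\omega(G_\secp,\mathscr{C}) \leq \omega_C(G_\secp) + \nu(\secp)$ for every sequence $\{G_\secp\}_\secp$, and a routine sequence-selection argument promotes this to $s(G) \leq \omega_C(G) + \negl(\secp_G)$ for every individual $G$.

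I expect the main obstacle to be the soundness bookkeeping: setting up $U^x_G$ so that it is genuinely a unitary on $\cH_\cA \otimes \cH_\cB$ despite the delegation being interactive and non-compact (this is precisely the generalization beyond the QFHE-based compiler of \cite{KLVY,NZ23}), and verifying that the reduced state $\rho^x_\cB$ after $U^x_G$ is \emph{exactly} the adversary's residual state, so that blind-delegation security translates verbatim into the computational-nonlocality constraint of \cref{def:comp-non-local}. The identity $\sum_a A^{x,\dagger}_a M A^x_a = U^{x,\dagger} M U^x$ is the linchpin of that translation. The completeness bound, the purification of classical messages and measurements, and the non-uniformity/sequence-selection argument are all routine.
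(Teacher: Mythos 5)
Your proof is correct and follows essentially the same approach as the paper: purify the client/adversary interaction in the delegation subprotocol into a unitary $U^x$ on $\cH_\cA \otimes \cH_\cB$, set $A^x_a = \ketbra{a}{a}U^x$, read off the computational-nonlocality constraint from blind-delegation security, and observe that the resulting value equals the compiled protocol's acceptance probability. The explicit identity $\sum_a A^{x,\dagger}_a M A^x_a = U^{x,\dagger} M U^x$ and the expanded completeness argument are just spelled-out versions of what the paper leaves implicit.
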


\begin{proof}
    The completeness claim follows directly from the correctness of the blind delegation protocol $\Pi$. The soundness claim follows by observing that any $\Adv$ interacting in $\KLVY[\cG,\Pi,\mathscr{S}]$ defines a computationally nonlocal strategy for $\cG$. This can be argued as follows, where for convenience we will drop parameterizations by $G$ and $\secp$. Consider purifying the interaction between $\Adv$ and $\Ver$ during the second step of the protocol (the blind delegation step), and then measuring the verifier's output $a$. For each verifier input $x$, this defines a unitary $U^x$ that is applied to initial state $\ket{0}_\cA\ket{\psi}_\cB$, where $\ket{\psi}$ is the initial state of $\Adv$, $\cH_\cB$ includes the working register of $\Adv$ along with the register that holds the transcript of communication between $\Adv$ and $\Ver$, and $\cH_\cA$ is the register required to implement the operations of $\Ver$. Then $A^x_a$ is defined as $\ketbra{a}U^x$, i.e.\ the unitary $U^x$ followed by a standard basis projection onto the verifier's output $a$ on a sub-register of $\cH_\cA$. It follows immediately from the security of $\Pi$ that there exists a negligible function $\mu(\secp)$ such that for any QPT distinguisher $\{M,I-M\}$ acting only on $\cH_\cB$, and $x_0,x_1 \in \{0,1\}^{n_1}$,

     \[\bigg| \sum_a \bra{\psi}\bra{0}A_a^{x_0,\dagger}M A_a^{x_0}\ket{\psi}\ket{0}- \sum_a \bra{\psi}\bra{0}A_a^{x_1,\dagger}M A_a^{x_1}\ket{\psi}\ket{0}\bigg| = \negl(\secp),\] 

     which shows that this is a valid computationally nonlocal strategy.

\end{proof}

\subsubsection{The CHSH game}\label{subsec:CHSH}

Next, we recall the CHSH game, which is an important building block in the CVQC protocol of \cite{NZ23}. 

\begin{definition}[CHSH]
    The CHSH game is defined by question distribution $(x,y)  \gets \{0,1\} \times \{0,1\}$ and predicate $V(x,y,a,b)$ that accepts iff $x \cdot y = a \oplus b$, where answers $a,b \in \{0,1\}$.
\end{definition}

An important lemma from \cite{NZ23} establishes a \emph{rigidity} property of the KLVY-compiled CHSH game. That is, any strategy in the (QFHE-based) KLVY-compiled game that approaches the optimal value of $\cos^2(\pi/8)$ for nonlocal strategies must be such that the Bob operations nearly anti-commute. It is not hard to see that the same holds for \emph{any} computationally nonlocal strategy,\footnote{Note that strategies in the QFHE-based KLVY-compiled game are a special case of computationally nonlocal strategies.} and in fact this claim was already essentially shown by \cite[Theorem 4.7]{10.1007/978-3-031-38554-4_6}.

\begin{theorem}\label{thm:CHSH-anticommutation}
    Let $\omega_\CHSH = \cos^2(\pi/8)$. Fix any computationally nonlocal strategy for the CHSH game with value $\omega_\CHSH - \delta(\secp)$, and, dropping the parameterization by $\secp$,  let $B^0 \coloneqq B^0_0 - B^0_1, B^1 \coloneqq B^1_0 - B^1_1$ be binary observables defined by the ``Bob'' measurements. Then for any $x \in \{0,1\}$, it holds that \[\sum_{a \in \{0,1\}}\bra{\psi}A_a^{x,\dagger}\{B^0,B^1\}^2A_a^x\ket{\psi} \leq O(\delta(\secp)) + \negl(\secp).\] 
\end{theorem}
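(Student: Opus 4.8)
The plan is to carry out the standard sum-of-squares (SOS) proof of CHSH rigidity, but in a ``frame'' where the only genuinely nonlocal step --- comparing the two joint states induced by Alice's two questions --- is handled by the computational indistinguishability built into \cref{def:comp-non-local}. First I would rewrite the target quantity. Set $\ket{\psi^x} := U^x\ket{\psi}$; since the answer projector $\ketbra{a}{a}$ on (a sub-register of) $\cH_\cA$ commutes with $\{B^0,B^1\}^2$ on $\cH_\cB$ and is idempotent, $\sum_a A_a^{x,\dagger}\{B^0,B^1\}^2 A_a^x = U^{x,\dagger}\{B^0,B^1\}^2 U^x$, so the claim is equivalent to $\bra{\psi^x}\{B^0,B^1\}^2\ket{\psi^x} \le O(\delta(\secp)) + \negl(\secp)$ for each $x \in \{0,1\}$. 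Let $A := \ketbra{0}{0} - \ketbra{1}{1}$ be the $\pm 1$ answer observable on $\cH_\cA$, and put $\tilde B_0 := (B^0+B^1)/\sqrt 2$, $\tilde B_1 := (B^0-B^1)/\sqrt 2$ on $\cH_\cB$, so that $\tilde B_x^2 = I + (-1)^x\tfrac12\{B^0,B^1\}$ and $\|\tilde B_x\| \le \sqrt 2$. A direct computation of the CHSH value gives $\omega(\CHSH,\mathscr{C}) = \tfrac12 + \tfrac{\sqrt 2}{8}\bigl(\bra{\psi^0}A\tilde B_0\ket{\psi^0} + \bra{\psi^1}A\tilde B_1\ket{\psi^1}\bigr)$, so the hypothesis $\omega(\CHSH,\mathscr{C}) = \omega_\CHSH - \delta = \tfrac12 + \tfrac{\sqrt 2}{4} - \delta$ yields $\bra{\psi^0}A\tilde B_0\ket{\psi^0} + \bra{\psi^1}A\tilde B_1\ket{\psi^1} \ge 2 - 4\sqrt 2\,\delta$.

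Next I would complete the square. Because $A$ acts on $\cH_\cA$ and $\tilde B_x$ on $\cH_\cB$, the operator $A\tilde B_x$ is Hermitian, and $\bra{\psi^x}A\tilde B_x\ket{\psi^x} = \tfrac12\bigl(1 + \|\tilde B_x\ket{\psi^x}\|^2 - \|(A-\tilde B_x)\ket{\psi^x}\|^2\bigr)$. This is where the cryptographic assumption enters: $\|\tilde B_0\ket{\psi^0}\|^2 + \|\tilde B_1\ket{\psi^1}\|^2 = 2 + \tfrac12\bigl(\bra{\psi^0}\{B^0,B^1\}\ket{\psi^0} - \bra{\psi^1}\{B^0,B^1\}\ket{\psi^1}\bigr)$, and since $\bra{\psi^x}\{B^0,B^1\}\ket{\psi^x}$ can be estimated to inverse-polynomial precision by a QPT procedure acting on $\cH_\cB$ alone (a Hadamard test on the binary observables $B^0,B^1$), the POVM form of the indistinguishability property (the Remark following \cref{def:comp-non-local}) shows this difference is $\negl(\secp)$; hence $\|\tilde B_0\ket{\psi^0}\|^2 + \|\tilde B_1\ket{\psi^1}\|^2 = 2 - \negl(\secp)$. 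Substituting and comparing with the value bound gives $\|(A-\tilde B_0)\ket{\psi^0}\|^2 + \|(A-\tilde B_1)\ket{\psi^1}\|^2 \le 8\sqrt 2\,\delta + \negl(\secp)$.

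Finally I would convert this state-dependent self-testing bound into the anticommutator bound by a short operator manipulation, carried out separately for each $x$ and requiring no further computational step: writing $w_x := (A-\tilde B_x)\ket{\psi^x}$ and using $A^2 = I$ together with $[A,\tilde B_x]=0$, one computes $(\tilde B_x^2 - I)\ket{\psi^x} = -(A+\tilde B_x)w_x$, so $\|(\tilde B_x^2 - I)\ket{\psi^x}\| \le (1+\sqrt 2)\|w_x\| = O(\sqrt{\delta}) + \negl(\secp)$; since $\tilde B_x^2 - I = (-1)^x\tfrac12\{B^0,B^1\}$ this gives $\|\{B^0,B^1\}\ket{\psi^x}\| = O(\sqrt{\delta}) + \negl(\secp)$, i.e.\ $\bra{\psi^x}\{B^0,B^1\}^2\ket{\psi^x} = O(\delta) + \negl(\secp)$, which is the claim. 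The main obstacle, and the only place the argument departs from the classical CHSH SOS proof of \cite[Lemma 7, Lemma 8]{NZ23} and \cite[Theorem 4.7]{10.1007/978-3-031-38554-4_6}, is the middle step: one must verify that the relevant $\cH_\cB$-observables are genuinely approximable by QPT measurements on $\cH_\cB$ alone so that the indistinguishability of \cref{def:comp-non-local} applies, and then bookkeep that each approximation contributes only an $O(\delta)$ or negligible error.
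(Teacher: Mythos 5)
Your proposal is correct, and it mirrors the argument the paper delegates to \cite[Theorem~4.7 and \S5.4]{10.1007/978-3-031-38554-4_6}: it is the standard CHSH sum-of-squares rigidity proof, with the single nonlocal step (comparing the two Alice-conditioned states $\ket{\psi^0},\ket{\psi^1}$) replaced by the computational indistinguishability of \cref{def:comp-non-local}. The paper itself only cites; you have filled in the details, and they check out. Two small remarks. First, the reduction $\sum_a A_a^{x,\dagger}\{B^0,B^1\}^2 A_a^x = U^{x,\dagger}\{B^0,B^1\}^2 U^x$ uses both that $\ketbra{a}$ commutes with everything on $\cH_\cB$ and that $\sum_a\ketbra{a}=I$ on the answer sub-register; you noted the former and (implicitly) the latter, and both are immediate from the definition of $A^x_a=\ketbra{a}U^x$. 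Second, the one step you flagged as requiring care --- realizing $\bra{\phi}\{B^0,B^1\}\ket{\phi}$ as the output of a QPT distinguisher on $\cH_\cB$ alone --- does work, but the cleanest formulation is a Hadamard test on the unitary $B^0B^1$ (controlled on a fresh ancilla, initialize $\ket{+}$, control-$(B^0B^1)$, measure in $X$), which yields a two-outcome POVM $\{M,I-M\}$ on $\cH_\cB$ with $\bra{\phi}M\ket{\phi}=\tfrac12+\tfrac14\bra{\phi}\{B^0,B^1\}\ket{\phi}$; this is QPT-implementable because $\{B^y_b\}_b$ is required to be QPT-implementable by \cref{def:comp-non-local}, so the reflections $B^0,B^1$ (and hence control-$(B^0B^1)$) are efficiently applicable. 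With that made explicit, the chain $\sum_x\|(A-\tilde B_x)\ket{\psi^x}\|^2 \le 8\sqrt2\,\delta+\negl(\secp)$, then $\|(A+\tilde B_x)(A-\tilde B_x)\ket{\psi^x}\| \le (1+\sqrt2)\|(A-\tilde B_x)\ket{\psi^x}\|$, then $\tilde B_x^2-I=(-1)^x\tfrac12\{B^0,B^1\}$, gives exactly the stated $O(\delta)+\negl(\secp)$ bound.
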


\begin{proof}
    This follows readily from \cite[Theorem 4.7]{10.1007/978-3-031-38554-4_6}, following their arguments in Section 5.4.\footnote{Our formalization of computationally nonlocal strategies is slightly more general than \cite{10.1007/978-3-031-38554-4_6}'s quantum device formalization (since a computationally nonlocal strategy does not necessarily have to be defined by some prover interacting in a protocol), but it is straightforward to verify that their techniques apply to our more general setting. } It can also be verified that a proper adaptation of Lemma 34 from \cite{NZ23} yields this claim using a different proof technique.
\end{proof}

\subsubsection{The \cite{NZ23} BQP verification game}


Now, we define the \cite{NZ23} family of nonlocal games for verifying arbitrary BQP computation.\footnote{We note that the same set of games should suffice for verifying arbitrary QMA languages as well, by providing sufficient copies of the witness state to the prover, but we follow \cite{NZ23} and stick to BQP for this exposition.} The family $\cG$ consists of a game $G[H]$ for each XX/ZZ local Hamiltonian \[H = \sum_{W \in \{X,Z\},i\neq j \in [\secp]}p_{W,i,j}W(e_i + e_j),\] where $\sum_{W,i,j} p_{W,i,j} = 1$. Before coming to the formal specification of the nonlocal game $G[H]$ associated with $H$, we provide a high-level overview. 

To begin with, two provers Alice and Bob share several halves of EPR pairs, and Alice prepares a minimum eigenvalue state $\ket{\psi}$ for $H$. Alice's question instructs her to either (1) \textbf{Teleport} the state to Bob, (2) participate in an anti-commutation (\textbf{CHSH}) game, or (3) participate in a \textbf{Commutation} game. Bob's question is always a single bit that instructs him to either measures all of his halves of EPR pairs in the standard basis or the Hadamard basis, and return the results. 

In the teleportation case, an honest Alice simply teleports $\ket{\psi}$ to Bob and returns the teleportation errors to the verifier. The verifier now either samples an XX or a ZZ term of the Hamiltonian to match Bob's question, and can then recover the outcome of measuring $\ket{\psi}$ with that term by combining Bob's answers with Alice's teleportation errors. In the case that Alice and Bob are honest, repeating this round multiple times suffices to estimate the minimum eigenvalue of $H$. In fact, as long as Bob's operations are (nearly) isometric to a tensor of $Z$ observables when his question asks for standard basis measurements and a tensor of $X$ observables when his question asks for Hadamard basis measurements, we can conclude that the parties cannot on average convince the verifier that $H$ has a significantly lower eigenvalue that it really has. The purpose of the CHSH and Commutation questions is to enforce this structure on Bob's operations. Combined, they yield a variant of the \emph{Pauli braiding test} \cite{10.1145/3055399.3055468,grilo2020simpleprotocolverifiabledelegation}.  

Now, for the purpose of CVQC, we care about the value of any \emph{computationally} nonlocal strategy for this family of games. First, it is imperative that if the strategy passes the CHSH test, then we can conclude that Bob's operations anti-commute. This indeed follows in the computationally nonlocal setting, as discussed in \cref{subsec:CHSH}. It turns out that the only other crucial property is that Bob cannot change his strategy based on whether Alice received a Teleport, CHSH, or Commutation question. This clearly holds for any computationally nonlocal strategy, not just those that arise from the use of QFHE. While these are the main ideas, we now give a precise description of the \cite{NZ23} family of games that can be run through the generalized KLVY compiler to yield CVQC from any blind classical delegation of quantum computation protocol.

\paragraph{The game $H[G]$.} Define $\cG_\secp$ to be the set of games $G[H]$ where $H$ is a $\secp$-qubit Hamiltonian. Let $\beta = \beta(\secp), \alpha = \alpha(\secp) \in [-1,1]$ be functions of the security parameter, and let $\kappa = \Theta((\beta-\alpha)^2)$ be a parameter to be set later. The game $G[H]$ will allow us to decide whether the minimum eigenvalue of $H$ is smaller than $\alpha$ or larger than $\beta$. 

Given $H$, define $D_X$ to be the renormalized distribution over $X(e_i + e_j)$ operators induced by $H$, and define $D_Z$ analogously. Define $D_Q$ to be $U_\secp \times D_X$, where $U_\secp$ is the uniform distribution over $\{0,1\}^\secp$. Then for any $H$, the game $G[H]$ is defined as follows.

\begin{itemize}
    \item $Q_{G[H]}$: Sample the Alice question $q_A$ and Bob question $q_B$ by choosing one of the following options, where the first two are chosen with probability $(1-\kappa)/2$ and the last one with probability $\kappa$.
    \begin{itemize}
        \item \textbf{CHSH.} Sample $(a,b) \gets U_\secp \times D_X$ conditioned on $a \cdot b = 1$ (later, we will refer to this distribution as $D^1_Q$). Sample $x,y \gets \{0,1\}$, and set $q_A = (\text{CHSH},(a,b,x))$ and $q_B = y$.
        \item \textbf{Commutation.} Sample $(a,b) \gets U_\secp \times D_X$ conditioned on $a \cdot b = 0$  (later, we will refer to this distribution as $D^0_Q$). Sample $y \gets \{0,1\}$, and set $q_A = (\text{Commutation},(a,b))$ and $q_B = y$.
        \item \textbf{Teleport.} Sample $y \gets \{0,1\}$ and set $q_A = \text{Teleport}$ and $q_B = y$.
    \end{itemize}
    \item $V_{G[H]}$: Receive Alice answer $s_A$ (of varying size) and Bob answer $s_B \in \{0,1\}^{\secp}$. Compute the following, determined by the type of Alice question.
    \begin{itemize}
        \item \textbf{CHSH.} Let $z \coloneqq (1-y)(a \cdot s_B) + y (b \cdot s_B)$, and output 1 iff $s_A + z = x \cdot y$.
        \item \textbf{Commutation.} Let $z \coloneqq (1-y)(a \cdot s_B) + y (b \cdot s_B)$, and output 1 iff $(s_A)_y = z$.
        \item \textbf{Teleport.} Sample $w$ such that $w=0$ with probability $\sum_{i,j}p_{X,i,j}$ and $w=1$ with probability $\sum_{i,j}p_{Z,i,j}$. If $w \neq q_B$ then output $1$. Otherwise, sample a term $W(e_i+e_j)$ from the distribution induced by $p_{W,i,j}$, where $W = X$ if $w = 0$ and $W = Z$ if $w = 1$. Parse $s_A = (z,x)$, where $z,x \in \{0,1\}^\secp$, and compute the following.
        \begin{itemize}
            \item If $W = Z$, output 1 iff $(-1)^{(s_B)_i+(s_B)_j + (s_A)_i + (s_A)_j} = -1$.
            \item If $W = X$, output 1 iff $(-1)^{(s_B)_i+(s_B)_j + (s_A)_{\secp+ i} + (s_A)_{\secp+j}} = -1$.
        \end{itemize}
    \end{itemize}
\end{itemize}

Finally, we obtain a classical verification protocol for BQP (\cref{def:CVQC}) from blind classical delegation of quantum computation by combining the following theorem with \cref{thm:KLVY}, thus proving \cref{thm:CVQC}.

\begin{theorem}[Adaptation of Theorem 46 from \cite{NZ23}]\label{thm:main-NZ23}
    Let $\cG[\mathsf{YES}]$ be the family of games $\cG$ restricted to those defined by a Hamiltonian $H$ with lowest eigenvalue at most $\alpha$. There exists an efficient nonlocal strategy $\mathscr{S}$ with value \[\omega(G,\mathscr{S}) \geq \frac{1}{2}(1-\kappa)(1+\omega_\CHSH) + \kappa(1-\frac{1}{4}\alpha)\] for all $G \in \cG[\mathsf{YES}]$. Next, let $\cG[\mathsf{NO}]$ be the family of games $\cG$ restricted to those defined by a Hamiltonian $H$ with lowest eigenvalue at least $\beta$. Then there exists a choice of $\kappa = \Theta((\beta-\alpha)^2)$ such that the computationally nonlocal value of $\cG[\mathsf{NO}]$ is upper-bounded by \[\omega_C(G) = \frac{1}{2}(1-\kappa)(1+\omega_\CHSH) + \kappa(1-\frac{1}{4}\alpha) - \frac{1}{8}\kappa(\beta-\alpha).\] 
\end{theorem}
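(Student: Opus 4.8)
The plan is to adapt the proof of Theorem~46 of \cite{NZ23} to the computationally-nonlocal-strategy framework of \cref{def:comp-non-local}: every invocation of QFHE semantic security in their argument is replaced by the single defining property of a computationally nonlocal strategy (that the state Alice passes to $\cH_\cB$ computationally hides her question), and the CHSH rigidity step is supplied by \cref{thm:CHSH-anticommutation}. For completeness, the honest strategy $\mathscr S$ is the EPR-based strategy of \cite{NZ23}: Alice and Bob share a main block of $\secp$ EPR pairs together with auxiliary pairs, and Alice prepares a minimum-eigenvalue (history) state $\ket{\phi_H}$ with $\bra{\phi_H}H\ket{\phi_H}\le\alpha$; all of this is QPT-preparable, so $\mathscr S$ is efficient. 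On a \textbf{Teleport} question Alice teleports $\ket{\phi_H}$ through the main block and returns the Pauli corrections, while Bob measures his halves in the $Z$- or $X$-basis according to $q_B$; a direct calculation identical to \cite{NZ23} shows the Teleport-branch acceptance probability equals $1-\tfrac14\bra{\phi_H}H\ket{\phi_H}\ge 1-\tfrac14\alpha$. On a \textbf{CHSH} question Alice and Bob run the optimal CHSH strategy on a fresh auxiliary pair, winning with probability $\omega_\CHSH$, and on a \textbf{Commutation} question the honest consistency check succeeds with probability $1$. Weighting by the branch probabilities $\tfrac{1-\kappa}{2},\tfrac{1-\kappa}{2},\kappa$ gives $\omega(G,\mathscr S)\ge\tfrac12(1-\kappa)(1+\omega_\CHSH)+\kappa(1-\tfrac14\alpha)$.

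For soundness, fix any computationally nonlocal strategy $\mathscr C$ for a game $G=G[H]$ with $\lambda_{\min}(H)\ge\beta$, and decompose its value by branch as $\omega(G,\mathscr C)=\tfrac12(1-\kappa)p_\CHSH+\tfrac12(1-\kappa)p_\Comm+\kappa p_\Tel$. Set $\delta_\CHSH\coloneqq\omega_\CHSH-p_\CHSH$ and $\delta_\Comm\coloneqq 1-p_\Comm$. First, the computational CHSH bound (\cite[Lemma~34]{NZ23}, cf.\ \cite[Theorem~4.7]{10.1007/978-3-031-38554-4_6}) gives $\delta_\CHSH\ge-\negl(\secp)$, and \cref{thm:CHSH-anticommutation} upgrades this to near-anticommutation of Bob's CHSH observables, with anticommutator norm $O(\delta_\CHSH)+\negl(\secp)$. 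Combining this with the Commutation check (which forces the relevant Bob observables to near-commute) and the linearity structure built into $G[H]$ yields, exactly as in the Pauli-braiding analysis of \cite{NZ23}, an isometry $V$ under which Bob's $Z$- and $X$-basis measurements are $\eta$-close to measuring $\bigotimes_i Z$ and $\bigotimes_i X$ on $\secp$ qubits, for $\eta=O(\sqrt{\delta_\CHSH+\delta_\Comm})+\negl(\secp)$. Crucially, every step here refers only to Bob's genuinely-local operations on $\cH_\cB$ together with the verifier's combinatorics, so it is insensitive to whether $\mathscr C$ arose from QFHE or is an arbitrary element of \cref{def:comp-non-local}; the only cryptographic input is that Bob does not alter his behaviour across Alice's question types, which holds since Bob's operations are parameterized only by $q_B\in\{0,1\}$.

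Given Bob's rigidity, the Teleport-branch check is $\eta$-close to reconstructing the expectation of a random Hamiltonian term on a $\secp$-qubit state $\rho$ held (under $V$) in Bob's register, de-twirled by Alice's claimed teleportation corrections, and accepting iff it reads $-1$; hence $p_\Tel\le 1-\tfrac14\Tr(\rho H)+O(\eta)+\negl(\secp)\le 1-\tfrac14\beta+O(\eta)+\negl(\secp)$ since $\Tr(\rho H)\ge\lambda_{\min}(H)\ge\beta$. Substituting back,
\[
\omega(G,\mathscr C)\le\tfrac12(1-\kappa)(1+\omega_\CHSH)+\kappa\!\left(1-\tfrac14\beta\right)-\tfrac12(1-\kappa)(\delta_\CHSH+\delta_\Comm)+O(\kappa\eta)+\negl(\secp).
\]
Exactly as in \cite{NZ23}, the concave-versus-linear comparison $O(\kappa\sqrt{\delta})-\tfrac12(1-\kappa)\delta\le O(\kappa^2)$ for $\delta\coloneqq\delta_\CHSH+\delta_\Comm$ (with the large-$\delta$ regime handled by the linear term dominating the a priori constant bound on $\eta$) lets us choose $\kappa=\Theta((\beta-\alpha)^2)$ so that $O(\kappa\eta)-\tfrac12(1-\kappa)\delta\le-\tfrac18\kappa(\beta-\alpha)+\negl(\secp)$, which together with $-\tfrac14\kappa\beta=-\tfrac14\kappa\alpha-\tfrac14\kappa(\beta-\alpha)$ gives $\omega(G,\mathscr C)\le\omega_C(G)+\negl(\secp)$, as claimed.

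The substance of the argument is not new mathematics but a faithful audit of \cite{NZ23}'s proof: one must verify that their use of QFHE is confined to deriving computational indistinguishability of Bob's views under distinct Alice questions (their Lemmas~7 and 8 and the corresponding steps in the main soundness proof), that the CHSH rigidity they invoke coincides with \cref{thm:CHSH-anticommutation} in our more general setting, and that the Commutation/Pauli-braiding combination and the teleportation-to-local-Hamiltonian reduction never use anything beyond Bob's local structure plus question-indistinguishability. The technical overview above asserts this decomposition holds, so the remaining work is this verification rather than any conceptual leap.
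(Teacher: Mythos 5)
Your plan matches the paper's proof: the paper likewise treats this as an audit of \cite{NZ23}'s Theorem~46, with completeness coming from the honest EPR/teleportation strategy and soundness from re-deriving NZ23's Lemmas 36--39 and 43--45 in the computationally nonlocal setting --- CHSH rigidity supplied by \cref{thm:CHSH-anticommutation}, the commutation-game analysis unchanged (it has no Alice question), the switch of Alice's questions to Teleport justified by the indistinguishability property of \cref{def:comp-non-local}, and the same $\kappa=\Theta((\beta-\alpha)^2)$ comparison at the end. One phrasing caveat: the claim that the cryptographic input is that Bob ``does not alter his behaviour'' because his operations depend only on $q_B$ understates the point --- what the transfer of the (anti)commutation bounds to Teleport rounds actually requires (and what you state correctly elsewhere in the plan) is that the state delivered to $\cH_\cB$ computationally hides Alice's question, which is precisely where the defining property of a computationally nonlocal strategy is invoked.
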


The proof of this theorem follows readily from arguments made in \cite{NZ23}. For completeness, we give an overview of their main claims written in our notation in \cref{sec:NZ23}.

\subsection{Encrypted CNOT and applications}\label{subsec:encrypted-CNOT}

\cite{Mahadev2017ClassicalHE} informally introduced the notion of a (two-round) ``encrypted CNOT'' protocol as a sub-routine in her construction of quantum fully-homomorphic encryption. Here, we define encrypted CNOT formally as a special case of blind classical delegation of quantum computation. In particular, this means that it follows generically from any OSP, due to the results from \cref{subsec:blind-delegation}. 

However, we also provide a simple and direct protocol for encrypted CNOT which in particular implies that \emph{two-round} encrypted CNOT follows from any two-round OSP. Finally, we discuss applications of encrypted CNOT to quantum fully-homormophic encryption and claw-state generators with \emph{indistinguishability} security (\cref{def:CSG}).

\begin{definition}[Encrypted CNOT]\label{def:encrypted-CNOT}
    An encrypted CNOT protocol is the special case of blind classical delegation of quantum computation (\cref{def:blind-computation}) where $\cH_\regV = \cH_\regW$ is a two-qubit register, the client's private input is a bit $b \in \{0,1\}$, and $Q(b,(\regV_0,\regV_1))$ is the identity if $b=0$ and applies a CNOT gate from $\regV_0$ to $\regV_1$ if $b = 1$. We say that the protocol is \emph{two-round} if it just consists of one message from the client followed by one message from the server:
    \begin{itemize}
        \item $\ECNOT.\Gen(1^\secp,b) \to (\msg_C,\state_C)$: The PPT client takes as input the security parameter $1^\secp$ and a bit $b$ and outputs a message $\msg_C$ and state $\state_C$.
        \item $\ECNOT.\Apply((\cV_0,\cV_1),\msg_C) \to ((\cV_0,\cV_1),\msg_S)$: The QPT server takes as input the client's message $\msg_C$, performs an operation on the registers $\cV_0,\cV_1$, and produces a message $\msg_S$.
        \item $\ECNOT.\Dec(\state_C,\msg_S) \to (r,s)$: The PPT decoding algorithm takes as input the client's state $\state_C$ and the server's message $\msg_S$ and outputs one-time pad keys $(r,s)$.
    \end{itemize}
\end{definition}

\begin{theorem}
    OSP (resp. two-round OSP) implies encrypted CNOT (resp. two-round encrypted CNOT).
\end{theorem}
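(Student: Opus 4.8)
The plan is to instantiate the direct two-step construction sketched in the technical overview, using only ``standard'' OSP (\cref{def:OSP}) for the $Z$ and $X$ observables --- no generalized angle or verifiability is needed. On input bit $b$, the client will set $(b_0,b_1) \coloneqq (0,1)$ if $b=0$ and $(b_0,b_1) \coloneqq (1,0)$ if $b=1$, and the parties run two OSP instances with the client acting as sender on inputs $b_0$ and $b_1$ respectively. Up to negligible trace distance, this delivers to the server two fresh single-qubit registers $\cC_0,\cC_1$ holding $H^{b_0}\ket{s_0}$ and $H^{b_1}\ket{s_1}$ (i.e.\ $\ket{s}$ when the input is $0$ and $Z^{s}\ket{+}$ when it is $1$), with bits $s_0,s_1$ output to the client. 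Everything after this is an information-theoretic manipulation on the server's side plus one-time-pad bookkeeping on the client's side.

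First I would specify the server's \textbf{Step 1}: apply $\CNOT$ from $\cV_0$ onto $\cC_1$, then $\CNOT$ from $\cC_0$ onto $\cC_1$, measure $\cC_1$ in the standard basis to obtain $m$, and discard $\cC_1$. A short computation on computational-basis inputs shows that, up to a Pauli one-time pad, this leaves $\cC_0$ disentangled (in $\ket{0}$) from $\cV_0$ when $b=0$, and maximally entangled with $\cV_0$ when $b=1$. Then \textbf{Step 2}: apply $\CNOT$ from $\cC_0$ onto $\cV_1$, measure $\cC_0$ in the Hadamard basis to obtain $m'$, and discard it; a second short computation shows that, again up to a Pauli one-time pad, the residual channel on $(\cV_0,\cV_1)$ is the identity when $b=0$ and $\CNOT_{\cV_0\to\cV_1}$ when $b=1$. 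The server outputs $(\cV_0,\cV_1)$, and its outgoing message consists of the two OSP receiver messages together with $(m,m')$. The client's decoder runs the two OSP decoders to recover $(s_0,s_1)$ and outputs one-time-pad keys $(r,s)$ obtained as an explicit affine function of $(s_0,s_1,m,m')$.

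For \textbf{correctness}, I would argue as follows. All server operations act only on $\cV_0,\cV_1$ and the fresh ancillas and consist of Clifford gates and basis measurements, so it suffices to check, branch by branch over the measurement outcomes, that the induced map on $(\cV_0,\cV_1)$ corrected by $X^rZ^s$ equals exactly $\mathbb{I}$ (if $b=0$) or $\CNOT$ (if $b=1$); tensoring in the identity on any auxiliary register the server holds is immediate, and the $\negl(\secp)$ error coming from OSP correctness is absorbed by a hybrid over the two OSP instances, giving the diamond-distance bound. The \textbf{main obstacle} I anticipate is precisely the Pauli one-time-pad bookkeeping: one must track how the $Z^{s_0}, Z^{s_1}$ phases carried by the OSP outputs propagate through the two $\CNOT$s (which turn a target-side $Z$ into a $Z\otimes Z$ and a control-side/target value into an $X$) and through the two measurements (each contributing a further Pauli keyed to $m$ or $m'$), and verify that the net correction is the same across all branches and is recoverable by the client from $(s_0,s_1,m,m')$ alone. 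This is routine but needs care.

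For \textbf{security}, note that the client's private input is the single bit $b$, and passing from $b=0$ to $b=1$ merely changes the pair of OSP sender inputs from $(0,1)$ to $(1,0)$. A two-step hybrid argument --- $(0,1)\to(1,1)\to(1,0)$, each step flipping one OSP instance's input and invoking OSP security (treating the other instance and the server's local post-processing as part of the OSP adversary) --- shows the server's view is computationally independent of $b$, which is exactly the security requirement of \cref{def:blind-computation}. Finally, for \textbf{round complexity}: if the underlying OSP is two-round, then $\ECNOT.\Gen$ bundles the two OSP sender messages into $\msg_C$; $\ECNOT.\Apply$ runs both OSP receivers and performs Steps 1 and 2, which add no further interaction, returning the two OSP receiver messages and $(m,m')$ as $\msg_S$; and $\ECNOT.\Dec$ is as described. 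Hence two-round OSP yields two-round encrypted CNOT.
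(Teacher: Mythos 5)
Your construction is exactly the paper's: two OSP instances with sender inputs $(b,1-b)$, the same CNOT pattern $\CNOT_{\cV_0\to\cC_1},\CNOT_{\cC_0\to\cC_1},\CNOT_{\cC_0\to\cV_1}$ with a standard-basis measurement of $\cC_1$ and a Hadamard-basis measurement of $\cC_0$, the same two-stage correctness analysis, and the same hybrid-over-the-two-instances security argument, so the proposal is correct and follows essentially the same approach. The only part you leave implicit is the explicit affine decoder (the paper gives $r=(0,t_0),\,s=(t_1,0)$ for $b=0$ and $r=(0,m_1\oplus t_1),\,s=(m_0\oplus t_0,0)$ for $b=1$), which is indeed the routine bookkeeping you flag.
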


\begin{proof}
    Encrypted CNOT from OSP follows generically from \cref{cor:blind}, so we focus on the two-round case. We first describe the protocol, where $(\OSP.\Sen,\OSP.\Rec,\OSP.\Dec)$ is any two-round OSP.
    \begin{itemize}
        \item $\ECNOT.\Gen(1^\secp,b)$: Sample $\OSP$ first-round messages \[(\OSP.\msg_{S,0},\OSP.\state_{S,0}) \gets \OSP.\Sen(1^\secp,b), \ \ \ (\OSP.\msg_{S,1},\OSP.\state_{S,1}) \gets \OSP.\Sen(1^\secp,1-b),\] and define \[\msg_C \coloneqq (\OSP.\msg_{S,0},\OSP.\msg_{S,1}), \ \ \ \state_C \coloneqq (\OSP.\state_{S,0},\OSP.\state_{S,1}).\]
        \item $\ECNOT.\Apply((\cV_0,\cV_1),\msg_C)$: Generate $\OSP$ states and second-round messages  \[\left(\ket{\psi_0}_{\cO_0},\OSP.\msg_{R,0}\right) \gets \OSP.\Rec(\OSP.\msg_{S,0}), \ \ \ \ \left(\ket{\psi_1}_{\cO_1},\OSP.\msg_{R,1}\right) \gets \OSP.\Rec(\OSP.\msg_{S,1}).\] Apply $\CNOT_{\cV_0 \to \cO_1}, \CNOT_{\cO_0 \to \cO_1}, \CNOT_{\cO_0 \to \cV_1}$, measure $\cO_0$ in the Hadamard basis to obtain bit $m_0$, and measure $\cO_1$ in the standard basis to obtain bit $m_1$. Define \[\msg_S \coloneqq (\OSP.\msg_{R,0},\OSP.\msg_{R,1},m_0,m_1),\] and output $(\cV_0,\cV_1)$.
        \item $\ECNOT.\Dec(\state_C,\msg_S)$: Decode $\OSP$ output bits \[t_0 \gets \OSP.\Dec(\OSP.\state_{S,0},\OSP.\msg_{R,0}), \ \ \ \ t_1 \gets \OSP.\Dec(\OSP.\state_{S,1},\OSP.\msg_{R,1}),\] and compute the one-time pad keys $r,s$ as follows.
        \begin{itemize}
            \item If $b = 0$: $r = (0,t_0)$, $s = (t_1,0)$.
            \item If $b = 1$: $r = (0,m_1 \oplus t_1)$, $s = (m_0 \oplus t_0,0)$.
        \end{itemize}
    \end{itemize}

    Security is immediate from the security of the OSP. Thus, it remains to check correctness, which we check separately for $b=0$ and $b=1$. To avoid clutter, we will assume that registers $\cV_0,\cV_1$ hold pure states $\alpha_0\ket{0} + \alpha_1\ket{1}$ and $\beta_0\ket{0} + \beta_1\ket{1}$, but note that correctness extends readily to the setting where $\cV_0,\cV_1$ may be entangled with each other and with auxiliary systems.

    First, suppose that $b=0$. We will break the server's actions into two stages: (1) apply $\CNOT_{\cV_0 \to \cO_1},\CNOT_{\cO_0 \to \cO_1}$ and measure $\cO_1$, and (2) apply $\CNOT_{\cO_0 \to \cV_1}$ and measure $\cO_0$. By the correctness of $\OSP$, we have that right before the measurement in the first stage, the state of the system on registers $(\cV_0,\cO_0,\cO_1)$ is (negligibly close to)
    \begin{align*}
        &\CNOT_{\cV_0 \to \cO_1}\CNOT_{\cO_0 \to \cO_1}X^{0,t_0,0}Z^{0,0,t_1}\left(\left(\alpha_0 \ket{0} + \alpha_1 \ket{1}\right)_{\cV_0} \otimes \ket{0}_{\cO_0} \otimes \ket{+}_{\cO_1}\right) \\
        &=X^{0,t_0,t_0}Z^{t_1,t_1,t_1}\CNOT_{\cV_0 \to \cO_1}\CNOT_{\cO_0 \to \cO_1}\left(\left(\alpha_0 \ket{0} + \alpha_1 \ket{1}\right)_{\cV_0} \otimes \ket{0}_{\cO_0} \otimes \ket{+}_{\cO_1}\right) \\
        &=X^{0,t_0,t_0}Z^{t_1,t_1,t_1}\left(\left(\alpha_0 \ket{0} + \alpha_1 \ket{1}\right)_{\cV_0} \otimes \ket{0}_{\cO_0} \otimes \ket{+}_{\cO_1}\right).
    \end{align*}

    Thus, measuring $\cO_1$ in the standard basis has no affect on the remaining system. Next, we write the state of the system on registers $(\cV_0,\cV_1,\cO_0)$ right before the measurement in the second stage, which we imagine implementing by applying a Hadamard gate and then measuring in the standard basis.
    \begin{align*}
        &H_{\cO_0}\CNOT_{\cO_0 \to \cV_1}X^{0,0,t_0}Z^{t_1,0,t_1}\left(\left(\alpha_0\ket{0} + \alpha_1\ket{1}\right)_{\cV_0} \otimes \left(\beta_0\ket{0} + \beta_1\ket{1}\right)_{\cV_1} \otimes \ket{0}_{\cO_0}\right) \\ 
        &= X^{0,t_0,t_1}Z^{t_1,0,t_0}H_{\cO_0}\CNOT_{\cO_0 \to \cV_1}\left(\left(\alpha_0\ket{0} + \alpha_1\ket{1}\right)_{\cV_0} \otimes \left(\beta_0\ket{0} + \beta_1\ket{1}\right)_{\cV_1} \otimes \ket{0}_{\cO_0}\right) \\
        &= X^{0,t_0,t_1}Z^{t_1,0,t_0}\left(\left(\alpha_0\ket{0} + \alpha_1\ket{1}\right)_{\cV_0} \otimes \left(\beta_0\ket{0} + \beta_1\ket{1}\right)_{\cV_1} \otimes \ket{+}_{\cO_0}\right)
    \end{align*}

    Thus, measuring $\cO_0$ in the standard basis has no affect on the system $(\cV_0,\cV_1)$, which is \[X^{0,t_0}Z^{t_1,0}\left(\left(\alpha_0\ket{0} + \alpha_1\ket{1}\right)_{\cV_0} \otimes \left(\beta_0\ket{0} + \beta_1\ket{1}\right)_{\cV_1}\right),\] as desired.

    Next, suppose $b=1$. We break the server's actions into two stages in the same manner. By the correctness of $\OSP$, we have that right before the measurement in the first stage, the state of the system on registers $(\cV_0,\cO_0,\cO_1)$ is (negligibly close to)
    \begin{align*}
        &\CNOT_{\cV_0 \to \cO_1}\CNOT_{\cO_0 \to \cO_1}X^{0,0,t_1}Z^{0,t_0,0}\left(\left(\alpha_0 \ket{0} + \alpha_1 \ket{1}\right)_{\cV_0} \otimes \ket{+}_{\cO_0} \otimes \ket{0}_{\cO_1}\right) \\
        &=X^{0,0,t_1}Z^{0,t_0,0}\CNOT_{\cV_0 \to \cO_1}\CNOT_{\cO_0 \to \cO_1}\left(\left(\alpha_0 \ket{0} + \alpha_1 \ket{1}\right)_{\cV_0} \otimes \ket{+}_{\cO_0} \otimes \ket{0}_{\cO_1}\right) \\
        &=X^{0,0,t_1}Z^{0,t_0,0}\left(\frac{1}{\sqrt{2}}\alpha_0\ket{000} + \frac{1}{\sqrt{2}}\alpha_0\ket{011} + \frac{1}{\sqrt{2}}\alpha_1\ket{101} + \frac{1}{\sqrt{2}}\alpha_1\ket{110}\right) \\
        &=X^{0,0,t_1}Z^{0,t_0,0}\left(\frac{1}{\sqrt{2}}\left(\alpha_0\ket{00} + \alpha_1\ket{11}\right)\ket{0} + \frac{1}{\sqrt{2}}\left(\alpha_0\ket{01} + \alpha_1\ket{10}\right)\ket{1}\right) \\
        &=\sum_{c \in \{0,1\}} \frac{1}{\sqrt{2}}X^{0, c,t_1}Z^{0,t_0,0}\bigg(\left(\alpha_0\ket{00} + \alpha_1\ket{11}\right) \otimes \ket{c}\bigg) \\
        &= \sum_{m_1 \in \{0,1\}}\frac{1}{\sqrt{2}}\bigg(X^{0,m_1 \oplus t_1}Z^{0,t_0}\left(\alpha_0\ket{00} + \alpha_1\ket{11}\right)\bigg) \otimes \ket{m_1},
    \end{align*}
    where the last equality follows by a change of variables $m_1 = c \oplus t_1$. Now, the server measures the last register $(\cO_1)$ in the standard basis to obtain $m_1$, and then applies the second stage. We write the state of the system on registers $(\cV_0,\cV_1,\cO_0)$ right before the measurement in the second stage, which again we implement by applying a Hadamard gate and then measuring in the standard basis. Below, the one-time pad keys are always written in the order $(\cV_0,\cV_1,\cO_0)$.
    \begingroup
\allowdisplaybreaks
    \begin{align*}
        &H_{\cO_0}\CNOT_{\cO_0 \to \cV_1}X^{0,0,m_1 \oplus t_1}Z^{0,0,t_0}\left(\left(\alpha_0\ket{00} + \alpha_1\ket{11}\right)_{\cV_0,\cO_0} \otimes \left(\beta_0\ket{0} + \beta_1\ket{1}\right)_{\cV_1}\right) \\
        &= X^{0,m_1 \oplus t_1,t_0}Z^{0,0,m_1 \oplus t_1}H_{\cO_0}\CNOT_{\cO_0 \to \cV_1}\left(\left(\alpha_0\ket{00} + \alpha_1\ket{11}\right)_{\cV_0,\cO_0} \otimes \left(\beta_0\ket{0} + \beta_1\ket{1}\right)_{\cV_1}\right) \\
        &= X^{0,m_1 \oplus t_1,t_0}Z^{0,0,m_1 \oplus t_1}H_{\cO_0}\left(\alpha_0\beta_0\ket{000} + \alpha_0\beta_1\ket{010} + \alpha_1\beta_0\ket{111} + \alpha_1\beta_1\ket{101}\right)_{\cV_0,\cV_1,\cO_0} \\
        &= X^{0,m_1 \oplus t_1,t_0}Z^{0,0,m_1 \oplus t_1}\frac{1}{\sqrt{2}}\bigg(\big(\alpha_0\beta_0\ket{00} + \alpha_0\beta_1\ket{01} + \alpha_1\beta_0\ket{11} + \alpha_1\beta_1\ket{10}\big)\ket{0} \\ & ~~~~~~~~~~~~~~~~~~~~~~~~~~~~~~~~~~~~~~~~~ + \big(\alpha_0\beta_0\ket{00} + \alpha_0\beta_1\ket{01} - \alpha_1\beta_0\ket{11} - \alpha_1\beta_1\ket{10}\big) \ket{1}\bigg) \\
        &= \sum_{c \in \{0,1\}} \frac{1}{\sqrt{2}}X^{0,m_1 \oplus t_1,t_0}Z^{c,0,m_1 \oplus t_1}\Big(\alpha_0\beta_0\ket{00} + \alpha_0\beta_1\ket{01} + \alpha_1\beta_0\ket{11} + \alpha_1\beta_1\ket{10}\Big) \otimes \ket{c} \\
        &= \sum_{m_0 \in \{0,1\}} \frac{1}{\sqrt{2}}X^{0,m_1 \oplus t_1}Z^{m_0 \oplus t_0,0}\Big(\alpha_0\beta_0\ket{00} + \alpha_0\beta_1\ket{01} + \alpha_1\beta_0\ket{11} + \alpha_1\beta_1\ket{10}\Big)_{\cV_0,\cV_1}\otimes Z^{m_1 \oplus t_1}\ket{m_0}_{\cO_0},
    \end{align*}
    \endgroup
    where the equality follows by a change of variables $m_0 = c \oplus t_0$. Thus, measuring $\cO_0$ in the standard basis to obtain $m_0$ yields the final state \[X^{0,m_1 \oplus t_1}Z^{m_0 \oplus t_0,0}\left(\alpha_0\beta_0\ket{00} + \alpha_0\beta_1\ket{01} + \alpha_1\beta_0\ket{11} + \alpha_1\beta_1\ket{10}\right)_{\cV_0,\cV_1},\] as desired.
   
\end{proof}

\paragraph{Application: QFHE.} \cite{Mahadev2017ClassicalHE} combined a particular classical FHE protocol with a particular two-round encrypted CNOT protocol in a non-black-box way in order to achieve quantum FHE. However, \cite{10.1007/978-3-031-68382-4_8} recently pioneered a \emph{generic} approach to constructing quantum FHE from \emph{any} classical FHE (with log-depth decryption) and \emph{any} dual-mode TCF. Implicit in their work is that, in fact, one can use any two-round encrypted CNOT protocol, and thus (due to our work) any two-round OSP. We confirm this in the following theorem.

\begin{theorem}[Adapted from \cite{10.1007/978-3-031-68382-4_8}]
    Given any FHE with decryption in $\text{NC}_1$ and any two-round encrypted CNOT, there exists QFHE.
\end{theorem}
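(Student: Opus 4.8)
The plan is to open up the QFHE construction of \cite{10.1007/978-3-031-68382-4_8} and observe that it uses its dual-mode TCF only through a black-box ``encrypted CNOT'' interface, namely the three algorithms and the correctness/security guarantees of a two-round encrypted CNOT as in \cref{def:encrypted-CNOT}. Concretely, I would (i) recall the skeleton of their scheme, (ii) isolate exactly where the TCF is invoked and verify that the only facts used are the syntax $(\ECNOT.\Gen,\ECNOT.\Apply,\ECNOT.\Dec)$, correctness (the server obtains $Q(b,\cdot)$ up to a Pauli whose keys $\ECNOT.\Dec$ recovers), and security (the bit $b$ is hidden from the server given only the client's message), and (iii) substitute an arbitrary two-round encrypted CNOT and re-run their correctness and security analysis, replacing ``dual-mode TCF'' by ``two-round encrypted CNOT'' throughout.

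For the skeleton: a QFHE ciphertext of a quantum state $\ket{\phi}$ is $X^x Z^z \ket{\phi}$ together with classical-FHE encryptions $\widehat x, \widehat z$ of the one-time-pad keys, and classical data is encrypted with the classical FHE directly. Clifford gates are evaluated by applying them to the quantum register while homomorphically conjugating the encrypted pad keys, which is efficient since Cliffords map Paulis to Paulis. The only non-trivial case is a $T$ gate on a qubit with encrypted $X$-key $\widehat x$: applying $T$ directly leaves a stray $P^{x}$ correction, and removing ``a phase gate conditioned on the encrypted bit $x$'' is precisely an encrypted phase (\cref{def:encrypted-phase}), hence an encrypted CNOT. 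Since the evaluator holds only $\widehat x$ and not $x$, it runs $\ECNOT.\Gen$ homomorphically on $\widehat x$ to get $\mathsf{FHE}.\Enc(\msg_C)$ and $\mathsf{FHE}.\Enc(\state_C)$, carries out the server step $\ECNOT.\Apply$ on the quantum register threaded together with these ciphertexts so that the classical transcript it emits appears as $\mathsf{FHE}.\Enc(\msg_S)$, and then homomorphically evaluates $\ECNOT.\Dec$ on $\mathsf{FHE}.\Enc(\state_C),\mathsf{FHE}.\Enc(\msg_S)$ to obtain classical-FHE encryptions of the updated pad keys. Each $T$ gate thus adds only $O(1)$ homomorphic depth, so the $\NC_1$-decryption property — which makes the classical FHE bootstrappable, hence compact — lets us handle polynomially many $T$ gates; this is the only place the $\NC_1$-decryption assumption enters.

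I expect the main obstacle to be making step (iii)'s ``carries out the server step while the client step is performed homomorphically'' rigorous: one must check that this interleaving of $\ECNOT.\Apply$ with the classical FHE relies only on the black-box guarantees of the two-round encrypted CNOT, exactly as \cite{10.1007/978-3-031-68382-4_8} argue for their TCF-based instantiation — in particular, that the evaluator's use of the client message through $\ECNOT.\Apply$ remains sound when that message is available only under the FHE, and that the pad-key bookkeeping closes up across gates. Isolating the precise property that lets this go through (and confirming their construction and reduction never touch the TCF except through it) is the real content. Given that, security of the resulting QFHE follows by the same hybrid argument as in \cite{10.1007/978-3-031-68382-4_8}: semantic security of the classical FHE lets one switch all encrypted pad keys to encryptions of $0$, and for each $T$ gate the security of the two-round encrypted CNOT ensures the phase-removal step leaks nothing about $x$ to an adversarial evaluator; composing these over all gates shows that an evaluator's view of a ciphertext of $\ket{\phi_0}$ is computationally indistinguishable from its view of a ciphertext of $\ket{\phi_1}$. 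Finally, invoking the preceding theorem (two-round OSP implies two-round encrypted CNOT) yields QFHE from two-round OSP plus classical FHE with $\NC_1$ decryption.
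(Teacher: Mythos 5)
Your proposal takes a genuinely different route from the paper, and the difference is exactly where the gap lies. The paper does not re-derive the QFHE construction from scratch; it cites \cite[Section 5]{10.1007/978-3-031-68382-4_8} for the fact that QFHE reduces to classical FHE (with $\text{NC}_1$ decryption) plus a ``hidden Bell pair'' procedure, and then shows only that this procedure is implementable from two-round encrypted CNOT. The crucial feature of the hidden Bell pair interface is that the client runs $\ECNOT.\Gen(1^\secp,\mu)$ \emph{in the clear} on a bit $\mu$ it knows, and publishes the first-round encrypted CNOT messages $\msg_{C,0},\msg_{C,1}$ as plaintext (they play the role of the parameter $\pparam$), while only the client's secret state $\sparam = (\state_{C,0},\state_{C,1})$ is published under the FHE. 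The server therefore holds the $\msg_C$'s as cleartext and can legitimately run $\ECNOT.\Apply$; it is only $\ECNOT.\Dec$ that is run homomorphically, feeding $\Enc(\sparam)$ together with the cleartext server messages $\msg_{S,i}$ into the FHE to produce $\Enc(r',s')$.

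Your $T$-gate gadget instead has the evaluator run $\ECNOT.\Gen$ homomorphically on $\widehat{x} = \mathsf{FHE}.\Enc(x)$, obtaining $\mathsf{FHE}.\Enc(\msg_C)$ and $\mathsf{FHE}.\Enc(\state_C)$, and then ``carries out the server step $\ECNOT.\Apply$ on the quantum register threaded together with these ciphertexts.'' That step cannot be performed. $\ECNOT.\Apply$ is a QPT operation that takes $\msg_C$ as a classical input and applies a corresponding unitary to the quantum registers; holding only an FHE ciphertext of $\msg_C$ gives you no way to select or apply that unitary --- doing so would itself amount to homomorphically evaluating a quantum operation, which is what one is trying to build. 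You correctly flag this as ``the main obstacle'' and ``the real content,'' but the plan contains no idea for resolving it, and the specific interleaving you propose is a dead end. The resolution used by the paper and by \cite{10.1007/978-3-031-68382-4_8} is structurally different: $\ECNOT.\Gen$ is never run homomorphically, the first-round encrypted CNOT message is always produced and consumed in the clear, and the homomorphic classical-FHE bookkeeping (via $\Enc(\sparam)$, $\Enc(r,s)$, and $\ECNOT.\Dec$) absorbs the relationship between the gadget's hidden control bit and the live pad key. To repair your approach you would essentially have to re-discover the hidden Bell pair abstraction, which is exactly what the paper cites and instantiates rather than re-proving.
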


\begin{corollary}
    Given any FHE with decryption in $\text{NC}_1$ and any two-round OSP, there exists QFHE.
\end{corollary}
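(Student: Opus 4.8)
The plan is to obtain this corollary by composing the two results established immediately above, with no new quantum-information content required. First I would invoke the theorem that two-round OSP implies two-round encrypted CNOT: applied to the hypothesized protocol $(\OSP.\Sen,\OSP.\Rec,\OSP.\Dec)$, it produces a two-round encrypted CNOT protocol $(\ECNOT.\Gen,\ECNOT.\Apply,\ECNOT.\Dec)$, whose correctness holds up to a quantum one-time pad $X^rZ^s$ with $(r,s)$ recoverable by the client from $(\state_C,\msg_S)$, and whose security (the client's bit $b$ is computationally hidden from any QPT server) is inherited verbatim from the security of the underlying OSP. Then I would invoke the preceding theorem (adapted from \cite{10.1007/978-3-031-68382-4_8}): any classical FHE with decryption in $\text{NC}_1$, together with any two-round encrypted CNOT, yields QFHE. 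Chaining these two implications gives QFHE from any FHE with $\text{NC}_1$ decryption and any two-round OSP, which is exactly the claim.

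The substance of the argument is checking that the encrypted CNOT we extract from OSP fits the interface demanded by the generic QFHE construction. That construction follows the Mahadev template: the quantum ciphertext is a quantum one-time pad on the plaintext register, Clifford gates are propagated by (homomorphically) updating the one-time pad keys, and each non-Clifford gate introduces a CNOT correction conditioned on a one-time pad key bit; a two-round encrypted CNOT is precisely the tool that lets the evaluator apply this correction while keeping the resulting one-time pad keys encrypted under the classical FHE, and the $\text{NC}_1$-decryption hypothesis is what allows the classical FHE layer to recrypt/bootstrap these keys. Since our $\ECNOT$ exposes exactly a client message $\msg_C$ (which can be precomputed and distributed as part of the evaluation key), a server operation $\ECNOT.\Apply$ acting on a two-qubit register, and a decoding $\ECNOT.\Dec$ that outputs the one-time pad keys, it drops directly into this template, and the hypotheses of the preceding theorem carry over unchanged.

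The one point I would scrutinize most carefully is a depth/composability check: in the generic template the client side of the encrypted CNOT interacts with the classical FHE layer (the conditioning bit is itself an FHE ciphertext, and $\ECNOT.\Dec$ is ultimately composed with FHE decryption), so one must confirm that instantiating encrypted CNOT via two-round OSP does not inflate the relevant circuit depth beyond what the $\text{NC}_1$-decryption assumption already buys. Because the preceding theorem is stated for \emph{any} two-round encrypted CNOT — with no extra low-depth requirement on its client algorithms — I expect this to be a non-issue, the depth budget being accounted for entirely on the classical-FHE side; but it is the single place where a careless composition could break, and so it is the step to verify explicitly. Beyond that, the corollary is pure bookkeeping: the two cited theorems compose cleanly.
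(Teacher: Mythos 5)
Your proposal is correct and matches the paper exactly: the corollary is obtained by composing the theorem that two-round OSP implies two-round encrypted CNOT with the theorem (adapted from Gupte--Vaikuntanathan) that classical FHE with $\text{NC}_1$ decryption plus any two-round encrypted CNOT yields QFHE. Your extra scrutiny of the depth bookkeeping correctly concludes it is a non-issue, since the cited theorem imposes no low-depth requirement on the encrypted-CNOT client algorithms.
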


\begin{proof}
    In \cite[Section 5]{10.1007/978-3-031-68382-4_8}, it is shown that classical FHE (with decryption in $\text{NC}_1$) plus a procedure for generating a ``hidden Bell pair'' is sufficient to construct quantum FHE. We now describe the requirements for the hidden Bell pair procedure, where $\Enc$ refers to a classical FHE encryption.

    \begin{itemize}
        \item Before the procedure begins, the client holds a bit $\mu$ and the server holds a state 
        \[X^rZ^s\left(\frac{1}{\sqrt{2}}\left(\ket{0}_{\cW_0}\ket{\phi_0^0} + \ket{1}_{\cW_0}\ket{\phi_0^1}\right) \otimes \frac{1}{\sqrt{2}}\left(\ket{0}_{\cW_1}\ket{\phi_1^0} + \ket{1}_{\cW_1}\ket{\phi_1^1}\right)\right),\] where the one-time pad $(r,s)$ is on registers $\cW_0,\cW_1$, along with encryptions $\Enc(s,r)$ of the one-time pad keys.
        \item The client runs a parameter generation algorithm $(\pparam,\sparam) \gets \Gen(1^\secp,\mu)$ and publishes $\pparam, \Enc(\sparam)$. We requires that $\pparam$ is a semantically secure encryption of $\mu$.
        \item The server runs a procedure to obtain a state that is negligibly close in trace distance to \begin{align*}X^{r'}Z^{s'}\left(\frac{1}{\sqrt{2}}\left(\ket{00}_{\cW_\mu,\cW_2}\ket{\phi_\mu^0} + \ket{11}_{\cW_\mu,\cW_2}\ket{\phi_\mu^1}\right) \otimes \frac{1}{\sqrt{2}}\left(\ket{0}_{\cW_{1-\mu}}\ket{\phi_{1-\mu}^0} + \ket{1}_{\cW_{1-\mu}}\ket{\phi_{1-\mu}^1}\right)\right),\end{align*} where the one-time pad $(s',r')$ is applied to registers $\cW_0,\cW_1,\cW_2$. The server also obtains encryptions $\Enc(r',s')$ of the one-time pad keys.
    \end{itemize}

    It is straightforward to implement such a procedure given a two-round encrypted CNOT:

    \begin{itemize}
        \item The client's parameter generation algorithm runs $(\msg_{C,0},\state_{C,0}) \gets \ECNOT.\Gen(1^\secp,\mu)$ and $(\msg_{C,1},\state_{C,1}) \gets \ECNOT.\Gen(1^\secp,1-\mu)$, and sets $\pparam \coloneqq (\msg_{C,0},\msg_{C,1})$ and $\sparam \coloneqq (\state_{C,0},\state_{C,1})$.
        \item The server initializes register $\cW_2$ to $\ket{+}_{\cW_2}$, applies \[((\cW_0,\cW_2),\msg_{S,0}) \gets \ECNOT.\Apply((\cW_0,\cW_2),\msg_{C,0}),\] applies \[((\cW_1,\cW_2),\msg_{S,1}) \gets \ECNOT.\Apply((\cW_1,\cW_2),\msg_{C,1}),\] and finally uses $\Enc(r,s),\Enc(\sparam) = \Enc(\state_{C,0},\state_{C,1})$ and $\msg_{S,0},\msg_{S,1}$ to homomorphically obtain $\Enc(r',s')$ under the FHE.
    \end{itemize}

    Correctness is straightforward, as $\mu$ determines whether the CNOT is applied from $\cW_0$ to $\cW_2$ or from $\cW_1$ to $\cW_2$. Security follows immediately from the security of the encrypted CNOT.

\end{proof}

\paragraph{Application: CSG.} Finally, we return to the notion of a claw-state generator ($\CSG$), as defined in \cref{subsec:CSG}. There, it was shown that any $\CSG$ with search security implies OSP. Here, we show (a strengthening of) the reverse implication: Any $\OSP$ implies $\CSG$ with \emph{indistinguishability} security. This implication goes via the intermediate primitive of encrypted CNOT.

\begin{theorem}
    Encrypted CNOT (resp. two-round encrypted CNOT) implies differentiated-bit $\CSG$ (resp. two-round $\CSG$) with indistinguishability security, for any $n = \poly(\secp)$.
\end{theorem}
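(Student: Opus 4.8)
The plan is to realize the construction outlined in the technical overview. In the CSG protocol, the receiver prepares $\ket{+}_{\cB} \otimes \ket{0}_{\cC_1} \otimes \cdots \otimes \ket{0}_{\cC_n}$, the sender samples $\Delta \in \{0,1\}^n$, and the two parties run $n$ encrypted CNOT protocols (\cref{def:encrypted-CNOT}) in sequence, where the $i$-th one has sender input $\Delta_i$ and is applied with control register $\cB$ and target register $\cC_i$. The receiver outputs the register $(\cB,\cC_1,\dots,\cC_n)$; the sender runs the $n$ decoders $\ECNOT.\Dec$, combines the resulting one-time pad keys into a single Pauli pad $X^r Z^s$ on $(\cB,\cC_1,\dots,\cC_n)$ (efficiently, by commuting each individual pad through the subsequent CNOTs, which keeps it an $X/Z$-type Pauli up to an irrelevant global phase), and then reads off and outputs $(x_0,x_1,z)$ from $(r,s,\Delta)$ as described below. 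When the underlying encrypted CNOT is two-round, the sender can send all $n$ first-round messages at once and the receiver, after processing the $n$ applications in order, can return all $n$ second-round messages, so the resulting CSG is also two-round.

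For correctness, I would argue by a hybrid over the $n$ encrypted CNOTs (invoking the correctness guarantee of \cref{def:encrypted-CNOT}, the triangle inequality for trace/diamond distance, and $n = \poly(\secp)$) that against an honest receiver the joint output state is within negligible trace distance of $X^r Z^s \cdot \frac{1}{\sqrt{2}}(\ket{0}_{\cB}\ket{0^n}_{\cC} + \ket{1}_{\cB}\ket{\Delta}_{\cC})$. A short Pauli calculation then shows this state equals the differentiated-bit claw state $\frac{1}{\sqrt{2}}(\ket{0,x_0} + (-1)^z\ket{1,x_1})$ (up to global phase) with $x_0 = r_{\cC} \oplus r_{\cB}\Delta$, $x_1 = r_{\cC} \oplus (1\oplus r_{\cB})\Delta$, and $z = s_{\cB} \oplus (s_{\cC}\cdot\Delta)$ --- in particular $x_0 \oplus x_1 = \Delta$ regardless of $(r,s)$, the $X$-bit on $\cB$ merely swapping the roles of $x_0$ and $x_1$. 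Since $\Delta \neq 0^n$ with overwhelming probability (or can be enforced by sampling $\Delta$ conditioned on being nonzero, which perturbs all bounds by only $2^{-n} = \negl(\secp)$ in the relevant range of $n$), this state lies in the support of $\Pi_{\DPCSG}$, giving the correctness property of \cref{def:DPCSG}.

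For indistinguishability security, fix $i \in [n]$ and reduce to the security of the $i$-th encrypted CNOT with sender input bits $\{0,1\}$. Given a CSG adversary $\Adv$ at index $i$, build a server-side adversary $\Adv'$: it forwards the external client's messages (encoding a hidden input bit $b^*$) as the $i$-th encrypted CNOT, while playing the honest sender itself (with freshly sampled $\Delta_j$) for the other $n-1$ encrypted CNOTs and relaying all sender messages to $\Adv$; it relays $\Adv$'s $i$-th server message back to the external client and outputs $\Adv$'s guess bit. Since in this simulation $\Delta_i = b^*$, and since we established $x_{0,i}\oplus x_{1,i} = \Delta_i$ holds identically (independent of the receiver's, hence $\Adv$'s, behavior, because $x_{0,i}\oplus x_{1,i}$ is a function of the pad alone), an advantage $\varepsilon$ for $\Adv$ in guessing $x_{0,i}\oplus x_{1,i}$ becomes advantage $\varepsilon$ for $\Adv'$ in guessing $b^*$, contradicting the security of \cref{def:encrypted-CNOT}. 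The reduction is tight (no $\poly$ loss) since $i$ is fixed, and it works identically in the two-round and multi-round settings.

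The main obstacle is the correctness bookkeeping: carefully commuting the $n$ sequential one-time pads through the chain of CNOTs and confirming the exact form of the final state (the ``it can be confirmed that'' step in the overview). The one genuinely load-bearing observation, without which the security reduction fails, is the exact identity $x_{0,i}\oplus x_{1,i} = \Delta_i$; everything else is a standard hybrid for correctness and a one-line reduction for security.
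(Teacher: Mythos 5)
Your proposal is correct and follows essentially the same route as the paper: the same construction (receiver prepares $\ket{+}_\cB\otimes\ket{0}_{\cC_1}\otimes\cdots\otimes\ket{0}_{\cC_n}$, the parties run $n$ encrypted CNOTs on inputs $\Delta_i$, the sender decodes a combined Pauli pad and outputs $x_0,x_1,z$ with $x_0\oplus x_1=\Delta$), the same observation that the two-round structure is preserved by batching messages, and the same per-index reduction to encrypted-CNOT security using the identity $x_{0,i}\oplus x_{1,i}=\Delta_i$. If anything, your bookkeeping is slightly more careful than the paper's: you explicitly commute each pad's control-register $X$ component through the subsequent CNOTs (which the paper's displayed pad formula glosses over, relying implicitly on the structure of its OSP-based encrypted CNOT), and you flag the $\Delta=0^n$ corner case relative to the $x_0\neq x_1$ restriction in the correctness projector, which the paper does not discuss.
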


\begin{corollary}
    OSP (resp. two-round OSP) implies differentiated-bit $\CSG$ (resp. two-round $\CSG$) with indistinguishability security, for any $n = \poly(\secp)$.
\end{corollary}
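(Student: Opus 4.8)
The plan is as follows. Since the preceding theorem shows that (two-round) encrypted CNOT follows from (two-round) OSP, the corollary is an immediate consequence of the theorem; I therefore focus on building a differentiated-bit $\CSG$ with indistinguishability security from encrypted CNOT. The construction ``fans out'' a single control qubit: the sender samples $\Delta \gets \{0,1\}^n$; the receiver prepares $\ket{+}_\cB \otimes \ket{0}_{\cC_1} \otimes \cdots \otimes \ket{0}_{\cC_n}$; and for each $i \in [n]$ the parties run an encrypted CNOT (\cref{def:encrypted-CNOT}) on registers $(\cV_0,\cV_1) = (\cB,\cC_i)$ with the sender playing the client on input $\Delta_i$, so that $\CNOT_{\cB \to \cC_i}$ is applied when $\Delta_i = 1$ and the identity otherwise. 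The receiver outputs the final joint state on $(\cB,\cC_1,\dots,\cC_n)$, and the sender decodes a description $(x_0,x_1,z)$ of this state from the one-time-pad keys returned by the $n$ encrypted CNOTs. For the two-round variant, I would observe that the $n$ server operations $\ECNOT.\Apply$ can be executed locally and in sequence by the receiver; hence the sender can transmit all $n$ first messages at once, the receiver can reply with all $n$ second messages at once, and the protocol is genuinely two-round.

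For correctness I would track the accumulated Pauli one-time pad across the $n$ encrypted CNOTs. Stripping the pads that the client knows, the composition maps $\ket{+}_\cB\ket{0^n}_\cC$ to $\tfrac{1}{\sqrt{2}}(\ket{0,0^n} + \ket{1,\Delta})$, since logically $\cC_i$ holds $\Delta_i$ on the $\ket{1}_\cB$ branch and $0$ on the $\ket{0}_\cB$ branch; and because each encrypted CNOT permits an arbitrarily entangled input (its correctness is in diamond distance), the sequential composition (with $\cB$ entangled with $\cC_1,\dots,\cC_{i-1}$) is legitimate up to the sum of the $n$ diamond-distance errors, which is $\negl(\secp)$. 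Hence the receiver's state is negligibly close to $X^{\mathbf{r}}Z^{\mathbf{s}}\,\tfrac{1}{\sqrt{2}}(\ket{0,0^n} + \ket{1,\Delta})$ for pad keys $(\mathbf{r},\mathbf{s})$ known to the sender. Writing the pads in components $(r_\cB,r_\cC)$ and $(s_\cB,s_\cC)$, a short computation shows this equals $\tfrac{1}{\sqrt{2}}(\ket{0,x_0} + (-1)^z\ket{1,x_1})$, where $\{x_0,x_1\} = \{r_\cC,\,\Delta\oplus r_\cC\}$ (ordered according to $r_\cB$) and $z = s_\cB \oplus (\Delta \cdot s_\cC)$ — all computable by the sender. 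The crucial structural fact, which I would emphasize, is that $x_0 \oplus x_1 = \Delta$ independently of the pad (the $X$-part on $\cC$ shifts both members of the claw equally, the $X$-part on $\cB$ only swaps their labels, and the $Z$-part only affects the phase $z$), and that $x_0 \neq x_1$ unless $\Delta = 0^n$, which occurs with probability $2^{-n}$; this establishes the $\Pi_\DPCSG$ correctness requirement.

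Indistinguishability security reduces directly to the input-hiding security of encrypted CNOT. Suppose a QPT receiver $\Adv$ guesses $x_{0,i}\oplus x_{1,i}$ with $\nonnegl(\secp)$ advantage for some $i \in [n]$. I would build an adversary against the $i$-th encrypted CNOT: it samples $\Delta_j \gets \{0,1\}$ for all $j\neq i$, plays the honest client in encrypted CNOT $j$ against $\Adv$ for every $j \neq i$, and forwards all messages of encrypted CNOT $i$ between $\Adv$ and an external encrypted CNOT challenger holding secret input $b^\ast$; at the end it outputs $\Adv$'s guess. Since $\Adv$'s view in this experiment is identically distributed to its view in the real $\CSG$ security game with $\Delta_i = b^\ast$, and since $x_{0,i}\oplus x_{1,i} = \Delta_i$ by the structural fact above, this adversary guesses $b^\ast$ with $\nonnegl(\secp)$ advantage, contradicting the security of encrypted CNOT. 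Finally, the corollary follows by instantiating encrypted CNOT (resp.\ two-round encrypted CNOT) with the construction from the preceding theorem, which builds it from OSP (resp.\ two-round OSP).

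The step I expect to require the most care is the one-time-pad bookkeeping in the correctness argument: verifying that the physical state produced by the sequentially composed encrypted CNOTs really is the claimed padded claw state, that the client can reconstruct $(x_0,x_1,z)$ from the padding information it accumulated, and — most importantly for security — that after all the Pauli propagation through the $n$ CNOTs the XOR of the two claw members is exactly $\Delta$. The remaining steps (the two-round observation and the security reduction) are essentially bookkeeping.
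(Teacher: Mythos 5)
Your proposal is essentially the paper's own proof: identical fan-out construction ($n$ encrypted CNOTs from $\cB$ onto fresh $\cC_i$'s with client input $\Delta_i$), the same observation that the accumulated Pauli pad yields a state of the form $X^{(r_0,r)}Z^{s}\tfrac{1}{\sqrt{2}}(\ket{0,0^n}+\ket{1,\Delta})$ whose claw members XOR to $\Delta$ independently of the pad, and the same one-index embedding reduction to encrypted-CNOT security; the two-round packaging is also the same. One small caveat worth flagging (which the paper's proof also glosses over): you observe that $x_0 = x_1$ exactly when $\Delta = 0^n$, but then assert this establishes the $\Pi_\DPCSG$ correctness requirement because it ``occurs with probability $2^{-n}$''; since the corollary is claimed for any $n = \poly(\secp)$, including constant $n$, $2^{-n}$ need not be negligible, and the sender should really sample $\Delta$ uniformly from $\{0,1\}^n \setminus \{0^n\}$ (this trivial fix preserves the security reduction since $\Delta_i$ conditioned on $\Delta_{-i}$ is still nearly unbiased).
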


\begin{proof}
    We give the protocol in the two-round case, which can also be instantiated with any (not necessarily two-round) encrypted CNOT to yield a (not necessarily two-round) CSG. Let \[(\ECNOT.\Gen,\ECNOT.\Apply,\ECNOT.\Dec)\] be any two-round encrypted CNOT.

    \begin{itemize}
        \item $\CSG.\Sen(1^\secp,n)$: Sample a uniformly random string $\Delta \gets \{0,1\}^n$, and for each $i \in [n]$, sample \[(\ECNOT.\msg_{C,i},\ECNOT.\state_{C,i}) \gets \ECNOT(1^\secp,\Delta_i).\] Define \[\msg_S \coloneqq (\ECNOT.\msg_{C,1},\dots,\ECNOT.\msg_{C,n}), \ \ \ \ \ \state_S \coloneqq (\ECNOT.\state_{C,1},\dots,\ECNOT.\state_{C,n}).\]
        \item $\CSG.\Rec(\msg_S)$: Initialize a register $\cB$ to $\ket{+}_\cB$. For each $i \in [n]$, initialize a register $\cC_i$ to $\ket{0}_{\cC_i}$ and apply \[((\cB,\cC_i),\ECNOT.\msg_{S,i}) \gets \ECNOT.\Apply((\cB,\cC_i),\ECNOT.\msg_{C,i}).\] Output the state on registers $\cB,\cC_1,\dots,\cC_n$, and define \[\msg_R \coloneqq (\ECNOT.\msg_{S,1},\dots,\ECNOT.\msg_{S,n}).\]
        \item $\CSG.\Dec(\state_S,\msg_R)$: For each $i \in [n]$, compute 
        \[((r_{i,0},r_{i,1}),(s_{i,0},s_{i,1})) \gets \ECNOT.\Dec(\ECNOT.\state_{C,i},\ECNOT.\msg_{S,i}),\] define \[r_0 \coloneqq \bigoplus_{i \in [n]} r_{i,0} \in \{0,1\}, \ \  r \coloneqq (r_{1,1},\dots,r_{n,1}) \in \{0,1\}^n, \ \  s \coloneqq \left(\bigoplus_{i \in [n]}s_{i,0},s_{1,1},\dots,s_{n,1}\right) \in \{0,1\}^{n+1}, \] and output \[x_0 = r \oplus r_0 \cdot \Delta, \ \ x_1 = r \oplus (1-r_0) \cdot \Delta, \ \ z = s \cdot (1,\Delta).\]
    \end{itemize}

    Observe that, by the correctness of the encrypted CNOT protocol, the state on registers $(\cB,\cC_1,\dots,\cC_n)$ output by $\CSG.\Rec$ is (negligibly close to)
    \begin{align*} 
        &X^{\bigoplus_{i \in [n]}r_{i,0},r_{1,1},\dots,r_{n,1}}Z^{\bigoplus_{i \in [n]}s_{i,0},s_{1,1},\dots,s_{n,1}}\left(\frac{1}{\sqrt{2}}\ket{0,0^n} + \frac{1}{\sqrt{2}}\ket{1,\Delta}\right) \\
        &= X^{r_0,r}Z^s\left(\frac{1}{\sqrt{2}}\ket{0,0^n} + \frac{1}{\sqrt{2}}\ket{1,\Delta}\right) \\
        &= \frac{1}{\sqrt{2}}\ket{0,r \oplus r_0 \cdot \Delta} + (-1)^{s \cdot (1,\Delta)}\frac{1}{\sqrt{2}}\ket{1,r \oplus (1-r_0) \cdot \Delta},
    \end{align*}

    and thus correctness holds. Indistinguishability security follows immediately from the security of the encrypted CNOT, since for each $i \in [n]$, $x_{0,i} \oplus x_{1,i} = \Delta_i$.

\end{proof}

\section{Implications}

In this section, we show that OSP implies both commitments and oblivious transfer with one classical party, while \emph{two-round} OSP implies public-key encryption with classical public keys and ciphertexts. The purpose of exploring this direction is to place lower bounds on the cryptography necessary to build OSP.

\subsection{Commitments from OSP}\label{subsec:com}

\begin{definition}[Commitment]
    A (statistically-binding, computationally-hiding) commitment between a classical committer and quantum receiver consists of an interaction \[\left(\state_\Com,\ket{\psi}_\Rec\right) \gets \langle \Com(1^\secp,b),\Rec(1^\secp)\rangle,\] where $\Com$ is a PPT machine with input bit $b \in \{0,1\}$ and $\Rec$ is a QPT machine, along with algorithms $(\Open,\Ver)$ with the following syntax.
    \begin{itemize}
        \item $\Open(\state_\Com) \to (b,s)$ is a PPT algorithm that takes as input the committer's state $\state_\Com$ and produces a bit $b$ and opening information $s$.
        \item $\Ver(\ket{\psi}_\Rec,b,s) \to \{\top, \bot\}$ is a QPT algorithm that takes as input the receiver's state $\ket{\psi}_\Rec$, a bit $b$, and opening information $s$, and either accepts or rejects.
    \end{itemize}
    \emph{Correctness} requires that for any $b \in \{0,1\}$,
    \[\Pr\left[\Ver(\ket{\psi}_\Rec,b,s) = \top : \begin{array}{r}(\state_\Com,\ket{\psi}_\Rec) \gets \langle \Com(1^\secp,b),\Rec(1^\secp)\rangle \\ (b,s) \gets \Open(\state_\Com)\end{array}\right] = 1-\negl(\secp).\]
    The commitment satisfies \emph{computational hiding} if for any QPT adversarial receiver $\{\Adv_\secp\}_{\secp \in \bbN}$,
    \begin{align*}
    &\bigg|\Pr\left[b_\Adv = 0 : (\state_\Com,b_\Adv) \gets \langle \Com(1^\secp,0),\Adv_\secp\rangle\right] \\ &- \Pr\left[b_\Adv = 0 : (\state_\Com,b_\Adv) \gets \langle \Com(1^\secp,1),\Adv_\secp\rangle\right]\bigg| = \negl(\secp).
    \end{align*}
    The commitment satisfies \emph{statistical binding} if for any unbounded adversarial committer $\{\Adv_\secp\}_{\secp \in \bbN}$,
    \begin{align*}
        \Pr\left[\Ver(\ket{\psi}_\Rec,b,s) = \top : \begin{array}{r} (\state_\Adv,\ket{\psi}_\Rec) \gets \langle \Adv_\secp,\Rec(1^\secp)\rangle \\ b \gets \{0,1\} \\ s \gets \Adv_\secp(\state_\Adv,b)\end{array}\right] \leq \frac{1}{2} + \negl(\secp).
    \end{align*}
\end{definition}

\begin{theorem}
    OSP implies a commitment between a classical committer and quantum receiver.
\end{theorem}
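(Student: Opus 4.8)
I will instantiate the construction sketched in the technical overview. On input $b$, the committer runs $\secp$ independent executions of a chosen-input OSP (\cref{def:OSP}) in the role of the sender with input $b$, obtains sender outputs $s_1,\dots,s_\secp\in\{0,1\}$, and sets $\state_\Com\coloneqq(b,s_1,\dots,s_\secp)$; the honest receiver runs the corresponding receiver instances and keeps the single-qubit outputs on registers $\cR_1,\dots,\cR_\secp$. To open, the committer sends $(b,s_1,\dots,s_\secp)$, and $\Ver$ accepts iff for every $i\in[\secp]$ the projection of $\cR_i$ onto $H^b\ketbra*{s_i}H^b$ succeeds, i.e.\ the projector $\bigotimes_{i\in[\secp]}H^b\ketbra*{s_i}H^b$ does not reject the receiver's state.

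\textbf{Correctness and hiding.} For honest parties the executions are independent, so the receiver's state is $\bigotimes_i\ket*{\psi_i}$; OSP correctness gives $\E[\|\bra{s_i}H^b\ket{\psi_i}\|]=1-\negl(\secp)$, hence (as the amplitude is at most $1$) also $\E[\|\bra{s_i}H^b\ket{\psi_i}\|^2]=1-\negl(\secp)$, and multiplying the $\secp$ independent per-qubit acceptance probabilities gives $\Pr[\Ver=\top]=1-\negl(\secp)$. For hiding I will use hybrids $\Hyb_0,\dots,\Hyb_\secp$ where $\Hyb_j$ uses sender input $1$ in the first $j$ executions and sender input $0$ in the remaining ones, so $\Hyb_0$ commits to $0$ and $\Hyb_\secp$ commits to $1$; adjacent hybrids differ only in the sender's input to a single OSP execution, so a distinguisher yields an OSP security adversary via a reduction that plays the external OSP sender for the active execution and simulates the others with their fixed inputs, and a union bound over $\secp$ steps closes the argument.

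\textbf{Binding.} The crucial observation is that the commit phase uses only classical communication, so it is an interaction by local operations and classical communication (LOCC) between the committer and the honest receiver, starting from no shared entanglement; hence, conditioned on the classical transcript $\tau$, the receiver's $\secp$-qubit register is in tensor product with the committer's workspace, i.e.\ it is in a fixed (possibly mixed) state $\rho_\tau$ determined by $\tau$. In the binding game the challenge bit $b$ and then the opening $s$ are produced only after $\tau$ is fixed, and — again using that the committer's post-interaction register is in product with $\rho_\tau$ — the opening is independent of the receiver's register given $\tau$, so the committer's success probability conditioned on $\tau$ is at most
\[\tfrac12\max_{s\in\{0,1\}^\secp}\Tr(\ketbra*{s}\rho_\tau)+\tfrac12\max_{s\in\{0,1\}^\secp}\Tr(H^{\otimes\secp}\ketbra*{s}H^{\otimes\secp}\rho_\tau)=\tfrac12\Tr(P_\tau\rho_\tau),\]
where $P_\tau=\ketbra*{u}+\ketbra*{v}$ with $\ket*{u}$ a computational basis vector and $\ket*{v}=H^{\otimes\secp}\ket*{w}$ for a computational basis vector $\ket*{w}$. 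Since $\|\bra{u}\ket{v}\|=2^{-\secp/2}$ for all such $u,w$, the rank-$2$ operator $P_\tau$ has eigenvalues $1\pm 2^{-\secp/2}$, hence $\Tr(P_\tau\rho_\tau)\le 1+2^{-\secp/2}$, and averaging over $\tau$ gives $\Pr[\Ver=\top]\le\tfrac12+2^{-\secp/2-1}=\tfrac12+\negl(\secp)$.

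\textbf{Main obstacle.} Correctness and hiding are routine; the work is in the binding analysis, and within it the one point deserving care is the LOCC observation that, because OSP communication is classical, the receiver's register (conditioned on the transcript) carries no entanglement with the committer, so the committer's opening can only depend on the transcript and not on a register entangled with the receiver. Absent this, a committer sharing an EPR pair with the receiver could open both ways; with it, the rest is the elementary spectral identity $\|\ketbra*{u}+\ketbra*{v}\|=1+\|\bra{u}\ket{v}\|$ together with the mutual unbiasedness bound $\|\bra{u}\ket{v}\|=2^{-\secp/2}$, plus a black-box invocation of OSP correctness and security.
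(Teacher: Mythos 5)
Your proposal is correct and follows essentially the same route as the paper: the identical $\secp$-fold OSP construction, a standard hybrid argument for hiding, and sum-binding via the $2^{-\secp}$ overlap between a standard-basis and a Hadamard-basis string. The only difference is that you make explicit two points the paper leaves implicit — that classical-only communication with a classical committer keeps the honest receiver's register unentangled from (and, given the transcript, independent of) the committer, and the spectral bound $\|\ketbra*{u}+\ketbra*{v}\| = 1+|\braket{u}{v}|$ — which is welcome rigor but not a different argument.
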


\begin{proof}

    Given any OSP, consider the following commitment scheme.

    \begin{itemize}
        \item For all $i \in [\secp]$, execute an OSP between the committer playing the role of the sender with input $b$ and the receiver playing the role of the OSP receiver. Each execution results in an output $s_i$ for the sender a state $\ket{\psi_i}$ for the receiver. Define $\state_\Com \coloneqq (b,s_1,\dots,s_\secp)$ and $\ket{\psi}_\Rec \coloneqq \ket{\psi_1} \otimes \dots \otimes \ket{\psi_\secp}$.
        \item $\Open(\state_\Com)$ outputs $b, s = (s_1,\dots,s_\secp)$.
        \item $\Ver(\ket{\psi}_\Rec,b,s)$ measures $\ket{\psi}_\Rec$ in the standard basis if $b=0$ or the Hadamard basis if $b=1$, and accepts if the outcome is $s$.
    \end{itemize}

    Correctness is straightforward from the correctness of OSP, and hiding follows from the security of OSP via a standard hybrid argument. 

    To show binding, consider any state $\ket{\psi}_\Rec$ that the receiver could have after interacting with $\Adv_\secp$ in the commit phase. Note that for any $s_0,s_1$,
    \[\Pr[\Ver(\ket{\psi}_\Rec,0,s_0) = \top] = \| \bra{s_0}\ket{\psi}_\Rec \|^2 \ \ \text{and} \ \ \Pr[\Ver(\ket{\psi}_\Rec,1,s_1) = \top] = \| \bra{s_1}H^{\otimes \secp}\ket{\psi}_\Rec \|^2.\]
    Let 
    \[s_0 \coloneqq \max_s\left\{\Pr[\Ver(\ket{\psi}_\Rec,0,s) = \top]\right\} \ \ \text{and } \ \ s_1 \coloneqq \max_s\left\{\Pr[\Ver(\ket{\psi}_\Rec,1,s) = \top]\right\},\] meaning that in the binding game, the adversary's optimal strategy given $b$ is to output $s_b$. 

    Then since \[\|\bra{s_0}H^{\otimes \secp}\ket{s_1}\|^2 = \frac{1}{2^\secp} = \negl(\secp),\] we conclude that \[\Pr[\Ver(\ket{\psi}_\Rec,0,s_0) = \top] + \Pr[\Ver(\ket{\psi}_\Rec,1,s_1) = \top] \leq 1+\negl(\secp), \] which completes the proof.

\end{proof}

\begin{remark}
    Even though the above commitment satisfies a standard notion of statistical binding (sum binding), it is unclear if the effective committed bit can actually be extracted (even inefficiently) from an adversarial committer. Note that this is always possible in both the fully-classical and fully-quantum setting. Intuitively, the difficulty here arises from the fact that the span of receiver states that can be (perfectly) opened to 0 and the span of receiver states that be (perfectly) opened to 1 are equivalent: the span is the entire Hilbert space. Thus, we cannot define an inefficient projective measurement that successfully distinguishes these spaces. We mention this both out of curiosity, and also because in the next section we will actually use an inefficiently-extractable commitment scheme. Since we are not able to show that our OSP-based commitment is inefficiently-extractable, we rely on a one-way-function based commitment for this purpose.
\end{remark}

\subsection{OT from OSP}\label{subsec:OT-from-OSP}

First, we define a notion of game-based oblivious transfer (OT) with one-sided statistical security, where one party (the receiver) is classical and the other party (the sender) is quantum. For security, we allow the receiver to be an unbounded algorithm, but require the adversarial sender to be QPT. 

\begin{definition}[OT with one-sided statistical security between classical receiver and quantum sender]\label{def:SSP-OT}
    Oblivious transfer (OT) with one-sided statistical security is a protocol that takes place between a PPT receiver $R$ with input $b \in \{0,1\}$ and a QPT sender $S$:

    \[(r,(r_0,r_1),\tau) \gets \langle R(1^\secp,b),S(1^\secp)\rangle,\]

    where $r_0,r_1 \in \{0,1\}^n$ is the sender's output, $r \in \{0,1\}^n$ is the receiver's output, and $\tau$ is the (classical) transcript of communication produced during the protocol. We require the following properties.

    \begin{itemize}
        \item \textbf{Correctness.} For any $b \in \{0,1\}$, 
        \[\Pr\left[r = r_b : (r,(r_0,r_1),\tau) \gets \langle R(1^\secp,b),S(1^\secp)\rangle\right] = 1-\negl(\secp).\]
        We say the protocol satisfies \emph{perfect} correctness if the above probability is equal to 1.
        \item \textbf{Security against a QPT sender.} For any QPT adversarial sender $\{\Adv_\secp\}_{\secp \in \bbN}$,
        \begin{align*}&\Big|\Pr\left[b_\Adv = 0 : (r, b_\Adv, \tau) \gets \langle R(1^\secp,0),\Adv_\secp\rangle\right]\\ &- \Pr\left[b_\Adv = 0 : (r, b_\Adv, \tau) \gets \langle R(1^\secp,1),\Adv_\secp\rangle\right]\Big| = \negl(\secp).\end{align*}
    \end{itemize}

    We consider two different security properties that may hold against an unbounded receiver. 

    \begin{itemize}

        \item \textbf{Search security against an unbounded receiver.} For any unbounded adversarial receiver $\Adv$,
        \[\Pr[r_\Adv = (r_0,r_1) : (r_\Adv,(r_0,r_1),\tau) \gets \langle \Adv,S(1^\secp)\rangle] = \negl(\secp).\]

        \item \textbf{Indistinguishability security against an unbounded receiver.} For this definition, we restrict our attention to the case where $n=1$ (i.e.\ $r_0,r_1$ are single bits). There exists an unbounded extractor $\Ext$ such that for any unbounded adversarial receiver $\Adv$,
        \[\Bigg|\Pr[r_{1-b,\Adv} = r_{1-b} : \begin{array}{r}((r_{0,\Adv},r_{1,\Adv}),(r_0,r_1),\tau) \gets \langle \Adv,S(1^\secp)\rangle \\ b \gets \Ext(\tau)\end{array}] - \frac{1}{2}\Bigg| = \negl(\secp).\]
    \end{itemize}
    
\end{definition}

We first provide a construction of OT with search security against an unbounded receiver, assuming only a differentiated-bit CSG with indistinguishability security (defined in \cref{def:CSG} and constructed from OSP in \cref{subsec:encrypted-CNOT}). 

Since search security is a somewhat non-standard notion of OT security, we also show how to tweak the construction to obtain indistinguishability security by additionally using an \emph{inefficiently-extractable} commitment (defined in \cref{sec:classical-com}). We use a fully-classical (post-quantum) inefficiently-extractable commitment, which is known from any (post-quantum) one-way function \cite{10.1007/BF00196774}. Our protocols are given in \cref{fig:OT-from-OSP} and \cref{fig:OT-from-OSP-OWF}.

\protocol{OT with search security from OSP}{A protocol for OT with search security against an unbounded receiver, assuming only claw-state generators with indistinguishability security (which follow from OSP).}{fig:OT-from-OSP}{

\begin{itemize}

    \item For each $i \in [2\secp]$:
    \begin{itemize}
        \item The classical receiver $R$ and quantum sender $S$ engage in a differentiated-bit CSG with indistinguishability security where $R$ plays the role of the sender $\CSG.\Sen$ and $S$ plays the role of the receiver $\CSG.\Rec$:
        \[((x_{0,i},x_{1,i},z_i),\ket{\psi_i}) \gets \langle \CSG.\Sen(1^\secp,1),\CSG.\Rec(1^\secp,1)\rangle.\]
        In the case that both parties are honest, the state $\ket{\psi_i}$ obtained by the sender is (negligibly close to)
        \[\ket{\psi_i} = \frac{1}{\sqrt{2}}\left(\ket{0,x_{0,i}} + (-1)^{z_i}\ket{1,x_{1,i}}\right).\]
    
    \end{itemize}

    \item $S$ samples a uniformly random subset $T \subset [2\secp]$ of size $\secp$ and sends $T$ to $R$. Define $\overline{T} \coloneqq [2\secp] \setminus T$.
    \item Given input bit $b$, $R$ defines $b_i \coloneqq b \oplus x_{0,i} \oplus x_{1,i}$ for all $i \in \overline{T}$, and defines $r$ to be the concatenation of the bits $\{x_{0,i}\}_{i \in \overline{T}}$. $R$ sends $\{x_{0,i},x_{1,i},z_i\}_{i \in T}, \{b_i\}_{i \in \overline{T}}$ to $S$, and outputs $r$.
    \item For each $i \in T$, $S$ attempts to project $\ket{\psi_i}$ onto the state $\frac{1}{\sqrt{2}}(\ket{0,x_{0,i}} + (-1)^{z_i}\ket{1,x_{1,i}})$. If any measurement fails, output $r_0,r_1 \gets \{0,1\}^n$ sampled uniformly at random. Otherwise, for all $i \in \overline{T}$, measure $\ket{\psi_i}$ in the standard basis to obtain bits $(c_i,y_i)$, and define $r_{0,i} \coloneqq y_i \oplus b_i \cdot c_i$ and $r_{1,i} \coloneqq y_i \oplus (1 \oplus b_i) \cdot c_i$. Finally, define $r_0$ to be the concatenation of all bits $\{r_{0,i}\}_{i \in \overline{T}}$ and $r_1$ to be the concatenation of all bits $\{r_{1,i}\}_{i \in \overline{T}}$, and output $(r_0,r_1)$.
    
    \end{itemize}
}

\protocol{OT with indistinguishability security from OSP plus OWF}{A protocol for OT with indistinguishability security against an unbounded receiver, assuming CSG plus one-way functions. The differences from the OT protocol in \cref{fig:OT-from-OSP} are highlighted in red.}{fig:OT-from-OSP-OWF}{
\begin{itemize}

    \item For each $i \in [2\secp]$:
    \begin{itemize}
        \item The classical receiver $R$ and quantum sender $S$ engage in a differentiated-bit CSG with indistinguishability security where $R$ plays the role of the sender $\CSG.\Sen$ and $S$ plays the role of the receiver $\CSG.\Rec$:

        \[((x_{0,i},x_{1,i},z_i),\ket{\psi_i}) \gets \langle \CSG.\Sen(1^\secp,1),\CSG.\Rec(1^\secp,1)\rangle.\]
        In the case that both parties are honest, the state $\ket{\psi_i}$ obtained by the sender is (negligibly close to)
        \[\ket{\psi_i} = \frac{1}{\sqrt{2}}\left(\ket{0,x_{0,i}} + (-1)^{z_i}\ket{1,x_{1,i}}\right).\]
        \item \textcolor{red}{$R$ and $S$ engage in an inefficiently-extractable commitment, with $R$ playing the role of $\Com$ and $S$ playing the role of $\Rec$:
        \[(\state_{\Com,i},\state_{\Rec,i},\tau_i) \gets \langle \Com(1^\secp, (x_{0,i},x_{1,i},z_i)),\Rec(1^\secp)\rangle.\]}
    \end{itemize}
    
    \item $S$ samples a uniformly random subset $T \subset [2\secp]$ of size $\secp$ and sends $T$ to $R$. Define $\overline{T} \coloneqq [2\secp] \setminus T$.
    \item \textcolor{red}{For each $i \in T$, $R$ computes the opening to its commitment $((x_{0,i},x_{1,i},z_i),w_i) \gets \Open(\state_{\Com,i})$.} Then, given input bit $b$, $R$ defines $b_i \coloneqq b \oplus x_{0,i} \oplus x_{1,i}$ for all $i \in \overline{T}$, and defines \textcolor{red}{$r \coloneqq \bigoplus_{i \in \overline{T}} x_{0,i}$.}
     Finally, $R$ sends $\{x_{0,i},x_{1,i},z_i,\textcolor{red}{w_i}\}_{i \in T}, \{b_i\}_{i \in \overline{T}}$ to $S$, and outputs $r$.
    \item For each $i \in T$, \textcolor{red}{$S$ checks that $\Ver(\state_\Rec,(x_{0,i},x_{1,i},z_i),w_i) = \top$,} and attempts to project $\ket{\psi_i}$ onto the state $\frac{1}{\sqrt{2}}(\ket{0,x_{0,i}} + (-1)^{z_i}\ket{1,x_{1,i}})$. If any \textcolor{red}{verification check} or measurements fails, output $r_0,r_1 \gets \{0,1\}$ sampled uniformly at random. Otherwise, for all $i \in \overline{T}$, measure $\ket{\psi_i}$ in the standard basis to obtain bits $(c_i,y_i)$, and define $r_{0,i} \coloneqq y_i \oplus b_i \cdot c_i$ and $r_{1,i} \coloneqq y_i \oplus (1 \oplus b_i) \cdot c_i$. Finally, define \textcolor{red}{\[r_0 \coloneqq \bigoplus_{i \in \overline{T}} r_{0,i}, \ \ \ r_1 \coloneqq \bigoplus_{i \in \overline{T}} r_{1,i},\]} and output $(r_0,r_1)$. \end{itemize}

}

\begin{theorem}
    The protocol in \cref{fig:OT-from-OSP}  (resp. \cref{fig:OT-from-OSP-OWF}) satisfies OT with search (resp. indistinguishability) security against an unbounded receiver (\cref{def:SSP-OT}). That is, OSP implies OT with search security, while OT plus one-way functions implies OT with indistinguishability security.
\end{theorem}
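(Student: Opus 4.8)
The plan is to verify in turn: correctness, security against a malicious (QPT) sender for both protocols, search security against an unbounded receiver for \cref{fig:OT-from-OSP}, and indistinguishability security against an unbounded receiver for \cref{fig:OT-from-OSP-OWF}. The two halves of the statement then follow, since a differentiated-bit CSG with indistinguishability security follows from OSP (\cref{subsec:encrypted-CNOT}) and a post-quantum inefficiently-extractable commitment follows from one-way functions (\cref{sec:classical-com}). Correctness and sender-security are the routine parts. For correctness, CSG correctness puts every honest $\ket{\psi_i}$ within negligible trace distance of $\frac{1}{\sqrt 2}(\ket{0,x_{0,i}}+(-1)^{z_i}\ket{1,x_{1,i}})$ with $x_{0,i}\neq x_{1,i}$, so the $T$-checks (and in \cref{fig:OT-from-OSP-OWF} the commitment openings) pass with overwhelming probability, and for $i\in\overline T$ the standard-basis measurement yields $(c_i,y_i)$ with $y_i=x_{c_i,i}$; plugging $b_i=b\oplus x_{0,i}\oplus x_{1,i}$ into the definition of $r_{b,i}$ gives $r_{b,i}=x_{0,i}$, hence $r_b=r$ after a union bound over the $2\secp$ instances. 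For sender-security, the only $b$-dependent part of the receiver's messages is $\{b_i=b\oplus x_{0,i}\oplus x_{1,i}\}_{i\in\overline T}$, and I would show these hide $b$ by a standard hybrid over the $2\secp$ CSG instances, replacing each $x_{0,i}\oplus x_{1,i}$ by a uniform bit using the (computational) indistinguishability security of the CSG — and, in \cref{fig:OT-from-OSP-OWF}, the computational hiding of the commitments, which reveal nothing about the triples $(x_{0,i},x_{1,i},z_i)$ — where the reduction targeting instance $i$ guesses that $i\in\overline T$ so that it is never asked to reveal a claw / opening for the challenge instance.

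The real content is the two unbounded-receiver arguments, which I would set up uniformly. Since the adversary is unbounded it may be assumed classical, so the honest sender holds pure, mutually unentangled, tensor-product states $\ket{\psi_1},\dots,\ket{\psi_{2\secp}}$ that are fixed by the classical transcript before the subset $T$ is drawn. The crucial quantum fact is that for $i\in\overline T$ the bit $c_i$ is the outcome of a fresh standard-basis measurement that the sender performs only after all communication has ended, so for any instance whose state is close to a state of the form $\frac{1}{\sqrt 2}(\ket{0,x_0}+(-1)^z\ket{1,x_1})$, $c_i$ is close to uniform and independent of the adversary's entire view, and the $c_i$ are mutually independent. To extend this from checked to unchecked instances I would use a standard cut-and-choose bound: declare $i$ \emph{bad} if $\ket{\psi_i}$ is more than statistical distance $1/4$ from every such claw state (for \cref{fig:OT-from-OSP-OWF}, from the claw state of its own committed triple, where statistical binding forces the revealed opening of a $T$-instance to equal the committed triple); a bad instance then fails its check with constant probability, badness is determined before $T$ is sampled, so conditioned on all $\secp$ checks passing, with all but negligible probability at most $\sqrt\secp$ of the indices in $\overline T$ are bad, and hence at least $\secp-\sqrt\secp$ are ``good''. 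If instead some check fails, the sender outputs uniformly random pads and the receiver's task is information-theoretically hopeless, so one may condition on all checks passing.

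For \cref{fig:OT-from-OSP}, since $r_{0,i}\oplus r_{1,i}=c_i$, an adversary that outputs both $r_0$ and $r_1$ in particular outputs $\{c_i\}_{i\in\overline T}$; restricted to the good indices this is a string of $\Omega(\secp)$ bits, each close to uniform and independent of the transcript, and the bits are mutually independent, so it has $\Omega(\secp)$ bits of min-entropy conditioned on the transcript and is reproduced with only negligible probability. For \cref{fig:OT-from-OSP-OWF}, I would define the extractor $\Ext(\tau)$ to inefficiently extract the committed triple $(x_{0,i}^*,x_{1,i}^*,z_i^*)$ from each commitment transcript, set $\beta_i\coloneqq b_i\oplus x_{0,i}^*\oplus x_{1,i}^*$ for every $i\in\overline T$ (discarding indices whose commitment is not a valid commitment to any value), and output the majority value of $\{\beta_i\}_{i\in\overline T}$. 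For a good (``consistent'') index $i$ one checks directly that $\{r_{0,i},r_{1,i}\}=\{x_{0,i}^*,\, x_{0,i}^*\oplus c_i\}$, with $r_{1-\beta_i,i}=x_{0,i}^*\oplus c_i$; therefore $r_{1-b}=\bigoplus_{i\in\overline T}r_{1-b,i}$ equals a $\tau$-computable string XORed with $\bigoplus_{i\in\overline T:\,\beta_i=b}c_i$, and since $b$ is the majority value the set $\{i\in\overline T:\beta_i=b\}$ has size at least $\secp/2$, hence contains $\Omega(\secp)$ good indices, so this XOR — and therefore $r_{1-b}$ — is negligibly close to uniform and independent of $\tau$. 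As an unbounded receiver can compute $b=\Ext(\tau)$ itself, it therefore guesses $r_{1-b}$ with probability $\frac12\pm\negl(\secp)$, as required.

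I expect the main obstacle to be the cut-and-choose together with the extractor design for \cref{fig:OT-from-OSP-OWF}, because the setup here is backwards from a usual cut-and-choose: it is the \emph{checker} (the honest sender) who holds the quantum states, while the \emph{committer} (the receiver) only makes classical claims about them, and the commitments for the ``evaluated'' indices $\overline T$ are never opened, so those indices can only be certified indirectly by combining statistical binding with the cut-and-choose. In addition, obtaining a negligible rather than an inverse-polynomial security bound is precisely what forces the ``XOR of $\Omega(\secp)$ fresh hidden bits'' structure, which in turn dictates the majority-based definition of $\Ext$ and the need to argue that this choice of $b$ always makes $r_{1-b}$ aggregate many hidden $c_i$'s — note that the one-instance indistinguishability security of the CSG, which would suffice against a computationally bounded receiver, is useless here, which is exactly why the cut-and-choose is needed. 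One last point to get right is that restricting to a \emph{classical} unbounded adversary avoids any entanglement between the sender's states and the adversary and hence any gentle-measurement bookkeeping; a quantum unbounded adversary could be handled identically after reading every ``$\ket{\psi_i}$ is close to a claw state'' as ``$\ket{\psi_i}$ is close to the product of a claw state and an adversary-side state'', via gentle measurement.
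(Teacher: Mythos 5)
Your proposal is correct and follows essentially the same route as the paper's own proof. You define the same unbounded extractor — run the commitment extractor on the transcripts of the unopened commitments and take the majority of $\beta_i \coloneqq b_i \oplus x_{0,i}^* \oplus x_{1,i}^*$ over $i \in \overline{T}$ — and you establish unbounded-receiver security via the same two-step argument: (i) a cut-and-choose claim, using statistical binding to pin each $T$-instance's projection to its committed claw state (or, for \cref{fig:OT-from-OSP}, to the best-fit claw state), concluding that with all but negligible probability a linear fraction of indices in $\overline{T}$ hold states close to the relevant claw state; and (ii) the observation that for each good index in $\{i : \beta_i = b\}$ (of which there are $\Omega(\secp)$ by the majority rule), the sender's fresh standard-basis measurement produces a near-uniform bit $c_i$ independent of the transcript and of the other indices, so the XOR over $\overline{T}$ is negligibly close to uniform. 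The paper phrases step (ii) as a product of ``biases'' (using gentle measurement to bound each single-bit bias by a constant $<1$) while you phrase it as min-entropy of $\{c_i\}$ for the search case and XOR-of-many-near-uniform-bits for the indistinguishability case; these are interchangeable. Your remarks — that the unbounded receiver may be taken classical so the sender's $\ket{\psi_i}$ are unentangled product states, and that the cut-and-choose is ``backwards'' in that the checker rather than the committer holds the quantum state, which is exactly why the OWF-based commitments are needed to make the extractor well-defined — match the structure and intent of the paper's proof.
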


\begin{proof}
    We write out the proofs for the protocol in \cref{fig:OT-from-OSP-OWF}, and note that essentially the same proof strategy works to show security of the protocol in \cref{fig:OT-from-OSP}. In what follows, we prove correctness, security against a QPT sender, and indistinguishability security against an unbounded receiver.

    \paragraph{Correctness.} We need to show that for any $b \in \{0,1\}$, the receiver's output $r$ is equal to the sender's output $r_b$ with overwhelming probability. To see this, it suffices to show that for all $i \in \overline{T}$, it holds that (with overwhelming probability) $x_{0,i} = y_i \oplus (b \oplus b_i) \cdot c_i$, where $b_i = b \oplus x_{0,i} \oplus x_{1,i}$, and $(c_i,y_i)$ are obtained by measuring a state that is (negligibly close to)

    \[\frac{1}{\sqrt{2}}\left(\ket{0,x_{0,i}} + \ket{1,x_{1,i}}\right)\] 

    in the standard basis. This is easy to check by plugging in the two choices of $(c_i,y_i) = (0,x_{0,i})$ and $(c_i,y_i) = (1,x_{1,i})$.

    \paragraph{Security against a QPT sender.} This follows directly from the indistinguishability security of the CSG. Indeed, the only information that the sender obtains about $b$ are the bits $b_i = b \oplus x_{0,i} \oplus x_{1,i}$, and the security of CSG implies that each bit $x_{0,i} \oplus x_{1,i}$ is computationally unpredictable to the QPT sender.

    \paragraph{Security against an unbounded receiver.} First, we define the unbounded-time extractor $\Ext$. Given the classical transcript $\tau$ of the OT protocol, $\Ext$ identifies the transcripts $\tau_{\Com,1},\dots,\tau_{\Com,2\secp}$ of the commitment protocols, along with the set $T$ and bits $\{b_i\}_{i \in \overline{T}}$. It runs the unbounded-time extractor for the commitment scheme on each $\{\tau_{\Com,i}\}_{i \in \overline{T}}$ to obtain $\{(x_{0,i},x_{1,i},z_i)\}_{i \in \overline{T}}$, and outputs $b = \maj\{b_i \oplus x_{0,i} \oplus x_{1,i}\}_{i \in \overline{T}}$.

    For any $\{0,1\}$-valued random variable $s$ sampled during the protocol, we define 

    \[\Bias(s) \coloneqq \Big| \Pr[s = 0 \ |\  \state_R] - \frac{1}{2}\Big|,\]

    where $\state_R$ is the final state of the receiver at the conclusion of the protocol. That is, $\Bias(s)$ captures that advantage that any unbounded receiver has in guessing the bit $s$ at the conclusion of the protocol. Our goal is to show that $\Bias(r_{1\oplus b}) = \negl(\secp)$. Towards showing this, we prove the following claim.

    \begin{claim}
        Consider running the commitment scheme extractor on all commitment transcripts $\{\tau_{\Com,i}\}_{i \in [2\secp]}$ to obtain $\{(x_{0,i},x_{1,i},z_i)\}_{i \in [2\secp]}$. Define 

         \[\ket{\psi_i} \coloneqq \frac{1}{\sqrt{2}}\left(\ket{0,x_{0,i}} + \ket{1,x_{1,i}}\right)\] and let $\ket{\psi_i'}$ be the state obtained by the sender after the $i$'th CSG protocol (note that these states are unentangled with each other). Then with probability $1-\negl(\secp)$, either (1) one of the sender's measurements fails, or (2) for at least $7\secp/4$ indices $i \in [2\secp]$, it holds that $|\bra{\psi_i}\ket{\psi_i'}|^2 \geq 1-1/25$.
        
    \end{claim}

    \begin{proof}
        Suppose that condition (2) does not hold. We will show that this implies that with probability $1-\negl(\secp)$, one of the sender's measurements fails, which suffices to show the claim. 
        
        First note that due to the extractability property of the commitment, there is only $\negl(\secp)$ probability that the receiver can open any commitment $i$ to a value other than the value $(x_{0,i},x_{1,i},z_i)$ that was extracted. Thus, with all but $\negl(\secp)$ probability, the sender attempts to project each $\ket{\psi_i'}$ for $i \in T$ onto the state $\ket{\psi_i}$ defined above. Since $T$ is sampled as a uniformly random subset of $[2\secp]$ of size $\secp$, a straightforward tail bound shows that, except with $\negl(\secp)$ probability, there will be $\Omega(\secp)$ many indices $i$ such that $i \in T$ and $|\bra{\psi_i}\ket{\psi_i'}|^2 < 1-1/25$. For each such $i$, the probability that the sender's measurments fails is at least $1/25$, and these probabilities are independent. Thus the overall probability of \emph{not} failing is at most $(1-1/25)^{\Omega(\secp)} \leq e^{-\Omega(\secp)} = \negl(\secp)$.
    \end{proof}

    Now, in the case that one of the sender's measurements fails, they output uniformly random bits $r_0,r_1$, meaning $\Bias(r_{1\oplus b}) = 0$. If not, the above claim shows that it suffices to consider the scenario where condition (2) holds, with only a $\negl(\secp)$ difference in the final bound on $\Bias(r_{1\oplus b})$.
    
    In this case, let $S \subseteq \overline{T}$ be the set of indices such that (1) $|\bra{\psi_i}\ket{\psi_i'}|^2 \geq 1-1/25$ and (2) $b = b_i \oplus x_{0,i} \oplus x_{1,i}$. By the claim above and the definition of $b$, we know that $|S| \geq \secp/4$. For each $i \in S$, we have that $r_{1\oplus b,i} = y_i \oplus (1\oplus b \oplus b_i) \cdot c_i = y_i \oplus (1\oplus x_{0,i} \oplus x_{1,i}) \cdot c_i$, where $(c_i,y_i)$ are obtained by measuring $\ket{\psi'_i}$ in the standard basis. Note that if instead, $(c_i,y_i)$ were obtained by measuring 

    \[\ket{\psi_i} = \frac{1}{\sqrt{2}}\left(\ket{0,x_{0,i}} + \ket{1,x_{1,i}}\right)\]

    in the standard basis, then $r_{1\oplus b,i}$ would be a uniformly random bit (even conditioned on $\state_R$). To see this, note that in the case that $x_{0,i} = x_{1,i}$ we have that $r_{1\oplus b,i} = y_i \oplus c_i$ where $y_i$ is fixed and $c_i$ is a uniformly random bit, while in the case that $x_{0,i} \neq x_{1,i}$ we have that $r_{1\oplus b,i} = y_i$ where $y_i$ is a uniformly random bit. Thus, by Gentle Measurement (\cref{lemma:gentle-measurement}), we have that the total variation distance between $r_{1\oplus b, i}$ and a uniformly random bit is $\leq 2\sqrt{1/25} = 2/5$, which implies that $\Bias(r_{1\oplus b,i}) \leq 4/5$. Finally, noting that each $r_{1\oplus b,i}$ is an independent random variable (since the states $\ket{\psi'_i}$ are all unentangled), we have that \[\Bias(r_{1\oplus b}) \leq \prod_{i \in S}\Bias(r_{1\oplus b,i}) \leq (4/5)^{\secp/4} = \negl(\secp),\] which completes the proof.
    
\end{proof}

Next, we argue that combining this result with  \cite[Theorem 3.1]{10.1007/978-3-031-15979-4_6} allows us to conclude the following.

\begin{corollary}
    Perfectly correct OSP does not exist in the quantum random oracle model.
\end{corollary}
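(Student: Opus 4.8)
The plan is to derive a contradiction: assuming a perfectly-correct OSP protocol exists in the quantum random oracle model (QROM), I will build a perfectly-correct, computationally-secret key agreement protocol in the QROM between one classical and one quantum party, which is ruled out by \cite[Theorem 3.1]{10.1007/978-3-031-15979-4_6}. The argument splits into two stages: (i) pushing \emph{perfect} correctness through the chain of reductions that turns an OSP into an OT between a classical receiver and a quantum sender, and (ii) the textbook conversion of OT into key agreement.

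For stage (i), I would trace through the compositions already established in the paper: perfectly-correct OSP $\to$ perfectly-correct encrypted phase ($\cref{subsec:blind-delegation}$) $\to$ perfectly-correct blind classical delegation ($\cref{cor:blind}$) $\to$ perfectly-correct encrypted CNOT $\to$ perfectly-correct differentiated-bit CSG with indistinguishability security ($\cref{subsec:encrypted-CNOT}$) $\to$ perfectly-correct OT with search security against an unbounded receiver (the protocol of $\cref{fig:OT-from-OSP}$). The key observation is that each of these reductions loses correctness \emph{only} because it uses its underlying building block ``up to negligible trace distance''; if that building block is perfectly correct the produced state is exact, and all remaining honest-party operations (Clifford gates, $\CNOT$s, standard/Hadamard-basis measurements on honest states, and the arithmetic done by the classical party) are exact. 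In particular, in the cut-and-choose step of $\cref{fig:OT-from-OSP}$, a perfectly-correct CSG makes the honest sender's states exactly $\tfrac{1}{\sqrt{2}}(\ket{0,x_{0,i}} + (-1)^{z_i}\ket{1,x_{1,i}})$, so the sender's test-set projections succeed with probability $1$ and the ``output uniformly random bits'' abort branch is never reached; the correctness identity $x_{0,i} = y_i \oplus (b \oplus b_i)\cdot c_i$ then holds deterministically, so the honest receiver's output $r$ equals the honest sender's $r_b$ with probability exactly $1$. Throughout, the classical party (the OT receiver) remains fully classical, and every construction relativizes, so the whole reduction goes through in the QROM.

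For stage (ii), I would use the standard OT-to-key-agreement reduction: the classical party plays the OT receiver with a uniformly random choice bit $b$, the quantum party plays the OT sender with uniformly random bit inputs $(r_0,r_1)$, the receiver then announces $b$ in the clear, and the shared key is $r_b$ (which the sender also knows). Perfect correctness of the OT gives perfect agreement, and the ``one classical, one quantum'' structure is preserved. For secrecy against a QPT eavesdropper, I would argue by contradiction: any QPT eavesdropper predicting $r_b$ from the transcript $\tau$ and $b$ can be combined with receiver-side security (the transcript computationally hides $b$, which follows from security against a QPT sender) to get a QPT malicious receiver that learns \emph{both} $r_0$ and $r_1$ with non-negligible probability --- run honestly with choice $b$ to obtain $r_b$, then feed $(\tau, 1-b)$ to the eavesdropper to guess $r_{1-b}$, using the fact that $\tau$ with choice $b$ is indistinguishable (jointly with $(r_0,r_1)$) from $\tau$ with choice $1-b$ --- contradicting search security against an unbounded (hence QPT) receiver.

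Composing the two stages, a perfectly-correct OSP in the QROM yields a perfectly-correct, computationally-secret key agreement protocol in the QROM between a classical and a quantum party, contradicting \cite[Theorem 3.1]{10.1007/978-3-031-15979-4_6}; this proves the corollary. I do not expect a deep obstacle here: the only real care is in stage (i), namely verifying that \emph{every} intermediate reduction is genuinely exact on honest inputs when fed a perfectly-correct primitive --- especially that the abort branches introduced by cut-and-choose are never triggered by honest parties --- and in stage (ii), keeping track that the eavesdropper and all the relevant reductions are QPT so that the computational security guarantees of the OT (and hence of the OSP) actually apply.
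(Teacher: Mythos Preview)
Your proposal is correct and follows essentially the same route as the paper: build perfectly-correct OT (classical receiver, quantum sender) from perfectly-correct OSP via the CSG/encrypted-CNOT chain and the protocol of \cref{fig:OT-from-OSP}, convert OT to key agreement by having the receiver reveal $b$ and share $r_b$, and invoke \cite[Theorem 3.1]{10.1007/978-3-031-15979-4_6}. The paper's proof is terser---it just asserts that perfectly-correct OSP yields a perfectly-correct CSG and hence perfectly-correct OT---whereas you explicitly walk the chain and check that the honest-party abort branch in the cut-and-choose is never triggered; but the argument is the same. One small slip: in this OT the sender has no inputs---$(r_0,r_1)$ are \emph{outputs} of the protocol---so drop the phrase ``with uniformly random bit inputs $(r_0,r_1)$'' in stage~(ii).
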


\begin{proof}
    First, it is straightforward to verify that when the protocol in \cref{fig:OT-from-OSP} is instantiated with a perfectly correct claw-state generator (which can be constructed from a perfectly-correct OSP), then the resulting protocol is a perfectly-correct OT with search security against an unbounded receiver. 

    Next, we confirm that our notion of perfectly-correct OT (with a classical receiver) implies perfectly-correct key agreement (with one classical party). The key agreement protocol will have Alice sample a random $b \gets \{0,1\}$ and act as the OT receiver in a protocol with Bob, obtaining string $r_b$. Then Alice sends her bit $b$ to Bob, and their shared key is defined to be $r_b$. Any QPT eavesdropper that can output $r_b$ with noticeable probability can be used to break the search security of the OT. Indeed, by security against a QPT sender, such an eavesdropper will \emph{also} be able to output $r_{1-b}$ with noticeable probability if the final bit of the transcript they are given is flipped to $1-b$. Thus, an adversarial receiver can first run the OT protocol honestly with the sender to obtain $(b,r_b)$, and then run the eavesdropper on $1-b$ to obtain $r_{1-b}$ with noticeable probability.

    Finally, we appeal to \cite[Theorem 3.1]{10.1007/978-3-031-15979-4_6}, which establishes that perfectly-correct key agreement between one classical and one quantum party does not exist in the quantum random oracle model. Thus, we conclude that perfectly-correct OSP does not exist in the quantum random oracle model. 
\end{proof}

\subsection{PKE from two-round OSP}\label{subsec:PKE-from-OSP}

In this section, we show that two-round OSP implies CPA-secure public-key encryption with a classical key generator / decryptor and a quantum encryptor. In particular, the scheme has classical keys and ciphertexts. First, we define CPA-secure public-key encryption.

\begin{definition}[CPA-secure PKE]\label{def:PKE}
    A CPA-secure public-key encryption scheme with classical key generator and quantum encryptor consists of the following algorithms.

    \begin{itemize}
        \item $\KeyGen(1^\secp) \to (\pk,\sk)$: The PPT key generation algorithm takes as input the security parameter and outputs a public key $\pk$ and secret key $\sk$.
        \item $\Enc(\pk,m) \to \ct$: The QPT encryption algorithm takes as input the public key and a plaintext bit $m \in \{0,1\}$, and outputs a (classical) ciphertext $\ct$.
        \item $\Dec(\sk,\ct) \to m$: The PPT decryption algorithm takes as input the secret key and a ciphertext, and outputs a plaintext $m$.  
    \end{itemize}

    It should satisfy the following properties.

    \begin{itemize}
        \item \textbf{Correctness}: For any $m \in \{0,1\}$,
        \[\Pr\left[\Dec(\sk,\ct) = m : (\pk,\sk) \gets \KeyGen(1^\secp), \ct \gets \Enc(\pk,m)\right] = 1-\negl(\secp).\]
        \item \textbf{Security}: For any QPT adversary $\{\Adv_\secp\}_{\secp \in \bbN}$, 
        \begin{align*}
            &\Big| \Pr\left[\Adv_\secp(\pk,\ct) = 0 : (\pk,\sk) \gets \KeyGen(1^\secp), \ct \gets \Enc(\pk,0)\right]\\ &- \Pr\left[\Adv_\secp(\pk,\ct) = 0 : (\pk,\sk) \gets \KeyGen(1^\secp), \ct \gets \Enc(\pk,1)\right]\Big| = \negl(\secp).
        \end{align*}
        
    \end{itemize}
\end{definition}

We build our scheme from any two-round differentiated-bit CSG with indistinguishability security (defined in \cref{def:CSG} and constructed from two-round OSP in \cref{subsec:encrypted-CNOT}). Letting $(\CSG.\Sen,\CSG.\Rec,\CSG.\Dec)$ be the CSG algorithms, the scheme is constructed as follows.

\begin{itemize}
    \item $\KeyGen(1^\secp)$: Sample $(\msg_S,\state_S) \gets \CSG.\Sen(1^\secp,1)$ and output $\pk = \msg_S$ and $\sk = \state_S$.
    \item $\Enc(\pk,m)$: Given a message bit $m \in \{0,1\}$, sample $(\ket{\psi},\msg_R) \gets \CSG.\Rec(\msg_S)$ and measure $\ket{\psi}$ in the standard basis to obtain bits $(b,x_b)$. Output $\ct \coloneqq (\msg_R,b,m \oplus x_b)$.
    \item $\Dec(\sk,\ct)$: Parse $\ct = (\msg_R,b,m')$, run $(x_0,x_1,z) \gets \CSG.\Dec(\state_S,\msg_R)$, and output $m = m' \oplus x_b$.
\end{itemize}

\begin{theorem}
    The scheme described above satisfies \cref{def:PKE}.
\end{theorem}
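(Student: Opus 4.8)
The plan is to verify the three properties of Definition~\ref{def:PKE} by reduction to the corresponding properties of the underlying two-round differentiated-bit CSG with indistinguishability security. Correctness is essentially immediate: by the correctness of the CSG, the state $\ket{\psi}$ output by $\CSG.\Rec$ is negligibly close to $\frac{1}{\sqrt 2}(\ket{0,x_0} + (-1)^z\ket{1,x_1})$, where $(x_0,x_1,z)$ are the values that $\CSG.\Dec(\state_S,\msg_R)$ recovers (up to negligible probability). Measuring this state in the standard basis yields a pair $(b,x_b)$ where $b$ is uniform and $x_b$ is the $b$-th member of the claw. Hence the ciphertext $(\msg_R,b,m\oplus x_b)$ is decrypted by recomputing $(x_0,x_1,z)$ from $\state_S$ and $\msg_R$ and XORing out $x_b$, which returns $m$ with overwhelming probability. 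A gentle-measurement step (Lemma~\ref{lemma:gentle-measurement}) can be invoked to pass from ``negligibly close to'' the ideal state to ``exactly'' the ideal state at the cost of a negligible additive loss.

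For security, I would show that a QPT adversary $\{\Adv_\secp\}$ distinguishing $\Enc(\pk,0)$ from $\Enc(\pk,1)$ can be used to predict $x_0\oplus x_1$ (for the single-bit case $n=1$), contradicting the indistinguishability security of the CSG. The reduction $\cB$ receives $\msg_1 = \msg_S$ from the CSG challenger, honestly runs $\CSG.\Rec(\msg_S)$ to obtain $(\ket{\psi},\msg_R)$, and samples a uniform bit $b'\gets\{0,1\}$. Observe that the ciphertext distribution $(\msg_R,b,m\oplus x_b)$ with $b$ obtained by measuring $\ket{\psi}$ is perfectly simulable without knowing $x_b$ in the following sense: the pair $(\msg_R,b)$ together with a \emph{uniformly random} bit $c$ has exactly the same distribution as $(\msg_R,b,x_b)$ when $b$ is the measured bit, because conditioned on $(\msg_R,b)$ the value $x_b$ is itself uniform from the adversary's view (the key observation is that distinguishing encryptions of $0$ and $1$ is exactly equivalent to predicting $x_b$ given $(\msg_R,b)$, since the third ciphertext component is a one-time pad of $m$ by $x_b$). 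So $\cB$ feeds $\Adv_\secp$ the tuple $(\pk,(\msg_R,b',g))$ for a fixed or random $g$ and extracts from $\Adv$'s output a guess $\widehat{x_{b'}}$ for $x_{b'}$. Then $\cB$ measures its own state $\ket{\psi}$ in the standard basis to obtain $(b'',x_{b''})$; if $b''\neq b'$ (probability $1/2$), $\cB$ now knows $x_{b''} = x_{1-b'}$ and outputs $\widehat{x_{b'}}\oplus x_{1-b'}$ as its guess for $x_0\oplus x_1$. When $\Adv$ has advantage $\nonnegl(\secp)$, this yields a $\nonnegl(\secp)$ advantage in predicting $x_{0}\oplus x_{1}$, contradicting indistinguishability security. (For general $n>1$ a hybrid over the $n$ bit-positions of the claw reduces to the same single-position statement; alternatively one can phrase the reduction directly for the $i$-th coordinate.)

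The main obstacle I anticipate is making precise the step that ``distinguishing encryptions of $0$ and $1$ is equivalent to predicting $x_b$ given $(\msg_R,b)$,'' i.e.\ that the reduction can honestly generate a view for $\Adv$ that is statistically identical to a real ciphertext \emph{even though it cannot compute $x_b$ itself}. The subtlety is that the measured bit $b$ and the value $x_b$ are jointly produced by the standard-basis measurement of $\ket{\psi}$, and $\cB$ needs to hand $\Adv$ a value in the third coordinate that is consistent with $x_b$ being uniform and independent. This works because, over the randomness of the standard-basis measurement of $\frac{1}{\sqrt2}(\ket{0,x_0}+(-1)^z\ket{1,x_1})$, the outcome $(b,x_b)$ has $x_b$ determined by $b$ and the (unknown to $\cB$) claw values; but from $\Adv$'s perspective the pair $(b, m\oplus x_b)$ is just $b$ together with a bit that is uniform unless one already knows $x_b$. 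Thus $\cB$ can place an arbitrary fixed bit (say $0$) in the third coordinate, run $\Adv$, and interpret its output-bit guess appropriately: whichever guess $\Adv$ makes for the underlying message is precisely a guess for $x_b$ (XORed with the placed bit). Getting this bookkeeping exactly right — and handling the conditioning on $b=b''$ versus $b\neq b''$ in the final XOR — is where the care is needed; the rest is routine.
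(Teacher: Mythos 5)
Your proposal is correct and follows essentially the same route as the paper: reduce CPA security to predicting $x_b$ given $(\msg_S,\msg_R,b)$ for a uniform $b$, then build a CSG adversary that honestly runs $\CSG.\Rec$, samples a fresh $b'$, queries the PKE adversary for a guess of $x_{b'}$, measures $\ket{\psi}$ to obtain $(b'',x_{b''})$, and when $b'\neq b''$ outputs the XOR as a guess for $x_0\oplus x_1$. The ``main obstacle'' you flag is handled in the paper exactly as you propose: since $\ket{\psi}$ is negligibly close to the uniform superposition over $(0,x_0)$ and $(1,x_1)$, the measured $b$ is negligibly close to uniform, so the reduction may simply sample $b'$ independently; and because the construction already fixes $n=1$, no hybrid over coordinates is needed.
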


\begin{proof}
    First, we show correctness. By the correctness of $\CSG$, the state $\ket{\psi}$ sampled by $\Enc$ is (negligibly close to) $\frac{1}{\sqrt{2}}(\ket{0,x_0} + (-1)^z\ket{1,x_1})$, where the bits $(x_0,x_1,z)$ can be recovered by running $(x_0,x_1,z) \gets \CSG.\Dec(\state_S,\msg_R)$. Thus, given $b$, the decryptor can determine $x_b$, and unmask $m \oplus x_b$ to recover the message.

    Next, we show security. It suffices to show that $x_b$ is unpredictable, that is, for any QPT adversary $\{\Adv_\secp\}_{\secp \in \bbN}$, it holds that 

    \[\Bigg| \Pr\left[\Adv_\secp(\msg_S,\msg_R,b) = x_b : \begin{array}{r} (\msg_S,\state_S) \gets \CSG.\Sen(1^\secp,1) \\ (\ket{\psi},\msg_R) \gets \CSG.\Rec(\msg_S) \\ (x_0,x_1,z) \gets \CSG.\Dec(\state_S,\msg_R) \\ b \gets \{0,1\}\end{array}\right] - \frac{1}{2}\Bigg| = \negl(\secp).\]

    Indeed, note that in the real scheme, $b$ is obtained by measuring $\ket{\psi}$ in the standard basis, but since $\ket{\psi}$ is (negligibly close to) a uniform superposition over $(0,x_0)$ and $(1,x_1)$, we can imagine just sampling a random bit $b \gets \{0,1\}$, and $\Adv_\secp$ will have negligibly close to the same advantage.

    Now, suppose there exists $\Adv_\secp$ that has noticeable advantage in the above game. We use such an adversary to break the indistinguishability security of the $\CSG$. The $\CSG$ adversary will do the following:
    \begin{itemize}
        \item Receive $\msg_S$ from its challenger.
        \item Run $(\ket{\psi},\msg_R) \gets \CSG.\Rec(\msg_S)$.
        \item Sample $b \gets \{0,1\}$ and run $\Adv_\secp(\msg_S,\msg_R,b)$ to obtain a guess for $x_b$.
        \item Measure $\ket{\psi}$ in the standard basis to obtain $(b',x_{b'})$.
        \item If $b = b'$, output a random bit, and otherwise, if $b \neq b'$, output $x_b \oplus x_{b'}$.
    \end{itemize}

    Note that with probability 1/2, the $\CSG$ adversary makes a uniformly random guess, and otherwise, the $\CSG$ makes a guess for $x_0 \oplus x_1$ with noticeable advantage, due to the guarantee on $\Adv_\secp$. Thus the $\CSG$ adversary has a noticeable advantage in breaking the indistinguishability security of the $\CSG$.
     
\end{proof}

\ifsubmission
\bibliographystyle{plain}
\else
\bibliographystyle{alpha}
\fi

\bibliography{abbrev3,crypto,main}

\appendix
\section{Adaptation of the proof from \cite{NZ23}}\label{sec:NZ23}

In this section, we overview how the proof of \cref{thm:main-NZ23} goes, following arguments made in \cite{NZ23}. Fix some computationally nonlocal strategy $\mathscr{C}$ that associates to every $G \in \cG[\mathsf{NO}]$ a set \[\left(\ket{\psi_G},\left\{A^{q_A}_{s_A,G}\right\}_{s_A}, \left\{B^{q_B}_{s_B,G}\right\}_{s_B}\right),\]
and fix any sequence $\{G[H]_\secp\}_\secp$ where $G[H]_\secp \in \cG[\mathsf{NO}]_\secp$ for each $\secp \in \bbN$. From now on, we will drop the parameterization by $\secp$ and refer to a single game $G[H]$ parameterized by a Hamiltonian $H$, as well as a fixed strategy $(\ket{\psi},\{A^{q_A}_{s_A}\}_{s_A},\{B^{q_B}_{s_B}\}_{s_B})$, when we really mean an infinite sequence of games, Hamiltonians and strategies. To prove the theorem, we must show that

\[\E_{(q_A,q_B) \gets Q}\sum_{s_A,s_B}V(q_A,q_B,s_A,s_B)\bra{\psi}A^{q_A,\dagger}_{s_A}B^{q_B}_{s_B}A^{q_A}_{s_A}\ket{\psi} \leq \omega_C + \negl(\secp).\]

Following \cite{NZ23}, we now introduce some notation. First, for any observable $O$, we define $\langle O \rangle \coloneqq \bra{\psi}O\ket{\psi}$, where $\ket{\psi}$ is the state defined by the strategy we fixed above. Next, since $q_B \in \{0,1\}$ is just a single bit, and $s_B \in \{0,1\}^\secp$ is a $\secp$-bit string, we will define \[\{Z_\gamma\}_{\gamma \in \{0,1\}^\secp} \coloneqq \{B^0_{s_B}\}_{s_B},~~~,\{X_\gamma\}_{\gamma \in \{0,1\}^\secp} \coloneqq \{B^1_{s_B}\}_{s_B},\] and assume wlog that there exist unitaries $U_Z,U_X$ such that \[\{Z_\gamma\}_\gamma = \{U_Z^\dagger (\ketbra{\gamma}{\gamma} \otimes I)U_Z\}_\gamma, ~~~ \{X_\gamma\}_\gamma = \{U_X^\dagger (\ketbra{\gamma}{\gamma} \otimes I)U_X\}_\gamma.\] Finally, we define sets of binary observables $\{Z(a)\}_{a \in \{0,1\}^\secp}, \{X(a)\}_{a \in \{0,1\}^\secp}$ as follows:

\begin{align*}
    &Z(a) \coloneqq \sum_{\gamma} (-1)^{a \cdot \gamma} U_Z^\dagger (\ketbra{\gamma}{\gamma} \otimes I)U_Z, \\
    &X(a) \coloneqq \sum_{\gamma} (-1)^{a \cdot \gamma} U_X^\dagger (\ketbra{\gamma}{\gamma} \otimes I)U_X.
\end{align*}

Now, we adapt several lemmas from \cite{NZ23}.

\begin{lemma}[Adaptation of Lemma 36 from \cite{NZ23}]\label{lemma:BQP-anticommutation}
   Suppose the strategy succeeds in the CHSH subtest with probability at least $\omega_\CHSH - \epsilon$. Then
   \[\E_{\substack{(a,b) \gets D^1_Q, \\ q_A \coloneqq (\CHSH,(a,b,0))}}\sum_{s_A}\langle A^{q_A,\dagger}_{s_A} \cdot |\{Z(a),X(b)\}|^2 \cdot A^{q_A}_{s_A} \rangle \leq O(\epsilon).\]
\end{lemma}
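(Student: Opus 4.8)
\textbf{Proof plan for \cref{lemma:BQP-anticommutation}.} The plan is to reduce this statement to the general anticommutation theorem for computationally nonlocal strategies in the CHSH game, namely \cref{thm:CHSH-anticommutation}. The key observation is that the CHSH subtest of the game $G[H]$ is, after appropriate bookkeeping, an instance of the CHSH game played with a computationally nonlocal strategy, where the ``Bob'' observables are exactly $B^0 = Z(a)$ and $B^1 = X(b)$ for the relevant $(a,b)$ drawn from $D^1_Q$. Concretely, in the CHSH question branch Alice receives $q_A = (\CHSH, (a,b,x))$ with $x \gets \{0,1\}$, Bob receives $q_B = y \gets \{0,1\}$, and the predicate accepts iff $s_A + z = x \cdot y$ where $z = (1-y)(a\cdot s_B) + y(b\cdot s_B)$. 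The first step is to rewrite this: for fixed $(a,b)$, define Bob's answer bit to be $a \cdot s_B$ when $y = 0$ and $b \cdot s_B$ when $y = 1$; in observable language this is measuring $Z(a)$ (resp.\ $X(b)$), since $Z(a) = \sum_\gamma (-1)^{a \cdot \gamma} U_Z^\dagger(\ketbra{\gamma}{\gamma}\otimes I)U_Z$ has $\pm 1$ outcomes corresponding to the parity $a \cdot s_B$. Similarly Alice's effective question is just $x \in \{0,1\}$, with her operation being $A^{(\CHSH,(a,b,x))}_{s_A}$ followed by extracting $s_A$. So, conditioned on any fixed $(a,b)$ in the support of $D^1_Q$, the strategy restricted to these questions is precisely a computationally nonlocal strategy for the standard CHSH game, with Bob observables $Z(a), X(b)$.

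The second step handles the conditioning and the averaging. \cref{thm:CHSH-anticommutation} applies to a \emph{single} CHSH strategy achieving value $\omega_\CHSH - \delta$ and concludes $\sum_{s_A}\langle A^{x,\dagger}_{s_A}\{B^0,B^1\}^2 A^x_{s_A}\rangle \le O(\delta) + \negl(\secp)$. Here we do not have a single $(a,b)$ but a distribution; the trick is to consider the ``averaged'' CHSH strategy that first samples $(a,b) \gets D^1_Q$ (this sampling can be folded into the shared state / into Alice's operation as classical randomness, since $D^1_Q$ is efficiently sampleable), and then plays as above. The value of this averaged strategy in CHSH is exactly the probability the original strategy succeeds in the CHSH subtest (conditioned on being in that branch), which is assumed to be $\ge \omega_\CHSH - \epsilon$. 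One must check that this averaged object is still a \emph{valid} computationally nonlocal strategy — i.e.\ that Bob's question $y$ is still computationally hidden in the state passed from Alice to Bob. This follows because the computational-hiding property in \cref{def:comp-non-local} is precisely inherited from the security guarantee that underlies $\mathscr{C}$ (Alice's question $q_A$, which now encodes $(a,b,x)$, is hidden), and $D^1_Q$ sampling is public randomness that does not depend on Bob's question. Applying \cref{thm:CHSH-anticommutation} to this averaged strategy with $\delta = \epsilon$ then yields $\sum_{s_A}\langle A^{q_A,\dagger}_{s_A}\{Z(a),X(b)\}^2 A^{q_A}_{s_A}\rangle \le O(\epsilon) + \negl(\secp)$ where now the left side includes the expectation over $(a,b) \gets D^1_Q$ folded in, which is exactly the claimed bound (absorbing $\negl$ into $O(\epsilon)$ when $\epsilon \ge 1/\poly$, or keeping it explicit).

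The third step is a minor technical point about the parameter $x$: the lemma statement fixes $q_A \coloneqq (\CHSH, (a,b,0))$, i.e.\ only the $x = 0$ branch. In the averaged CHSH strategy above, Alice's question is $x \in \{0,1\}$. To match the statement, note that \cref{thm:CHSH-anticommutation} gives the anticommutation bound ``for any $x \in \{0,1\}$'' separately, so in particular for $x = 0$; alternatively, one observes that the $x=1$ case differs from $x=0$ only by Alice XORing her output, which does not affect $A^{q_A,\dagger}_{s_A}(\cdot)A^{q_A}_{s_A}$ as a channel up to relabeling $s_A$, so the two quantities are equal. Either way the restriction to $q_A = (\CHSH,(a,b,0))$ is without loss of generality.

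The main obstacle I anticipate is the second step: carefully verifying that folding the distribution $D^1_Q$ over $(a,b)$ into a single computationally nonlocal strategy preserves the computational-hiding requirement of \cref{def:comp-non-local}, and correctly identifying that the ``value in the CHSH subtest'' of the original strategy equals the CHSH value of this folded strategy. One must be careful that $(a,b)$ is used both to define Alice's question and to define which Bob observable is the ``relevant'' one — but since Bob always performs the same physical measurement ($U_Z$ or $U_X$ depending on $y$) and only the \emph{postprocessing} ($a \cdot s_B$ vs.\ $b \cdot s_B$) depends on $(a,b)$, and this postprocessing is part of the verifier's predicate rather than Bob's operation, the framing as a clean CHSH instance with observables $Z(a), X(b)$ goes through. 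Everything else is bookkeeping and an invocation of \cref{thm:CHSH-anticommutation}, which the excerpt permits us to assume.
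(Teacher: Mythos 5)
Your proof is correct and takes essentially the same route as the paper: identify the CHSH branch of $G[H]$ as a computationally nonlocal strategy for the CHSH game (with Bob observables $Z(a), X(b)$ and the original hiding guarantee inherited directly) and invoke \cref{thm:CHSH-anticommutation}. The paper's proof is a one-line appeal to that theorem ``for any fixed $a,b$'', whereas you fold the $D^1_Q$ averaging into a single CHSH strategy; this handles the averaging step slightly more carefully, but it is fundamentally the same argument.
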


\begin{proof}
    Following \cite{NZ23}, for any fixed $a,b$, this can be seen as an instance of a computationally nonlocal strategy applied to the CHSH game. Thus, the claim follows from \cref{thm:CHSH-anticommutation}.  
\end{proof}

\begin{lemma}[Adaptation of Lemma 37 from \cite{NZ23}]\label{lemma:BQP-commutation}
    Suppose the strategy succeeds in the commutation subtest with probability at least $1-\epsilon$. Then
    \[\E_{\substack{(a,b) \gets D^0_Q, \\ q_A \coloneqq (\Comm,(a,b))}}\sum_{s_A}\langle A^{q_A,\dagger}_{s_A} \cdot |[Z(a),X(b)]|^2 \cdot A^{q_A}_{s_A} \rangle \leq O(\epsilon).\]
\end{lemma}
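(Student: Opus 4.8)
\textbf{Proof proposal for \cref{lemma:BQP-commutation}.}

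The plan is to mirror the strategy of \cref{lemma:BQP-anticommutation}: reduce the commutation subtest to an instance (or family of instances) of a computationally nonlocal strategy for a simple two-player game whose optimal value forces the relevant observables to commute, and then conclude via a Pauli-braiding style rigidity bound. First I would unpack the commutation subtest from the definition of $G[H]$: on a Commutation question, Alice receives $q_A = (\Comm,(a,b))$ with $(a,b) \gets D^0_Q$ (so $a \cdot b = 0$), Bob receives a single bit $y$, and the verifier checks $(s_A)_y = z$ where $z = (1-y)(a \cdot s_B) + y(b \cdot s_B)$. In terms of the observables $Z(a), X(b)$ defined above, passing this subtest with probability $1-\epsilon$ means that Alice can, after her measurement, predict both the outcome of $Z(a)$ (when $y=0$) and the outcome of $X(b)$ (when $y=1$) applied to Bob's post-state; a standard calculation converts this joint-predictability into the statement that $Z(a)$ and $X(b)$ approximately commute on the relevant subspace, i.e.\ $\langle A^{q_A,\dagger}_{s_A} |[Z(a),X(b)]|^2 A^{q_A}_{s_A}\rangle$ is small on average.

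The key steps, in order, would be: (1) For each fixed pair $(a,b)$ with $a\cdot b = 0$, define the two-player ``commutation game'' $G_{a,b}$ with $Q$-distribution placing Alice's question deterministically at $(a,b)$ and Bob's question uniform over $\{0,1\}$, and predicate exactly as in the Commutation branch. Observe that the optimal \emph{nonlocal} value of $G_{a,b}$ is $1$, achieved precisely when Bob's two observables commute on the shared state (this is the content of the corresponding lemma in \cite{NZ23}, and the anticommutation-test analogue here is \cref{thm:CHSH-anticommutation}). (2) Argue that $\mathscr{C}$ restricted to Commutation questions yields a computationally nonlocal strategy for $G_{a,b}$: the Alice operation is $U^{q_A}_G$ followed by the specified projection, and the indistinguishability requirement of \cref{def:comp-non-local} is inherited from $\mathscr{C}$ (Alice's Commutation question, like any Alice question, is computationally hidden in the register passed to Bob). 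Crucially, because the commutation rigidity statement is purely about the structure of Bob's observables and the state, and the KLVY-soundness reasoning in \cref{thm:KLVY} already shows any adversary induces a valid computationally nonlocal strategy, the ``information-theoretic'' rigidity argument of \cite{NZ23} transfers verbatim to the computationally nonlocal setting — exactly as noted for the CHSH case. (3) Apply this $(a,b)$-wise bound and take the expectation over $(a,b) \gets D^0_Q$; since the per-pair success probability averages to at least $1-\epsilon$ over $D^0_Q$ by hypothesis, a Markov-type argument (or simply linearity, since the quantity bounded is an average and the rigidity bound is linear in the per-pair defect) gives $\E_{(a,b)\gets D^0_Q, q_A := (\Comm,(a,b))} \sum_{s_A} \langle A^{q_A,\dagger}_{s_A} |[Z(a),X(b)]|^2 A^{q_A}_{s_A}\rangle \leq O(\epsilon)$.

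The main obstacle I anticipate is step (2): carefully verifying that the rigidity argument of \cite{NZ23}'s Lemma 37 — which in their setting is stated for genuine nonlocal strategies or for the QFHE-compiled setting — goes through for our more general notion of computationally nonlocal strategy without losing anything beyond a negligible additive term. Concretely, their proof likely uses the commutation condition together with manipulations of the form ``replace Alice question $(a,b)$ by a related question and invoke indistinguishability of Bob's view,'' and one must confirm each such replacement is covered by the single hypothesis in \cref{def:comp-non-local} (possibly via the POVM-valued strengthening noted in the \cref{def:comp-non-local} remark). Since \cite{10.1007/978-3-031-38554-4_6} already handled the analogous CHSH/anticommutation statement in essentially this generality (which is why \cref{thm:CHSH-anticommutation} is available to us), I expect the adaptation to be routine but notationally delicate; the remaining arithmetic converting ``passes with probability $1-\epsilon$'' into ``$|[Z(a),X(b)]|^2$ small'' is a standard computation and I would not grind through it here.
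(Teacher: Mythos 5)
Your proposal is correct and follows essentially the same route as the paper: fix $(a,b)$, view the restricted strategy as a computationally nonlocal strategy for the commutation game of \cite{NZ23} (their Lemma 23), and average over $(a,b) \gets D^0_Q$. The one simplification you missed is that the obstacle you flag in step (2) does not arise: as the paper notes, the commutation-game analysis never invokes blindness of Alice's question (the game effectively has no Alice question to hide), so the \cite{NZ23} argument transfers with no change rather than requiring a delicate indistinguishability-based adaptation.
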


\begin{proof}
    Again, for any fixed $a,b$, this can be seen as an instance of a computationally nonlocal strategy applied to the commutation game described in \cite[Section 3]{NZ23}. Thus, this follows from \cite[Lemma 23]{NZ23}, which analyzes the commutation game. Since this analysis does not use the blindness of QFHE at all (which is not required because the commutation game has no Alice question), there is no change to the proof in our setting.
\end{proof}

\begin{lemma}[Adaptation of Lemma 38 from \cite{NZ23}]\label{lemma:subtests}
    Suppose the strategy succeeds in the CHSH subtest with probability at least $\omega_\CHSH = \epsilon$, and in the commutation subtest with probability at least $1-\epsilon$. Then
    \[\E_{\substack{(a,b) \gets D_Q, \\ q_A \coloneqq \Tel}}\sum_{s_A}\langle A^{q_A,\dagger}_{s_A} \cdot |(-1)^{a \cdot b}Z(a)X(b)-X(b)Z(a)|^2 \cdot A^{q_A}_{s_A} \rangle \leq O(\epsilon) + \negl(\secp).\]
\end{lemma}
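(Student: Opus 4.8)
The plan is to combine the two previous lemmas, Lemma~\ref{lemma:BQP-anticommutation} and Lemma~\ref{lemma:BQP-commutation}, in exactly the way \cite{NZ23} combine their Lemma~36 and Lemma~37 to get their Lemma~38. The key algebraic observation is that the quantity we must bound in the Teleport case, namely $|(-1)^{a\cdot b}Z(a)X(b) - X(b)Z(a)|^2$, splits depending on the parity $a\cdot b$: when $a\cdot b = 0$ it equals $|[Z(a),X(b)]|^2$, and when $a\cdot b = 1$ it equals $|\{Z(a),X(b)\}|^2$. Since $D_Q = U_\secp\times D_X$ is a product distribution, conditioning on $a\cdot b$ simply re-weights to $D^0_Q$ (when $a\cdot b=0$) or $D^1_Q$ (when $a\cdot b=1$), up to constant factors coming from the probability that $a\cdot b$ equals a given value. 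So I would first write
\[\E_{(a,b)\gets D_Q}\sum_{s_A}\langle A^{q_A,\dagger}_{s_A}\cdot |(-1)^{a\cdot b}Z(a)X(b)-X(b)Z(a)|^2\cdot A^{q_A}_{s_A}\rangle\]
as a convex combination (with bounded coefficients) of the anticommutator expectation over $D^1_Q$ and the commutator expectation over $D^0_Q$.

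The second ingredient is that the operators $A^{q_A}_{s_A}$ appearing here carry a \emph{different} Alice question $q_A$ than those in the two input lemmas: in Lemma~\ref{lemma:subtests} the relevant $q_A$ is $\Tel$, whereas Lemma~\ref{lemma:BQP-anticommutation} has $q_A = (\CHSH,(a,b,0))$ and Lemma~\ref{lemma:BQP-commutation} has $q_A=(\Comm,(a,b))$. This is where the computational-nonlocality hypothesis enters: the Bob observables $Z(a),X(b)$ act only on $\cH_\cB$, and the post-Alice state on $\cH_\cB$ is computationally indistinguishable across the three choices of Alice question (Teleport, CHSH, Commutation). The operator $|(-1)^{a\cdot b}Z(a)X(b)-X(b)Z(a)|^2$ is a fixed (for given $a,b$) positive-semidefinite operator of bounded norm on $\cH_\cB$, so $\sum_{s_A}\langle A^{q_A,\dagger}_{s_A}(\cdot)A^{q_A}_{s_A}\rangle$ is the expectation of a bounded POVM outcome on the state passed to Bob; by the consequence of Definition~\ref{def:comp-non-local} recorded in the remark following it, swapping $q_A=\Tel$ for $q_A=(\CHSH,(a,b,0))$ or $q_A=(\Comm,(a,b))$ changes this expectation by at most $\negl(\secp)$. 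Averaging over $a,b$ keeps this a negligible change. After this substitution, the bounds from Lemma~\ref{lemma:BQP-anticommutation} and Lemma~\ref{lemma:BQP-commutation} apply directly, yielding $O(\epsilon)$ for each piece and hence $O(\epsilon)+\negl(\secp)$ overall.

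Concretely, the steps in order: (1) expand $|(-1)^{a\cdot b}Z(a)X(b)-X(b)Z(a)|^2$ and observe it equals $|[Z(a),X(b)]|^2$ on the event $a\cdot b=0$ and $|\{Z(a),X(b)\}|^2$ on the event $a\cdot b=1$; (2) decompose the expectation over $D_Q$ into the two conditional expectations over $D^0_Q$ and $D^1_Q$, tracking the (constant) reweighting factors; (3) in each conditional expectation, use the computational-indistinguishability consequence of computational nonlocality to replace the Teleport Alice operators $A^{\Tel}_{s_A}$ by the CHSH (resp.\ Commutation) Alice operators, incurring only a $\negl(\secp)$ error; (4) apply Lemma~\ref{lemma:BQP-anticommutation} (resp.\ Lemma~\ref{lemma:BQP-commutation}) to bound the resulting expression by $O(\epsilon)$; (5) collect terms.

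The main obstacle I anticipate is step~(3): making the application of computational indistinguishability fully rigorous. One has to argue that the map ``prepare $\ket\psi$, apply Alice's unitary $U^{q_A}$ for the appropriate question, measure $s_A$, then apply the $\cH_\cB$-only POVM with outcome weight $\|(\cdot)^{1/2}A^{q_A}_{s_A}\ket\psi\|^2 / \|A^{q_A}_{s_A}\ket\psi\|^2$'' is a legitimate QPT distinguisher acting only on $\cH_\cB$, so that the hypothesis of Definition~\ref{def:comp-non-local} (in its POVM form from the subsequent remark) applies. There is a subtlety that the questions $(\CHSH,(a,b,0))$ and $(\Comm,(a,b))$ themselves depend on $(a,b)$, which is sampled by the verifier and not known to Bob — but this is fine, since the indistinguishability is required to hold for \emph{every} sequence of question pairs $x_{0,\secp},x_{1,\secp}$, so we may fix $(a,b)$, apply the guarantee pointwise, and then average; this is exactly how \cite{NZ23} handle it and I would follow their treatment. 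The remaining algebra in steps~(1)--(2) is routine operator manipulation of the standard CHSH-rigidity identity, so I do not expect difficulty there.
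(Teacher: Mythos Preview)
Your proposal is correct and matches the paper's approach: the paper's proof simply says to follow \cite[Lemma~38]{NZ23}, noting that the steps where one switches the Alice question (their equations (192) and (195)) now go through because the strategy is computationally nonlocal --- which is exactly your decomposition by the parity $a\cdot b$ into commutator/anticommutator terms, followed by invoking the indistinguishability of post-Alice states on $\cH_\cB$ to replace $q_A=\Tel$ by the CHSH or Commutation question before applying Lemmas~\ref{lemma:BQP-anticommutation} and~\ref{lemma:BQP-commutation}. Your concern about step~(3) is slightly overcomplicated: the operator $|(-1)^{a\cdot b}Z(a)X(b)-X(b)Z(a)|^2$ has bounded norm on $\cH_\cB$, so after rescaling it defines a two-outcome POVM, and the remark following Definition~\ref{def:comp-non-local} applies directly to $\sum_{s_A}\langle A^{q_A,\dagger}_{s_A}(\cdot)A^{q_A}_{s_A}\rangle$ without needing to unpack things further.
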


\begin{proof}
    Here, we crucially use the fact that the strategy is computationally nonlocal to switch the Alice questions in the above lemmas to $\Tel$. Indeed, this lemma is implied by \cref{lemma:BQP-anticommutation} and \cref{lemma:BQP-commutation} by following the proof of \cite[Lemma 38]{NZ23}, where equations (192) and (195) follow in our setting from the fact that the strategy is computationally nonlocal.
\end{proof}

\begin{lemma}[Adaptation of Lemma 39 from \cite{NZ23}]\label{lemma:conditioned}
    For any $u_1,u_2 \in \{0,1\}$, it holds that
    \[\E_{\substack{(a,b = (e_i + e_j)) \gets D_Q, \\ q_A \coloneqq \Tel}}\sum_{\substack{s_A: (s_A)_i = u_1,\\ (s_A)_j = u_2}}\langle A^{q_A,\dagger}_{s_A} \cdot |(-1)^{a \cdot b}Z(a)X(b)Z(a)-X(b)| \cdot A^{q_A}_{s_A} \rangle \leq O(\epsilon^{1/2}) + \negl(\secp).\]
\end{lemma}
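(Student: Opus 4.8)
The plan is to derive this from \cref{lemma:subtests} by passing from the ``operator anti-commutator norm'' bound to a ``conditioned on outcome bits'' statement, following \cite[Lemma 39]{NZ23} essentially verbatim. First I would recall that $Z(a)$ and $X(b)$ are binary observables, so the quantity controlled in \cref{lemma:subtests} is $\langle A_{s_A}^{q_A,\dagger} \cdot |(-1)^{a\cdot b}Z(a)X(b) - X(b)Z(a)|^2 \cdot A_{s_A}^{q_A}\rangle$, summed over all $s_A$. I want the analogous bound but with the inner expression replaced by the (un-squared) operator $|(-1)^{a\cdot b}Z(a)X(b)Z(a) - X(b)|$ and with the sum over $s_A$ restricted to those with $(s_A)_i = u_1, (s_A)_j = u_2$. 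The trick is that $Z(a)$ is a (self-adjoint) unitary, so left-multiplying the vector $(-1)^{a\cdot b}Z(a)X(b) - X(b)Z(a)$ applied to $A_{s_A}^{q_A}\ket{\psi}$ by $Z(a)$ does not change its norm: $\|Z(a)\big((-1)^{a\cdot b}Z(a)X(b) - X(b)Z(a)\big)v\| = \|\big((-1)^{a\cdot b}X(b) - Z(a)X(b)Z(a)\big)v\| = \|\big((-1)^{a\cdot b}Z(a)X(b)Z(a) - X(b)\big)v\|$ up to a global sign, where the last step again uses $Z(a)^2 = I$. So the squared-norm quantity from \cref{lemma:subtests} equals $\sum_{s_A}\langle A_{s_A}^{q_A,\dagger}\cdot |(-1)^{a\cdot b}Z(a)X(b)Z(a) - X(b)|^2 \cdot A_{s_A}^{q_A}\rangle$.

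Next I would move from the squared operator to the un-squared one and restrict the sum. Write $M \coloneqq (-1)^{a\cdot b}Z(a)X(b)Z(a) - X(b)$; this is a Hermitian operator with $\|M\| \le 2$, so $|M| \preceq \sqrt{2}\sqrt{|M|^2} $ in the sense that $\langle v | \,|M|\, | v\rangle \le \sqrt{2}\,\langle v | M^2 | v\rangle^{1/2}\|v\|^{1/2}$ by operator monotonicity / Cauchy–Schwarz (more precisely $\langle v||M||v\rangle \le \|M\|^{1/2}\langle v|M^2|v\rangle^{1/2}$). Applying this to each vector $v = A_{s_A}^{q_A}\ket{\psi}$, summing over the restricted set $\{s_A : (s_A)_i = u_1, (s_A)_j = u_2\}$, and using Cauchy–Schwarz over the (at most one, since the $A^{q_A}_{s_A}$ are part of a projective measurement outcome structure) sum together with Jensen over the $(a,b)\gets D_Q$ expectation, converts the $O(\epsilon) + \negl(\secp)$ bound on the squared quantity into an $O(\epsilon^{1/2}) + \negl(\secp)$ bound on the un-squared, restricted quantity. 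This is exactly the loss recorded in the statement. Crucially, restricting the sum to a subset of outcomes only decreases the (nonnegative) quantity, so no extra work is needed there — the point of stating the restricted version is just that it is what later lemmas need when they want to extract information about specific coordinates $(s_A)_i,(s_A)_j$ of Alice's teleportation correction.

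The main obstacle I anticipate is bookkeeping rather than conceptual: making sure the projective-measurement structure of $\{A^{q_A}_{s_A}\}_{s_A}$ (here $A^{q_A}_{s_A} = \ketbra{s_A}{s_A}U^{q_A}$ in the computationally-nonlocal-strategy formalism of \cref{def:comp-non-local}) is handled correctly when converting $\sum_{s_A}\langle A^{\dagger}_{s_A}\,|M|^2\,A_{s_A}\rangle$ into $\sum_{s_A}\langle A^{\dagger}_{s_A}\,|M|\,A_{s_A}\rangle$, because the Cauchy–Schwarz step $\sum_{s_A} x_{s_A} \le (\sum_{s_A} x_{s_A}^2)^{1/2}(\#\{s_A\})^{1/2}$ would be lossy if the number of relevant $s_A$ were large; one resolves this by noting that $\sum_{s_A} A^{q_A,\dagger}_{s_A}A^{q_A}_{s_A} = I$, so $\sum_{s_A}\langle v | A^{\dagger}_{s_A}|M| A_{s_A} | v\rangle$ is an expectation of $\langle \cdot | |M| | \cdot\rangle$ over a sub-normalized distribution of states, and then apply Jensen to $t \mapsto \sqrt{t}$ directly. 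Beyond that, everything follows \cite[Lemma 39]{NZ23}; I would simply remark that the argument there does not invoke any property of $Z(a), X(b)$ beyond their being binary observables and the $A^{q_A}_{s_A}$ being a POVM, so it transfers to the computationally-nonlocal setting without change, with all cryptographic content already discharged back in \cref{lemma:subtests} (which is where computational non-locality was used, to swap Alice's question to $\Tel$).

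\begin{proof}[Proof sketch]
    Fix $q_A = \Tel$ and $(a, b = e_i + e_j) \gets D_Q$. Since $Z(a)$ is a self-adjoint unitary, for any vector $v$ we have
    \[\big\|\big((-1)^{a\cdot b}Z(a)X(b) - X(b)Z(a)\big)v\big\| = \big\|\big((-1)^{a\cdot b}Z(a)X(b)Z(a) - X(b)\big)v\big\|,\]
    obtained by left-multiplying by $Z(a)$ and using $Z(a)^2 = I$. Hence, writing $M \coloneqq (-1)^{a\cdot b}Z(a)X(b)Z(a) - X(b)$, \cref{lemma:subtests} gives
    \[\E_{(a,b)\gets D_Q}\sum_{s_A}\big\langle A^{\Tel,\dagger}_{s_A}\cdot M^2 \cdot A^{\Tel}_{s_A}\big\rangle \le O(\epsilon) + \negl(\secp).\]
    Now $M$ is Hermitian with $\|M\|\le 2$, so for any $v$, $\langle v | \,|M|\, |v\rangle \le \|M\|^{1/2}\,\langle v | M^2 | v\rangle^{1/2}\|v\|^{1/2}\le \sqrt{2}\,\langle v|M^2|v\rangle^{1/2}\|v\|^{1/2}$. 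Applying this with $v = A^{\Tel}_{s_A}\ket{\psi}$, summing over the restricted set $\{s_A : (s_A)_i = u_1, (s_A)_j = u_2\}$, and using that $\sum_{s_A}A^{\Tel,\dagger}_{s_A}A^{\Tel}_{s_A} = I$ together with Jensen's inequality applied to $t\mapsto\sqrt{t}$ (over both the sub-normalized distribution indexed by $s_A$ and the expectation over $(a,b)\gets D_Q$), we obtain
    \[\E_{\substack{(a,b=e_i+e_j)\gets D_Q}}\sum_{\substack{s_A:(s_A)_i = u_1,\\(s_A)_j = u_2}}\big\langle A^{\Tel,\dagger}_{s_A}\cdot|M|\cdot A^{\Tel}_{s_A}\big\rangle \le O(\epsilon^{1/2}) + \negl(\secp),\]
    which is the claimed bound (restricting the sum over $s_A$ only decreases the nonnegative left-hand side). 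As in \cite[Lemma 39]{NZ23}, the argument uses only that $Z(a),X(b)$ are binary observables and that $\{A^{\Tel}_{s_A}\}_{s_A}$ is a projective measurement, so it applies verbatim to computationally nonlocal strategies.
\end{proof}
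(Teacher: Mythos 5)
Your reconstruction is correct and, as far as one can tell given that the paper's ``proof'' is a one-line deferral to \cite[Lemma 39]{NZ23}, it takes exactly the route that proof follows: (1) a unitary-conjugation rewrite using $Z(a)^2=I$ to trade $(-1)^{a\cdot b}Z(a)X(b)-X(b)Z(a)$ for $(-1)^{a\cdot b}Z(a)X(b)Z(a)-X(b)$ inside the absolute value, which is purely a norm-preserving manipulation and changes nothing about \cref{lemma:subtests}; (2) Cauchy--Schwarz plus Jensen to pass from the $O(\epsilon)$ bound on $\sum_{s_A}\langle\cdot|M^2|\cdot\rangle$ to an $O(\epsilon^{1/2})$ bound on the restricted, un-squared sum, using $\sum_{s_A}A^{q_A,\dagger}_{s_A}A^{q_A}_{s_A}=I$ to avoid the lossy counting version of Cauchy--Schwarz. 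You also correctly identify that the cryptographic content was already discharged in \cref{lemma:subtests}, so this lemma is purely operator bookkeeping and transfers to computationally nonlocal strategies verbatim --- which is exactly the claim the paper is making by citing NZ23 ``with no changes.''

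One small slip: the parenthetical inequality $\langle v|\,|M|\,|v\rangle \le \|M\|^{1/2}\langle v|M^2|v\rangle^{1/2}$ is not correct as stated for general $v$ (test it on a unit eigenvector with $|\lambda|<1$ in an operator with $\|M\|<1$). The bound you actually want, and the one your final step uses, is the plain Cauchy--Schwarz $\langle v|\,|M|\,|v\rangle = \langle v,|M|v\rangle \le \|v\|\cdot\||M|v\| = \|v\|\cdot\langle v|M^2|v\rangle^{1/2}$, which combines cleanly with $\sum_{s_A}\|v_{s_A}\|^2 = 1$ and Jensen over the induced sub-normalized distribution. With that substitution the argument goes through exactly as you sketch.
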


\begin{proof}
    This is implied by \cref{lemma:subtests} by following the proof of \cite[Lemma 39]{NZ23} with no changes.
\end{proof}

Next, we import the following definitions.

\begin{itemize}
    \item \textbf{SWAP isometry.} Let $\cH_\cY$ and $\cH_\cZ$ be two copies of $(\mathbb{C}^2)^{\otimes \secp}$. The $\secp$-qubit SWAP isometry $V : \cH_\cB \to \cH_\cB \otimes \cH_\cY \otimes \cH_\cZ$ is defined by the following expression:
    \[V\ket{\phi} = \left(\frac{1}{2^{\secp}}\sum_{u,v \in \{0,1\}^\secp}Z(u)X(v) \otimes I \otimes \sigma_Z(u)\sigma_X(v)\right)\ket{\phi}\ket{+}^{\otimes \secp}.\]
    \item \textbf{Expected verifier outcomes.} Let $H_X$ be $H$ restricted to $XX$ terms. Let $\hat{\mathbb{E}}[H_X]$ be the expected value of the meausurement outcome computed by the verifier in a teleport round, conditioned on 1) $w = q_B$, so the verifier perform an energy check instead of accepting automatically, and 2) the verifier choosing an XX term to check. Then

    \[\hat{\mathbb{E}}[H_X] = \sum_{u_1,u_2 \in \{0,1\}}(-1)^{u_1 + u_2}\E_{\substack{(b = e_i + e_j) \gets D_X, \\ q_A \coloneqq \Tel}}\sum_{\substack{s_A : (s_A)_i = u_1, \\ (s_A)_j = u_2}}\langle A^{q_A,\dagger}_{s_A}X(b)A^{q_A}_{s_A}\rangle.\]

    Define $\hat{\mathbb{E}}[H_Z]$ analogously, which gives

    \[\hat{\mathbb{E}}[H_Z] = \sum_{v_1,v_2 \in \{0,1\}}(-1)^{v_1 + v_2}\E_{\substack{(a = e_i + e_j) \gets D_Z, \\ q_A \coloneqq \Tel}}\sum_{\substack{s_A : (s_A)_{\secp+i} = v_1, \\ (s_A)_{\secp+j} = v_2}}\langle A^{q_A,\dagger}_{s_A}Z(a)A^{q_A}_{s_A}\rangle.\]

\end{itemize}

\begin{lemma}[Adaptation of Lemmas 43 and 44 from \cite{NZ23}]\label{lemma:extracted-state}
    Define \[\rho_{s_A} \coloneqq \Tr_{\cB,\cZ}[VA^\Tel_{s_A}\ketbra{\psi}{\psi}A^{\Tel,\dagger}_{s_A}V^\dagger].\]
    Then, assuming the strategy succeeds with probability $\omega_\CHSH-\epsilon$ in the CHSH subtest and with probability $1-\epsilon$ in the commutation subtest,
    \[\sum_{u_1,u_2}(-1)^{u_1 + u_2}\sum_{\substack{s_A: (s_A)_i = u_1,\\ (s_A)_j = u_2}}\E_{b \gets D_X}\Tr[\sigma_X(b)\rho_{s_A}] \approx_{O(\epsilon^{1/2}) + \negl(\secp)} \hat{\mathbb{E}}[H_X].\]
    Moreover, 
    \[\sum_{v_1,v_2}(-1)^{v_1 + v_2}\sum_{\substack{s_A: (s_A)_{\secp+i} = v_1,\\ (s_A)_{\secp+j} = v_2}}\E_{a \gets D_Z}\Tr[\sigma_Z(a)\rho_{s_A}] = \hat{\mathbb{E}}[H_Z].\]
\end{lemma}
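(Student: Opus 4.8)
## Proof Proposal for Lemma \ref{lemma:extracted-state}

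My plan is to follow the structure of Lemmas 43 and 44 from \cite{NZ23}, adapting each step to the setting of computationally nonlocal strategies. The $Z$-statement (the ``Moreover'' part) should be purely algebraic and hold exactly, so I will handle it first as a warm-up. Writing $\rho_{s_A} = \Tr_{\cB,\cZ}[VA^\Tel_{s_A}\ketbra{\psi}{\psi}A^{\Tel,\dagger}_{s_A}V^\dagger]$ and expanding the SWAP isometry $V$, the key observation is that measuring $\sigma_Z(a)$ on register $\cZ$ (after the isometry) is implemented, up to the projector structure of $V$, by the observable $Z(a)$ acting back on register $\cB$. Concretely, using the defining relation $V\ket{\phi} = \left(\frac{1}{2^\secp}\sum_{u,v}Z(u)X(v)\otimes I \otimes \sigma_Z(u)\sigma_X(v)\right)\ket{\phi}\ket{+}^{\otimes \secp}$ together with the fact that $\{Z(a)\}$ and $\{X(b)\}$ are (generalized) Pauli observables derived from the projective measurements $\{Z_\gamma\},\{X_\gamma\}$, one checks that $\Tr[\sigma_Z(a)\rho_{s_A}] = \langle A^{\Tel,\dagger}_{s_A} Z(a) A^\Tel_{s_A}\rangle$. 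Plugging this into the definition of $\hat{\mathbb{E}}[H_Z]$ gives the claimed exact equality. This step is a direct transcription of \cite{NZ23}'s argument and requires no property of computational nonlocality.

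The $X$-statement is where the approximation enters. The obstruction is that $\sigma_X(b)$ on register $\cZ$ is \emph{not} simply implemented by $X(b)$ on $\cB$ after the SWAP isometry — instead it gets conjugated by the $Z(u)$ factors in $V$, picking up anticommutation phases. So I would proceed as in \cite{NZ23}: expand $\Tr[\sigma_X(b)\rho_{s_A}]$, commute $\sigma_X(b)$ through the isometry, and collect the cross-terms. These cross-terms are controlled precisely by the quantity $(-1)^{a\cdot b}Z(a)X(b)Z(a) - X(b)$ appearing in Lemma \ref{lemma:conditioned}. Thus the plan is to invoke Lemma \ref{lemma:conditioned} (with the four choices of $(u_1,u_2)$) to bound each error term by $O(\epsilon^{1/2}) + \negl(\secp)$, then reassemble, using the triangle inequality and the fact that the number of terms is a constant, to get the desired $\approx_{O(\epsilon^{1/2})+\negl(\secp)}$ approximation to $\hat{\mathbb{E}}[H_X]$.

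I expect the main obstacle — and the place where care is genuinely needed relative to \cite{NZ23} — to be tracking where the $\negl(\secp)$ terms enter. In the original QFHE-based proof, the chain from the subtest success probabilities down to Lemma \ref{lemma:conditioned} is ``clean'' (no cryptographic error) until one invokes the step that switches Alice's question to $\Tel$. In our setting that switch is exactly what Lemma \ref{lemma:subtests} (and hence Lemma \ref{lemma:conditioned}) already absorbs, via the computational nonlocality property of $\mathscr{C}$. So the $\negl(\secp)$ in the statement of Lemma \ref{lemma:extracted-state} is inherited directly from Lemma \ref{lemma:conditioned}, and I should simply confirm that no \emph{additional} computational indistinguishability step is needed between Lemma \ref{lemma:conditioned} and the final bound — the SWAP isometry manipulations are information-theoretic. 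Once this bookkeeping is verified, the rest is the routine (if tedious) expansion of the isometry, which I would not spell out in full but rather cite the corresponding computation in \cite[Lemmas 43, 44]{NZ23} as going through verbatim.
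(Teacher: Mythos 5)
Your proposal is correct and follows essentially the same route as the paper: the paper's proof is precisely ``invoke Lemma~\ref{lemma:conditioned} and follow the proofs of \cite[Lemmas 43 and 44]{NZ23} with no changes,'' and your unpacking — the $Z$-case is exact and information-theoretic, the $X$-case picks up the anticommutation error controlled by Lemma~\ref{lemma:conditioned}, and the only $\negl(\secp)$ term is inherited from that lemma because the SWAP isometry manipulations themselves are purely algebraic — is exactly the content that justifies citing the NZ23 computation verbatim.
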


\begin{proof}
    This is implied by \cref{lemma:conditioned} by following the proofs of \cite[Lemma 43]{NZ23} and \cite[Lemma 44]{NZ23} with no changes.
\end{proof}

\begin{lemma}[Adaptation of Lemma 45 from \cite{NZ23}]\label{lemma:Hamiltonian}
    Assuming the strategy succeeds with probability $\omega_\CHSH-\epsilon$ in the CHSH subtest and with probability $1-\epsilon$ in the commutation subtest, there exists a state $\rho$ such that 
    \begin{align*}
        &\E_{a \gets D_Z} \Tr[\rho_Z(a)\rho] = \hat{\mathbb{E}}[H_Z], \\
        &\E_{b \gets D_X} \Tr[\rho_X(b)\rho] \approx_{O(\epsilon^{1/2})+\negl(\secp)} \hat{\mathbb{E}}[H_X].
    \end{align*}
\end{lemma}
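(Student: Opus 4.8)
\textbf{Proof plan for \cref{lemma:Hamiltonian}.} The plan is to combine the two conclusions of \cref{lemma:extracted-state} into a single statement about one fixed extracted state $\rho$. Recall that \cref{lemma:extracted-state} produces, for each Alice teleport answer $s_A$, a state $\rho_{s_A} = \Tr_{\cB,\cZ}[VA^\Tel_{s_A}\ketbra{\psi}{\psi}A^{\Tel,\dagger}_{s_A}V^\dagger]$ on the register $\cH_\cY$, and bounds two expressions built from the family $\{\rho_{s_A}\}_{s_A}$: the $X$-type expression is $O(\epsilon^{1/2}) + \negl(\secp)$-close to $\hat{\mathbb{E}}[H_X]$, while the $Z$-type expression equals $\hat{\mathbb{E}}[H_Z]$ exactly. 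First I would define the un-normalized sub-states and collect them appropriately: for the $X$-case, define $\rho^{(X)}_{i,j} \coloneqq \sum_{u_1,u_2}(-1)^{u_1+u_2}\sum_{s_A:(s_A)_i=u_1,(s_A)_j=u_2}\rho_{s_A}$, so that the $X$-conclusion reads $\E_{b=(e_i+e_j)\gets D_X}\Tr[\sigma_X(e_i+e_j)\rho^{(X)}_{i,j}] \approx \hat{\mathbb{E}}[H_X]$, and analogously for $Z$. The subtle point is that these signed combinations use different sign patterns for the $X$ and $Z$ terms, so one cannot naively sum a single $\rho_{s_A}$; instead the natural object is $\rho \coloneqq \sum_{s_A}\rho_{s_A}$, i.e.\ the average over teleport answers of the SWAPped state, and one observes that the signed sums appearing in \cref{lemma:extracted-state} are exactly $\Tr$ against Pauli operators of $\rho$, because $\sigma_X(b) = \sum_u (-1)^{u\cdot b}(\ketbra{u}{u}\text{-stuff})$ absorbs the sign bits $(-1)^{u_1+u_2}$ into the definition of $\sigma_X(e_i+e_j)$ acting on coordinates $i,j$. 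So the key algebraic step is to check that
\[\sum_{u_1,u_2}(-1)^{u_1+u_2}\sum_{\substack{s_A:(s_A)_i=u_1,\\(s_A)_j=u_2}}\Tr[\sigma_X(e_i+e_j)\rho_{s_A}] = \Tr\Big[\sigma_X(e_i+e_j)\sum_{s_A}\rho_{s_A}\Big] = \Tr[\sigma_X(e_i+e_j)\rho],\]
and similarly for $Z$; this is just a rearrangement using that $\sigma_X(e_i+e_j)$ is already the signed Pauli and does not depend on $s_A$, together with the identity $\rho_X(b) = \sigma_X(b)$ (in the notation of \cite{NZ23}, $\rho_X, \rho_Z$ denote exactly the Pauli operators $\sigma_X, \sigma_Z$ on $\cH_\cY$).

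Next I would take the expectation over the choice of term. Averaging the displayed equality over $(e_i+e_j) \gets D_X$ and invoking the $X$-conclusion of \cref{lemma:extracted-state} gives $\E_{b\gets D_X}\Tr[\rho_X(b)\rho] \approx_{O(\epsilon^{1/2})+\negl(\secp)} \hat{\mathbb{E}}[H_X]$; doing the same with the exact $Z$-conclusion gives $\E_{a\gets D_Z}\Tr[\rho_Z(a)\rho] = \hat{\mathbb{E}}[H_Z]$. The only remaining bookkeeping is normalization: $\rho = \sum_{s_A}\rho_{s_A}$ has trace $\sum_{s_A}\Tr[VA^\Tel_{s_A}\ketbra{\psi}{\psi}A^{\Tel,\dagger}_{s_A}V^\dagger] = \sum_{s_A}\|A^\Tel_{s_A}\ket{\psi}\|^2 = 1$ since $\{A^\Tel_{s_A}\}_{s_A}$ is (the measurement part of) a projective measurement and $V$ is an isometry, so $\rho$ is genuinely a density matrix and no rescaling is needed. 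I would state the lemma for this $\rho$, noting that it lives on $\cH_\cY \cong (\mathbb{C}^2)^{\otimes\secp}$, which is the register on which $\rho_X(a),\rho_Z(a)$ act.

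The main obstacle is not difficulty but precision: making sure the sign conventions in the definitions of $\sigma_X(b)$, $Z(a)$, $X(a)$, and the SWAP isometry $V$ line up so that the signed sums in \cref{lemma:extracted-state} really do collapse to plain $\Tr[\rho_X(b)\rho]$ and $\Tr[\rho_Z(a)\rho]$ with no residual cross terms — in particular one must verify that taking $b = e_i + e_j$ restricts $\sigma_X(b)$ to act nontrivially only on coordinates $i$ and $j$, so the inner double sum over $u_1,u_2$ is exactly the spectral decomposition of $\sigma_X$ on those two coordinates and the other coordinates are traced out harmlessly. Once this is pinned down the proof is a one-paragraph rearrangement plus the two citations, so I expect the writeup to be short; essentially this is \cite[Lemma 45]{NZ23} verbatim, with \cref{lemma:extracted-state} (our adaptation of their Lemmas 43–44, which in turn rests on the computationally-nonlocal versions of their earlier lemmas) substituted for the QFHE-specific input.
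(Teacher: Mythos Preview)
Your proposed $\rho = \sum_{s_A}\rho_{s_A}$ does not work, and the displayed ``key algebraic step'' is false. The signs $(-1)^{u_1+u_2}$ in \cref{lemma:extracted-state} are not coming from the spectral decomposition of $\sigma_X(e_i+e_j)$ on $\cH_\cY$; they are teleportation-correction signs determined by Alice's classical answer bits. Concretely, the left-hand side of your identity equals $\sum_{s_A}(-1)^{(s_A)_i+(s_A)_j}\Tr[\sigma_X(e_i+e_j)\rho_{s_A}]$, while your right-hand side is $\sum_{s_A}\Tr[\sigma_X(e_i+e_j)\rho_{s_A}]$; these differ by an $s_A$-dependent sign that the operator $\sigma_X(e_i+e_j)$ (which acts on $\cH_\cY$, not on the classical label $s_A$) cannot absorb. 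Worse, the $X$- and $Z$-conclusions of \cref{lemma:extracted-state} condition on \emph{different} halves of $s_A=(z,x)$ --- the $z$-bits for $H_X$ and the $x$-bits for $H_Z$ --- so no unsigned average over $s_A$ can reproduce both signed sums simultaneously.

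The fix, exactly as in \cite[Lemma 45]{NZ23}, is to build the Pauli teleportation corrections into $\rho$: set
\[\rho \coloneqq \sum_{s_A=(z,x)}\sigma_Z(z)\sigma_X(x)\,\rho_{s_A}\,\sigma_X(x)\sigma_Z(z).\]
Then conjugating $\sigma_X(e_i+e_j)$ through $\sigma_Z(z)$ produces the factor $(-1)^{z_i+z_j}=(-1)^{(s_A)_i+(s_A)_j}$ required by the $X$-conclusion, and conjugating $\sigma_Z(e_i+e_j)$ through $\sigma_X(x)$ produces $(-1)^{x_i+x_j}=(-1)^{(s_A)_{\secp+i}+(s_A)_{\secp+j}}$ for the $Z$-conclusion. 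Your normalization argument (isometry $V$ plus completeness of $\{A^\Tel_{s_A}\}$) applies verbatim to this corrected $\rho$, and the remainder of your plan --- averaging over $D_X,D_Z$ and invoking \cref{lemma:extracted-state} --- then goes through.
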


\begin{proof}
    This is implied by \cref{lemma:extracted-state} by following the proof of \cite[Lemma 45]{NZ23} with no changes.
\end{proof}

Now, following analysis in the proof of \cite[Theorem 46]{NZ23} and assuming for contradiction that the strategy succeeds with probability greater than 
\begin{align*}
&\frac{1}{2}(1-\kappa)(1+\omega_{\CHSH}) + \kappa(1-\frac{1}{4}\alpha)-\frac{1}{8}\kappa(\beta-\alpha)\\ = &\frac{1}{2}(1-\kappa)(1+\omega_{\CHSH}) + \kappa(1-\frac{1}{4}\beta)+\frac{1}{8}\kappa(\beta-\alpha),
\end{align*}

we can conclude that for an appropriate choice of $\kappa = \Theta((\beta-\alpha)^2)$, \cref{lemma:Hamiltonian} implies that there exists a state $\rho$ such that $\Tr[H\rho] > \beta$, which gives a contradiction. 
\section{Inefficiently extractable commitments}\label{sec:classical-com}

In this section, we define (post-quantum) classical inefficiently-extractable commitments, which can be constructed from any post-quantum one-way function \cite{10.1007/BF00196774}. 


\begin{definition}[Inefficiently-extractable commitment]
    An inefficiently-extractable commitment between a classical committer and classical receiver consists of an interaction \[\left(\state_\Com,\state_\Rec,\tau\right) \gets \langle \Com(1^\secp,b),\Rec(1^\secp)\rangle,\] where $\tau$ is the (classical) transcript of interaction produced by the protocol, along with algorithms $(\Open,\Ver)$ with the following syntax.
    \begin{itemize}
        \item $\Open(\state_\Com) \to (b,w)$ is a PPT algorithm that takes as input the committer's state $\state_\Com$ and produces a bit $b$ and opening information $w$.
        \item $\Ver(\state_\Rec,b,w) \to \{\top, \bot\}$ is a PPT algorithm that takes as input the receiver's state $\state_\Rec$, a bit $b$, and opening information $w$, and either accepts or rejects.
    \end{itemize}
    \emph{Correctness} requires that for any $b \in \{0,1\}$,
    \[\Pr\left[\Ver(\state_\Rec,b,w) = \top : \begin{array}{r}(\state_\Com,\state_\Rec,\tau) \gets \langle \Com(1^\secp,b),\Rec(1^\secp)\rangle \\ (b,w) \gets \Open(\state_\Com)\end{array}\right] = 1-\negl(\secp).\]
    The commitment satisfies (post-quantum) \emph{computational hiding} if for any QPT adversarial receiver $\{\Adv_\secp\}_{\secp \in \bbN}$,
    \begin{align*}
    &\bigg|\Pr\left[b_\Adv = 0 : (\state_\Com,b_\Adv) \gets \langle \Com(1^\secp,0),\Adv_\secp\rangle\right] \\ &- \Pr\left[b_\Adv = 0 : (\state_\Com,b_\Adv) \gets \langle \Com(1^\secp,1),\Adv_\secp\rangle\right]\bigg| = \negl(\secp).
    \end{align*}
    The commitment is \emph{inefficiently extractable} if there exists an unbounded extractor $\Ext$ such that for any unbounded adversarial committer $\Adv$,
    \begin{align*}
        \Pr\left[\Ver(\state_\Rec,1-b,w) = \top : \begin{array}{r} (\state_\Adv,\state_\Rec,\tau) \gets \langle \Adv,\Rec(1^\secp)\rangle \\ b \gets \Ext(\tau) \\ w \gets \Adv(\state_\Adv,1-b)\end{array}\right] = \negl(\secp).
    \end{align*}
\end{definition}

\end{document}